\documentclass[11pt]{article}
\usepackage{graphicx}
\usepackage{subcaption}
\usepackage{todonotes}
\usepackage{mathtools}
\usepackage{soul}
\usepackage{amsthm}
\usepackage{amsmath,amssymb}
\usepackage{systeme}
\usepackage{gensymb}
\usepackage{comment}
\usepackage[ruled,vlined,noend]{algorithm2e}
\SetKwComment{Comment}{/* }{ */}
\SetKwInOut{Input}{input}\SetKwInOut{Output}{output}
\LinesNumbered
\SetKwRepeat{Do}{do}{while}

\usepackage{thm-restate}

\graphicspath{{./figs/}}
\usepackage[absolute]{textpos}

\usepackage{lineno}

\usepackage{fullpage,cite}
\usepackage[top=0.8in, bottom=0.9in, left=0.9in, right=0.9in]{geometry}

\usepackage[pdftex, plainpages = false, pdfpagelabels,
                 bookmarks=false,
                 bookmarksopen = true,
                 bookmarksnumbered = true,
                 breaklinks = true,
                 linktocpage,
                 pagebackref,
                 colorlinks = true,  
                 linkcolor = blue,
                 urlcolor  = cyan,
                 citecolor = red,
                 anchorcolor = green,
                 hyperindex = true,
                 hyperfigures
                 ]{hyperref}

\DeclareMathOperator*{\argmin}{arg\,min}

\newcommand{\clg}{{\lceil \log n \rceil}}
\newcommand{\os}[1]{\overset{(#1)}}
\newcommand{\F}{\mathcal{F}} 
\newcommand{\G}{G(S)} 
\newcommand{\R}{\text{Vor}} 
\newcommand{\E}{\mathcal{E}} 
\newcommand{\df}{d} 
\newcommand{\pif}{\pi} 
\newcommand{\DT}{\mathcal{DT}} 
\newcommand{\vd}{\mathcal{VD}}
\newcommand{\N}{N_{\DT}}
\newcommand{\calD}{\mathcal{D}}

\newcommand{\dist}{\mathrm{dist}}
\newcommand{\precxi}{\prec_{\Xi}}
\newcommand{\ca}{^\circ}
\newcommand{\sa}{^\diamond}
\newcommand{\avd}{\mathcal{AVD}}

\def\st{$s$-$t$}

\newtheorem{lemma}{Lemma}

\newtheorem{observation}{Observation}
\newtheorem{theorem}{Theorem}

\title{Shortest Paths in Geodesic Unit-Disk Graphs\thanks{A preliminary version will appear in {\em Proceedings of the 42nd International Symposium on Computational Geometry (SoCG 2026)}~\cite{ref:BrewerSh26}. This version further improves the result in the preliminary version by proposing a new dynamic data structure for priority-queue updates (Section~\ref{sec:priorityqueue}).}
}
\author{
Bruce W. Brewer\thanks{Kahlert School of Computing,
University of Utah, Salt Lake City, UT 84112, USA. {\tt bruce.brewer@utah.edu}}
\and
Haitao Wang\thanks{Kahlert School of Computing,
University of Utah, Salt Lake City, UT 84112, USA. {\tt haitao.wang@utah.edu}}
}
\date{}

\begin{document}


\maketitle

\vspace{-0.3in}
\begin{abstract}
Let $S$ be a set of $n$ points in a polygon $P$ with $m$ vertices.
The geodesic unit-disk graph $G(S)$ induced by $S$ has vertex set $S$ and contains an edge between two vertices whenever their geodesic distance in $P$ is at most one.
In the weighted version, each edge is assigned weight equal to the geodesic distance between its endpoints; in the unweighted version, every edge has weight $1$.
Given a source point $s \in S$, we study the problem of computing shortest paths from $s$ to all vertices of $G(S)$.
To the best of our knowledge, this problem has not been investigated previously.
A naive approach constructs $G(S)$ explicitly and then applies a standard shortest path algorithm for general graphs, but this requires quadratic time in the worst case, since $G(S)$ may contain $\Omega(n^2)$ edges.
In this paper, we give the first subquadratic-time algorithms for this problem.
For the weighted case, when $P$ is a simple polygon, we obtain an $O(m + n \log^{2} n \log^{2} m)$-time algorithm.
For the unweighted case, we provide an $O(m + n \log n \log^{2} m)$-time algorithm for simple polygons, and an $O(\sqrt{n}\,(n+m)\log(n+m))$-time algorithm for polygons with holes.
To achieve these results, we develop a data structure for deletion-only geodesic unit-disk range emptiness queries, as well as a data structure for constructing implicit additively weighted geodesic Voronoi diagrams in simple polygons. In addition, we propose a dynamic data structure that extends Bentley’s logarithmic method from insertions to priority-queue updates, namely insertion and delete-min operations. These results may be of independent interest.
\end{abstract}

{\em Keywords:} unit-disk graph, geodesic distance, shortest paths, geodesic Voronoi diagrams, range emptiness queries, dynamic data structures, priority-queue operations

\section{Introduction}
\label{sec:intro}
Let $S$ be a set of $n$ points in the plane.
The \emph{unit-disk graph} induced by $S$ has vertex set $S$ and contains an edge between two vertices if their Euclidean distance is at most one.
Equivalently, the unit-disk graph is the intersection graph of the set of congruent disks of radius $1/2$ centered at the points of $S$: each disk corresponds to a vertex, and two disks are adjacent precisely when they intersect.
Unit-disk graphs are extensively studied in computational geometry due to their rich geometric structure and numerous applications (see, e.g., \cite{ref:ClarkUn90}).
A primary motivation arises from wireless ad-hoc networks.
Here, each point of $S$ models a device capable of transmitting and receiving wireless signals within a fixed range, which we normalize to one.
Thus, two devices can communicate directly exactly when they are connected by an edge in the corresponding unit-disk graph.
Many fundamental algorithmic tasks for wireless networks reduce to classical graph problems on unit-disk graphs, such as coloring, independent set, and dominating set~\cite{ref:ClarkUn90}.
A central question is how to route a message between two devices $s$ and $t$.
If $s$ and $t$ are not within communication range, then the message must be relayed through intermediate devices.
In graph theoretic terms, the message must follow an \st\ path in the unit-disk graph, and ideally one of minimum length.
For this reason, the shortest path problem in unit-disk graphs plays a key role in both theory and applications.


Another natural application of unit-disk graphs arises in motion planning with recharging/refueling.
Suppose a vehicle (for example, a delivery drone) must travel from a start point $s$ to a target point $t$, but lacks sufficient battery capacity to traverse this distance directly.
Instead, it may need to stop at intermediate charging stations.
Consider the graph whose vertices represent charging stations and where two stations are connected by an edge if the vehicle can travel between them on a full charge.
After scaling the maximum travel distance on a full charge to one, this graph is precisely a unit-disk graph.
Thus, the motion planning problem reduces to finding a shortest $s$-$t$ path in the unit-disk graph.

Now consider the above applications in the presence of obstacles that must be avoided.
In wireless ad-hoc networks, an obstacle may prevent a signal from traveling directly between two devices, forcing it to bend around buildings or terrain features.
In the motion planning scenario, obstacles may correspond to physical barriers or restricted regions.
In such environments, Euclidean distance no longer reflects the true travel cost between two points.
Instead, the relevant distance is the \emph{geodesic distance}, i.e., the length of a shortest obstacle-avoiding path between two points.
Figure.~\ref{fig:unit_disk} illustrates a ``geodesic unit disk'' centered at $s$ and a unit-length \st\ path under geodesic distance.

\begin{figure}
    \centering
    \includegraphics[height = 5cm]{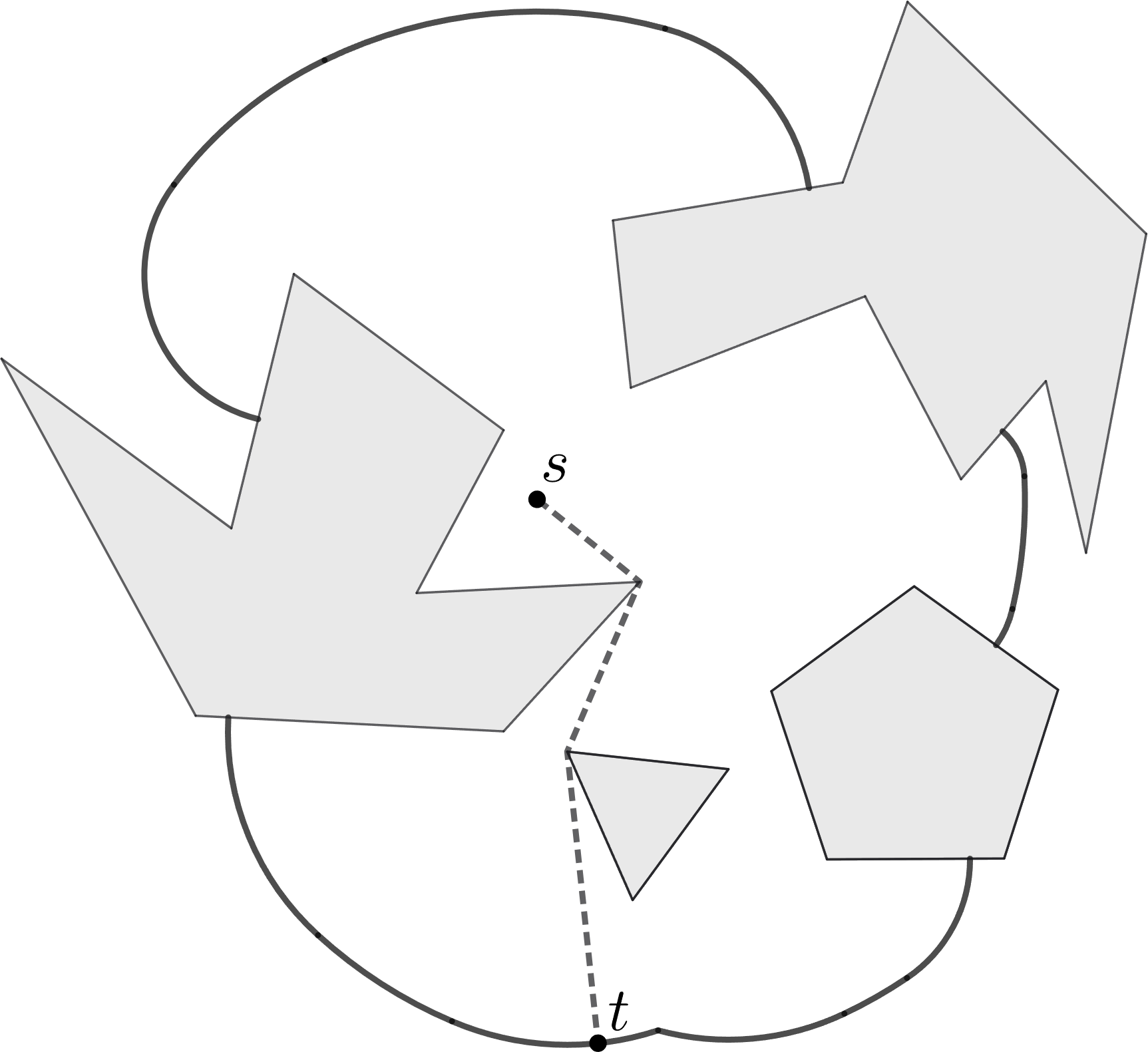}
    \caption{Example of a geodesic unit disk centered at $s$ (the black solid curve is the boundary of the disk).
    A unit-length \st\ path (the dashed path) is also shown.}
    \label{fig:unit_disk}
\end{figure}

\paragraph{Problem definition.}
Motivated by these applications, we study \emph{geodesic unit-disk graphs}, defined as follows.
Let $P$ be a polygonal domain with $h$ holes and a total of $m$ vertices in the plane.
Thus, $P$ is a closed, multiply connected region whose boundary consists of $h+1$ disjoint polygonal cycles.
The holes of $P$, together with the region outside $P$, are referred to as \emph{obstacles}.
For any two points in $P$, their \emph{geodesic distance} is the length of a shortest path between them in $P$.

Let $S$ be a set of points (also called \emph{sites}) contained in $P$.
The \emph{geodesic unit-disk graph} $\G$ induced by $S$ has vertex set $S$ and contains an edge between two sites whenever their geodesic distance is at most one.
In the \emph{unweighted case}, every edge of $\G$ has weight~$1$;
in the \emph{weighted case}, the weight of an edge is the geodesic distance between its endpoints.
The appropriate choice of model depends on the application.

Given a source site $s \in S$, we consider the single-source shortest-path (SSSP) problem in $\G$:
for a given polygonal domain $P$, site set $S$, and source $s$, compute shortest paths in $\G$ from $s$ to all other vertices.
To the best of our knowledge, this problem has not been studied previously.

\paragraph{Related work.}
As discussed above, problems on Euclidean unit-disk graphs (i.e., when $S$ is given in the plane without obstacles) are well studied~\cite{ref:ClarkUn90}.
For the unweighted case, Cabello and {Jej\v ci\v c}~\cite{ref:CabelloSh15} gave an $O(n \log n)$ algorithm for the SSSP problem, matching an $\Omega(n \log n)$ lower bound~\cite{ref:CabelloSh15}.
Chan and Skrepetos~\cite{ref:ChanAl16} later obtained an $O(n)$ algorithm under the assumption that the points are presorted by  $x$- and $y$-coordinates.

The weighted case has also received considerable attention.
Roditty and Segal~\cite{ref:RodittyOn11} first obtained an $O(n^{4/3+\epsilon})$-time algorithm for any $\epsilon>0$.
Cabello and {Jej\v ci\v c}~\cite{ref:CabelloSh15} improved this to $O(n^{1+\epsilon})$ time.
Their running time has since been reduced due to improvements in dynamic bichromatic closest-pair structures, first by Kaplan, Mulzer, Roditty, Seiferth, and Sharir~\cite{ref:KaplanDy20} and then by Liu~\cite{ref:LiuNe22}.
Most recently, Wang and Xue~\cite{ref:WangNe20} proposed a new algorithm that avoids dynamic bichromatic closest-pair structures, and their algorithm runs in $O(n \log^{2} n)$ time; Brewer and Wang~\cite{ref:BrewerAn24} further improved the time to $O(n \log^{2} n / \log \log n)$ by refining a key bottleneck subproblem.

We note that various other problems involving geodesic disks in polygons have also been studied, including separators~\cite{ref:deBergCl22}, clustering~\cite{ref:BorgeltGe07}, and piercing~\cite{ref:BosePi21}, among others.



\subsection{Our results}
\label{sec:results}

A straightforward approach is to construct $\G$ explicitly and then run a standard SSSP algorithm for general graphs.
However, $\G$ may contain $\Omega(n^{2})$ edges, leading to a quadratic running time in the worst case.
In this paper, we present three subquadratic-time algorithms for three problem variants, respectively.

\begin{enumerate}
    \item
    For the \emph{weighted} case, when $P$ is a simple polygon, we obtain an $O(m + n \log^2 n \log^{2} m)$-time algorithm.
    When applied to the Euclidean setting by taking $m = O(1)$ (i.e., $P$ is the entire plane), our running time matches the $O(n\log^2 n)$ time of the algorithm in \cite{ref:WangNe20} and is only a $\log\log n$ factor slower than the current best bound of $O(n \log^{2} n / \log \log n)$~\cite{ref:BrewerAn24}.
    The Euclidean algorithms in~\cite{ref:WangNe20,ref:BrewerAn24} rely crucially on a grid decomposition of the plane, a technique that does not extend naturally to the polygon setting because each grid cell may be fragmented by the boundary of $P$ multiple times. We therefore use a different algorithmic strategy.

    \item
    For the \emph{unweighted} case in a simple polygon, we give an $O(m + n \log n \log^{2} m)$-time algorithm.
    Applied to the Euclidean setting, this matches the $\Omega(n \log n)$ lower bound~\cite{ref:CabelloSh15}.

    \item
    When $P$ contains holes, the problem becomes substantially more challenging.
    For the unweighted case in this general setting, we develop an $O(\sqrt{n}\,(n+m)\log(n+m))$-time algorithm.
\end{enumerate}

A key component of our algorithms is the resolution of the following two fundamental subproblems, each of which may be of independent interest.
Indeed, designing efficient solutions to these subproblems is one of the main technical contributions of this paper.


\paragraph{Subproblem 1: Deletion-only geodesic unit-disk range emptiness queries.}
In this problem, we preprocess a set $S$ of $n$ points inside a simple polygon $P$ with $m$ vertices so that two operations can be performed efficiently:
(1) given a query point $q$, determine whether there exists a point $p \in S$ whose geodesic distance to $q$ is at most $1$, and if so return such a point; and
(2) delete a point from $S$.
We refer to the first operation as a \emph{geodesic unit-disk range emptiness query} (GUDRE query).

Using the dynamic randomized nearest neighbor data structure of Agarwal, Arge, and Staals~\cite{ref:AgarwalIm18}, each deletion can be supported in $O(\log^{7}n \log m + \log^{6}n \log^{3}m)$ expected amortized time, and each GUDRE query can be answered in $O(\log^{2}n \log^{2} m)$ time.
In the Euclidean case (i.e., when $P$ is the entire plane), one can preprocess $S$ in $O(n\log n)$ time and support both queries and deletions in $O(\log n)$ time~\cite{ref:EfratGe01,ref:WangCo26}.

We obtain the following result. After an $O(m)$-time \emph{global} preprocessing step for $P$, we can, for any given set $S$ of $n$ points in $P$, build a data structure in $O(n\log^{3}m \;+\; n\log^{2}m \log n)$
time such that deletions take $O(\log n \log^{3} m)$ amortized time and GUDRE queries take $O(\log n \log^{2} m)$ time.
Our result is remarkable in three aspects.
First, compared to applying~\cite{ref:AgarwalIm18}, our data structure improves deletion time by a factor of $O(\log^{5}n)$ and improves query time by a factor of $O(\log n)$, while being deterministic rather than randomized (note that \cite{ref:AgarwalIm18} also supports insertions and more general queries).
Second, when applied to the Euclidean case ($m = O(1)$), our bounds match the optimal Euclidean results of~\cite{ref:EfratGe01,ref:WangCo26}.
Third, our result has an interesting \emph{implicit} nature: after the global preprocessing of $P$, the construction time for any given $S$ depends only logarithmically on $m$.

\paragraph{Subproblem 2: Implicit additively weighted geodesic Voronoi diagram.}
In this problem, we wish to construct a nearest neighbor data structure for a set $S$ of $n$ additively weighted points in a simple polygon $P$ with $m$ vertices, under the requirement that the structure be \emph{implicitly} constructed.
That is, after a global preprocessing of $P$, the time to build the data structure for $S$ should depend only logarithmically on $m$. Such a structure is referred to as an \emph{implicit Voronoi diagram}.

An \emph{explicit} diagram for $S$ can be built in $O((n+m)\log(n+m))$ time using the algorithm of Hershberger and Suri~\cite{ref:HershbergerAn99}, after which weighted nearest neighbor queries can be answered in $O(\log(n+m))$ time via standard point location queries~\cite{ref:KirkpatrickOp83,ref:EdelsbrunnerOp86}.
However, this explicit approach cannot be used in the implicit setting, as even constructing a point location data structure requires $\Omega(m)$ time.

For the unweighted version of this implicit problem, the approach of Agarwal, Arge, and Staals~\cite{ref:AgarwalIm18} can obtain the following: after $O(m)$-time global preprocessing, one can build a data structure for $S$ in $O(n \log n \log m + n\log^{3} m)$ time that answers nearest neighbor queries in $O(\log n \log^{2} m)$ time.

The weighted case is more challenging, as several geometric properties used in~\cite{ref:AgarwalIm18} no longer hold in the presence of weights. We develop a new approach and show that, after $O(m)$-time global preprocessing of $P$, a data structure for $S$ can be built in $O(n\log n \log^{2} m \;+\; n\log^{3} m)$
time, and each weighted nearest neighbor query can be answered in $O(\log n \log^{2} m)$ time.



\paragraph{Dynamic data structure for priority-queue updates.}
Our SSSP algorithm for the weighted simple-polygon case requires a dynamic 
data structure that supports insertions and delete-min operations, 
which we collectively refer to as \emph{priority-queue updates}. 
We present a result that extends Bentley’s logarithmic method~\cite{ref:BentleyDe79,ref:BentleyDe80} 
from supporting insertions only to supporting full priority-queue updates. 
As in Bentley’s original method, our result applies more broadly to other 
decomposable searching problems. 
We illustrate the technique using the nearest neighbor searching problem 
as an example.

Consider nearest neighbor searching for a set $S$ of $n$ points in the plane. 
Given a query point $q$, the goal is to determine the point of $S$ closest to $q$. 
In the static setting, where $S$ does not change, each query can be answered 
in $O(\log n)$ time after constructing the Voronoi diagram of $S$ 
in $O(n \log n)$ time.

Now suppose that each point of $S$ is assigned a \emph{key} (or weight), 
and the following updates are allowed: 
(1) insert a new point into $S$; and 
(2) delete the point of $S$ with the minimum key. 
If only insertions are supported, Bentley’s logarithmic method yields 
a dynamic structure with $O(\log^2 n)$ amortized insertion time and 
$O(\log^2 n)$ worst-case query time. 
To additionally support delete-min operations, the best existing approach 
is to use a fully dynamic nearest neighbor data structure that handles 
both insertions and deletions. 
Using the currently fastest such structures~\cite{ref:ChanDy20,ref:LiuNe22,ref:KaplanDy20}, 
one obtains $O(\log^2 n)$ amortized insertion time, 
$O(\log^4 n)$ amortized delete-min time, 
and $O(\log^2 n)$ worst-case query time.

We introduce a new technique that transforms any static data structure 
for a decomposable searching problem into a dynamic one supporting 
priority-queue updates. 
Applied to nearest neighbor searching, our method achieves 
$O(\log^2 n)$ amortized insertion time, 
$O(\log^2 n)$ amortized delete-min time (which improves the above bound 
by a factor of $\log^2 n$), and 
$O(\log^2 n)$ worst-case query time.

In general, suppose there exists a static data structure $\calD(S)$ for a decomposable problem 
on a set $S$ of $n$ elements with complexity $(P(n), U(n), Q(n))$, where $P(n)$ denotes 
the preprocessing time, $U(n)$ the space usage, and $Q(n)$ the query time. 
Assume that both $P(n)$ and $U(n)$ are at least linear in $n$, and that $Q(n)$ is monotone increasing. 
Further assume that each element of $S$ is assigned a key (or weight). 
Then $\calD(S)$ can be transformed into a dynamic data structure that supports 
priority queue operations with the following performance guarantees: 
insertion time $O(P(n)\log n / n)$, delete-min time $O(P(n)\log n / n)$, 
and query time $O(Q(n)\log n)$. The total space usage remains $O(U(n))$.

Note that the running time of our SSSP algorithm for the weighted 
simple polygon case improves upon the preliminary version of this paper~\cite{ref:BrewerSh26} 
by a logarithmic factor. 
In particular, the running time in~\cite{ref:BrewerSh26} was 
$O(m + n\log^3 n \log^2 m)$, 
and the improvement is due to this new dynamic data structure for priority-queue updates. 

As discussed in~\cite{ref:BentleyDe79,ref:BentleyDe80}, many searching problems are decomposable. We therefore anticipate further applications of our technique (see Section~\ref{sec:extension} for more discussions about this). In addition, our technique is conceptually simple and arguably elegant.

\paragraph{Outline.}
The remainder of the paper is organized as follows. After introducing notation and preliminary concepts in Section~\ref{sec:pre}, we present our algorithm for the weighted case in simple polygons in Section~\ref{sec:weight}. This algorithm relies on the two data structures developed for the subproblems discussed above as well as the technique for handling priority-queue updates, which is discussed in Section~\ref{sec:priorityqueue}.
Our data structure for deletion-only GUDRE queries is given in Section~\ref{sec:line_seperated_disk_emptiness}, and our solution to the implicit nearest neighbor query problem appears in Section~\ref{sec:ls_gawvd}.
For the unweighted SSSP problem, we describe our algorithm for simple polygons in Section~\ref{sec:simple_pol}, and extend it to polygons with holes in Section~\ref{sec:pol_holes}. Thus, Section~\ref{sec:pol_holes} is the only section dealing with polygons with holes.

\section{Preliminaries}
\label{sec:pre}

We follow the notation already introduced in Section~\ref{sec:intro}, including $m$, $n$, $S$, $P$, $s$, and $\G$. Throughout the paper, $P$ is a simple polygon unless stated otherwise (the only exception is Section~\ref{sec:pol_holes}, where $P$ may have holes).

We use $\overline{pq}$ to denote the line segment connecting points $p$ and $q$, and $|\overline{pq}|$ to denote its (Euclidean) length. For any geometric object $R$ in the plane, let $\partial R$ denote its boundary.

For any two points $s,t \in P$, let $\pi(s,t)$ denote a shortest \st\ path in $P$.
If multiple shortest paths exist (which can happen only when $P$ has holes), $\pi(s,t)$ may refer to an arbitrary one. Let $d(s,t)$ denote the geodesic distance between $s$ and $t$, i.e., the length of $\pi(s,t)$. We assume that $\pi(s,t)$ is oriented from $s$ to $t$. It is known that every interior vertex of $\pi(s,t)$ (i.e., except $s$ and $t$) is a vertex of $P$. The vertex of $\pi(s,t)$ adjacent to $t$ is called the {\em anchor} of $t$ in $\pi(s,t)$, and the vertex of $\pi(s,t)$ adjacent to $s$ is called the {\em anchor} of $s$.

\paragraph{Notation for the simple polygon case.}
When $P$ is a simple polygon, for any three points $s,t,r \in P$, the vertex of $\pi(s,t)\cap \pi(s,r)$ farthest from $s$ is called the {\em junction vertex} of $\pi(s,t)$ and $\pi(s,r)$.
For a point $s \in P$ and a segment $\overline{tr} \subseteq P$, let $c$ be the junction vertex of $\pi(s,t)$ and $\pi(s,r)$. Following standard terminology~\cite{ref:LeeEu84,ref:GuibasLi87}, the paths $\pi(c,t)$ and $\pi(c,r)$ together with $\overline{tr}$ form a {\em funnel}, denoted $F_s(\overline{tr})$; see Figure~\ref{fig:funnel}. The point $c$ is the {\em cusp} of the funnel, and $\pi(c,t)$ and $\pi(c,r)$ are the two {\em sides}, each of which is a concave chain.

\begin{figure}[t]
\begin{minipage}[t]{\linewidth}
\begin{center}
\includegraphics[totalheight=1.3in]{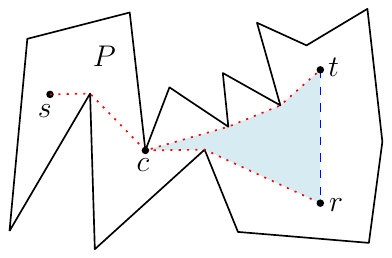}
\caption{Illustrating a the funnel $F_s(\overline{tr})$ (the light blue region).}
\label{fig:funnel}
\end{center}
\end{minipage}
\vspace*{-0.15in}
\end{figure}

When $P$ is a simple polygon, we will frequently use the data structure of Guibas and Hershberger~\cite{ref:GuibasOp89,ref:HershbergerA91}, referred to as the {\em GH data structure}. It can be built in $O(m)$ time and supports computing $d(s,t)$ in $O(\log m)$ time for any $s,t\in P$. It can also return the anchors of $s$ and $t$ on $\pi(s,t)$, and even a compact representation of $\pi(s,t)$ that supports binary search along the path. Furthermore, it can also compute the junction vertex of $\pi(s,t)$ and $\pi(s,r)$ for any three points $s,t,r\in P$.

\paragraph{Balanced polygon decomposition (BPD).}
For a simple polygon $P$, we will use the {\em balanced polygon decomposition} (BPD)~\cite{ref:GuibasLi87,ref:GuibasOp89,ref:ChazelleVi89}. A line segment inside $P$ connecting two vertices of $P$ that is not an edge of $P$ is called a {\em diagonal}. A classical result of Chazelle~\cite{ref:ChazelleCu93} shows that $P$ contains a diagonal $d$ that partitions $P$ into two subpolygons, each with at most $2n/3$ vertices. Recursively decomposing the subpolygons until each becomes a triangle yields the BPD.

The BPD can be represented by a tree $T_P$, called the {\em BPD-tree}. Each node $v$ of $T_P$ corresponds to a subpolygon $P_v$ of $P$: the root corresponds to $P$ itself, and each leaf corresponds to a triangle. Let $d_v$ denote the diagonal used to split $P_v$ into its two child subpolygons. The triangles at the leaves of $T_P$ form a triangulation of $P$. The entire decomposition and $T_P$ can be constructed in $O(m)$ time~\cite{ref:GuibasOp89}.

\paragraph{Geodesic Voronoi diagram.}
We will also make extensive use of the {\em geodesic Voronoi diagram} of $S$ in $P$~\cite{ref:AronovOn89,ref:OhOp19,ref:HershbergerAn99,ref:PapadopoulouA98}. For a site $u \in S$, its {\em Voronoi region} is defined as
\[
\R(u) = \{p \in P : d(u,p) \leq d(v,p) \text{ for all } v \in S \setminus \{u\}\}.
\]
Note that we treat $\R(u)$ as including its boundary. For two sites $u,v\in S$, the {\em Voronoi edge} $\E(u,v)$ is the locus of points $p\in P$ satisfying
\[
d(u,p) = d(v,p) \leq d(z,p)\quad\text{for all } z\in S\setminus\{u,v\}.
\]
If $P$ is a simple polygon, then $\E(u,v)$ has a single connected component~\cite{ref:AronovOn89}, while in general polygonal domains it may have multiple components~\cite{ref:HershbergerAn99}. Each connected component consists of a sequence of hyperbolic arcs (including line segments as a degenerate case). We do not use special notation for {\em Voronoi vertices}, i.e., points equidistant to at least three sites and strictly farther from all others. Let $\vd(S)$ denote the geodesic Voronoi diagram of $S$ in $P$.

The definitions extend naturally to the additively weighted case. Each site $u$ has a weight $w(u)$, and the {\em weighted distance} from $u$ to any $p\in P$ is $w(u)+d(u,p)$. Replacing geodesic distances by weighted distances yields the weighted geodesic Voronoi diagram. Note that all weights considered in this paper are additive.

\section{The weighted case in a simple polygon}
\label{sec:weight}

We now consider the SSSP problem in weighted geodesic unit-disk graphs within a simple polygon. Here, $S$ lies inside a simple polygon $P$, and for every edge $e(u,v)$ of $G(S)$ connecting two sites $u,v \in S$, the weight of $e(u,v)$ is defined to be the geodesic distance $d(u,v)$.

A natural starting point for designing an SSSP algorithm is to attempt to adapt the current best $O(n \log^2 n / \log \log n)$-time algorithm for weighted Euclidean unit-disk graphs~\cite{ref:WangNe20,ref:BrewerAn24}. However, that algorithm critically relies on constructing a uniform grid in the plane, and the technique that does not extend to the polygonal setting, since a grid cell may be fragmented into multiple disconnected components by the boundary of $P$. Another possible direction is to extend the $O(n^{1+\epsilon})$-time algorithm of Cabello and {Jej\v ci\v c}~\cite{ref:CabelloSh15}, which relies on maintaining a dynamic weighted {\em bichromatic closest pair (BCP)} data structure. When combined with the randomized dynamic BCP structures~\cite{ref:KaplanDy20,ref:LiuNe22}, the runtime improves to $O(n \log^4 n)$ expected time. Unfortunately, this line of attack does not seem to generalize to the polygon setting due to the absence of an efficient dynamic weighted BCP data structure under geodesic distance.

We instead take a different approach that avoids BCP altogether. Our algorithm runs in $O(m + n \log^3 n \log^2 m)$ deterministic time. Notably, when applied to the Euclidean case by treating $P$ as the whole plane (i.e., $m = O(1)$), the running time becomes $O(n \log^3 n)$, which is even faster than the best known approach based on Cabello and {Jej\v ci\v c}'s algorithm combined with dynamic BCP structures~\cite{ref:KaplanDy20,ref:LiuNe22}.

In what follows, we first describe the algorithm in Section~\ref{sec:weightdescription}, then prove its correctness in Section~\ref{sec:weightcorrect}, and finally discuss its implementation and time analysis in Section~\ref{sec:weighttime}

\subsection{Algorithm description}
\label{sec:weightdescription}

Given a source point $s \in S$, our goal is to compute shortest paths from $s$ to all other vertices $v$ in $G(S)$. Let $d_G(v)$ denote the shortest-path distance from $s$ to $v$ in $G(S)$. Our algorithm maintains an array $\dist[\cdot]$ such that, at termination, $\dist[v] = d_G(v)$ for all $v \in S$. With standard modifications, the predecessor information can also be maintained so that the shortest path tree from $s$ is obtained.

We maintain two disjoint subsets $A$ and $B$ of $S$. The set $A$ consists of all sites $v$ whose values $\dist[v]$ have already been finalized, and $B = S \setminus A$. We further partition $A$ into two subsets, $A_1$ and $A_2$, such that no site of $A_1$ is adjacent to any site of $B$ in $G(S)$. Initially, we set $\dist[s] = 0$, $A_2 = A = \{s\}$, $A_1 = \emptyset$, and $B = S \setminus \{s\}$. We also initialize $\dist[v] = \infty$ for every $v \in S \setminus \{s\}$.

In each iteration, we select
$a = \arg\min_{v \in A_2} \dist[v]$,
and perform the following two subroutines:

\begin{description}
    \item[Subroutine I.] Compute the set $B_a = \{\, b \in B : d(a,b) \le 1 \,\}$,
    i.e., all sites in $B$ that are adjacent to $a$ in $G(S)$.

    \item[Subroutine II.] For each site $b \in B_a$, compute
    $\sigma_b = \arg\min_{p \in A_2} \bigl( \dist[p] + d(p,b) \bigr)$,
    and set $\dist[b] = \dist[\sigma_b] + d(\sigma_b, b)$.
\end{description}

Afterward, we move all sites of $B_a$ from $B$ into $A_2$, and we move $a$ from $A_2$ into $A_1$. This completes one iteration. The algorithm terminates once $A_2$ becomes empty.

Although the algorithm looks simple, its correctness is not immediate. The main difficulty lies in implementing the two subroutines efficiently. Indeed, each subroutine corresponds precisely to one of the two subproblems introduced in Section~\ref{sec:results}.


\subsection{Algorithm correctness}
\label{sec:weightcorrect}

We now prove the correctness of the algorithm. In particular, we show that after the algorithm terminates, $\dist[v] = d_G(v)$ holds for every site $v \in S$. To this end, we establish that the algorithm maintains the following two invariants:

\begin{enumerate}
    \item[(I)] For every site $v \in A$, we have $\dist[v] = d_G(v)$.
    \item[(II)] For every site $v \in A_1$, every site $u$ adjacent to $v$ in $G(S)$ is already in $A$.
\end{enumerate}

Initially, we set $\dist[s] = 0$, $A_2 = A = \{s\}$, $A_1 = \emptyset$, and $B = S \setminus \{s\}$. It is immediate that both invariants hold at this point.

Consider any iteration of the algorithm, and assume that the invariants hold at the start of the iteration. At the end of the iteration, we move all sites of $B_a$ from $B$ into $A_2$, and we move $a$ from $A_2$ into $A_1$. By the definition of $B_a$, every neighbor of $a$ in $G(S)$ is now in $A$, and therefore Invariant~(II) continues to hold.

To verify Invariant~(I), it suffices to show that the values assigned during the iteration are correct for all sites newly added to $A$, namely those in $B_a$. This follows directly from the lemma below.

\begin{lemma}\label{lem:invariant}
For every site $b \in B_a$, we have $d_G(b) \;=\; \dist[\sigma_b] + d(\sigma_b, b)$.
\end{lemma}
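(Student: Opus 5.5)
The plan is to prove the two inequalities $d_G(b) \ge \dist[\sigma_b] + d(\sigma_b,b)$ and $d_G(b) \le \dist[\sigma_b] + d(\sigma_b,b)$ separately. Throughout, I assume that Invariants~(I) and~(II) hold at the start of the iteration. The only geometric facts needed are that $d(\cdot,\cdot)$ is a metric (so it satisfies the triangle inequality) and that edge weights equal geodesic distances.

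\emph{Lower bound.} Fix a shortest path $s=v_0,v_1,\dots,v_k=b$ in $G(S)$. Since $s\in A$ and $b\in B$, there is a largest index $i<k$ with $v_i\in A$; then $v_{i+1}\in B$. Because $v_i$ is adjacent to a site of $B$, Invariant~(II) rules out $v_i\in A_1$, so $v_i\in A_2$. By Invariant~(I), $\dist[v_i]=d_G(v_i)$.

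The subpath of the shortest path from $v_0$ to $v_i$ is itself a shortest path, so its length is $d_G(v_i)$. The remaining portion has length $\sum_{j\ge i} d(v_j,v_{j+1})$, which is at least $d(v_i,b)$ by the triangle inequality for geodesic distance. Hence
\[
d_G(b)\ \ge\ \dist[v_i]+d(v_i,b)\ \ge\ \min_{p\in A_2}\bigl(\dist[p]+d(p,b)\bigr)=\dist[\sigma_b]+d(\sigma_b,b).
\]

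\emph{Upper bound.} Here the main step is to show that $\sigma_b$ is actually adjacent to $b$ in $G(S)$. Without this, $\dist[\sigma_b]+d(\sigma_b,b)$ need not be the length of any path in the graph.

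Suppose for contradiction that $d(\sigma_b,b)>1$. The site $a$ lies in $A_2$ and was chosen to minimize $\dist$ over $A_2$, so $\dist[a]\le\dist[\sigma_b]$. Also $d(a,b)\le 1$ because $b\in B_a$. Combining these,
\[
\dist[a]+d(a,b)\ \le\ \dist[\sigma_b]+1\ <\ \dist[\sigma_b]+d(\sigma_b,b).
\]
This contradicts the minimality of $\sigma_b$. Therefore $d(\sigma_b,b)\le 1$, and $(\sigma_b,b)$ is an edge of $G(S)$ with weight $d(\sigma_b,b)$.

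Appending this edge to a shortest path from $s$ to $\sigma_b$, and using Invariant~(I) for $\sigma_b\in A_2\subseteq A$, gives
\[
d_G(b)\ \le\ d_G(\sigma_b)+d(\sigma_b,b)=\dist[\sigma_b]+d(\sigma_b,b).
\]
Together with the lower bound, this proves the lemma.

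The subtle point is the adjacency of $\sigma_b$ in the upper bound, and it relies precisely on the rule that $a$ is chosen as the minimum of $\dist$ over $A_2$. Ties in the $\arg\min$ cause no difficulty, since the displayed strict inequality holds for any minimizer.
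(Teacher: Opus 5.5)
Your proof is correct and follows the same outline as the paper's: $\sigma_b$ is adjacent to $b$ because $a$ minimizes $\dist$ over $A_2$; the upper bound comes from appending the edge $(\sigma_b,b)$; and the lower bound uses a vertex of a shortest $s$-$b$ path that lies in $A$ next to a vertex of $B$, which Invariant~(II) forces into $A_2$. Your lower bound is cleaner than the paper's: you apply the triangle inequality to the tail of the path and the minimality of $\sigma_b$ directly at $v_i\in A_2$, which removes the paper's case split on whether the first vertex outside $A$ is $b$, and with it the detour through $a$, the claim that $d(u,b)>1$, and the equality $d_G(u)+d(u,b)=d_G(b)$, which in general holds only as $\le$.
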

\begin{proof}
Consider a point $b\in B_a$. Our goal is to prove $d_G(b)=\dist[\sigma_b]+d(\sigma_b,b)$.

We claim that $d(\sigma_b,b)\leq 1$. Indeed, since $a\in A_1$, by the definition of $\sigma_b$, $\dist[\sigma_b]+d(\sigma_b,b)\leq \dist[a]+d(a,b)$. By the definition of $a$, we have $\dist[a]\leq \dist[\sigma_b]$. Hence, we obtain $d(\sigma_b,b)\leq d(a,b)$. By the definition of $B_a$, $d(a,b)\leq 1$ holds. Therefore, $d(\sigma_b,b)\leq 1$.

Since $\sigma_b\in A_2\subseteq A$, according to the first algorithm invariant, we have $\dist[\sigma_b]=d_G(\sigma_b)$.

We now prove $d_G(b)=\dist[\sigma_b]+d(\sigma_b,b)$.
Assume to the contrary that $d_G(b)\neq \dist[\sigma_b]+d(\sigma_b,b)$. As $d(\sigma_b,b)\leq 1$, $\sigma_b$ is adjacent to $b$ in $G(S)$. Hence, $d_G(b)\leq d_G(\sigma_b)+d(\sigma_b,b)= \dist[\sigma_b]+d(\sigma_b,b)$. Since $d_G(b)\neq \dist[\sigma_b]+d(\sigma_b,b)$, we obtain
\begin{equation}\label{equ:contradict}
    d_G(b)< \dist[\sigma_b]+d(\sigma_b,b).
\end{equation}

Let $\pi_G(s,b)$ be a shortest path from $s$ to $b$ in $G(S)$ (with an orientation from $s$ to $b$). Let $v$ be the first vertex of $\pi_G(s,b)$ not in $A$. Since $b\not\in A$, such a vertex $v$ must exist. Let $u$ be the predecessor of $v$ in $\pi_G(s,b)$. See Figure~\ref{fig:path}. Hence, $u$ is adjacent to $v$ in $G(S)$ and $u\in A$.
Depending on whether $v$ is $b$, there are two cases.

\paragraph{The case $\boldsymbol{v=b}$.}
If $v=b$, then $d_G(b)=d_G(v)+d(v,b)$. By the definition of $\sigma_b$, we have $\dist[\sigma_b]+d(\sigma_b,b)\leq \dist[v]+d(v,b)$. Since $u\in A$, by our first algorithm invariant, $\dist[u]=d_G(u)$. Consequently, we can derive
\begin{equation*}
\begin{split}
    d_G(b)& =d_G(u)+d(u,b)
           =\dist[u]+d(u,b)
           \geq \dist[\sigma_b]+d(\sigma_b,b),
\end{split}
\end{equation*}
which contradicts with \eqref{equ:contradict}.

\paragraph{The case $\boldsymbol{v\neq b}$.}
If $v\neq b$, depending on whether $u$ is in $A_1$ or $A_2$ (recall that $u\in A$), there are further two subcases.

If $u\in A_1$, then by our second algorithm invariant, all vertices of $G(S)$ adjacent to $u$ must be in $A$. As $u$ is adjacent to $v$ in $G(S)$, $v$ must be in $A$, which contradicts the fact that $v\not\in A$. Hence, the case $u\in A_1$ cannot happen.

\begin{figure}[t]
\begin{minipage}[t]{\linewidth}
\begin{center}
\includegraphics[totalheight=1.0in]{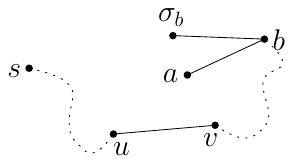}
\caption{The three solid segments represent edges in $G(S)$. The two dotted curves represent two subpaths of $\pi(s,b)$.}
\label{fig:path}
\end{center}
\end{minipage}
\vspace*{-0.15in}
\end{figure}

In the following, we assume that $u\in A_2$. By the definition of $a$, since $u\in A_2$, we have
\begin{equation}\label{equ:invariant20}
    \dist[a]\leq \dist[u].
\end{equation}

Since $v\neq b$ and $u$ is the predecessor of $v$ in $\pi(s,b)$, $u$ cannot be adjacent to $b$ in $G(S)$ since otherwise $u$ could connect to $b$ directly to shorten $\pi(s,b)$ by triangle inequality. Hence, $d(u,b)>1$. Since $d(a,b)\leq 1$, we have $d(a,b)<d(u,b)$. As both $a$ and $u$ are in $A_2$, by the definition of $a$, $\dist[a]\leq \dist[u]$ holds. We thus obtain
\begin{equation*}
    \dist[a]+d(a,b)<\dist[u]+d(u,b).
\end{equation*}

By the definition of $\sigma_b$, $\dist[\sigma_b]+d(\sigma_b,b)\leq \dist[a]+d(a,b)$. Hence, $\dist[\sigma_b]+d(\sigma_b,b) < \dist[u]+d(u,b)$.
Since $u\in A$, by our first algorithm invariant, $\dist[u]=d_G(u)$. Therefore,
\begin{equation*}
    \dist[\sigma_b]+d(\sigma_b,b) < d_G(u)+d(u,b)=d_G(b).
\end{equation*}
But this contradicts \eqref{equ:contradict}.
The lemma thus follows.
\end{proof}

In summary, both invariants continue to hold after each iteration. When the algorithm terminates, we have $A_2 = \emptyset$. We distinguish two cases depending on whether $B$ is empty.

If $B = \emptyset$, then every site of $S$ belongs to $A$. By Invariant~(I), we have $\dist[v] = d_G(v)$ for all $v \in S$, and hence the algorithm is correct.

If $B \neq \emptyset$, then since $A_2 = \emptyset$, Invariant~(II) implies that no site in $B$ is adjacent in $G(S)$ to any site in $A$. Consequently, no site in $B$ is reachable from $s$ in $G(S)$, and therefore $d_G(v) = \infty$ for all $v \in B$. Thus, the algorithm is also correct in this case.


\subsection{Algorithm implementation and time analysis}
\label{sec:weighttime}

We maintain the set $A_2$ in a min-heap with $\dist[v]$ as the key for all $v \in A_2$. Thus, extracting $a = \arg\min_{v \in A_2} \dist[v]$ takes $O(\log n)$ time, and each insertion into $A_2$ also takes $O(\log n)$ time. We now discuss the implementation of the two subroutines.

\paragraph{Subroutine I.}
To compute $B_a$, we repeatedly locate a point $b \in B$ with $d(a,b) \le 1$ and delete it from $B$, stopping once no such point remains. Thus, the task reduces exactly to supporting \emph{deletion-only geodesic unit-disk range emptiness} (GUDRE) queries, as introduced in Section~\ref{sec:results}. Given a set $B$ of points in $P$, we require a data structure that supports:

\begin{description}
    \item[Delete:] Remove a point from $B$.
    \item[GUDRE query:] Given a point $a \in P$, determine whether there exists $b \in B$ with $d(a,b) \le 1$, and if so, return such a point.
\end{description}

One of our main results is the following theorem, proved in Section~\ref{sec:line_seperated_disk_emptiness}.

\begin{theorem}\label{theo:rangeempty}
Let $P$ be a simple polygon with $m$ vertices, and assume that a GH data structure for $P$ has been built in $O(m)$ time. For any set $S$ of $n$ points in $P$, we can construct a geodesic unit-disk range emptiness data structure for $S$ in $O(n\log^3 m + n\log n \log^2 m)$ time such that each deletion takes $O(\log n \log^3 m)$ amortized time and each query takes $O(\log n \log^2 m)$ worst-case time.
\end{theorem}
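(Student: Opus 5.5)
We build a two-level structure: the balanced polygon decomposition reduces the problem to a ``line-separated'' version, and each line-separated subproblem is handled with a deletion-only nearest-neighbor structure that exploits the monotonicity of deletions.

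\emph{Step 1 (decomposition).} We use the BPD-tree $T_P$, which has depth $O(\log m)$. For each site $p\in S$, we locate the leaf triangle containing $p$ and walk up $T_P$. We assign $p$ to every ancestor node $v$ whose subpolygon $P_v$ contains $p$. For a query point $q$, the nodes relevant to $q$ are the ancestors of the leaf containing $q$. Take any point $p$ lying on the opposite side of the diagonal $d_v$ from $q$, inside $P_v$, where $v$ is the lowest node whose subpolygon contains both. Then $\pi(p,q)$ crosses $d_v$. So the problem at node $v$ is a line-separated problem: the sites lie on one side of $d_v$ and the queries on the other, and each query at $v$ is made on both sides. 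Sites and queries in the same leaf triangle are handled directly by Euclidean distance in $O(1)$ time. Each site thus belongs to $O(\log m)$ line-separated subproblems, and each query or deletion touches $O(\log m)$ nodes. This accounts for one $\log m$ factor in every bound.

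\emph{Step 2 (line-separated structure).} Fix a node $v$ with diagonal $d_v$ and a site set $S_v$ on one side. For a query $q$ on the other side, $d(p,q)=\min_{x\in d_v}\bigl(d(p,x)+d(x,q)\bigr)$. The geodesic Voronoi diagram of $S_v$, restricted to the opposite subpolygon, is determined by the Voronoi diagram of $S_v$ along $d_v$. That diagram is a partition of $d_v$ into $O(|S_v|)$ intervals, ordered along $d_v$. Using the GH data structure, the bisector between two sites can be intersected with $d_v$ in $O(\log^2 m)$ time by a binary search on the distance functions. This lets us compute the diagram along $d_v$ in a divide-and-conquer or incremental fashion, the line-separated analogue of the Agarwal--Arge--Staals construction, in $O(|S_v|(\log |S_v|\log m+\log^2 m))$ time. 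Nearest-neighbor queries can then be answered in $O(\log|S_v|\log m)$ time by locating $q$'s region through binary search over the ordered intervals, with each comparison a GH distance evaluation. A GUDRE query at $v$ finds the nearest site and tests whether its distance is at most $1$. Summed over the $O(\log m)$ nodes, this gives the preprocessing bound $O(n\log^3 m+n\log n\log^2 m)$ and the query bound $O(\log n\log^2 m)$.

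\emph{Step 3 (deletions; main obstacle).} Deletions are the difficult part. Rebuilding is too costly, and because deletions are not priority-ordered, a plain logarithmic method does not apply directly. Our first approach will follow the Euclidean deletion-only technique of Efrat--Itai--Katz and Wang. The key point is that the union of geodesic unit disks of $S_v$, restricted to the far side of $d_v$, has a boundary which is an ordered sequence in which each site contributes at most one arc; this is a consequence of line separation and the pseudo-disk-like behavior of geodesic disks that cross the same diagonal. We store the sites in a balanced binary tree ordered along $d_v$. At each internal node we keep the partial envelope of its subtree, represented implicitly as the list of contributing sites. A query descends the tree with $O(\log n)$ tests, each using the GH structure in $O(\log m)$ time.

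Upon a deletion, only the envelopes at the $O(\log n)$ ancestors change. Using an amortized charging scheme, we argue that a site removed from a node's envelope never reappears, since a site that once dominates stays dominated under further deletions and cannot re-enter. Each envelope update is then a merge that costs $O(\log^2 m)$ per touched arc. This gives amortized $O(\log n\log^2 m)$ per node of $T_P$, and therefore $O(\log n\log^3 m)$ per deletion overall.

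Establishing this one-arc, monotone-envelope property for geodesic disks separated by a diagonal, and carrying out the arc-intersection primitives in $O(\log^2 m)$ time with the GH structure, is the step I expect to be hardest.
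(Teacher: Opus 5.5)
You have the paper's architecture: a BPD-tree reduction to diagonal-separated subproblems, each handled by a balanced tree over the sites that stores, Overmars--van Leeuwen style, which sites contribute to the boundary of the union of geodesic unit disks beyond the diagonal. The gap is that the geodesic-specific facts that make this tree correct are exactly the ones you defer as ``hardest,'' and they are the substance of the proof. Sites off $d_v$ have no position ``along $d_v$.'' The paper orders them by geodesic distance to the lower endpoint of the diagonal ($\prec_L$). It then proves that, for every subset, the contributing sites appear along the envelope in this order; this uses the fact that the bisector of two sites meets $d$ at most once, together with star-shapedness of Voronoi regions. Via a three-site lemma it further shows that the contributor list at a node is a prefix of the left child's list followed by a suffix of the right child's. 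More seriously, you never say what test a query performs at a node. Once $\partial P$ interferes, the envelope is no longer $y$-monotone, so you cannot simply compare $y$-coordinates with the breakpoint $b$ between the two children's contributions. The paper introduces a partial order $\prec_R$ on $P_R$: compare the closest points $z_p,z_q$ on $d$, and if they coincide, use the counterclockwise order of $\pi(z_p,p)$ and $\pi(z_p,q)$ at their junction vertex. It proves this order is consistent with the traversal order of the envelope. It also needs a separate rule when $q$ and $b$ are incomparable. If $q\in\pi(d,b)$, then $r_u$ is already a witness. If $b\in\pi(d,q)$, then $b$ must be a point of $\partial P_R$ shared by both children's contributions, and the side on which $P$ lies along that boundary edge decides the child.

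Your deletion argument also has the monotonicity backwards. A dominated site \emph{can} enter the envelope, because deleting the site whose disk covered it exposes it. So ``stays dominated and cannot re-enter'' is false. The alternative reading, ``a site removed from an envelope never reappears,'' is vacuous here: in a deletion-only structure a site leaves an envelope only when it is itself deleted. What the charging needs is the converse. If $s$ contributes to the envelope of $S_v$ and some $x\neq s$ is deleted, then $\xi_s$ can only grow, so sites only move up the tree, each at most $O(\log n)$ times. The work to bound is therefore the number of newly exposed sites. For that, the update at a node is needed only when the deleted site is one of the two split sites $r_u,r_w$, and it must restart the merge scan from $r_u$ and the predecessor of $r_w$. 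Then it costs $O(\log m)$ per site that moves up, plus $O(\log^2 m)$ to recompute the two breakpoints. Rescanning from the beginning of the lists would cost $\Omega(n)$ per deletion in the worst case.

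Two smaller fixes are also needed. A leaf triangle may contain many sites, so it needs a Euclidean deletion-only unit-disk emptiness structure with $O(\log n)$ operations, not an $O(1)$ check. And your Step~2 static nearest-neighbor structure cannot answer queries once deletions occur, so the construction and query bounds must be established for the Step~3 tree itself. The paper builds that tree bottom-up with a linear merge scan costing $O(|S_v|\log m)$ per node, plus $O(\log^2 m)$ for the breakpoints.
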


Using Theorem~\ref{theo:rangeempty}, we implement Subroutine~I as follows.
Initially, $B = S \setminus \{s\}$.
We build the GUDRE data structure for $B$ in
$O(m + n\log^3 m + n\log n\log^2 m)$
time, where the $O(m)$ term is for constructing the GH data structure.
Since the algorithm performs $O(n)$ deletions and $O(n)$ GUDRE queries overall, the total running time spent in Subroutine~I is
$O(m + n \log n \log^3 m)$.

\paragraph{Subroutine II.}
For the second subroutine, observe that each $\sigma_b$ is essentially the (additively) \emph{weighted nearest neighbor} of $b$ in $A_2$, where each site $p \in A_2$ has weight $\dist[p]$, and the weighted distance from $p$ to a query point $q$ is defined as $\dist[p] + d(p,q)$. Thus, we require a weighted nearest neighbor data structure for $A_2$.

Furthermore, since at the end of each iteration, we will move all points of $B_a$ from $B$ to $A_2$ and move $a$ from $A_2$ to $A_1$, we need a dynamic data structure for $A_2$ that supports both insertions and deletions.
A general dynamic data structure supporting both insertions and deletions appears difficult to obtain. However, we make the observation that when we delete $a$ from $A_2$, $a$ is the site in $A_2$ with minimum weight.
Therefore, our data structure for Subroutine II needs to support the following operations
(1) insert a site into $A_2$, (2) delete the site from $A_2$ with minimum weight (referred to as {\em delete-min} operations),
and (3) for a query point $b$, return the weighted nearest neighbor $\sigma_b$ of $b$ in $A_2$.

We will propose a new technique in Section~\ref{sec:priorityqueue} to handle the three operations by extending Bently's logarithmic method~\cite{ref:BentleyDe80,ref:BentleyDe79}. To this end, we need a static geodesic Voronoi diagram for $A_2$ that can be implicitly built in the sense that its construction time depends only logarithmically on $m$. To explain this, consider a simplified situation where we only consider the insertion and query operations on $A_2$ as above. Then, we could use Bently's logarithmic method together with a static geodesic Voronoi diagram for $A_2$. Since computing the geodesic Voronoi diagram of $A_v$ in $P$ requires $\Omega(m)$ time (note that the best algorithm can compute the diagram in $O(m + |A_v| \log |A_v|)$ time~\cite{ref:OhOp19}), rebuilding these diagrams $n$ times would lead to $\Omega(mn)$ total time, which is too slow. Therefore, we require an \emph{implicit} construction whose dependence on $m$ is only polylogarithmic. This is exactly the subproblem introduced in Section~\ref{sec:results}. In Section~\ref{sec:ls_gawvd}, we establish the following result.

\begin{theorem}\label{theo:nearneighbor}
Let $P$ be a simple polygon with $m$ vertices, and assume that a GH data structure for $P$ has been built in $O(m)$ time. Given a set $S$ of $n$ weighted sites in $P$, we can construct a weighted nearest neighbor data structure for $S$ in $O(n \log^3 m + n \log n \log^2 m)$ time such that each query can be answered in $O(\log n \log^2 m)$ time.
\end{theorem}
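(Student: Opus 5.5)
We prove Theorem~\ref{theo:nearneighbor} by a divide-and-conquer over the balanced polygon decomposition tree $T_P$, reducing a general query to a polylogarithmic number of \emph{line-separated} queries. In each such query, the query point and the sites lie on opposite sides of a diagonal $d_v$.

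\textbf{Distributing the sites.} For each site $u\in S$ we walk down $T_P$ (depth $O(\log m)$) and locate the leaf triangle containing $u$. The site $u$ is then assigned to every node $v$ on that root-to-leaf path. A query point $q$ is handled the same way: descending to its leaf, at each node $v$ it considers the sites in the sibling subpolygon of $P_v$, which are separated from $q$ by $d_v$. Within the leaf triangle itself, $q$ and the sites are visible to one another and we answer directly.

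At a node $v$, the geodesic path from a site $u$ on one side of $d_v$ to a point $q$ on the other side crosses $d_v$. Hence
\[
w(u)+d(u,q)=\min_{x\in d_v}\bigl(w(u)+d(u,x)+d(x,q)\bigr).
\]
The weighted diagram of the sites $S_v$ restricted to the side of $q$ is therefore determined by the function $f_u(x)=w(u)+d(u,x)$ on $d_v$. Each $d_v$ is treated once per site, so the total size of all the sets $S_v$ is $O(n\log m)$.

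\textbf{Line-separated structure.} For a fixed diagonal $d_v$ and the set $S_v$ on one side, we build the ``lower envelope'' of the functions $f_u$ along $d_v$, i.e.\ the order in which sites own parts of $d_v$ in the weighted Voronoi diagram restricted to $d_v$.

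The key structural claim is that in a simple polygon each site's region, intersected with the far subpolygon, is connected and meets $d_v$ in an interval. The regions then appear along $d_v$ in a fixed order, consistent with the order of the sites' projections (anchors) onto $d_v$. This is the usual ``ordering'' property of geodesic Voronoi diagrams separated by a geodesic or segment; with additive weights it must be reproved via the crossing argument: two weighted bisectors relative to paths that cross $d_v$ cannot cross twice.

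Given this property, the envelope is computed by sorting sites along $d_v$ and running a divide-and-conquer (or stack-based) merge. Each pairwise bisector-with-$d_v$ intersection is found by binary search on $d_v$ using GH distance queries, costing $O(\log^2 m)$. This gives $O(|S_v|\log|S_v|\log^2 m)$ per node, i.e.\ $O(n\log n\log^2 m)$ summed over a level. The $n\log^3 m$ term absorbs locating sites in $T_P$ and computing funnel and anchor information.

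\textbf{Querying.} For $q$ on the opposite side of $d_v$, we need the site minimizing $w(u)+d(u,q)$. The ordering property extends from $d_v$ into the far subpolygon: the sites owning the envelope are ordered, and the owner of $q$ can be found by binary search over the envelope sequence. At each step we compare two adjacent candidates $u,u'$, computing $w(u)+d(u,q)$ versus $w(u')+d(u',q)$ with GH in $O(\log m)$, and decide on which side of the bisector $q$ lies. That decision in turn needs $O(\log m)$ time via the funnel of $q$ onto $d_v$. This costs $O(\log n\log m)$ per node, and summing over the $O(\log m)$ nodes gives $O(\log n\log^2 m)$.

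\textbf{Main obstacle.} The crux is the weighted ordering and monotonicity lemma. We need that the weighted bisector of two separated sites, restricted to the far side, is a single curve crossing $d_v$ at most once. We also need that the side test against it can be answered by comparing distances. Additive weights break the unweighted argument of Agarwal--Arge--Staals, because a site may own nothing, or a site's region may fail to reach $d_v$ at all. The first approach to try is to work with the envelope on $d_v$ only: since every path to $q$ passes through $d_v$, $q$'s nearest site is the owner of the point $x\in d_v$ minimizing $f_{\mathrm{env}}(x)+d(x,q)$. We then argue, using the convexity of $d(x,q)$ along the segment in a simple polygon, that this minimizer can be located by binary search over the envelope pieces.
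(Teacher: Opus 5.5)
Your reduction to diagonal-separated instances over $T_P$ matches the paper's. However, your diagonal-separated structure rests on a claim that is false for additive weights. A site's region need not meet $d_v$ in one interval, the regions need not appear in a fixed order, and the bisector of two separated sites can meet $d_v$ twice.

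A counterexample exists already when $P$ is convex, so that geodesic distance is Euclidean. Let $d$ be the segment $x=0$, $|y|\le 10$, let $u=(-10,0)$ have weight $0$, and let $t=(-1,0)$ have weight $5$. Then $t$ owns $|y|<7.33$, while $u$ owns both pieces $7.33<y\le 10$ and $-10\le y<-7.33$. In general a region of $\vd(S,d)$ can have $\Omega(n)$ pieces. What does hold is Lemma~\ref{lem:xmonotone}: each half-bisector meets $d$ at most once, so $B(s,t)$ meets $d$ at most twice.

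This breaks three parts of your plan:
\begin{itemize}
\item Your ``sort along $d_v$ and stack-merge'' construction has no justification.
\item Your query step fails. Binary search over the envelope sequence by comparing adjacent candidates is wrong even without weights. Suppose consecutive owners $s_1,s_2,s_3$ meet at a Voronoi vertex $p\in P_R$. Beyond $p$, the continuation of $B(s_2,s_3)$ splits the region of $s_1$, and a point $q$ there can satisfy $d(s_1,q)<d(s_3,q)<d(s_2,q)$. Comparing $s_2$ with $s_3$ then discards the true owner $s_1$.
\item Your fallback does not help. $f_{\mathrm{env}}$ is a lower envelope, not a convex function, so $f_{\mathrm{env}}(x)+d(x,q)$ is not unimodal on $d_v$ and cannot be minimized by binary search.
\end{itemize}

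The missing ideas are how the paper copes with bisectors that cross $d$ twice. First, it builds $\vd(S,d)$ incrementally, inserting sites in increasing weighted distance to the lower endpoint $b$ of $d$. Lemma~\ref{lem:one_seg} shows that each new site then owns at most one segment of $d$. That segment is found by an $O(\log n)$-step search in a balanced tree with $O(\log m)$-time comparisons (Lemma~\ref{lem:vdd_bs}), plus $O(1)$ bisector--diagonal intersections (Lemma~\ref{lem:algointerdiagonal}).

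Second, it treats each piece $d_i$ as its own occurrence $s_i$. It uses two facts: the relevant half-bisectors $B_{ij}$ are $x$-monotone, and $B_{ij}$ and $B_{jk}$ cross at most once (Observation~\ref{obser:bisectorintersect}). With these it extends the diagram into $P_R$ by wavefront propagation with bisector-intersection events (Lemma~\ref{lem:bisectorintersection}). This yields the abstract graph $\avd(S,P_R)$ of size $O(n)$.

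Third, queries use Edelsbrunner--Guibas--Stolfi planar point location on $\avd(S,P_R)$. Each step is an $O(\log m)$ above/below test against $B_{ij}$. Because $B(s_i,s_j)$ may meet $d$ twice, this test is not a plain comparison of weighted distances.

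Your running time also does not add up. Suppose you repair the envelope step by a divide-and-conquer merge; pairs of $f_u$ cross at most twice, so envelopes have linear size. The cost is still $O(|S_v|\log|S_v|\log^2 m)$ per instance. Summed over all nodes, with total size $O(n\log m)$, this gives $O(n\log n\log^3 m)$, not the claimed $O(n\log^3 m+n\log n\log^2 m)$. The paper's per-instance bound $O(n\log^2 m+n\log n\log m)$ in Lemma~\ref{lem:diasepneighborquery} comes precisely from keeping the $\log n$-depth searches at $O(\log m)$ per step.

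Finally, the leaf triangles need an additively weighted Euclidean Voronoi diagram with point location, not a direct scan.
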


With Theorem~\ref{theo:nearneighbor}, the technique in Section~\ref{sec:priorityqueue} obtains the following result. 

\begin{theorem}\label{theo:deletemin}
Let $P$ be a simple polygon with $m$ vertices, and assume that a GH data structure for $P$ has been built in $O(m)$ time.
We can maintain a dynamic set $S$ of $n$ weighted points in $P$,  with  $O(\log n \log^3 m + \log^2 n \log^2 m)$ amortized insertion and delete-min time, so that each weighted nearest neighbor query on $S$ can be answered in $O(\log^2 n\log^2 m)$ worst-case time.
\end{theorem}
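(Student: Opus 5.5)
This theorem follows from Theorem~\ref{theo:nearneighbor} and the general transformation for decomposable searching problems that Section~\ref{sec:priorityqueue} provides. The plan has two parts: design that transformation, then substitute the bounds of Theorem~\ref{theo:nearneighbor}.

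\emph{Decomposability and parameters.} Weighted nearest neighbor search is decomposable. If $S$ is split into groups, the answer for $S$ is the minimum over the answers of the groups, each candidate being compared by $\dist[p]+d(p,b)$. By Theorem~\ref{theo:nearneighbor}, a static structure on $k$ weighted sites has preprocessing time $P(k)=O(k\log^3 m + k\log k\log^2 m)$, which is at least linear in $k$. Its query time is $Q(k)=O(\log k\log^2 m)$, which is monotone in $k$.

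\emph{The structure.} I keep $O(\log n)$ groups $S_0,S_1,\dots$, with $|S_i|\le 2^i$, each stored in the static structure. A separate ordinary heap over all sites answers ``which site has minimum key'' and tells us which group contains it. The dynamic operations are as follows.
\begin{itemize}
\item \textbf{Query.} Query every group and return the best answer. This costs $O(\sum_i Q(2^i))=O(\log^2 n\log^2 m)$ in the worst case.
\item \textbf{Insertion.} Use Bentley's binary-counter merging. A site is rebuilt $O(\log n)$ times, so the amortized cost is $O(P(n)\log n/n)=O(\log n\log^3 m+\log^2 n\log^2 m)$.
\item \textbf{Delete-min.} This is the main obstacle, because the static structures support no deletions. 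My plan is to organize the groups so that keys are \emph{sorted across groups}: each group holds a contiguous range of keys, and smaller keys sit in smaller groups. The minimum then always lies in the smallest nonempty group. Deleting it is done by rebuilding only that small group, or by charging the deletion to a halving rebuild.
\end{itemize}

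\emph{Maintaining the key ordering.} The concrete scheme is the following. An inserted element goes into a buffer level $0$. When levels $0..i$ must merge, I rebuild them by sorting all their elements by key and redistributing them so that the smallest keys occupy the lowest levels. Level $i$ then holds the largest keys, up to capacity $2^i$.

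A delete-min removes the minimum. That minimum may be a freshly inserted element at any level, found through the global heap. I handle this by marking it deleted with a tombstone. When a group has lost half of its elements, I rebuild it.

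The ordering is what keeps the tombstones harmless. Deletions are always minimums, and each group holds a contiguous key range, so a group's deleted elements are exactly a prefix of its keys. The query for the group must return a live site, though. To get this, I will also split each group into $O(\log)$ sub-blocks by key, so that whole sub-blocks die. Dead sub-blocks are then simply skipped during queries.

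\emph{Accounting.} Each element is charged $O(\log n)$ rebuilds: $O(\log n)$ for insertion merges, plus $O(1)$ amortized halving rebuilds per level it passes through. Every rebuild costs $P(k)/k$ per element. This gives the claimed $O(\log n\log^3 m+\log^2 n\log^2 m)$ amortized bound for both insertion and delete-min, with query time $O(\log^2 n\log^2 m)$.

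The step I expect to need care is guaranteeing that the sub-block refinement does not add another $\log$ factor to query time. To control this, I would try geometric sub-block sizes inside each level. Then the number of live sub-blocks queried in total stays $O(\log n)$, and the query bound is preserved by the monotonicity of $Q$.
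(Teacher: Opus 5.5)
Your Bentley layer is fine: weighted nearest neighbor is decomposable, $O(\log n)$ static groups give query cost $\sum_i Q(2^i)=O(\log^2 n\log^2 m)$, and insertion costs $O(P(n)\log n/n)$. The gap is in delete-min.

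First, the cross-group key order you want is not an invariant of your insertions. A new element with a large key lands in level $0$, and after merging levels $0,\dots,i$, level $i$ holds their largest keys even though level $i+1$ may hold smaller ones. Parking small keys in the low levels would also conflict with keeping room there for the binary counter. You concede this and fall back on tombstones, and that is where the argument breaks. The structure of Theorem~\ref{theo:nearneighbor} reports the weighted nearest site among \emph{all} sites it was built on. If that site is dead, you learn nothing about the nearest live site of the group. Decomposability lets you combine answers over disjoint live sets; it does not let you subtract deleted elements.

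Second, your sub-blocks only discard \emph{fully} dead pieces. A static group loses elements in increasing key order, so at most one of its key-range sub-blocks is partially dead. That one is exactly the sub-block holding the group's current minimum, so it must be rebuilt or split on every deletion that hits it. You give neither this mechanism nor its amortization.

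Third, the query bound fails. Geometric sizes bound the number of sub-blocks \emph{per level} by $O(\log n)$, not in total. Nothing prevents all $\Theta(\log n)$ levels from being fragmented at once, so a query may touch $\Theta(\log^2 n)$ static structures. The provable query bound then becomes $\sum_{i\le\log n}\sum_{j<i}O(j\log^2 m)=O(\log^3 n\log^2 m)$, one logarithmic factor too many.

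What is missing is a guarantee that the element to be deleted never sits in a large static structure. Then deletions are genuine removals, there are no tombstones, and only $\lceil\log n\rceil+1$ structures are ever queried. The paper achieves this as follows.
\begin{itemize}
\item Inside each $S_i$ it keeps a subset $C_i$ consisting of the $|C_i|$ smallest keys of $S_{\ge i}$. Its size is kept between $2^{i-2}$ and $2^{i-1}$ (with $|C_1|=1$). This window is what reconciles parking small keys low with leaving room for insertions, which carry only $S_i\setminus C_i$ upward.
\item The minimum therefore always lies in $S_0\cup C_1$ and is physically removed.
\item When $|C_i|$ drops below $2^{i-2}$, $C_i$ is refilled with the smallest keys of $S_i\cup S_{i+1}$. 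Elements are swapped between $S_i$ and $S_{i+1}$ to preserve $|S_i|\le 2^i$, and the cascade continues upward.
\end{itemize}
The slack in $|C_i|$ yields $\Omega(2^i)$ updates between any four consecutive level-$i$ rebalances (Lemma~\ref{lem:amortize_lemma}). Hence both updates cost $O(\tau\log n)$ amortized with $\tau=O(\log^3 m+\log n\log^2 m)$, while queries still touch only $O(\log n)$ tombstone-free structures.
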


With Theorem~\ref{theo:deletemin}, the total running time of Subroutine~II over the entire algorithm is $O(m + n\log n \log^3 m + n\log^2 n\log^2 m)$, including the $O(m)$ time required to build the GH data structure for $P$.

\paragraph{Summary.}
Combining the above discussions, we obtain the following theorem.

\begin{theorem}
The SSSP problem in the weighted geodesic unit-disk graph in a simple polygon of $m$ vertices for a set of $n$ points can be solved in $O(m + n \log^2 n \log^2 m)$ time.
\end{theorem}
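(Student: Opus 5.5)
The proof assembles the pieces already stated. The algorithm of Section~\ref{sec:weightdescription} is correct by Lemma~\ref{lem:invariant} and the invariant argument of Section~\ref{sec:weightcorrect}, so only the running time needs to be bounded.

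We build the GH data structure once in $O(m)$ time; it is shared by both subroutines. Each site enters $A_2$ at most once, since sites only move $B\to A_2\to A_1$. Hence there are at most $n$ iterations, at most $n$ insertions into $A_2$, and at most $n$ delete-min operations.

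\textbf{Subroutine~I.} We use Theorem~\ref{theo:rangeempty} on $B$. Each iteration issues GUDRE queries until one fails, and every successful query is followed by a deletion. The total is therefore at most $2n$ queries and $n$ deletions, giving
\[
O(n\log^3 m + n\log n\log^2 m) + O(n\log n\log^3 m) + O(n\log n\log^2 m)
\]
overall.

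\textbf{Subroutine~II.} We use Theorem~\ref{theo:deletemin} on $A_2$. The weight of a site is $\dist[\cdot]$, which is fixed at insertion and never changes afterward. Each $b\in B_a$ triggers one weighted nearest-neighbor query, so there are $O(n)$ queries in total, costing $O(n\log^2 n\log^2 m)$. The $O(n)$ insertions and delete-mins cost $O(n\log n\log^3 m + n\log^2 n\log^2 m)$ amortized. The delete-min operation itself returns $a=\arg\min_{v\in A_2}\dist[v]$, so no separate heap is needed, although one would add only $O(n\log n)$.

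Summing everything gives $O(m + n\log n\log^3 m + n\log^2 n\log^2 m)$.

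The remaining step is to remove the $n\log n\log^3 m$ term to match the claimed $O(m+n\log^2 n\log^2 m)$. When $\log m \le \log n$, this term is at most $n\log^2 n\log^2 m$ and we are done.

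When $m > n$, I would argue that the $\log m$ factors can be replaced by $\log n$ in the extra factor. The natural attempt is to make the data structures operate on a polygon of size polynomial in $n$. One way is to note that the $\log^3 m$ term arises from computing funnels and bisectors via the BPD. Another is a preprocessing step that uses the $O(m)$ budget to restrict attention to the relevant portion of $P$: for example, compute the geodesic hull or a relative convex hull of $S$ in $O(m+n\log(n+m))$ time, which has $O(n)$ complexity in the relevant sense. This would replace $m$ by $\min(m,\mathrm{poly}(n))$ in the polylog factors, so that $\log n\log^3 m$ becomes dominated by $\log^2 n\log^2 m$.

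This reconciliation of the $\log^3 m$ term is the step I expect to be the main obstacle. If it cannot be removed, the honest bound is the sum above, and the theorem's stated bound requires a finer accounting in Theorems~\ref{theo:rangeempty}--\ref{theo:deletemin}.
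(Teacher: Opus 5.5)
Your accounting for Subroutines~I and~II matches the paper's and correctly gives a total of $O(m + n\log n\log^3 m + n\log^2 n\log^2 m)$. However, you then leave the final step open, so as written the proposal does not prove the stated bound. The missing idea is a two-case comparison; no data structure needs to change.
\begin{itemize}
\item If $m \le n^2$, then $\log m \le 2\log n$, so $n\log n\log^3 m = (n\log n\log m)\log^2 m \le 2n\log^2 n\log^2 m$.
\item If $m > n^2$, then $n < \sqrt{m}$ and $\log n < \log m$, so $n\log n\log^3 m < \sqrt{m}\,\log^4 m = O(m)$. The term is absorbed by the additive $m$ already paid for the GH data structure.
\end{itemize}
Hence $n\log n\log^3 m = O(m + n\log^2 n\log^2 m)$ for all $m,n$, and this is how the paper finishes. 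Your split at $m \le n$ was too tight. Any polynomial threshold $m \le n^c$ still gives $\log m = O(\log n)$. Above such a threshold, the additive $m$ swallows every $n\cdot\mathrm{polylog}(m)$ term, so no intermediate regime remains and no ``finer accounting'' is needed.

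The fix you sketch instead would not work. The relative (geodesic) convex hull of $S$ in $P$ does not have $O(n)$ or $\mathrm{poly}(n)$ complexity in general. With just two sites at the ends of a winding corridor, the shortest path between them, and hence the hull, has $\Theta(m)$ polygon vertices. So it does not let you replace $m$ by $\min(m,\mathrm{poly}(n))$.

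One small implementation point: do not obtain $a$ by calling delete-min at the start of an iteration. If you do, $a$ is no longer in $A_2$ during the Subroutine~II queries, whereas Lemma~\ref{lem:invariant} relies on $\dist[\sigma_b]+d(\sigma_b,b)\le \dist[a]+d(a,b)$. Instead, read the minimum without deleting it (it is the smaller-keyed of the points in $S_0$ and $C_1$), and call delete-min only after the queries.
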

\begin{proof}
The total time of Subroutine~I is $O(m + n \log n \log^3 m)$ while the total time of Subroutine~II is 
$O(m + n \log n \log^3 m + n \log^2 n \log^2 m)$. 
We now show that $n \log^2 n \log^2 m=O(m + n \log^2 n \log^2 m)$. Indeed, if $m < n^2$, then $\log m = O(\log n)$, and thus $n \log^2 n \log^2 m = O(n \log^2 n \log^2 m)$. If $m \ge n^2$, then $\log m = \Omega(\log n)$ and $n \log^2 n \log^2 m = O(m)$. In either case, the bound $O(m + n \log^2 n \log^2 m)$ holds. Hence, the running time of the whole algorithm is $O(m + n \log^2 n \log^2 m)$, as claimed.
\end{proof}

\section{A dynamic data structure for priority-queue updates}
\label{sec:priorityqueue}

In this section, we prove Theorem~\ref{theo:deletemin} using the result of 
Theorem~\ref{theo:nearneighbor}. 
Our technique is quite general: it extends Bentley’s logarithmic method 
from supporting insertions only to supporting both insertions and 
delete-min operations, which we refer to as \emph{priority-queue updates}.
As in Bentley’s original framework, the technique applies more broadly 
to other decomposable searching problems. 
We first present the proof of Theorem~\ref{theo:deletemin} 
and then discuss how the technique can be generalized.

We follow the notation in Theorem~\ref{theo:deletemin}. We assume that each point $p$ of $S$ has a key, denoted by $key(p)$, which is equal to its weight $\dist[p]$ Subroutine~II; but in general, the key can be any number. 
We consider the following operations on $S$. 

\begin{description}
    \item[Query:] Given a query point $q$, compute its weighted nearest neighbor in $S$. 
    \item[Insertion:] Insert a new point into $S$. 
    \item[Delete-Min:] Find a point of $S$ with the smallest key and delete it from $S$. 
\end{description}

We begin with describing the data structure. 

\paragraph{Data structure description.}
Let $n = |S|$ and $\Psi(S)$ denote our data structure for $S$. 
We partition $S$ into subsets $S_0, S_1, \ldots, S_\clg$. For each $S_i$, $0\leq i \leq \clg$, we compute the data structure of Theorem~\ref{theo:nearneighbor} on $S_i$ and use $\calD(S_i)$ to denote it.
For each $S_i$, we also maintain a subset $C_i \subseteq S_i$. 

For ease of exposition, we assume that $|S|\geq 16$ (otherwise we could simply handle the operations by brute-force) and thus $\clg\geq 4$. For any $j$, define $S_{\ge j}=\bigcup_{i= j}^{\clg}S_i$. 
Our data structure maintains the following invariants:

\begin{enumerate}    
    \item For all $0\leq i \leq \clg$, $|S_i| \leq 2^i$.
    \item $|C_0| = 0$, $|C_1|=1$, and $2^{i - 2} \leq |C_i| \leq 2^{i - 1}$ for all $2\leq i\leq  \clg-1$. For $C_\clg$, we only require an upper bound $|C_{\clg}|\leq 2^{\clg-1}$.
    \item For any $0\leq i\leq \clg$, for all points $p \in C_i$ and all points $q \in S_{\ge i}\setminus C_i$, $key(p)\leq key(q)$. In other words, $C_i$ contains $|C_i|$ points of $S_{\ge i}$ with the smallest keys.
    \item For any $i > \clg$, $S_i = \emptyset$.
\end{enumerate}

Note that by our invariants $S_0$ contains at most one point and $C_1$ contains exactly one point. Furthermore, either the point in $C_1$ or the point in $S_0$ (if $S_0\neq \emptyset$) has the smallest key of $S$. Hence, we can find a point of $S$ with the smallest key in $O(1)$ time. 

\paragraph{Queries.}
The query algorithm is straightforward (refer to Algorithm~\ref{algo:A2_query} for the pseudocode). Given a query point $q$, our goal is to compute its weighted nearest neighbor $\beta_q(S)$ in $S$. For each $0\leq i \leq \clg$, we find $q$'s nearest neighbor in $S_i$ using the data structure $\calD(S_i)$. Finally, among the nearest neighbors for all $S_i$, the one whose weighted distance from $q$ is the smallest is $\beta_q(S)$. The query time is $O(\log^2 n\log^2 m)$ by Theorem~\ref{theo:nearneighbor}. 

\begin{algorithm}
    \caption{The query algorithm.} \label{algo:A2_query}
    \KwIn{$\Psi(S)$, and a point $q \in P$.}
    \KwOut{Weighted nearest neighbor of $q$ in $S$.}
    $Q \gets \emptyset$\;
    \For{$i =0, \ldots, \clg$}{
        $p_i \gets$ weighted nearest neighbor of $q$ in $S_i$\;
        add $p_i$ to $Q$\;        
    }
    $\beta_q(S) \gets$ weighted nearest neighbor of $q$ in $Q$\;
    \Return{$\beta_q(S)$\;}
\end{algorithm}

In the following, we describe the insertion algorithm in Section~\ref{sec:insert}. The algorithm for the delete-min operation is in Section~\ref{sec:deletemin}. We analyze the time complexities of the two update operations in Section~\ref{sec:time}. 

\subsection{Insertion}
\label{sec:insert}

To insert a point $b$ to $S$, our algorithm works as follows (refer to Algorithm~\ref{algo:A2_insert} for the pseudocode). 


We start with incrementing $n$ by one and then adding $b$ to $S_0$.
Next, for each $i$ starting from $0$, while $|S_i|> 2^i$, we perform the following {\em insertion-induced rebalance procedure}. 

We first move the points of $S_i\setminus C_i$ to $S_{i+1}$ and reset $S_i=C_i$. Then, we rebuild $\calD(S_i)$ by calling the construction algorithm of $\calD(S_i)$ on $S_i$. In addition, we update $C_{i+1}$ as follows. Let $c_{i+1}$ be the current size of $C_{i+1}$. 
We update $C_{i+1}$ by first emptying it and then adding to it $c_{i+1}$ points of $S_{i+1}$ with the smallest keys. As such, $C_{i+1}$ does not change its size although its content may be changed. A special case happens if $i=\clg -2$ and $\clg = \lceil\log (n-1)\rceil+1$. In this case, we know $|C_{i+1}| \leq 2^i$ by the second invariant before the insertion but we do not have a lower bound for it. In order to re-establish the second invariant after the insertion, since now $i+1=\clg -1$, we need to guarantee $2^{i-1}\leq |C_{i+1}|\leq 2^i$. To this end, we do the following. 
As $|S_i|> 2^i$ and $|C_i|\leq 2^{i-1}$, $S_i\setminus C_i$ has at least $2^{i-1}$ points. Since all points of $S_i\setminus C_i$ are moved to 
$S_{i+1}$, we have $|S_{i+1}|\geq 2^{i-1}$. 
Hence, if $c_{i+1}<2^{i-1}$, 
we first reset $c_{i+1}$ to $2^{i-1}$ and then perform the above update for $C_{i+1}$ using the new $c_{i+1}$; otherwise, we perform the update using the original $c_{i+1}$ as above. This finishes the rebalance procedure for the current iteration $i$. 

If $|S_i|\leq 2^i$, then we rebuild $\calD(S_i)$ and finish the insertion algorithm. Note that $|S_\clg|\leq n\leq 2^{\clg}$ and thus the relalance procedure can only happen in the $i$-th iterations for $i\leq \clg-1$. In addition, we make the following observation that will be used later. 
\begin{observation}\label{obser:insert}
For any $i$ with $0\leq i \leq \clg$ and $i\neq \clg-1$, $|C_i|$ does not change after the insertion. For $i=\clg-1$, if $\clg=\lceil \log (n-1)\rceil$, then $|C_i|$ also does not change; otherwise, $|C_i|$ either does not change or increases to $2^{i-2}$. 
\end{observation}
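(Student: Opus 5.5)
The plan is to trace exactly which steps of the insertion algorithm can modify $C_i$, and to compare the value of $\clg$ before and after incrementing $n$. Only two kinds of steps touch a set $C_j$. First, the rebalance at level $i$ resets $S_i=C_i$; this leaves $C_i$ unchanged as a set. Second, the same rebalance refills $C_{i+1}$ with the $c_{i+1}$ smallest-key points of the new $S_{i+1}$. Here $c_{i+1}$ is the current size of $C_{i+1}$, so $|C_{i+1}|$ is preserved, with the single exception of the special case. The rebalance is never triggered at level $\clg$, and nothing is ever added to $C_0$. Hence the only possible change in any $|C_j|$ is the special-case reset $c_{i+1}\gets 2^{i-1}$, performed when $i=\clg-2$, $c_{i+1}<2^{i-1}$, and $\clg=\lceil\log(n-1)\rceil+1$. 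Here $n$ is the new size and $\clg$ is the new value.

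Setting $j=i+1=\clg-1$, the reset changes $|C_j|$ to $2^{i-1}=2^{j-2}$. It applies only at index $\clg-1$, and only when $\clg$ has grown by one during the insertion. Therefore, for every $i\neq\clg-1$, the size $|C_i|$ is unchanged. For $i=\clg-1$, when $\clg=\lceil\log(n-1)\rceil$, the special case cannot fire and $|C_i|$ is unchanged. Otherwise $|C_i|$ either stays the same or increases to $2^{i-2}$, and it increases because the reset occurs only when $c_{i+1}<2^{i-1}$.

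One bookkeeping point must be checked carefully: whether the claimed sizes are consistent with levels that did not exist before $\clg$ increased. The only such level is the new top level $\clg$. Its set $C_{\clg}$ was empty, since by the fourth invariant $S_{\clg}=\emptyset$ beforehand. It stays empty because nothing refills it, as the rebalance procedure never reaches level $\clg-1$ as its working level. I also need to confirm that level $\clg-1=\lceil\log(n-1)\rceil$ was the old top level, whose $C$ had only an upper bound. This matches the paper's description of the special case. I expect this index matching to be the main obstacle: aligning the old and new values of $\clg$ and making sure no refill of $C_{i+1}$ happens except through the two mechanisms above. Once that is settled, the observation follows directly.
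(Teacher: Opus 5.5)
Your proposal is correct and uses the same reasoning the paper relies on implicitly: the refill at Line~\ref{ln:Ci1} takes $c_{i+1}=|C_{i+1}|$ points, which is feasible because $C_{i+1}\subseteq S_{i+1}$ and $S_{i+1}$ only grows. Hence the only size change is the special-case reset at $i=\clg-2$, which can only raise $|C_{\clg-1}|$ to $2^{\clg-3}$, and only when $\clg$ has just increased. Your side claim that the loop never rebalances at level $\clg-1$ when $\clg$ grows is true (then $n-1=2^{\clg-1}$ while $S_1=C_1$ keeps a point), but you leave it unproved and it is unnecessary: even such a rebalance would refill $C_{\clg}$ with $c_{\clg}=|C_{\clg}|=0$ points, exactly as the paper argues.
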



\begin{algorithm}
    \caption{Inserting a new point $b$ to $S$.} \label{algo:A2_insert}
    \KwIn{$\Psi(S)$, a point $b \notin S$.}
    \KwOut{$\Psi(S \cup \{b\})$.}
    $n \gets n + 1$\;
    $S_0 \gets S_0 \cup \{b\}$\;
    $i\gets 0$\;
    \While{$|S_i| > 2^i$}
    {            
        $S_{i + 1} \gets S_{i + 1} \cup (S_i \setminus C_i)$\label{ln:defineS}\;
        $S_i \gets C_i$\;
        Rebuild $\calD(S_i)$\label{line:insert_body_end}\;
        $c_{i+1}\gets |C_{i+1}|$\;
        \If{$i=\clg -2$ and $\clg = \lceil\log (n-1)\rceil+1$ and $c_{i+1}<2^{i-1}$}{$c_{i+1}\gets 2^{i-1}$\;}
        $C_{i + 1} \gets$ the set of $c_{i+1}$ points in $S_{i + 1}$ with the smallest keys\label{ln:Ci1}\;                
        $i\gets i+1$\;
    }
    Rebuild $\calD(S_i)$\label{line:insert_body_end_00}\;
\end{algorithm}


The following lemma shows that all invariants are maintained after the insertion.
\begin{lemma}
All algorithm invariants are maintained after the insertion.
\end{lemma}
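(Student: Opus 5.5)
We verify the four invariants one at a time, by induction over the iterations of the while loop in Algorithm~\ref{algo:A2_insert}. Write $n'=n+1$ for the new size and $L'=\lceil\log n'\rceil$. Either $L'=\clg_{\text{old}}$ or $L'=\clg_{\text{old}}+1$, and the second happens exactly when $\lceil\log(n'-1)\rceil+1=L'$. We also treat $C_{L'}$ as an (initially empty) set with only an upper bound. A useful fact throughout is that the rebalance procedure touches only $S_i$, $S_{i+1}$ and $C_{i+1}$. Everything else is unchanged, apart from the set $S_{\ge i}$ losing nothing: points only move upward.

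\emph{Invariant 1.} The loop stops at the first $i$ with $|S_i|\le 2^i$. Each level $j$ processed by the loop is reset to $C_j$, and $|C_j|\le 2^{j-1}\le 2^j$ (with $|C_0|=0$). Levels $j$ not processed keep their old sizes, except the final level $i$, for which the loop condition gives $|S_i|\le 2^i$. The paper already notes that the loop cannot pass level $L'-1$: all points lie in $S$, so $|S_{L'}|\le n'\le 2^{L'}$. This gives Invariant~1, and also Invariant~4, since nothing ever moves above level $L'$.

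\emph{Invariant 2.} By Observation~\ref{obser:insert}, every $|C_i|$ is preserved, except possibly $|C_{L'-1}|$ in the case $L'=\clg_{\text{old}}+1$. In that case we must show $2^{L'-3}\le|C_{L'-1}|\le 2^{L'-2}$. The old invariant gives the upper bound, because old $C_{L'-1}$ was $C_{\clg_{\text{old}}}$, the top level, bounded by $2^{\clg_{\text{old}}-1}$. The lower bound is the role of the special branch: when the loop reaches $i=L'-2$, it resets $c_{i+1}$ to at least $2^{i-1}$, and this is feasible because $|S_{i+1}|\ge 2^{i-1}$, as argued in the text. The hard part is the subcase where the loop does \emph{not} reach $i=L'-2$, so the special branch never fires. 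I would argue this subcase cannot occur when $L'$ increases. With $n'-1=2^{L'-1}$, the old structure has all $|S_j|\le 2^j$ with total $2^{L'-1}$ over levels $0,\dots,L'-1$. Here I expect to need a counting argument, or else a relaxation: the lower bound at level $\clg-1$ may only need to hold for the new top-minus-one level once that level has been rebuilt. Otherwise I would simply check that the stated invariant allows it. This counting step is the main obstacle. I would try first to show that $|S_{L'-1}|$ is small, so no cascade is needed, and that the bound on $C$ is then vacuous or handled by the special case in the algorithm.

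\emph{Invariant 3.} For the processed levels $j<i$, the new $S_j=C_j$. Before the reset, $C_j$ held the $|C_j|$ smallest keys of $S_{\ge j}$. This needs induction: $C_j$ was recomputed at step $j-1$ as the smallest of the new $S_j$, or for $j=0$ it is empty. The new $S_{\ge j}$ equals the old $S_{\ge j}$ plus $b$ when $j=0$, and otherwise the old $S_{\ge j}$ plus points moved from $S_{j-1}\setminus C_{j-1}$. For the top level, line~\ref{ln:Ci1} chooses $C_{i+1}$ as the smallest of $S_{i+1}$. Since $S_{\ge i+1}\setminus S_{i+1}$ is unchanged, we get the full property: any smaller key among the higher levels would, by the old Invariant~3 for $C_{i+1}$, contradict the minimality of old members. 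For untouched levels $j>i+1$, $S_{\ge j}$ is unchanged. For $j\le i$ the chain property is needed: new $C_j\subseteq S_j$, whose keys are no larger than those of $S_{\ge j}\setminus C_j$. This follows because every point that moves up was not among the smallest at its level. For the new point $b$ at level $0$, note that $C_0=\emptyset$, so there is nothing to check there.
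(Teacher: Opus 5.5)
Your treatment of Invariants 1, 3 and 4 follows the paper's proof. The Invariant~2 argument, however, has a genuine gap exactly at the subcase you flagged, and your proposed fix (showing the subcase cannot occur) fails.

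The subcase does occur. If $S_0=\emptyset$ before the insertion, then $b$ lands in $S_0$ with $|S_0|=1\le 2^0$, the while-loop body never runs, and $C_{L'-1}$ is untouched. Since $L'-1=\lceil\log(n'-1)\rceil$ was the old top level, the old invariants give only $|C_{L'-1}|\le 2^{L'-2}$, and they are consistent with $C_{L'-1}=\emptyset$.

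Here is a concrete instance. Let $n'-1=2^k$ with $k\ge 4$, and set:
\begin{itemize}
\item $S_0=C_0=\emptyset$;
\item $S_1=C_1$, a single point;
\item $S_j=C_j$ with $|S_j|=2^{j-1}$ for $2\le j\le k-1$;
\item the remaining $2^{k-1}+1$ points in $S_k$, with $C_k=\emptyset$;
\item every key in $S_j$ smaller than every key in $S_{j+1}$.
\end{itemize}
All four invariants hold for $n'-1$. After inserting $b$ we have $L'=k+1$ and $|C_k|=0<2^{k-2}$. So no counting argument can rescue this case: as written, the insertion procedure does not preserve the lower bound of Invariant~2 at the new level $L'-1$.

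The paper's own proof does not close this case either. It handles ``$i=\lceil\log n\rceil-1$ with $|C_i|$ unchanged'' by asserting $i=\lceil\log(n-1)\rceil-2$ and invoking the old two-sided bound. In fact $i=\lceil\log n\rceil-1=\lceil\log(n-1)\rceil$ is the old top level, where only the upper bound was maintained. Repairing the lemma therefore needs a change to the algorithm or to Invariant~2 itself, not a sharper argument. The deletion analysis would then have to be re-checked, since it uses $|C_{i+1}|\ge 2^{i-1}$ for $i=\lceil\log n\rceil-2$.

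A smaller point on Invariant~3: for the recomputed $C_{i+1}$, comparing new members against the old members of $C_{i+1}$ only works when $|C_{i+1}|$ did not grow. When the special branch enlarges $C_{i+1}$, you need the separate fact that $S_{\ge i+2}=\emptyset$ at that moment, which follows from the old Invariant~4.
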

\begin{proof}
We assume that the invariants hold before the insertion with $n-1$ (where $n$ represents the size of $S$ after the insertion).

\paragraph{The second invariant.}
We start with the second invariant. 
Consider any $i$ with $0\leq i\leq \clg$. 

If $i\leq \clg-2$ or if $i=\clg-1$ and $\clg=\lceil \log (n-1)\rceil$, then by Observation~\ref{obser:insert}, $|C_i|$ does not change due to the insertion. By the second invariant before the insertion, we have $2^{i-2}\leq |C_i|\leq 2^{i-1}$. 

If $i=\clg-1$ and $\clg\neq \lceil \log (n-1)\rceil$, then by Observation~\ref{obser:insert}, $|C_i|$ either does not change or increases to $2^{i-2}$. In the latter case, it is obviously true that $|C_i|$ satisfies the second invariant. In the first case, since $\clg\neq \lceil \log (n-1)\rceil$, we have $\clg=\lceil \log (n-1)\rceil+1$ and thus 
$i=\lceil \log (n-1)\rceil-2$. By the second invariant before the insertion, $2^{i-2}\leq |C_i|\leq 2^{i-1}$.

Finally, if $i=\clg$, then by Observation~\ref{obser:insert}, $|C_i|$ does not change. If $\clg= \lceil \log (n-1)\rceil$, then by the second invariant before the insertion, $|C_\clg|=|C_{\lceil \log (n-1)\rceil}|\leq 2^{\lceil \log (n-1)\rceil-1}=2^{\clg-1}$. Otherwise, $S_\clg$ is a new subset created by the insertion algorithm and the algorithm does not add any point to $C_\clg$ (to see this, when $i+1=\clg$ in the while loop, we have $c_{i+1}=0$ at Line~\ref{ln:Ci1}),                                                    and thus $|C_\clg|=0\leq 2^{\clg-1}$. 

This proves that the second invariant still holds.

\paragraph{The first invariant.}
For the first invariant, notice first that $|S_\clg|\leq n\leq 2^{\clg}$. For all $i =0, \ldots, \clg-1$, $|S_i|>2^i$ can only happen if the while loop has the $i$-th iteration, in which 
$S_i$ is updated to $C_i$. Therefore, it suffices to argue that $|C_i|\leq 2^i$. 
By the second invariant, we have $|C_i|\leq 2^{i-1}$. Therefore, the first invariant holds. 


\paragraph{The fourth invariant.}
Because the fourth invariant holds before the insertion, 
$S_{\clg + 1}$ is not empty only if $|S_i|>2^i$ for $i=\clg$. But that would mean $|S_i|>n$, which is not possible. Hence, $S_{\clg + 1}$ must be empty after the insertion. Thus, the fourth invariant holds. 

\paragraph{The third invariant.}
For the third invariant, it suffices to argue that right after the $i$-th iteration, $0\leq i\leq \clg-1$, the invariant holds for $C_{i+1}$ since afterwards neither $C_{i+1}$ nor $S_{\geq i+1}$ will change (recall that $C_0=\emptyset$ by the second invariant, so we do not need to argue the case for $C_0$). 
Consider two points $p\in C_{i+1}$ and $q\in S_{\geq i+1}\setminus C_{i+1}$. Our goal is to show that $key(p)\leq key(q)$. In the following, we consider the situation right after the $i$-th iteration. 

Let $C_{i+1}'$ denote the original $C_{i+1}$ before the insertion. Similarly, let $S_{i+1}'$ denote the original $S_{i+1}$ before the insertion. 
Note that $S_j$ for all $j\geq i+2$ is the same as before the insertion because we are considering the situation right after the $i$-th iteration. Hence, $S_{\geq i+2}$ is the same as before the insertion. We distinguish two cases depending on whether $q\in S_{i+1}$. 

\begin{itemize}
    \item 
If $q\in S_{i+1}$, then since $p\in C_{i+1}$, $q\not\in C_{i+1}$, and $C_{i+1}$ is the subset of $|C_{i+1}|$ points of $S_{i+1}$ with the smallest keys (Line~\ref{ln:Ci1}), $key(p)\leq key(q)$ must hold.

\item 
If $q\not\in S_{i+1}$, then $q\in S_{\geq i+2}$. This implies that $i\leq \clg-2\leq \lceil \log(n-1)\rceil-1$. 
By Observation~\ref{obser:insert}, $|C'_{i+1}|=|C_{i+1}|$.
We further have two subcases depending on whether $p\in C'_{i+1}$. 

\begin{itemize}
\item 
If $p\in C'_{i+1}$, then since $S_{\geq i+2}$ is the same as before the insertion, by the third invariant before the insertion, $key(p)\leq key(q)$ holds. 

\item 
If $p\not\in C'_{i+1}$, then since $S'_{i+1}\subseteq S_{i+1}$ by Line~\ref{ln:defineS} and $|C'_{i+1}|=|C_{i+1}|$, $key(p)\leq key(p')$, where $p'$ is a point of $C'_{i+1}$ with the largest key. Because $p'\in C'_{i+1}$ and $q\in S_{\geq i+2}$, by the third invariant before the insertion, $key(p')\leq key(q)$ holds. Therefore, we obtain $key(p)\leq key(q)$. 
\end{itemize}
\end{itemize}

This proves $key(p)\leq key(q)$. The third invariant thus follows.  
\end{proof}

\subsection{Delete-Min}
\label{sec:deletemin}

We now discuss the delete-min operation. 
Between the point of $C_1$ and the point of $S_0$ (if $S_0\neq\emptyset$), let $a$ denote the one with a smaller key. As discussed before, $a$ has the smallest key among all points of $S$. 
In the following, we discuss how to delete $a$ from $S$ (refer to Algorithm~\ref{algo:A2_delete} for the pseudocode). 

We first decrement $n$ by one. If $S_{\clg+1} \neq \emptyset$, then the decrement of $n$ decreases $\clg$ by one. To maintain the fourth invariant, we merge $S_{\clg+1}$ with $S_{\clg}$, and set both $S_{\clg+1}$ and $C_{\clg+1}$ to $\emptyset$. We then rebuild $\calD(S_{\clg})$. We refer to this as the {\em merging process}. After that, we proceed as follows. 

If $a\in S_0$, then we remove $a$ from $S_0$, which effectively reset $S_0$ to $\emptyset$; in this case, we are done with the deletion. Otherwise, $a\in C_1$ and we remove $a$ from both $C_1$ and $S_1$ (this resets $C_1$ to $\emptyset$). Next we run a while loop. 
Starting from $i=1$, while $i<\clg$ and $|C_i|<2^{i-2}$, we perform the following {\em deletion-induced reblance procedure}. 
We define a number $c_i$ depending on whether $i\leq \clg-2$ or $i=\clg-1$. 

\begin{itemize}
\item 
If $i\leq \clg-2$, then by the second invariant before the deletion, $|C_{i+1}|\geq 2^{i-1}$ holds. As $C_{i+1}\subseteq S_{i+1}$, we have $|S_{i+1}|\geq 2^{i-1}$. In this case, we let $c_i=2^{i-1}$. 

\item 
If $i= \clg -1$, then Lemma~\ref{lem:sizeunion} proves that $|S_i\cup S_{i+1}|\geq 2^{i-2}$. In this case, we let $c_i=2^{i-2}$.
\end{itemize}

Hence, in both cases it holds that $|S_i\cup S_{i+1}|\geq c_i$. We update $C_i$ by first emptying it and then adding $c_i$ points from $S_i\cup S_{i+1}$ with the smallest keys. 
Note that $C_i$ may contain points of $S_{i+1}$, meaning that these points should be moved from $S_{i+1}$ to $C_i$. Hence, we update $S_i=S_i\cup C_i$, $C_{i+1} = C_{i+1}\setminus C_i$, and $S_{i+1} = S_{i+1}\setminus C_i$. After the above update for $S_i$, $|S_i|-k_i$ is the number of points of $S_{i+1}$ that have been moved to $S_i$, where $k_i$ denotes the original size of $S_i$ before the update. We arbitrarily move the same number of points from $S_i\setminus C_i$ to $S_{i+1}$ if possible (i.e., if $|S_i\setminus C_i|\geq |S_i|-k_i$). If $|S_i\setminus C_i|< |S_i|-k_i$, we move all points of $S_i\setminus C_i$ to $S_{i+1}$, after which $S_i=C_i$. In either case, let $Q$ be the set of points of $S_i\setminus C_i$ that have been moved to $S_{i+1}$. We add to $C_{i+1}$ all points of $Q$ whose keys are smaller than the largest key of $C_{i+1}$ (note that if $C_{i+1}=\emptyset$, then we do not add any points to $C_{i+1}$ in this step). Finally, we rebuild $\calD(S_i)$. This finishes the rebalance procedure. 

After the while loop, we rebuild $\calD(S_i)$ for the first $i$ that fails the condition of the while loop. This finishes the deletion algorithm. 

\begin{algorithm}
    \caption{Deleting $a$ from $S$ with either $\{a\} = S_0$ or $\{a\} = C_1$.} \label{algo:A2_delete}
    \KwIn{$\Psi(S)$, point $a$ with either $\{a\} = S_0$ or $\{a\} = C_1$.}
    \KwOut{$\Psi(S \setminus \{a\})$.}
    $n \gets n - 1$\;
    \If{$S_{\clg+1}\neq \emptyset$}{    
        $S_\clg \gets S_\clg \cup S_{\clg + 1}$\label{line:delete_endcheck_start}\;
        $S_{\clg + 1} \gets \emptyset$\;
        $C_{\clg + 1} \gets \emptyset$\label{ln:Cempty}\;        
        Rebuild $\calD(S_\clg)$\label{line:delete_endcheck_end}\;
    }   
    \If{$\{a\} = S_0$}{
        $S_0 \gets \emptyset$\;
        \Return\;
    }
    \tcp{The following is for the case $\{a\} = C_1$}
    Remove $a$ from both $C_1$ and $S_1$\label{ln:setC1}\;
    $i\gets 1$\;
    \While{$i<\clg$ and $|C_i| < 2^{i - 2}$}{                
        \If{$i\leq \clg -  2$}{
            $c_i=2^{i - 1}$\;
        }
        \Else(\tcc*[h]{$i=\clg -1$}){        
            $c_i=2^{i-2}$\;
        }
            $k_i \gets |S_i|$\label{line:delete_body_start}\;
            $C_i \gets$ the set of $c_i$ points from $S_i \cup S_{i + 1}$ with minimum keys\label{line:addCj}\;
            $S_i \gets S_i \cup C_i$\label{line:addAj}\;
            $C_{i + 1} \gets C_{i + 1} \setminus C_i$ \label{line:removeCj1}\;
            $S_{i+1} \gets S_{i+1} \setminus C_i$ \label{line:removeAj1}\;
            $Q \gets$ an arbitrary set of $\min \{|S_i \setminus C_i|, |S_i| - k_i\}$ points in $S_i \setminus C_i$\label{line:min}\;
            $S_i \gets S_i \setminus Q$\label{line:removeAj}\;
            $S_{i + 1} \gets S_{i + 1} \cup Q$\label{line:addAj1}\;
            $Q' \gets$ the set of points in $Q$ whose keys are smaller than the largest key in $C_{i + 1}$\label{ln:largestkey}\;
            $C_{i + 1} \gets C_{i + 1} \cup Q'$\label{line:addCj1}\;
            Rebuild $\calD(S_i)$\label{line:delete_body_end}\;
            $i \gets i + 1$\;
        }
    Rebuild $\calD(S_i)$\label{line:delete_body_end_00}\; 
\end{algorithm}


The following observation will be used later.
\begin{observation}\label{obser:delete}
Right after the $i$-th iteration of the while loop, we have the following: (1) $|C_i|=2^{i-1}$ for any $1\leq i\leq \clg-2$ and  $|C_i|=2^{i-2}$ for $i=\clg-1$; (2) $|S_i|$ either does not change or is equal to $|C_i|$, and if $|S_{i+1}|$ is changed, then $|S_i|=|C_i|$; and (3) neither $|S_{i+1}|$ nor $|C_{i+1}|$ can become larger.  
\end{observation}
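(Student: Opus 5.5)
We prove the three parts of Observation~\ref{obser:delete} by tracing Lines~\ref{line:delete_body_start}--\ref{line:delete_body_end} of the $i$-th iteration of Algorithm~\ref{algo:A2_delete}, taking the state at the start of the iteration as given. Write $S_i^0, C_i^0, S_{i+1}^0, C_{i+1}^0$ for the sets at the start of the iteration, so $k_i=|S_i^0|$.

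\textbf{Part (1).} At Line~\ref{line:addCj}, $C_i$ is set to exactly $c_i$ points of $S_i^0\cup S_{i+1}^0$. This needs $|S_i^0\cup S_{i+1}^0|\ge c_i$, which the algorithm description already guarantees: for $i\le\clg-2$ via the second invariant applied to $C_{i+1}$, and for $i=\clg-1$ via Lemma~\ref{lem:sizeunion}. I then check that no later line of the iteration changes $C_i$. Line~\ref{line:addAj} only enlarges $S_i$. Lines~\ref{line:removeCj1}--\ref{line:removeAj1} touch $C_{i+1}$ and $S_{i+1}$. The set $Q$ is drawn from $S_i\setminus C_i$, so removing it keeps $C_i\subseteq S_i$ intact. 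Hence $|C_i|=c_i$, which is $2^{i-1}$ for $i\le\clg-2$ and $2^{i-2}$ for $i=\clg-1$.

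\textbf{Part (2).} Let $r=|S_i|-k_i$ after Line~\ref{line:addAj}. This $r$ counts the points of $C_i\cap S_{i+1}^0$, i.e., exactly the points removed from $S_{i+1}$ at Line~\ref{line:removeAj1}. Line~\ref{line:min} removes $\min\{|S_i\setminus C_i|,r\}$ points from $S_i$. There are two cases:
\begin{itemize}
\item If $|S_i\setminus C_i|\ge r$, then $|S_i|$ returns to $k_i$, i.e., it does not change. Moreover $S_{i+1}$ loses $r$ points and gains $|Q|=r$ points, so $|S_{i+1}|$ is unchanged.
\item Otherwise $Q=S_i\setminus C_i$, so $S_i=C_i$ and $|S_i|=|C_i|$.
\end{itemize}
Hence $|S_{i+1}|$ can change only in the second case, which is the implication claimed in (2).

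\textbf{Part (3).} In both cases $S_{i+1}$ gains $|Q|\le r$ points and loses $r$, so $|S_{i+1}|$ cannot increase. For $C_{i+1}$, it loses $|C_i\cap C_{i+1}^0|$ points at Line~\ref{line:removeCj1} and gains $|Q'|$ points at Line~\ref{line:addCj1}. I therefore need $|Q'|\le |C_i\cap C_{i+1}^0|$; this is the main obstacle.

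The plan for the obstacle is to use the third invariant at the start of the iteration: $C_{i+1}^0$ consists of the smallest keys of $S_{\ge i+1}$. Suppose $Q'\neq\emptyset$. Let $q\in Q'$; then $key(q)$ is smaller than the largest key in $C_{i+1}$ after Line~\ref{line:removeCj1}. Also $q\in S_i^0\setminus C_i$ was not chosen into $C_i$, so every point of $C_i$ has key at most $key(q)$.

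Now let $x$ be an element of $C_{i+1}^0$ that survives Line~\ref{line:removeCj1} with key larger than $key(q)$. Then every point of $S_{i+1}^0$ whose key is below $key(q)$ lies in $C_{i+1}^0$, since $C_{i+1}^0$ is a prefix by key of $S_{i+1}^0$.

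With this, I charge each point of $Q'$ to a distinct point of $C_i\cap S_{i+1}^0$ whose key is at most that of the largest surviving element of $C_{i+1}^0$; by the prefix property, such a point lies in $C_{i+1}^0$. Since $|Q|\le r=|C_i\cap S_{i+1}^0|$, the charging is injective. This gives $|Q'|\le |C_i\cap C_{i+1}^0|$.

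If the prefix argument proves delicate in the presence of ties, I will fall back on comparing keys with a fixed tie-breaking order, which makes all ``smallest keys'' sets well defined.
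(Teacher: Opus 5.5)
Your proposal is correct and follows the paper's proof. Parts (1) and (2) are the same bookkeeping of the $r=|S_i|-k_i$ points exchanged between $S_i$ and $S_{i+1}$. For $C_{i+1}$ you use the same key fact as the paper: a point $q\in Q'$ gives $key(c)\le key(q)<key(x)$ for every $c\in C_i$, where $x$ is the largest-key survivor of $C_{i+1}$. The third invariant then forces $C_i\cap S_{i+1}^0\subseteq C_{i+1}^0$, so $C_{i+1}$ loses $r\ge|Q|\ge|Q'|$ points.

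One wording fix: state that step with the strict comparison $key(c)<key(x)$, not ``below $key(q)$'' or ``at most that of $x$''. The strict comparison settles ties without any tie-breaking and makes the charging language unnecessary.
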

\begin{proof}
According to our algorithm, $|C_i|=c_i$ and $|S_i|-k_i$ points have been moved from $S_{i+1}$ to $S_i$. But at most that many points are moved from $S_i\setminus C_i$ to $S_{i+1}$ (resp., $C_{i+1}$). If exactly that many points are moved out of $S_i$, then $|S_i|$ is the same as in the beginning of the iteration; otherwise, $S_i\setminus C_i=\emptyset$ and thus $|S_i|=|C_i|$. In addition, if $|S_{i+1}|$ is changed, then less than $|S_i|-k$ points have been moved to $S_{i+1}$, implying that $S_i\setminus C_i=\emptyset$ and $|S_i|=|C_i|$. 

Since $|S_i|-k_i$ points have been moved out of $S_{i+1}$ while at most that many points are moved into $S_{i+1}$, it follows that $|S_{i+1}|$ cannot get larger. In what follows, we argue that $|C_{i+1}|$ does not get larger either. 

Indeed, if $C_{i+1}=\emptyset$ at Line~\ref{ln:largestkey}, then no point will be added to $C_{i+1}$ and thus $|C_{i+1}|=0$. Hence, in this case it is vacuously true that $|C_{i+1}|$ cannot get larger than before. We next assume that $C_{i+1}\neq \emptyset$. Let $p_{i+1}$ be a point of $C_{i+1}$ with the largest key at Line~\ref{ln:largestkey}. Let $p_i$ be a point of $C_i$ with the largest key after $C_i$ is updated in Line~\ref{line:addCj}. By the way we update $C_i$ at Line~\ref{line:addCj}, $key(p_i)\leq key(p_{i+1})$. If $Q'=\emptyset$, then no point from $S_i$ will be added to $C_{i+1}$ and thus $|C_{i+1}|$ does not get larger than before the deletion. 
We next consider the case where $Q'\neq \emptyset$. 

Let $p$ be a point from $Q'$. By definition, $key(p)< key(p_{i+1})$ and $key(p_i)\leq key(p)$. It follows that $key(p_i)<key(p_{i+1})$. Let $p'$ be any point of $S_{i+1}$ that is moved into $C_{i}$ in Line~\ref{line:addCj}. By definition, $key(p')\leq key(p_i)$. Thus, $key(p')<key(p_{i+1})$. Therefore, $p'$ must be in the original $C_{i+1}$ in the beginning of the $i$-th iteration. This implies that all points of $S_{i+1}$ that are moved to $S_i$ in Line~\ref{ln:largestkey} are originally from $C_{i+1}$ and thus $|S_i|-k$ points have been moved out of $C_{i+1}$. Because at most $|S_i|-k$ points can be moved into $S_{i+1}$, at most that many points can be moved into $C_{i+1}$ at Line~\ref{line:addCj1}. Hence, $|C_{i+1}|$ cannot get larger than before the deletion. 
\end{proof}

\begin{lemma}\label{lem:sizeunion}
If $i=\clg-1$ in the while loop, then $|S_i\cup S_{i+1}|\geq 2^{i-2}$. 
\end{lemma}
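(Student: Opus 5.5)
The plan is a counting argument. Suppose the while loop of Algorithm~\ref{algo:A2_delete} reaches $i=\clg-1$, where $n$ is the already decremented size. Every point of $S$ lies in some $S_j$ with $j\le \clg$: after the merging process, the fourth invariant guarantees $S_j=\emptyset$ for $j>\clg$. Hence
\[
|S_i\cup S_{i+1}| \;=\; n-\sum_{j=0}^{i-1}|S_j| \;-\; |S_{\ge i+2}|,
\]
and the last term vanishes because $i+1=\clg$. Since $\clg=\lceil\log n\rceil$, we have $n\ge 2^{\clg-1}+1=2^{i}+1$. So it suffices to show $\sum_{j<i}|S_j|\le n-2^{i-2}$. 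The target bound is $\sum_{j<i}|S_j|\le 2^{i}-2^{i-2}$, i.e., the lower levels miss at least a quarter of their nominal capacity $2^i-1$.

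\textbf{Step 1: the state of the lower levels.} I would first pin down the sizes of the levels below $i$ at this moment. The loop reached $i$, so it executed iterations $1,\dots,i-1$. For each such $j$, part~(2) of Observation~\ref{obser:delete} gives that $|S_j|$ either kept its value from the start of the deletion or equals $|C_j|=2^{j-1}$. Moreover, if $|S_{j+1}|$ changed during iteration $j$, then $|S_j|=|C_j|$. In addition, $S_0$ has at most one point, and $S_1$ lost the point $a$ at Line~\ref{ln:setC1}.

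\textbf{Step 2: charging the missing capacity.} I would then run an induction downward from $j=i-1$, using that each level $j$ that triggered the loop was \emph{deficient}, i.e., $|C_j|<2^{j-2}$ at the start of iteration $j$. The aim is to show that this deficiency forces levels $j$ and below to be well under capacity. Concretely, before the deletion the second invariant gave $|C_{j}|\ge 2^{j-2}$. By Observation~\ref{obser:delete}(3) and the loop structure, a drop below $2^{j-2}$ can only come from points pulled down into level $j-1$ in iteration $j-1$. Those points are exactly what enlarged $S_{j-1}$ to $C_{j-1}$ or left $S_j$ smaller. This should give $|S_{j-1}|+|S_j|$ bounded away from $2^{j-1}+2^j$ by a constant fraction, or else $S_{j-1}=C_{j-1}$ with size $2^{j-2}$. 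Summing these bounds should yield $\sum_{j<i}|S_j|\le 2^i-2^{i-2}$, and combined with $n\ge 2^i+1$ this gives the lemma.

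\textbf{Fallback and main obstacle.} The hard part is Step 2: a naive bound $|S_j|\le 2^j$ only gives $|S_i\cup S_{i+1}|\ge 2$. If the charging gets messy, I would argue directly through the third invariant instead. Iteration $i-1$ (which has $i-1\le\clg-2$) filled $C_{i-1}$ with $2^{i-2}$ smallest points of $S_{i-1}\cup S_i$, and $|C_i|<2^{i-2}$ afterwards. Suppose, toward a contradiction, that $|S_i\cup S_{i+1}|<2^{i-2}$. Then most of $S_{\ge i-1}$ would sit in $S_{i-1}$, i.e., at most $2^{i-1}$ points. Using $n\ge 2^i+1$ and the bound $\sum_{j<i-1}|S_j|\le 2^{i-1}-1$, we would get $|S_{\ge i-1}|\ge 2^{i-1}+2$. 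This forces $|S_i\cup S_{i+1}|\ge 2$, and the remaining gap up to $2^{i-2}$ is exactly what the deficiency analysis of Step 2 must close.
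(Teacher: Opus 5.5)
There is a genuine gap. The intermediate bound you set out to prove in Step~2, $\sum_{j<i}|S_j|\le 2^i-2^{i-2}$, is false in general, so the charging argument cannot succeed. A deficient $C_j$ does not make $S_{j-1}$ or $S_j$ under-full. In iteration $j-1$, the points pulled out of $C_j$ into $C_{j-1}$ are replaced one-for-one by points pushed down from $S_{j-1}\setminus C_{j-1}$. So $|S_{j-1}|$ and $|S_j|$ can both stay at full size while $|C_j|$ drops.

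Concretely, take $n+1=25$ points before the deletion with $(|S_0|,\dots,|S_5|)=(1,2,4,8,6,4)$ and $(|C_1|,\dots,|C_5|)=(1,1,2,4,0)$. Let $C_1,C_2,C_3,C_4$ hold the keys $\{1\}$, $\{2\}$, $\{3,4\}$, $\{5,\dots,8\}$, and give every other point a key larger than $8$; all four invariants hold. Delete the key-$1$ point. Each iteration $j=1,2,3$ moves all of $C_{j+1}$ into $C_j$ and pushes the same number of large-key points down, none of which enter the now-empty $C_{j+1}$. The loop therefore reaches $i=4=\clg-1$ (here $\lceil\log 24\rceil=5$) with $|C_4|=0$. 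Yet $\sum_{j<4}|S_j|=1+1+4+8=14>12=2^4-2^2$. The lemma still holds in this state ($|S_4\cup S_5|=10$), but not for the reason you give.

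The missing idea is a case split on whether iteration $i-1$ changed $|S_i|$; no other step of the deletion touches $S_i$ before iteration $i$.
\begin{itemize}
\item If it did not, $|S_i|$ equals its pre-deletion value. The second invariant before the deletion then gives $|S_i|\ge|C_i|\ge 2^{i-2}$ directly, with no counting needed. In the example, $|S_4|=6\ge 4$.
\item If it did, Observation~\ref{obser:delete}(2) forces $|S_{i-1}|=|C_{i-1}|=2^{i-2}$, so $\sum_{j<i}|S_j|\le (2^{i-1}-1)+2^{i-2}$. Your identity $|S_i\cup S_{i+1}|=n-\sum_{j<i}|S_j|$ together with $n\ge 2^i$ then finishes.
\end{itemize}
You already quote the relevant half of Observation~\ref{obser:delete}(2) in Step~1. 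What is missing is pairing it with the unchanged case rather than trying to bound all lower levels uniformly.

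Your fallback has the same hole. Under the contradiction hypothesis $|S_i\cup S_{i+1}|<2^{i-2}$, the first case is impossible, so $S_{i-1}=C_{i-1}$ has only $2^{i-2}$ points. That is exactly what closes the gap your weak bound $|S_{i-1}|\le 2^{i-1}$ leaves open.
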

\begin{proof}
According to our decision algorithm, $|S_i|$ cannot be changed in the $j$-th iteration for any $j<i-1$, but it can be changed in the $(i-1)$-th iteration. 
We distinguish two cases depending on whether $|S_i|$ is changed in the $(i-1)$-the iteration. 

If $|S_i|$ is not changed, then by our second invariant before the deletion, $|S_i|\geq |C_i|\geq 2^{i-2}$. Hence, in this case it is true that $|S_i\cup S_{i+1}|\geq 2^{i-2}$. 

We now assume that $|S_i|$ is changed in the $(i-1)$-th iteration. Then, by Observation~\ref{obser:delete} (applying to $i-1$), $|S_{i-1}|=|C_{i-1}|=2^{i-2}$. In addition, for each $0\leq j\leq i-2$, due to the first invariant and Observation~\ref{obser:delete}, it holds that $|S_j|\leq 2^{j}$. Hence, at the beginning of the $i$-th iteration, we have 
\begin{equation}\label{equ:10}
    \begin{split}
        \sum_{j=0}^{i-1}|S_j|=\sum_{j=0}^{i-2}|S_j|+|S_{i-1}|\leq \sum_{j=0}^{i-2}2^j+2^{i-2}=2^{i-1}-1+2^{i-2}. 
    \end{split}
\end{equation}

Note that due to the merging process in the beginning of the deletion algorithm, $S_{i+2}=\emptyset$. Hence, $n=|S|=\sum_{j=0}^{i+1}|S_j|$. 

Assume to the contrary that $|S_i\cup S_{i+1}|< 2^{i-2}$. Then, with \eqref{equ:10}, we have $\sum_{j=0}^{i+1}|S_i|<2^{i-1}-1+2^{i-2}+2^{i-2}=2^i-1$. As $i=\clg-1\leq \lfloor\log n\rfloor$, $2^i\leq n$. Consequently, we obtain $\sum_{j=0}^{i+1}|S_i|<n-1$. But this contradicts $n=\sum_{j=0}^{i+1}|S_i|$. Therefore, $|S_i\cup S_{i+1}|\geq 2^{i-2}$ must hold. 
\end{proof}

\begin{lemma}\label{lem:deleteinvariant}
All algorithm invariants are maintained after the deletion.
\end{lemma}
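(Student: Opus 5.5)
We assume the four invariants hold before the deletion, with $n+1$ points, and verify each one afterwards for the new value $n$. We follow the same template as the insertion lemma, invariant by invariant, and use Observation~\ref{obser:delete} as the main bookkeeping tool. We treat the merging process first. If $S_{\clg+1}\neq\emptyset$, then $\clg$ dropped by one. Merging $S_{\clg+1}$ into $S_\clg$ and emptying $C_{\clg+1}$ restores the fourth invariant directly. Since $|S_\clg|\le n\le 2^{\clg}$, it also keeps the first invariant at the top level. The old $C_\clg$ becomes the top set $C_\clg$, for which only the upper bound $2^{\clg-1}$ is required, and that bound holds because the old bound was $2^{\clg-1}$ with the old index $\clg$ being the new $\clg$. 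If $a\in S_0$, nothing else changes: $C_1$ keeps one point, and the third invariant is preserved because removing the global minimum from a set cannot break any comparison.

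\textbf{Fourth invariant.} After the merging process, no step of the while loop touches an index beyond $i+1\le\clg$, so the fourth invariant persists.

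\textbf{Second invariant.} For each index $i$ processed by the loop, Observation~\ref{obser:delete}(1) gives $|C_i|=2^{i-1}$ when $i\le\clg-2$, and $|C_i|=2^{i-2}$ when $i=\clg-1$. Both values lie within the required range, and for $i=1$ the value is $2^0=1$. The loop stops at the first $i$ with $i=\clg$ or $|C_i|\ge 2^{i-2}$. By Observation~\ref{obser:delete}(3), $|C_i|$ did not grow in iteration $i-1$, so the upper bound $2^{i-1}$ (resp.\ $2^{\clg-1}$) still holds. The lower bound holds by the stopping condition, except at $i=\clg$, where no lower bound is needed. Indices beyond the stopping point are untouched.

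\textbf{First invariant.} For processed indices, Observation~\ref{obser:delete}(2) says $|S_i|$ is either unchanged, hence still at most $2^i$, or equal to $|C_i|\le 2^{i-1}$. For the stopping index, part (3) says $|S_i|$ did not grow. The top set satisfies $|S_\clg|\le n\le2^\clg$.

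\textbf{Third invariant.} This is the main obstacle. Take $p\in C_j$ and $q\in S_{\ge j}\setminus C_j$ after the deletion. For $j\ge i+2$, where $i$ is the last loop index, nothing changed except possible removals from the set, so the invariant is inherited. For a processed index $j$, $C_j$ was chosen as the $c_j$ smallest-key points of $S_j\cup S_{j+1}$. If $q\in S_j\cup S_{j+1}$, we are done directly; note that the points of $Q$ moved from $S_j$ to $S_{j+1}$ were already in that union. If $q\in S_{\ge j+2}$, we argue as in the insertion proof. Let $p^*$ be the largest-key point of the original $C_{j+1}$. Every point of $C_j$ has key at most $key(p^*)$, provided $|C_{j+1}|$ was at least $c_j$, which holds by the second invariant ($|C_{j+1}|\ge 2^{j-1}=c_j$, or $c_j=2^{j-2}$ in the last case). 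By the old third invariant, $key(p^*)\le key(q)$. The case where the original $C_{j+1}$ is empty needs separate care, but it can only occur at $j+1=\clg$, in which case $S_{\ge j+2}=\emptyset$ anyway.

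It remains to treat $C_{j+1}$ after its modification in iteration $j$. We removed points that went to $C_j$, which is harmless. We added exactly the points of $Q$ with key below $\max key(C_{j+1})$. A point $q\in Q$ left outside $C_{j+1}$ has key at least that maximum. A point $q\in S_{\ge j+2}$ dominates every key in the old $C_{j+1}$, and therefore dominates every added point as well, since added points have key below that maximum. Finally, $q\in S_{j+1}$ outside the old $C_{j+1}$ was already dominated. If iteration $j+1$ runs, it recomputes $C_{j+1}$ anyway.

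For the stopping index $i$, the third invariant follows from this same argument. For indices $j<1$ the claim is trivial, since $C_0=\emptyset$, and for $j=1$ it is covered by the processed-index argument. The delicate parts are exactly these key comparisons through the largest key of $C_{j+1}$, so we would write them carefully, mirroring the insertion proof.
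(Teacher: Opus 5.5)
Your proposal is correct and follows essentially the same route as the paper. You handle the merging step and the case $\{a\}=S_0$ first, then verify the invariants one by one using Observation~\ref{obser:delete}, and prove the third invariant through key comparisons with $C_{j+1}$; your largest-key point $p^*$ of the original $C_{j+1}$ is just a cleaner phrasing of the paper's remark that every point of $C_i$ comes from $S_i$ or $C_{i+1}$. When writing it up, say explicitly that iteration~1 always runs because $|C_1|=0$ after removing $a$, with $\clg\ge 4$ guaranteeing $c_1=2^{0}=1$. Also note that for $j=\clg-1$ the case $q\in S_{\ge j+2}$ is vacuous because $S_{\ge \clg+1}=\emptyset$ after merging, not because $|C_{\clg}|\ge c_j$, since the second invariant gives no lower bound on $|C_{\clg}|$.
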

\begin{proof}
We assume that the invariants hold before the deletion with $n+1$ (where $n$ represents the size of $S$ after the deletion).

First of all, the merging process resets $S_{\clg+1}=\emptyset$, which establishes the fourth invariant. The process also increases the size of $S_{\clg}$. Clearly, $|S_{\clg}|\leq n\leq 2^{\clg}$. Hence, the first invariant also holds. So do the other two invariants. 

If $\{a\}=S_0$, then the only change by the algorithm is to set $S_0$ to $\emptyset$. Therefore, all invariants hold in this case. In the following, we assume that $\{a\}=C_1$. In this case, $S_0$ does not change. 

\paragraph{The first invariant.}
For the first invariant, since $S_0$ does not change, it suffices to show that $|S_i|\leq 2^i$ for all $1\leq i\leq \clg$. For $i=\clg$, we simply have $|S_i|\leq n\leq 2^i$. In the following, we assume that $1\leq i\leq \clg-1$. 

Observe that $S_i$ will not change in any iteration of the while loop after the $i$-th iteration. 
Therefore, we argue that right after the $i$-th iteration, $|S_i| \leq 2^i$ holds. According to our algorithm, $|S_i|$ can only be possibly changed in either the $i$-th iteration or the $(i-1)$-th iteration. By Observation~\ref{obser:delete}, after either iteration, $|S_i|$ is $2^{i-1}$, $2^{i-2}$, or the same as before (and thus $|S_i|\leq 2^i$ by our first invariant). Therefore, after the $i$-th iteration, it holds that $|S_i|\leq 2^{i}$. Hence, the first invariant holds after the deletion. 

\paragraph{The second invariant.}
Our goal is to show that after the deletion, $|C_1|=1$, $2^{i-2}\leq |C_i|\leq 2^{i-1}$ for any $2\leq i \leq \clg-1$, and $|C_{\clg}|\leq 2^{\clg-1}$. Consider any $1\leq i\leq  \clg$. 

We first argue the upper bound $|C_i|\leq 2^{i-1}$. Indeed, throughout the algorithm, $|C_i|$ can only be changed in either the $(i-1)$-th iteration or the $i$-th iteration. If the while loop has the $i$-th iteration, then $|C_i|\leq 2^{i-1}$ by Observation~\ref{obser:delete} and thus we are done with the proof. Otherwise, if $|C_i|$ is changed in the $(i-1)$-th iteration, then $|C_i|$ cannot become larger than before by Observation~\ref{obser:delete} (applying to $i-1$). By the second invariant before the deletion, $|C_i|\leq 2^{i-1}$. Hence, $|C_i|\leq 2^{i-1}$ still holds after the deletion. Note that for $i=1$, this is $|C_1|\leq 1$. 

Next, we prove the lower bound $|C_i|\geq 2^{i-2}$ for $2\leq i\leq \clg-1$. If $|C_i|\geq 2^{i-2}$ before the $i$-th iteration, then the while loop will be over and thus $C_i$ will not be changed. Hence, in this case $|C_i|\geq 2^{i-2}$ obviously hold. Otherwise, the while loop will enter the $i$-th iteration, after which $|C_i|$ is either $2^{i-1}$ or $2^{i-2}$ by Observation~\ref{obser:delete}. Since $|C_i|$ cannot be changed after the $i$-th iteration, $|C_i|\geq 2^{i-2}$ holds at the end of the algorithm. 

For $i=1$, since $|C_1|=0$ after Line~\ref{ln:setC1}, the while loop will enter the first iteration. Recall that we have assumed that $|S|\geq 16$ and thus $\clg\geq 4$. Hence, the first iteration will set $|C_1|$ to $1$ by Observation~\ref{obser:delete}. 
After that $|C_1|$ will not be changed throughout the algorithm. Therefore, $|C_1|=1$ holds after the deletion. 

This proves the second invariant. 

\paragraph{The third invariant.}
To show that the third invariant still holds after the deletion, it suffices to show that after each $i$-th iteration, the third invariant still holds for both $C_i$ and $C_{i+1}$, because the iteration does not change any other $C_j$ with $j\notin \{i,i+1\}$ (and neither $S_{\ge i}$ nor $S_{\geq i+1}$ will be changed after the iteration). Our goal is to argue that after the $i$-th iteration, $C_i$ contains $|C_i|$ points of $S_{\geq i}$ with the smallest keys and likewise $C_{i+1}$ contains $|C_{i+1}|$ points of $S_{\geq i+1}$ with the smallest keys. We assume inductively that these are true right before the iteration. 

Line~\ref{line:addCj} updates $C_i$ to the set of $c_i$ points of $S_i\cup S_{i+1}$ with the smallest keys. We claim that $C_i$ contains $|C_i|$ points of $S_{\geq i}$ with the smallest keys. Indeed, if $i=\clg-1$, then $S_{\geq i}=S_i\cup S_{i+1}$, and thus the claim immediately follows. Now assume that $i\leq \clg-2$. Then, by the third invariant before the deletion, $|C_{i+1}|\geq 2^{i-1}$. As $|C_i|=2^{i-1}$ in this case, each point of $C_i$ is either from $S_i$ or from $C_{i+1}$. Due to the third invariant of $C_{i+1}$ and the definition of $C_i$, the claim follows. 
The claim statement holds through the iteration because neither $C_i$ nor $S_{\geq i}$ will change later in this iteration. 

For $C_{i+1}$, it is changed in Line~\ref{line:removeCj1} and Line~\ref{line:addCj1}. Line~\ref{line:removeCj1} only moves some points out of $C_{i+1}$ and the next line also removes these points from $S_{i+1}$. Hence, after Line~\ref{line:removeAj1}, $C_{i+1}$ still contains $|C_{i+1}|$ points of $S_{\geq i+1}$ with smallest keys. The next three lines move a set $Q$ of points from $S_i$ to $S_{i+1}$. The points of $Q$ whose keys smaller than the largest key of $C_{i+1}$ are then added to $C_{i+1}$ at Line~\ref{line:addCj1}. Therefore, after Line~\ref{line:addCj1}, $C_{i+1}$ still contains $|C_{i+1}|$ points of $S_{\geq i+1}$ with smallest keys. 

This proves that the third invariant still holds after the deletion.

\paragraph{The fourth invariant.}
Consider any $i>\lceil \log (n+1)\rceil$ (recall that $n=|S|$ after the deletion). By the fourth invariant before the deletion, $S_i=\emptyset$. Since the algorithm never adds any points to $S_i$, $S_i=\emptyset$ still holds after the deletion. 
Note that it is possible that $\clg+1=\lceil \log (n+1)\rceil$ and $S_{\clg+1}\neq \emptyset$. If this happens, the merging process in the beginning of the algorithm resets $S_{\clg+1}= \emptyset$. Therefore, the fourth invariant is proved. 
\end{proof}

\subsection{Time analysis}
\label{sec:time}

We now analyze the time complexities of the two updates (insertion and delete-min). 

We refer to the reblance procedures in the $i$-th iterations of the insertion and the deletion algorithms as {\em iteration-$i$ rebalances}. 
We start with the following lemma.

\begin{lemma} \label{lem:amortize_lemma}
    For any $i = 1, \ldots, \clg - 1$, there are $\Omega(2^i)$ updates between any four consecutive iteration-$i$ rebalances.
\end{lemma}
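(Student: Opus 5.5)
We prove the lemma with a potential-style argument that tracks the sizes of $S_i$ and $C_i$ right after each iteration-$i$ rebalance. We classify the rebalances by type. An \emph{insertion-type} iteration-$i$ rebalance happens only when $|S_i|>2^i$. A \emph{deletion-type} one happens only when $|C_i|<2^{i-2}$.

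The key fact is what every iteration-$i$ rebalance leaves behind. After an insertion-type rebalance, $S_i=C_i$, so $|S_i|=|C_i|\le 2^{i-1}$. After a deletion-type rebalance, Observation~\ref{obser:delete} gives $|C_i|\in\{2^{i-1},2^{i-2}\}$. Moreover $|S_i|$ is either unchanged or equals $|C_i|$; in the second case $|S_i|\le 2^{i-1}$.

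Among four consecutive iteration-$i$ rebalances there are three gaps, and we want one gap with $\Omega(2^i)$ updates. We handle each type of ``next'' rebalance separately.

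\emph{Next rebalance is insertion-type, previous one left $|S_i|\le 2^{i-1}$.} Then $|S_i|$ must grow from at most $2^{i-1}$ to more than $2^i$. Each update raises $|S_i|$ by at most one outside iteration-$i$ rebalances: an iteration-$(i-1)$ insertion rebalance could move many points, so this needs care. Instead of counting single points, we charge the growth to the total number of points entering levels $\le i$. Those all originate at $S_0$ via insertions, and at most $2^{i}$ points can sit in levels $<i$. So gaining $\Omega(2^i)$ points at level $i$ requires $\Omega(2^i)$ insertions, or a previous iteration-$(i-1)$ rebalance that itself needed such insertions. We make this precise by bounding $\sum_{j\le i}|S_j|$: it increases by at most one per insertion and never increases on deletions, except through moves that do not increase it.

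\emph{Next rebalance is deletion-type.} Then $|C_i|$ must drop below $2^{i-2}$. If the previous rebalance was deletion-type with $|C_i|=2^{i-1}$, this takes at least $2^{i-2}$ deletions, since $|C_i|$ decreases only by delete-mins propagating through level $i-1$ rebalances. Each such rebalance removes at most $c_{i-1}\le 2^{i-2}$ points from $C_{i}$. That removal is triggered only after level $i-1$ itself has been depleted, which again costs deletions. So we again need a sum-of-levels argument: $\sum_{j<i}|C_j|$ decreases by at most one per delete-min.

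\emph{Bad gaps.} The problematic cases are a deletion-type rebalance ending at exactly $2^{i-2}$ (the $i=\clg-1$ case), and an insertion-type rebalance with $|C_i|$ small. These can make one gap short. The point of ``four consecutive'' is to skip over at most two such bad gaps. The $2^{i-2}$ outcome occurs only at level $\clg-1$, and reaching it again needs $\clg$ to change, which takes $\Omega(n)=\Omega(2^i)$ updates. Observation~\ref{obser:insert} says $|C_i|$ changes under insertion only at level $\clg-1$ when $\clg$ increments, which again takes $\Omega(2^i)$ updates.

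\emph{Main obstacle.} The delicate step is justifying that cascades from lower levels cannot quickly fill $S_i$ or drain $C_i$ without $\Omega(2^i)$ actual updates. I would prove this by induction on $i$, showing that the number of points passing into level $i$ (respectively, out of $C_i$) across a gap is at most the number of updates in that gap plus $O(2^{i-1})$ stored in lower levels. Combined with the required change of at least $2^{i-1}$ or $2^{i-2}$ after a good gap, this gives the bound.
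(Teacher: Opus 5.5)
\textbf{There is a genuine gap in your overall strategy.} You look for a single gap between consecutive iteration-$i$ rebalances that contains $\Omega(2^i)$ updates, and you argue that four rebalances let you skip at most two bad gaps. The bad gaps are not the ones you list.
\begin{itemize}
\item An insertion-induced rebalance leaves $|C_i|$ unchanged, possibly at or near its minimum $2^{i-2}$. This can happen at any level, not only $\clg-1$. If the lower $|C_j|$ are also at their minima, a single delete-min then cascades to level $i$ immediately.
\item A deletion-induced rebalance may leave every $|S_j|$ with $j\le i$ unchanged, possibly equal to $2^j$. The next single insertion can then cascade up to level $i$.
\end{itemize}
So in an alternating sequence insertion, deletion, insertion, deletion, each of the three gaps is of a type that can be short, and your proposal gives no reason why any of them is long.

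Your justifications for the bad cases are also wrong. Every deletion-induced iteration-$(\clg-1)$ rebalance ends with $|C_{\clg-1}|=2^{\clg-3}$, so that outcome recurs without any change of $\clg$. Moreover, $\clg$ can change at every update when $n$ oscillates around a power of two, so ``waiting for $\clg$ to change costs $\Omega(n)$ updates'' is false.

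The missing idea is the paper's pigeonhole step. Among four consecutive iteration-$i$ rebalances there are either two consecutive insertion-induced ones, or two (not necessarily consecutive) deletion-induced ones. In the second case the updates are counted across the whole span between them, which may contain an insertion-induced rebalance.

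\textbf{Your counting also needs the right potentials.}

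For two consecutive insertion-induced rebalances, growth of $|S_i|$ from at most $2^{i-1}$ to more than $2^i$ does not by itself force insertions, since points already stored in lower levels can be flushed up. What you need is the lower bound from $C_j\subseteq S_j$:
\begin{itemize}
\item Just before an insertion-induced rebalance, $\sum_{j\le i}|S_j|\ge \sum_{j<i}|C_j|+|S_i|\ge 2^{i-2}+2^i$.
\item Right after the previous one, every $S_j$ with $j\le i$ equals $C_j$, so the sum is at most $2^i-1$.
\item With no iteration-$i$ rebalance in between, only insertions raise this sum, by one each.
\end{itemize}
This fails when the earlier rebalance is deletion-induced, because the lower levels need not be emptied. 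Your first case allows exactly that, so it is not justified.

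For deletion-induced rebalances, use $\sum_{j\le i}|C_j|$, not $\sum_{j<i}|C_j|$.
\begin{itemize}
\item It equals $2^i-1$ right after one such rebalance, and is below $2^{i-1}-1+2^{i-2}$ right before the next.
\item It never decreases under insertions (Observation~\ref{obser:insert}).
\item It never decreases under rebalances, since points leaving $C_{j+1}$ enter $C_j$.
\end{itemize}
Hence only delete-mins can account for the drop of $2^{i-2}$, regardless of intervening insertion-induced rebalances.

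For $i=\clg-1$, where the rebalance only raises $|C_i|$ to $2^{i-2}$, do not treat it as a separate bad gap. Apply the same bound to the iteration-$(i-1)$ rebalances triggered by the same two deletions.
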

\begin{proof}
    Consider four consecutive iteration-$i$ rebalances. Note that at least one of the following two cases is true:
    (1) Among the four rebalances, there are two consecutive insertion-induced rebalances.
    (2) Among the four rebalances, there are two (not necessarily consecutive) deletion-induced rebalances.
We analyze the two cases in the following. 

\paragraph{The first case.}
    For the first case, let $R_1$ and $R_2$ be the two consecutive insertion-induced iteration-$i$ rebalances, where $R_1$ is temporally before $R_2$. 
    We will show that there must be $\Omega(2^i)$ insertions between $R_1$ and $R_2$.
    
    First, notice that just before any insertion-induced iteration-$i$ rebalance, we have $|S_i| > 2^i$.
    Together with the second invariant, just before any insertion-induced iteration-$i$ rebalance (e.g., $R_2$), it holds that
    \[\sum_{j=0}^i|S_j|  \geq \sum_{j=0}^{i-1}|C_j| + |S_i|  \geq 0 + 1 + \sum_{j=2}^{i-1} 2^{j - 2} + 2^i = 2^{i - 2} + 2^i.\]
    
    According to the insertion algorithm, if there is an insertion-induced iteration-$i$ rebalance,
    then there has been an insertion-induced iteration-$j$ rebalance for all $j < i$ during the same insertion.
    Furthermore, right after an insertion-induced iteration-$j$ rebalance, 
    we have $|S_j| = |C_j| \leq 2^{j - 1}$. 
    For $j = 0$, we have $|S_0| = |C_0| = 0$.
    This implies that right after an insertion-induced iteration-$i$ rebalance (e.g., $R_1$), it holds that 
    \[\sum_{j=0}^i|S_j| \leq 0 + \sum_{j=1}^i 2^{j - 1} = 2^i - 1.\]
    
    Because $R_1$ and $R_2$ are two consecutive insertion-induced iteration-$i$ rebalances (in particular, there are no deletion-induced iteration-$i$ rebalances in between), the only cause for the increase of $\sum_{j=0}^i|S_j|$ between $R_1$ and $R_2$ is due to insertions.
    Therefore, there must be at least $(2^{i - 2} + 2^i) - (2^i - 1) = 2^{i - 2} + 1 = \Omega(2^i)$ insertions between $R_1$ and $R_2$.

\paragraph{The second case.}
    For the second case, let $R_1$ and $R_2$ be the two deletion-induced iteration-$i$ rebalances,
    where $R_1$ is temporally before $R_2$. We will show that there must be $\Omega(2^i)$ deletions between $R_1$ and $R_2$.
    We first consider the case where $i\leq \clg -2$.  

    According to our deletion algorithm, if there is a deletion-induced iteration-$i$ rebalance,
    then there must also be a deletion-induced iteration-$j$ rebalance for every $j < i$ during the same deletion. By Observation~\ref{obser:delete}, $|C_j|=2^{j-1}$ for all $1 \leq j< i$ right before the $i$-th iteration. 

    Hence, right before a deletion-induced iteration-$i$ rebalance (e.g., $R_2$), we have
    \[\sum_{j=0}^{i}|C_j| =\sum_{j=0}^{i-1}|C_j| + |C_i|= 0 + \sum_{j=1}^{i - 1} 2^{j - 1} + |C_i|= 2^{i - 1} - 1 + |C_i| <  2^{i - 1} - 1 + 2^{i-2}.\]
    Note that the last inequality $|C_i|<2^{i-2}$ follows from the while loop condition right before entering a deletion-induced iteration-$i$ rebalance. 
    
    On the other hand, right after any deletion-induced iteration-$i$ rebalance (e.g., $R_1$), we have $|C_j| =  2^{j - 1}$ for all $1 \leq j \leq i$, and in particular, $|C_i|=2^{i-1}$.  
    Therefore, right after $R_1$, it holds that  
    \[\sum_{j=0}^i|C_j| = 0 + \sum_{j=1}^i 2^{j - 1}  = 2^i - 1.\]

    By Observation~\ref{obser:insert}, insertions cannot decrease the size of any $C_j$. Also, in each $j$-th iteration of the deletion algorithm, $|C_j|$ always increases while $C_{j+1}$ may lose points. However, according to the analysis in the proof of Observation~\ref{obser:delete}, all the points lost from $C_{j+1}$ go to $C_j$. The above together implies that the decrease of $\sum_{j=0}^i|C_i|$ from $R_1$ to $R_2$, which is at least $(2^i - 1) - (2^{i - 1} - 1+2^{i-2}) =  2^{i - 2}=  \Omega(2^i)$, is all due to deletions between $R_1$ and $R_2$. Hence, there must be $\Omega(2^i)$ deletions between $R_1$ and $R_2$. 

    We now consider the case where $i=\clg -1$. Let $R_1'$ denote the deletion-induced iteration-$(i-1)$ rebalance caused by the same deletion as $R_1$. By definition, $R_1'$ happened right before $R_1$. 
    Similarly, let $R_2'$ denote the deletion-induced iteration-$(i-1)$ rebalance caused by the same deletion as $R_2$. If we follow the same analysis as above (which applies to $R_1'$ and $R_2'$ since $i-1=\clg-2$), there are $\Omega(2^i)$ deletions between $R_1'$ and $R_2'$. As $R_j'$ and $R_j$ are caused by the same deletion for $j\in \{1,2\}$, the same deletions between $R_1'$ and $R_2'$ are also between $R_1$ and $R_2$. Hence,    
    there must be $\Omega(2^i)$ deletions between $R_1$ and $R_2$.
\end{proof}

With Lemma~\ref{lem:amortize_lemma}, we are now in a position to analyze the time complexities of the updates. 
We show that the amortized times of both updates are $O(\tau\cdot \log n)$, assuming that constructing a static data structure $\calD(S')$ takes $O(|S'|\cdot \tau)$ time for any subset $S'\subseteq S$. Note that $\tau=O(\log^3 m + \log n \log^2 m)$ by Theorem~\ref{theo:nearneighbor}. 


\paragraph{Insertion time.}
For the insertion algorithm, we first bound the iteration-$i$ rebalance time for any $i$. 
First of all, it is not difficult to see that the runtime of the rebalance procedure excluding the rebuild in Line~\ref{line:insert_body_end} can be bounded by $O(|S_i|+|S_{i+1}|)$. 
By the first invariant, $|S_{i+1}|\leq 2^{i+1}$. For $S_i$, although $|S_i|>2^i$ in the beginning of the iteration, $|S_i|$ is no greater than $2^{i+1}$. To see this, each point of $S_i$ in the beginning of the iteration is either from the old $S_i$ before the insertion, or from some $S_j$ for $j<i$, or the new inserted point $b$. By the first algorithm invariant before the insertion, $\sum_{j=0}^{i}|S_j|\leq \sum_{j=0}^i2^j=2^{i+1}-1$. Hence, $|S_i|\leq 2^{i+1}$ holds in the beginning of the $i$-th iteration. Therefore, the time of the rebalance procedure excluding the rebuild is $O(2^i)$. The rebuild operation takes $O(2^i\cdot \tau)$ time as $|S_i|\leq 2^{i+1}$. 
Hence, each iteration-$i$ reblance takes $O(2^i\cdot \tau)$ time in total. 

By Lemma~\ref{lem:amortize_lemma}, after the first three iteration-$i$ rebalances (of both insertions and deletions), we can amortize all future iteration-$i$ rebalances across $\Omega(2^i)$ updates and therefore the amortized time of each iteration-$i$ reblance is $O(\tau)$. 
For the first three iteration-$i$ rebalances, we can handle them in the following observation. 
\begin{observation}\label{obser:first3}
The amortized time of the first three iteration-$i$ rebalances (of both insertions and deletions) is $O(\tau)$. 
\end{observation}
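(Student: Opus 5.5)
We charge the first three iteration-$i$ rebalances to the updates that made level $i$ reachable at all. The plan is to show that before any iteration-$i$ rebalance can happen, $\Omega(2^i)$ updates must already have been performed. Each such rebalance costs $O(2^i\cdot\tau)$, as computed in the insertion-time discussion, and the deletion rebalances will be handled by the analogous bound. So the total cost of the first three iteration-$i$ rebalances is $O(2^i\cdot \tau)$, and spreading it over those $\Omega(2^i)$ earlier updates charges $O(\tau)$ to each.

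The key step is to lower-bound the number of updates performed before the first iteration-$i$ rebalance. An iteration-$i$ rebalance can only occur when $i\leq \clg-1$, so at that moment $n>2^{i-1}$. We start from an empty structure, or from a structure built statically, in which case its construction cost is charged separately. Each update changes $|S|$ by at most one. Hence reaching $n>2^{i-1}$ requires at least $2^{i-1}=\Omega(2^i)$ insertions before that moment.

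To avoid charging the same update from many levels or too many times, we note that each update is charged at most three times per level $i$ by this scheme, once for each of the first three rebalances. Summing over the $O(\log n)$ levels then gives $O(\tau\log n)$ per update, which matches the target bound. Since the observation only asks for $O(\tau)$ per level, the per-level charge suffices.

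The main obstacle is the deletion case. If $n$ shrinks and later grows again, "the first three" refers to a single history, so the charging argument is unaffected. We must also check that a deletion-induced rebalance at level $i$ costs $O(2^i\tau)$. Here $|S_i|,|S_{i+1}|\leq 2^{i+1}$ by the first invariant, and Observation~\ref{obser:delete} keeps these sizes from growing during the iteration. The rebuild of $\calD(S_i)$ together with the selection of the smallest-key points is therefore $O(2^i\tau)$. With this bound the charging argument above completes the proof.
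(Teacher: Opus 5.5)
Your argument is correct but organized differently from the paper's.

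The paper splits into two cases.
\begin{itemize}
\item If a fourth iteration-$i$ rebalance ever occurs, the first three are charged to it. The fourth one's $O(\tau)$ amortized cost comes from Lemma~\ref{lem:amortize_lemma}, i.e., the $\Omega(2^i)$ updates between four consecutive rebalances, including a forward reference to the deletion analysis.
\item Otherwise $i$ is a special index. The rebalances at all special indices are paid for together: their costs form a geometric sum $O(2^{i}\tau)$ for the largest special index $i$, and reaching that level forces $\Omega(2^i)$ updates overall.
\end{itemize}

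You treat every level uniformly. Any iteration-$i$ rebalance requires the current size to exceed $2^i$; in fact $n>2^i$, slightly stronger than the $n>2^{i-1}$ you state. So, starting from an essentially empty structure, at least $2^i$ insertions precede the first such rebalance, and you spread the $O(2^i\tau)$ cost of the first three over them.

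This removes the case distinction, and it does not need Lemma~\ref{lem:amortize_lemma} or the forward reference to the deletion-time bound for this step. The price is that early updates absorb charges from all higher levels. Since each update receives only $O(\tau)$ per level, the total is $O(\tau\log N)$ with $N$ the maximum size, which is what the paper's accounting gives as well.

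Your remark about a statically pre-built initial structure of size $n_0$ would need an extra geometric-sum argument: charge the levels with $2^i\le n_0$ to the $O(n_0\tau)$ build. This is moot here, since the structure starts (nearly) empty.
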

\begin{proof}
If there is a fourth iteration-$i$ rebalance, then we charge all first three reblances to the fourth one. As the fourth one has amortized time $O(\tau)$, which has been proved above for insertions and will also be proved later for the deletions, the amortized time of the first three is also $O(\tau)$. If there is no fourth iteration-$i$ rebalance, then we call $i$ a {\em special index}. 

Now let $I$ be the set of all special indices. Let $i$ be the largest index of $I$. Then, by the first invariant, $\sum_{j\in I}|S_j|=O(2^i)$. Hence, the total time of the iteration-$j$ reblances for all $j\in I$ is $O(2^i\cdot \tau)$. Note that there must be at least $\Omega(2^i)$ updates in total. We charge the time $O(2^i\cdot \tau)$ to all these updates and thus obtain that the amortized time of each iteration-$j$ reblance for any $j\in I$ is $O(\tau)$. 
\end{proof}

We conclude that the amortized time of each iteration-$i$ rebalance is $O(\tau)$.
Because there are $O(\log n)$ iterations in the insertion algorithm, the amortized time of the while loop in the insertion algorithm is $O(\tau\cdot \log n)$.

In addition, for the rebuild procedure in Line~\ref{line:insert_body_end_00}, if the while loop does not have any iteration, then $i=0$ and $|S_0|\leq 1$ at Line~\ref{line:insert_body_end_00}, and thus the rebuild takes $O(1)$ time. Otherwise, we charge the time of the rebuild to the rebuild operation in the last iteration of the while loop (the rebuild time in Line~\ref{line:insert_body_end_00} cannot exceed the rebuild time in the last iteration of the while loop by a constant factor because $i$ differs by one in these two rebuilds). 

In summary, the amortized time of the insertion algorithm is $O(\tau\cdot \log n)$, which is $O(\log n \log^3 m + \log^2 n \log^2 m)$.

\paragraph{Deletion time.}
The deletion time anaysis is similar. We first bound the iteration-$i$ rebalance time for any $i$. The reblance time excluding the rebuild in Line~\ref{line:delete_body_end} is $O(|S_i|+|S_{i+1}|)$, which is $O(2^i)$ since $|S_i|\leq 2^{i}$ and $|S_{i+1}|\leq 2^{i+1}$ due to the first invariant and Observation~\ref{obser:delete}. The rebuild operation in Line~\ref{line:delete_body_end} takes $O(2^i\cdot \tau)$ time as $|S_i|\leq 2^i$. 

By Lemma~\ref{lem:amortize_lemma}, after the first three iteration-$i$ rebalances,
we can amortize all future rebalances across $\Omega(2^i)$ updates and thus the amortized time of each reblance excluding the first three is $O(\tau)$. 
The first three rebalances have been handled by Observation~\ref{obser:first3}. 
We conclude that the amortized time of the iteration-$i$ rebalance is $O(\tau)$. Because there are $O(\log n)$ iterations, the amortized time of the while loop in the deletion algorithm is $O(\tau \cdot \log n)$. Using the same analysis as in the insertion algorithm, the time of the rebuild in Line~\ref{line:delete_body_end_00} can be charged to the last iteration of the while loop.

It remains to analyze the time of the merging process in the beginning of the deletion algorithm. 
The time of the process is dominated by the rebuild in Line~\ref{line:delete_endcheck_end},
which takes $O(2^\clg\cdot \tau)$ time. Notice that the set $S_{\clg + 1}$ in the merging process was created by a unique rebalance procedure in the $\clg$-th iteration of the insertion, and the runtime of that rebalance is asymptotically the same as the rebuild in Line~\ref{line:delete_endcheck_end}. 
Hence, we charge the rebuild in Line~\ref{line:delete_endcheck_end} to that rebalance in the insertion. Note that each such insertion-induced rebalance is charged at most once because after $S_{\clg + 1}$ is merged into $S_{\clg}$, it can only be created by a new  insertion-induced rebalance in the future. As the amortized time of each insertion-induced rebalance is $O(\tau)$, the amortized time of the merging processing is also $O(\tau)$.

Putting it all together the amortized deletion time is $O(\tau\cdot\log n)$, which is $O(\log n \log^3 m + \log^2 n \log^2 m)$.

\paragraph{Summary.}
Combining the above discussions, we obtain Theorem~\ref{theo:deletemin}.

\subsection{Extending it to other decomposable searching problems}
\label{sec:extension}
It is straightforward to extend the above algorithm to other decomposable searching problems. 
Roughtly speaking, a problem is said to be \emph{decomposable} if a query on a set $S$ can be answered in constant time 
by combining the results of two queries on two disjoint subsets that partition $S$; 
see \cite{ref:BentleyDe80,ref:BentleyDe79} for a formal definition. 
Typical examples of decomposable problems include nearest neighbor searching and 
membership queries (i.e., given an element, determine whether it belongs to $S$).



We can obtain the following result. 

\begin{theorem}\label{theo:main}
Suppose there exists a static data structure $\calD(S)$ for a decomposable problem 
on a set $S$ of $n$ elements with complexity $(P(n), U(n), Q(n))$, where $P(n)$ denotes 
the preprocessing time, $U(n)$ the space usage, and $Q(n)$ the query time. 
Assume that both $P(n)$ and $U(n)$ are at least linear in $n$, and that $Q(n)$ is monotone increasing. 
Further assume that each element of $S$ is assigned a key (or weight). 
Then $\calD(S)$ can be transformed into a dynamic data structure that supports 
priority queue operations with the following performance guarantees: 
insertion time $O(P(n)\log n / n)$, delete-min time $O(P(n)\log n / n)$, 
and query time $O(Q(n)\log n)$. 
The total space usage remains $O(U(n))$.   
\end{theorem}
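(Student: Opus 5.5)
The plan is to reuse the structure $\Psi(S)$ of this section verbatim, replacing the geodesic nearest-neighbor structure of Theorem~\ref{theo:nearneighbor} by an arbitrary static structure $\calD(\cdot)$ for the decomposable problem, and then redo only the accounting. The insertion and delete-min algorithms (Algorithms~\ref{algo:A2_insert} and~\ref{algo:A2_delete}) never look inside $\calD(S_i)$; they only use keys, the sets $S_i,C_i$, and calls that rebuild $\calD(S_i)$. Hence the proofs of the invariants, Observations~\ref{obser:insert} and~\ref{obser:delete}, and Lemmas~\ref{lem:sizeunion}, \ref{lem:deleteinvariant} and~\ref{lem:amortize_lemma} carry over unchanged, and so does the $O(1)$-time location of the minimum-key element in $S_0\cup C_1$.

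\emph{Queries.} Decomposability lets us answer a query by querying each $\calD(S_i)$, $0\le i\le\clg$, and combining the $O(\log n)$ partial answers, each combination in $O(1)$ time. Since $|S_i|\le n$ and $Q$ is monotone, the query time is $O(Q(n)\log n)$.

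\emph{Updates.} In Section~\ref{sec:time} we assumed $\calD(S')$ is built in $O(|S'|\tau)$ time. Now set $\tau=P(n)/n$. I would use the standard convention that $P(n)/n$ is nondecreasing, which holds for the superlinear bounds of interest and is the usual assumption in Bentley's method. Then any set of size $O(2^i)\le O(n)$ is built in $O(2^i\,P(2^i)/2^i)=O(2^i\tau)$ time. I would state this convention explicitly, reading ``$P(n)$ at least linear'' in that sense. An iteration-$i$ rebalance then costs $O(2^i\tau)$ plus the $O(2^i)$ bookkeeping for selecting the $c_i$ smallest keys, which is absorbed because $P(n)\ge n$ gives $\tau=\Omega(1)$. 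Lemma~\ref{lem:amortize_lemma} and Observation~\ref{obser:first3} then give amortized $O(\tau)$ per level. Summing over the $O(\log n)$ levels, and charging the final rebuild and the merging process exactly as before, yields $O(P(n)\log n/n)$ amortized time for both insertion and delete-min.

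\emph{Space, and the main obstacle.} The total space is $\sum_i U(|S_i|)$ plus $O(n)$ for the sets. The structures are built on disjoint sets with $\sum_i|S_i|=n$, so superadditivity of $U$ gives $O(U(n))$. This superadditivity follows from $U(m)/m$ being nondecreasing, the same convention as for $P$. The delicate point is that $n$ changes over time, so $\tau$ must be taken with respect to the current (or maximum) size. I expect this to be the main obstacle in making the statement literally correct. I would handle it by noting that $\clg$ changes by at most one per update, so $n$ stays within a factor of $2$ between rebuilds at the top level. Monotonicity of $P(n)/n$ then makes every charge valid up to a constant factor.
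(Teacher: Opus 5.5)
Your proposal is correct and is essentially the paper's proof: keep the $S_i$/$C_i$ framework with its invariant and amortization lemmas unchanged, query all $\clg+1$ levels to get $O(Q(n)\log n)$, rerun the analysis of Section~\ref{sec:time} with $\tau=P(n)/n$, and bound the space by $O(U(n))$. The paper leaves the regularity conditions implicit, and the ones you add (nondecreasing $P(n)/n$ and $U(n)/n$) are the right kind; however, your varying-$n$ charging also implicitly needs a doubling condition such as $P(2n)=O(P(n))$, because monotonicity of $P(n)/n$ alone does not give the constant-factor comparison you invoke.
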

\begin{proof}    
These bounds are obtained by applying the algorithmic framework described above 
for weighted nearest neighbor searching. 
The structure partitions $S$ into subsets 
$S_0, S_1, \ldots, S_{\clg}$ and maintains, for each $0 \le i \le \clg$, 
a subset $C_i \subseteq S_i$. 
Since the sizes of the subsets $S_i$ grow geometrically, 
the total space of the dynamic structure is $O(U(n))$. 
A query is answered by querying each subset $S_i$, 
which yields a total query time of $O(Q(n)\log n)$. 
The insertion and delete-min operations can be analyzed in the same manner as above simply by changing the parameter $\tau$ accordingly.
\end{proof}


Since many searching problems are decomposable, Theorem~\ref{theo:main} has the potential to find many applications. 
In particular, it is most useful in settings where updates consist solely of insertions and delete-min operations, and where fully dynamic data structures have previously been used only to support such restricted updates. Below we outline several representative potential applications.


\paragraph{Proximity problems.}
Static proximity problems such as nearest neighbor searching, bichromatic closest pair~\cite{ref:EppsteinDy95,ref:ChanDy20}, and related variants are decomposable. In scenarios where points are processed in increasing order of a key (e.g., weight, distance, or time), only insertions and delete-min operations are required. Our transformation could yield dynamic data structures with improved amortized bounds compared to those obtained via fully dynamic proximity structures.

\paragraph{Range searching.}
Range searching problems are classical decomposable problems with efficient static solutions based on range trees~\cite{ref:BentleyDe79,ref:BentleyDe80}, partition trees~\cite{ref:ChanOp12,ref:MatousekEf92,ref:MatousekRa93,ref:WangA25}, or related structures~\cite{ref:AgarwalSi17}. In applications where objects are activated over time and removed in increasing key order, our technique could provide dynamic support for priority-queue updates while preserving near-optimal space usage. This may be particularly relevant in sweepline and incremental geometric algorithms.

\paragraph{Parametric and incremental geometric optimization.}
Some geometric optimization problems process events in sorted order of a parameter (e.g., distance threshold or weight). In such settings, if elements are inserted as the parameter increases, and only the smallest pending event is removed at each step, then our technique could offer a systematic way to avoid fully dynamic data structures in these frameworks by exploiting the decomposability of the underlying static queries.

\medskip
Overall, as discussed in~\cite{ref:BentleyDe79,ref:BentleyDe80}, a wide range of geometric and combinatorial searching problems are decomposable. Consequently, our priority-queue transformation provides a general tool for converting static solutions into dynamic ones in settings where only insertions and delete-min updates are required. We expect further applications to emerge in many other decomposable problems.

\section{Deletion-only geodesic unit-disk range emptiness (GUDRE) queries}
\label{sec:line_seperated_disk_emptiness}

In this section, we prove Theorem~\ref{theo:rangeempty} for the deletion-only GUDRE query problem.
The problem can be reduced to a {\em diagonal-separated} problem by using the balanced polygon decompositions (BPD) of simple polygons as described in Section~\ref{sec:pre}.



\paragraph{Diagonal-separated deletion-only GUDRE queries.}
Consider a diagonal $d$ that partitions $P$ into two subpolygons $P_L$ and $P_R$. In the diagonal-separated setting, we assume that all points of $S$ lie in $P_L$, and every query point $q$ lies in $P_R$. We will prove the following result.

\begin{lemma}\label{lem:diaseprangeempty}
Let $P$ be a simple polygon with $m$ vertices, and assume that a GH data structure has been built for $P$. Given a diagonal $d$ that partitions $P$ into two subpolygons $P_L$ and $P_R$, and a set $S$ of $n$ points contained in $P_L$, we can construct a data structure for $S$ in $O(n\log^2 m + n\log n\log m)$ time that supports the following operations:
\begin{enumerate}
    \item \emph{Delete:} delete a point from $S$ in $O(\log n \log^2 m)$ amortized time.
    \item \emph{GUDRE query:} for any query point $q \in P_R$, determine whether there exists a point $p \in S$ with $d(p,q)\le 1$, and if so, return such a point as a witness, in $O(\log n \log m)$ time.
\end{enumerate}
\end{lemma}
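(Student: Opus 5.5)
Every geodesic path from a site $p \in P_L$ to a query point $q \in P_R$ crosses the diagonal $d$. The plan is to exploit this by computing, for each site, its geodesic distance function restricted to $d$, and then answering a query with an implicit lower-envelope (Voronoi-type) search along $d$. This is the geodesic analogue of the standard "line-separated unit-disk emptiness" technique, where the disks centered on one side of a line form an envelope.

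\textbf{Step 1: reduce to $d$.} For $q \in P_R$ we have $d(p,q) = \min_{x \in d} \bigl(d(p,x) + d(x,q)\bigr)$. The question "is there $p \in S$ with $d(p,q) \le 1$?" is therefore equivalent to asking whether $q$ lies in the union of the regions $D_p \cap P_R$, where $D_p$ is the geodesic unit disk of $p$. Since all these regions are "rooted" on $d$, I expect the pieces $D_p \cap P_R$ to behave like pseudo-disks anchored on a common segment. Concretely, two such regions should intersect each other's boundary along $d$ in a way that lets the union boundary be described by an ordering along $d$. The key geometric lemma to prove is that the union of the $D_p \cap P_R$ is determined by the lower envelope over $d$ of the functions $f_p(x) = d(p,x)$ for $x \in d$, and that this envelope has linear complexity. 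For this I would use the fact that two functions $f_p, f_{p'}$ cross at most once on $d$: their bisector in a simple polygon is connected and separates $P$ in a way compatible with $d$.

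\textbf{Step 2: the static query.} Given the envelope, a query $q$ is answered as follows. Compute the funnel of $q$ to $d$ using the GH structure. Then find the point $x$ of $d$ and the site $p$ minimizing $f_p(x) + d(x,q)$. This is a nearest-site computation with the sites "moved" onto $d$ as additively weighted points. Because the sites are all on one side of $d$, the minimizer satisfies a monotonicity property along $d$. That property permits a binary search over the envelope breakpoints, with each step evaluating distances via the GH structure in $O(\log m)$ time, for $O(\log n \log m)$ total query time.

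\textbf{Step 3: building and deletions.} To build the envelope, sort sites by where their anchor/projection lands on $d$. Compute pairwise crossings in $O(\log m)$ time each by binary search with GH distance queries (possibly $O(\log^2 m)$ when the breakpoint search is nested). Merge divide-and-conquer style, giving the $O(n\log^2 m + n\log n \log m)$ bound. For deletion-only updates, I would follow the Euclidean approach of Efrat et al. / Wang: store the sites in a balanced tree ordered along $d$, with each node keeping the envelope of its subtree only as far as needed, e.g., a kinetic or "deletion-only lower envelope" structure in the style of the dynamic convex hull of Overmars--van Leeuwen. When a site is deleted, the portion of the envelope it owned is recomputed from the sites it hid. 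Charging each site's reappearance on an envelope level to its deletion gives amortized $O(\log n)$ envelope changes per deletion, each costing $O(\log^2 m)$ for computing a crossing, which is $O(\log n\log^2 m)$ amortized.

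The main obstacle will be Step 1: establishing that the envelope pieces behave like pseudo-lines along $d$, with at most one crossing per pair, and that the query can binary search on them. Weighted and geodesic distortions could break the Euclidean convexity arguments. My first attempt is to show via the funnel structure of $q$ that $x \mapsto d(x,q)$ is convex on $d$. Combined with the single-crossing property of the $f_p$, this convexity makes the minimizing site's position along $d$ well defined and searchable.
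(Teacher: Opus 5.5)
There is a genuine gap in Step~2, and it undermines the design of the whole structure. The envelope $F(x)=\min_{p}d(p,x)$ on $d$ is just $\vd(S)\cap d$. Because $F$ is a lower envelope, $g(x)=F(x)+d(x,q)$ need not be unimodal, even though $x\mapsto d(x,q)$ is convex. So no test at a breakpoint of $F$ tells you on which side the minimizer, or a witness, lies.

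This already fails in the Euclidean case. Put $d$ on the line $x=0$, take sites $p_1=(-1,-10)$, $p_2=(-5,0)$, $p_3=(-1,10)$ and $q=(100,1)$, and rescale so that $101.5$ becomes the unit. The cells on $d$ are $y<-3.8$, $[-3.8,3.8]$ and $y>3.8$, and $g$ has a local maximum at both breakpoints. At $y=-3.8$ you see $d(p_1,q)\approx 101.60<d(p_2,q)\approx 105.0$, which points downward. Yet the only witness is $p_3$, with $d(p_3,q)\approx 101.40$.

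Searching via the envelope on $d$ really amounts to point location in $\vd(S)\cap P_R$, i.e.\ the Agarwal--Arge--Staals structure, which is static and does not survive deletions. The Euclidean structures you want to imitate do not search the separating line either. They binary search the $y$-monotone boundary of the union of the unit disks, using the $y$-coordinate of $q$. So maintaining $F$ under deletions, as in your Step~3, maintains the wrong object.

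The paper keeps your instinct from Step~1 but works with the boundary $\Xi(S)$ of $\bigcup_{p}D_p\cap P_R$, i.e.\ arcs at distance exactly $1$ plus pieces of $\partial P_R$. Three ingredients are missing from your plan:
\begin{enumerate}
\item A substitute for the $y$-order. This is the partial order $\prec_R$ on $P_R$: compare the closest points $z_p,z_q$ of $d$, and if they coincide, use the counterclockwise order of $\pi(z_p,p)$ and $\pi(z_p,q)$ at their junction vertex. The paper proves this order is consistent with the traversal order along $\Xi(S)$.
\item The fact that the owners of consecutive pieces of $\Xi(S)$ appear in the order of geodesic distance to the lower endpoint of $d$. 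Sorting by closest point on $d$, as you suggest, is not enough, since many sites can share an endpoint of $d$ as their closest point.
\item A three-site lemma, which uses the unit radius. It implies that in a merge the surviving owners form a prefix of the left child's list and a suffix of the right child's.
\end{enumerate}

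Together these give an Overmars--van Leeuwen tree that stores, at each node, the bridges $r_u,r_w$ and the breakpoint $b_{r_u}(S_v)\in\Xi(S_v)$. A query compares $q$ with that breakpoint under $\prec_R$ in $O(\log m)$ time per level, with separate handling when the two are incomparable. A merge is a linear scan costing $O(\log m)$ per step. Finding the new breakpoints adds $O(\log^2 m)$, via binary search along a bisector for the point at distance $1$. A deletion resumes the scan from the old bridges, and its cost is amortized because owners only move up the tree.
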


\paragraph{Proving Theorem~\ref{theo:rangeempty}.}
Before proving Lemma~\ref{lem:diaseprangeempty}, we first use it to prove Theorem~\ref{theo:rangeempty}.

In the preprocessing, we compute the BPD-tree $T_P$ for $P$ in $O(m)$ time~\cite{ref:GuibasOp89}. In addition, we build a point location data structure on the triangulation of $P$ formed by the triangles of the leaves of $T_P$ in $O(m)$ time so that given a point $q\in P$, the triangle that contains $q$ can be computed in $O(\log m)$ time~\cite{ref:KirkpatrickOp83,ref:EdelsbrunnerOp86}. Finally, we construct the GH data structure for $P$ in $O(m)$ time. This finishes our preprocessing, which takes $O(m)$ time in total.

Given a set $S$ of $n$ points in $P$, for each node $v\in T_P$, let $S_v$ be the set of points of $S$ in $P_v$. We compute $S_v$ for all nodes $v\in T_P$ in $O(n\log m)$ time, as follows.
First, using the point location data structure, for each point $p\in S$, we find the leaf $v_p$ of $T_P$ whose triangle contains $p$ in $O(\log m)$ time. Then, for each node $v$ in the path from $v_p$ to the root of $T_P$, we add $p$ to $S_v$. As the height of $T_P$ is $O(\log m)$, the total time is $O(n\log m)$ for computing $S_v$ for all $v\in T_P$.

In the above algorithm, we also keep a list $L$ of nodes $v$ with $S_v\neq \emptyset$. As each point of $S$ can appear in $S_v$  for $O(\log m)$ nodes $v\in T_P$, we have $L=O(n\log m)$ and $\sum_{v\in L}|S_v|=O(n\log m)$. The reason we use $L$ is to avoid spending $\Omega(m)$ time on constructing the data structure for $S$.

For each node $v\in L$, if $v$ is a leaf, then $P_v$ is a triangle. In this case, we construct a Euclidean case unit-disk range emptiness query data structure $\calD_v$ for $S_v$ in $O(|S_v|\log |S_v|)$ time~\cite{ref:EfratGe01,ref:WangCo26}. Doing this for all leaves in $L$ takes $O(n\log n)$ time in total.

If $v$ is not a leaf, then let $u$ and $w$ be the left and right children of $v$, respectively. We construct a data structure of Lemma~\ref{lem:diaseprangeempty} for $S_u$ in $P_u$ with respect to the diagonal $d_v$ for query points in $P_w$, so the points of $S_u$ and the query points are separated by $d_v$; let $\calD_u(w)$ denote the data structure. This takes $O(|S_u|\log^2 m+|S_u|\log n\log m)$ time by Lemma~\ref{lem:diaseprangeempty}.
Symmetrically, we construct a data structure $\calD_w(u)$ of Lemma~\ref{lem:diaseprangeempty} for $S_w$ in $P_w$ with respect to the diagonal $d_v$ for query points in $P_u$, which takes $O(|S_w|\log^2 m+|S_w|\log n\log m)$ time.
Doing the above for all internal nodes $v$ of $L$ takes $O(n\log^3m+n\log n\log^2 m)$ time as $\sum_{v\in L}|S_v|=O(n\log m)$.

This finishes the construction of our data structure for $S$, which takes $O(n\log^3m+n\log n\log^2 m)$ time in total.

Given a query point $q\in P$, our goal is to determine whether $S$ has a point $p$ with $d(p,q)\leq 1$, and if so, return such a point.

We first find the leaf $v_q$ of $T_P$ whose triangle contains $q$. This takes $O(\log m)$ time using the point location data structure for the leaf triangles of $T_P$. Then, using the Euclidean data structure $\calD_{v_q}$, we can determine in $O(\log n)$ time whether $S_{v_q}$ has a point $p$ with $d(p,q)=|\overline{pq}|\leq 1$~\cite{ref:EfratGe01,ref:WangCo26}. If so, we return such a point $p$ and halt our query algorithm. Otherwise, for each ancestor $v$ of $v_p$ in $T_P$, we do the following. Let $u$ and $w$ be the left and right children of $v$, respectively. Note that either $u$ or $w$ is an ancestor of $v_q$. We assume that $w$ is an ancestor of $v_q$ and the other case can be handled similarly. Using the data structure $\calD_u(w)$, we determine whether $S_u$ has a point $p$ with $d(q,p)\leq 1$. If so, we return such a point $p$ and halt the query algorithm. This can be done in $O(\log n\log m)$ time by Lemma~\ref{lem:diaseprangeempty}. As the height of $T_P$ is $O(\log m)$, the query algorithm takes $O(\log n\log^2 m)$ time in total.

To delete a point $p\in S$, by a point location, we first find the leaf $v_p$ whose triangle contains $p$. We delete $p$ from the Euclidean data structure $\calD_{v_p}$, which takes $O(\log n)$ amortized time~\cite{ref:EfratGe01,ref:WangCo26}. Then, for each ancestor $v$ of $v_p$ in $T_P$, let $u$ and $w$ be the left and right children of $v$, respectively.
We assume that $w$ is an ancestor of $v_q$ and the other case can be handled similarly. We delete $p$ from the data structure $\calD_w(u)$, which takes $O(\log n\log^2 m)$ amortized time by Lemma~\ref{lem:diaseprangeempty}. As the height of $T_P$ is $O(\log m)$, the deletion takes $O(\log n\log^3 m)$ amortized time.

This proves Theorem~\ref{theo:rangeempty}.

\paragraph{Proving Lemma~\ref{lem:diaseprangeempty}.}
In the following, we prove Lemma~\ref{lem:diaseprangeempty}. We follow the notation in the lemma, e.g., $P$, $S$, $d$, $P_L$, $P_R$, $n$, $m$. Without loss of generality, we assume that $d$ is vertical and $P_R$ is locally to the right of $d$. Let $z^*_0$ and $z^*_1$ be the lower and upper endpoints of $d$, respectively.


\subsection{Overview}

For each point $p\in P$, we define $D_p$ as the {\em geodesic unit-disk} centered at $p$, which consists of all points $q\in P$ with $d(p,q)\leq 1$. For any $S'\subseteq S$, define $d(S',p)=\min_{s\in S'}d(s,p)$.
Let $\beta_{S'}(p)$ denote the nearest neighbor of $p$ in $S'$ (if the nearest neighbor is not unique, then let $\beta_{S'}(p)$ refer to an arbitrary one).

We define $\partial P_R$ to be $\partial P\cap P_R$. Hence, although $d$ is a segment on the boundary of $P_R$, $\partial P_R$ does not contain the interior points of $d$.

In what follows, we give an overview of our data structure for Lemma~\ref{lem:diaseprangeempty}.

For any query point $q\in P_R$, we need to know whether $D_q\cap S=\emptyset$. Observe that $D_q \cap S \neq \emptyset$ if and only if $q \in \cup_{p \in S} D_p$. Based on this observation, our strategy is to maintain an implicit representation of $\calD_R(S)$, where $\calD_R(S)$ is the portion of $\cup_{s \in S} D_s$ in $P_R$. To this end, we propose a concept {\em right envelope} defined below.

\paragraph{Right envelope.}
Roughly speaking, the {\em right envelope} $\Xi(S)$ of $\calD_R(S)$ is the boundary of $\calD_R(S)$ excluding $d$. Specifically, $\Xi(S)$ is union of $\Xi\ca(S)$ and $\Xi\sa(S)$, with
$\Xi\ca(S) = \{p \in P_R : d(S, p) = 1\}$ and $\Xi\sa(S) = \{p \in \partial P_R : d(S, p) < 1\}$.
In general, $\Xi\ca(S)$ is a collection of circular arcs each belonging to $\partial D_s$ for some point $s\in S$, while $\Xi\sa(S)$ is a collection of segments on $\partial P_R$; see Figure~\ref{fig:XiS}.

Note that $\calD_R(S)$ may have multiple connected components. The following lemma characterizes the shape of each component. The lemma proof requires additional observations and concepts. As we intend to give a quick overview here, we defer the proof to Section~\ref{sec:order}.
\begin{lemma}\label{lem:boundary}
For each connected component $\calD$ of $\calD_R(S)$, $\calD\cap d$ consists of a single segment of $d$ and $\calD$ is simply connected (i.e., there are no holes in $\calD$), meaning that $\partial\calD$ is composed of a segment of $d$ and a simple curve connecting the two endpoints of $\calD\cap d$.
\end{lemma}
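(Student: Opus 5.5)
The plan is to exploit the fact that every site lies in $P_L$, so every geodesic path from a site to a point of $P_R$ must cross the diagonal $d$. For $p\in S$ and $q\in P_R$, the path $\pi(p,q)$ meets $d$ in a single point, because shortest paths in a simple polygon cross a diagonal at most once. Consequently, for each $p\in S$ the set $D_p\cap P_R$ is \emph{geodesically star-shaped} with respect to $d$. Concretely, if $q\in D_p\cap P_R$ and $x=\pi(p,q)\cap d$, then the whole subpath $\pi(x,q)$ lies in $D_p\cap P_R$, since distances from $p$ increase monotonically along $\pi(p,q)$. The same property then holds for the union: every point $q\in \calD_R(S)$ is joined to $d$ by a geodesic path $\pi(x_q,q)$ that stays inside $\calD_R(S)$, namely the one given by its witness site $\beta_S(q)$.

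\textbf{Step 1: $\calD\cap d$ is a single segment.} I will first show that $\calD_R(S)\cap d$ is a union of segments in which each segment touches exactly one component. Every component $\calD$ contains a point $q$, and by the star property it contains $x_q\in d$, so $\calD\cap d\neq\emptyset$. Next suppose $\calD\cap d$ had two maximal pieces $I_1,I_2$ separated by a gap point $y\in d\setminus \calD_R(S)$. Pick points $x_1\in I_1$ and $x_2\in I_2$, and join them by a curve $\gamma$ inside $\calD$. Closing $\gamma$ with $\overline{x_1x_2}\subseteq d$ gives a closed curve in $P_R$ that encloses a region $R\subseteq P_R$ adjacent to $y$. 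Since $P$ is simple, $R$ contains no obstacles, so $R\subseteq P$. Now consider the point of $R$ just to the right of $y$. I want to show it lies in $\calD_R(S)$, which yields a contradiction because $y\notin\calD_R(S)$ and $\calD_R(S)$ is closed. To do this, I will use the geodesic paths from the sites responsible for points of $\gamma$. A path from a site $p$ to a point of $\gamma$ enters $P_R$ through $d$, either below $y$ or above $y$. Moving continuously along $\gamma$ from $x_1$ to $x_2$, the crossing point must switch sides of $y$. At the switch, I will use an ordering/non-crossing argument: two shortest paths from sites in $P_L$ to points of $\gamma$, entering on opposite sides of $y$, must cross, or a single disk $D_p$ must contain a region around $y$. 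This forces $y\in D_p$ for some $p$, contradicting $y\notin\calD_R(S)$. The key geometric fact I will need, and will prove directly, is that $D_p\cap d$ is a single segment, since $d(p,\cdot)$ restricted to a segment is convex in a simple polygon.

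\textbf{Step 2: no holes.} Suppose $\calD$ had a hole $H$, a bounded component of $P_R\setminus\calD$ surrounded by $\calD$. Since $P$ is simple and $\partial P_R$ consists of boundary and $d$, a hole of $\calD$ lying inside $P_R$ is a region of $P$. Take $y\in H$ and its nearest site $p=\beta_S(y)$; then $d(p,y)>1$. The path $\pi(p,y)$ enters $P_R$ through $d$ and must cross $\partial H\subseteq \Xi\ca(S)\cup\Xi\sa(S)$ before reaching $y$. Let $q$ be a point of $\partial H\cap\calD$ on this path. Then $d(S,q)\le 1$. I will then show that the witness $p'=\beta_S(q)$ would give a path to $y$ through region $\calD$, and, using the star property of $D_{p'}$ together with the fact that the boundary of $D_{p'}$ facing $H$ is ``outward'' from $d$, derive a contradiction. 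Cleaner alternative: every point $z\in\partial H$ lies on $\partial D_{p'}$ for some $p'$, and the geodesic from $p'$ to $z$ extended beyond $z$ exits $D_{p'}$. Hence the outward direction of $\partial\calD$ at $z$ points away from $d$ along geodesics, so the region on the far side (the hole) is ``farther from $d$'' than $\calD$. But $H$ is enclosed by $\calD$, so following $\pi(p,y)$ backwards from $y$ to $d$ must leave $H$ through a boundary point where it goes toward $d$, i.e. moves into $\calD$ in the inward direction. That is consistent, so the real contradiction must come from the fact that $y$'s own geodesic to $d$ is monotone. Specifically, the subpath $\pi(q,y)$ extends $\pi(p',q)$, using the fact that shortest paths from a diagonal fan out without crossing, which is the funnel structure. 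Then $d(p',y)$ is small only if $H$ is thin; this part needs care.

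\textbf{Main obstacle.} The hard part is making the ``geodesics from $d$ sweep $P_R$ without crossing'' ordering rigorous for different sites. I would first establish an ordering lemma: for $q_1,q_2\in\Xi(S)$, the crossing points on $d$ of $\pi(\beta_S(q_1),q_1)$ and $\pi(\beta_S(q_2),q_2)$ are ordered consistently with the order of $q_1,q_2$ along $\Xi(S)$. This is presumably what Section~\ref{sec:order} develops. Both the single-segment claim and simple connectivity then follow, because $\partial\calD$ is traced monotonically by this order.
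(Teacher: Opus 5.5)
Your Step~2 (no holes) does not close, and this is a genuine gap. The only monotonicity you have is site-specific: $d(p,\cdot)$ increases along $\pi(p,q)$ after its entry point on $d$. Suppose you follow $\pi(p,y)$ from the nearest site $p$ of a hole point $y$ and find a point $q\in\partial H$ on it. The site $p'$ certifying $d(S,q)\le 1$ has nothing to do with $\pi(p,y)$, so nothing bounds $d(p',y)$. As you observe yourself, the picture is ``consistent.'' Your fallback does not help either. The ordering of $\Xi(S)$ in Section~\ref{sec:order} (the order $\precxi$ and Lemma~\ref{lem:consistentTL}) is defined in terms of component curves of $\Xi(S)$ with endpoints on $d$, so it presupposes the very lemma you are proving.

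The missing idea is a \emph{site-independent} monotonicity. For $q\in P_R$, let $z_q$ be the point of $d$ geodesically closest to $q$. Then for every site $s$, $d(s,\cdot)$ strictly increases along $\pi(z_q,q)$; this is Lemma~\ref{lem:ext_dist}. To prove it, combine two facts:
\begin{itemize}
\item the Pollack--Sharir--Rote inequality $d(s,p)<\max\{d(s,z_q),d(s,q)\}$ for interior points $p$ of $\pi(z_q,q)$;
\item the bound $d(s,z_q)\le d(s,q)$. It holds because $\pi(z_q,q)$ leaves $d$ perpendicularly, or at an angle of at least $90^\circ$ at an endpoint of $d$. Hence the distance from the entry point $z$ of $\pi(s,q)$ on $d$ grows along $\pi(z_q,q)$.
\end{itemize}
Since the paths $\pi(z_q,q)$ do not depend on the site, they make the whole union $\calD_R(S)$ star-shaped with respect to $d$, and both claims become short.
\begin{itemize}
\item Holes: for $y$ in a hole $H$, extend the last edge of $\pi(z_y,y)$ beyond $y$ until it hits $\partial H\subseteq\calD$ at some $p$. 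Then $y\in\pi(z_p,p)$, so $d(S,y)<d(S,p)\le 1$, a contradiction.
\item Single segment: for a gap point $y\in d$, the ray from $y$ perpendicular to $d$ into $P_R$ must meet $\calD$ at some $p$. This is a Jordan-curve argument using that $P$ is simple. Since $\overline{yp}=\pi(z_p,p)$, we get $d(S,y)\le d(S,p)\le 1$, again a contradiction.
\end{itemize}

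Step~1 as written is also not an argument. Two shortest paths from different sites entering on opposite sides of $y$ may well cross, so ``must cross'' yields no contradiction. Likewise, putting one point of $R$ near $y$ into $\calD_R(S)$ does not contradict closedness. Step~1 can be rescued with your own tools, however.
\begin{itemize}
\item Each $K_s=D_s\cap P_R$ is connected, since every point is joined to the interval $D_s\cap d$ by the tail of $\pi(s,\cdot)$.
\item $\calD$ is a union of some of the $K_s$, and their intersection graph is connected.
\item If $g\in K_s\cap K_t$ with entry points $x_s,x_t$ and $y$ lies between them, then $|x_sy|+|yx_t|=|x_sx_t|\le d(x_s,g)+d(g,x_t)$. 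This forces $d(s,y)\le d(s,g)\le 1$ or $d(t,y)\le d(t,g)\le 1$.
\end{itemize}
This pairwise argument says nothing about holes, which is exactly where the site-independent lemma is needed.
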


By Lemma~\ref{lem:boundary}, each connected component of $\Xi(S)$ is a simple curve bounding a connected component of $\calD_R(S)$, and we call it a {\em component curve of $\Xi(S)$}. The curve connects two points of $d$, and we call them the {\em upper} and {\em lower} endpoints of the curve, respectively.

For any point $s \in S$, define $\xi_s(S)$ as the set of points $p\in \Xi(S)$ such that $d(s,p)\leq d(s',p)$ for any other point $s'\in S$. In other words, if we consider the geodesic Voronoi diagram $\vd(S)$ of $S$, then $\xi_s(S)$ consists of all points $p\in \Xi(S)$ in the Voronoi region of $s$; see Figure~\ref{fig:xisS}.
Note that it is possible that $\xi_s(S)=\emptyset$. We define $I(S) = \{s \in S : \xi_s(S) \neq \emptyset\}$.
One property is we will show later is that if $\xi_s(S)\neq \emptyset$, then $\xi_s(S)$ is a single connected portion of $\Xi(S)$.

For any $S'\subseteq S$, we define the above notations on $S'$ analogously, e.g., $\Xi(S')$, $\xi_s(S')$ for $s\in S'$. In the rest of this section, we will introduce other notations and related properties on $S$, with the understanding that the same notations and properties are also applicable to any $S'\subseteq S$ unless otherwise stated.

\begin{figure}
    \centering
    \begin{subfigure}[b]{.45\textwidth}
        \centering
        \includegraphics[width=\linewidth]{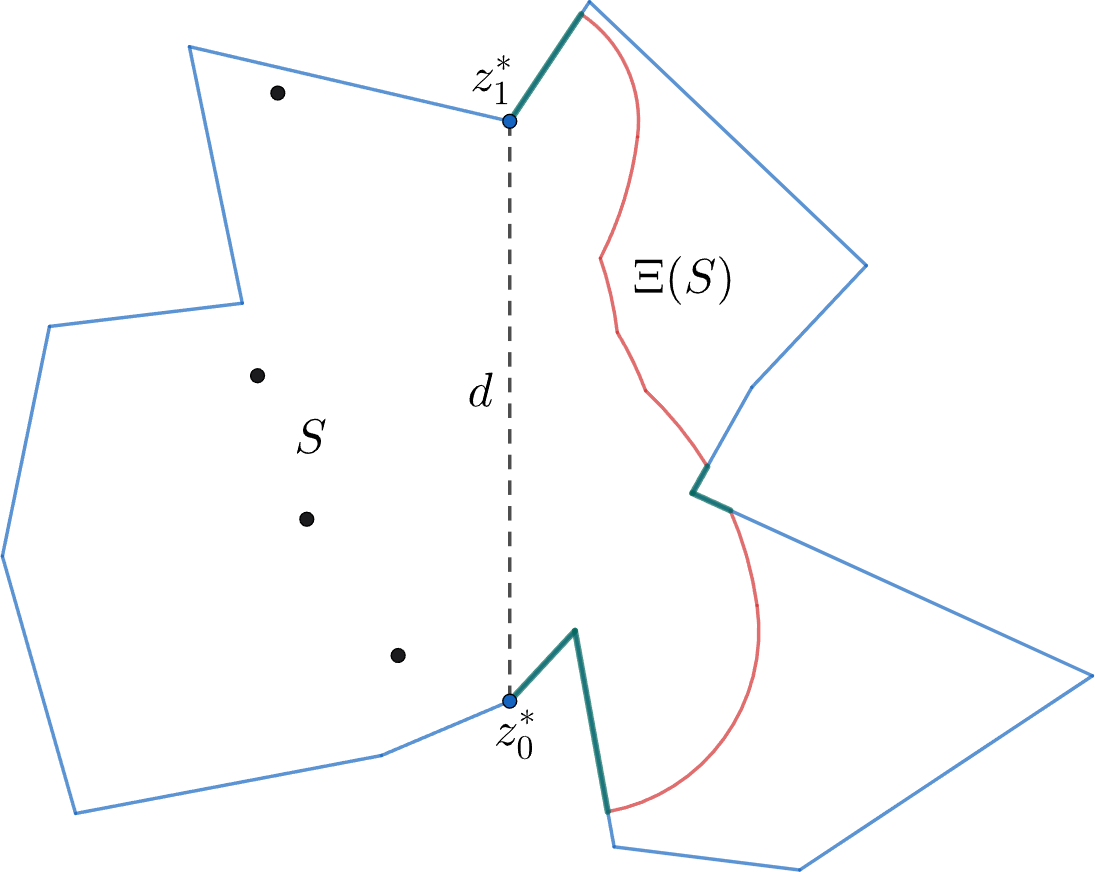}
        \subcaption{$\Xi\ca(S)$ is red and $\Xi\sa(S)$ is green.}
        \label{fig:XiS}
    \end{subfigure}
    \hfill
    \begin{subfigure}[b]{.45\textwidth}
        \centering
        \includegraphics[width=\linewidth]{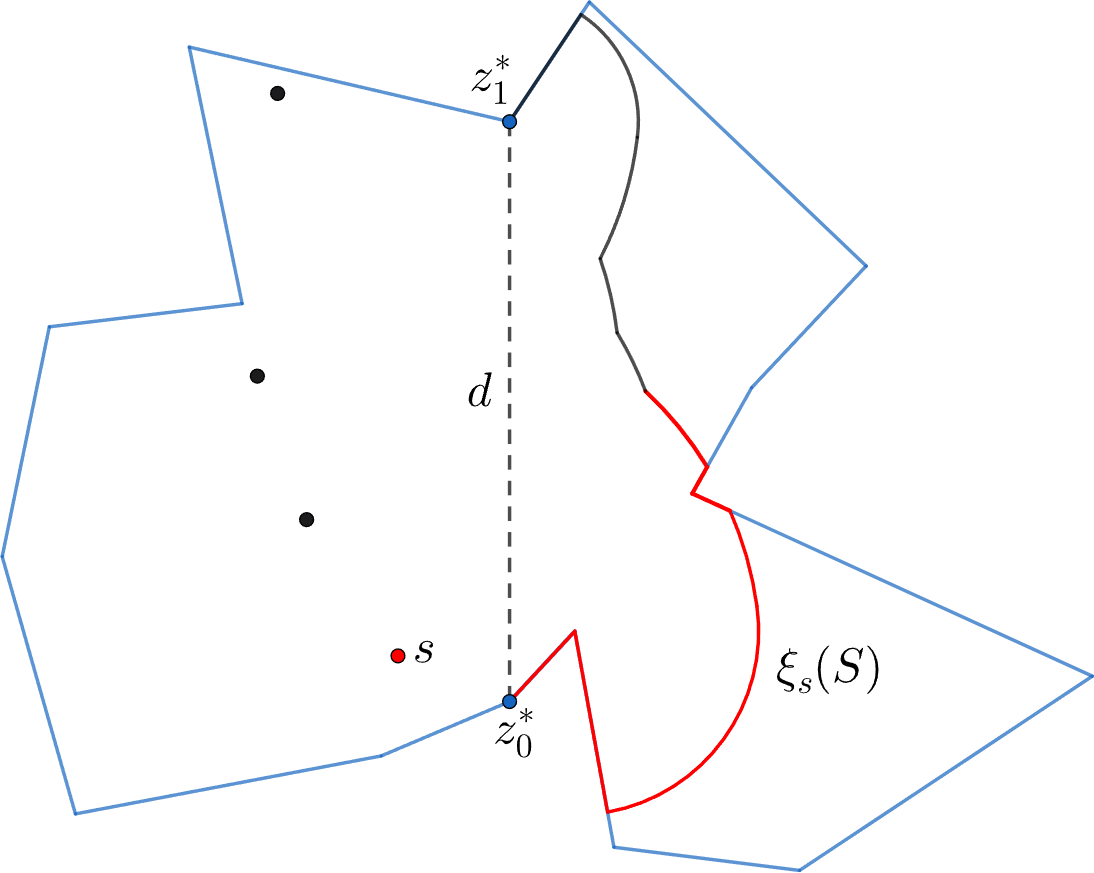}
        \subcaption{$\xi_s(S)$ is colored with red.}
        \label{fig:xisS}
    \end{subfigure}

    \caption{Illustrations of $\Xi(S)$. The region bounded by $\Xi(S)$ and $d$ is $\calD_R(S)$.}
\end{figure}

\paragraph{General position assumption.}
For the ease of exposition, we make a general position assumption to rule out some degenerate cases for $\Xi(S)$.
Specifically, we assume that: (1) no point in $P_R$ has the geodesic distance equal to $1$ from three points in $S$
and (2) there are no points $s\in S$ such that $D_s\cap d$ consists of exactly one point. This guarantees that for any $S'\subseteq S$ and any $s\in S'$, $\xi_s(S')$ cannot be a single point. In addition, we assume that no vertex of $P$ is equidistant to two points in $S$. This assures that the bisector of any two points of $S$ does not contain a vertex of $P$.
Finally, we assume that the query point $q$ is not on $d$ (the case $q\in d$ can be solved in a similar but simpler way).
These assumptions can be lifted without affecting the results, e.g., by slight perturbation to the locations of the sites or the vertices of $P$.

\paragraph{Answering queries using the right envelope.}

Intuitively, for any point $q\in P_R$, $q\in \calD_R(S)$ if and only if $q$ is ``to the left of'' $\Xi(S)$. However, the question is how to interpret ``left.''

In the Euclidean case where $P$ is the entire plane and $d$ becomes a vertical line, $\Xi(S)$ is a $y$-monotone curve. Due to this property, there is a trivial way to determine whether $q$ is to the left of $\Xi(S)$: First find by binary search the arc of $\Xi(S)$ intersecting the horizontal line through $q$, and then determine whether $q$ is to the left of the arc~\cite{ref:EfratGe01,ref:WangCo26}.

In our problem, however, $\Xi(S)$ is no longer $y$-monotone. Intuitively, this is because $P$ is not ``large enough'' and $\partial P$ changes the ``shape'' of $\Xi(S)$. Thus, we need a new approach. At a very high level, our strategy is to develop some concepts analogous to those in the Euclidean case. Specifically, we will introduce an order $\prec_R$ for any two points of $P_R$; this order corresponds to the $y$-coordinate order in the Euclidean case. We will also define a {\em traversal order} $\precxi$ for any two points of $\Xi(S)$; in the Euclidean case, this would be the order of traversing on $\Xi(S)$ from bottom to top as $\Xi(S)$ is $y$-monotone. In addition, we will define an order $\prec_L$ for all points of $S$ based on their geodesic distances to $z^*_0$, the lower endpoint of $d$. We will show that these three orders are consistent with each other, which will be instrumental in our data structure design.

\paragraph{Outline.}
The remainder of this section is organized as follows. Section~\ref{sec:order} introduces the above three orders and prove several properties. Section~\ref{sec:dsdescription} describes our data structure and Section~\ref{sec:rangeemptyquery} gives our query algorithm. We discuss in Section~\ref{sec:dsconstruct} how to construct our data structure. Finally, the algorithm for handling deletions is presented in Section~\ref{sec:deletion}.

\subsection{The three orders $\boldsymbol{\prec_R}$, $\boldsymbol{\precxi}$, and $\boldsymbol{\prec_L}$}
\label{sec:order}

In this section, we introduce the three orders and discuss their properties. In particular, we will show that they are consistent.

\paragraph{The order $\boldsymbol{\prec_R}$ for $\boldsymbol{P_R}$.}
For each point $p\in P_R$, we define $z_p = \argmin_{z \in d} d(z, p)$.
We will also use the notation $\pi(d, p)$ to mean $\pi(z_p, p)$.

For two points $p, q \in P_R$, if $p\in \pi(s,q)$ or $q\in \pi(s,p)$, then we say that the two points are not {\em $\prec_R$-comparable}; otherwise, we define $p \prec_R q$ if one of the following two cases holds:
\begin{itemize}
    \item $z_p$ lies below $z_q$; see Figure~\ref{fig:LTR1}.
    \item $z_p = z_q$ and the three paths $\pi(c, z_p)$, $\pi(c, p)$, $\pi(c, q)$ are ordered counterclockwise around $c$, where $c$ is the junction vertex of $\pi(z_p, p)$ and $\pi(z_p, q)$; see Figure~\ref{fig:LTR2}.
\end{itemize}


\begin{figure}
    \centering
    \begin{subfigure}[b]{.45\textwidth}
        \centering
        \includegraphics[width=\linewidth]{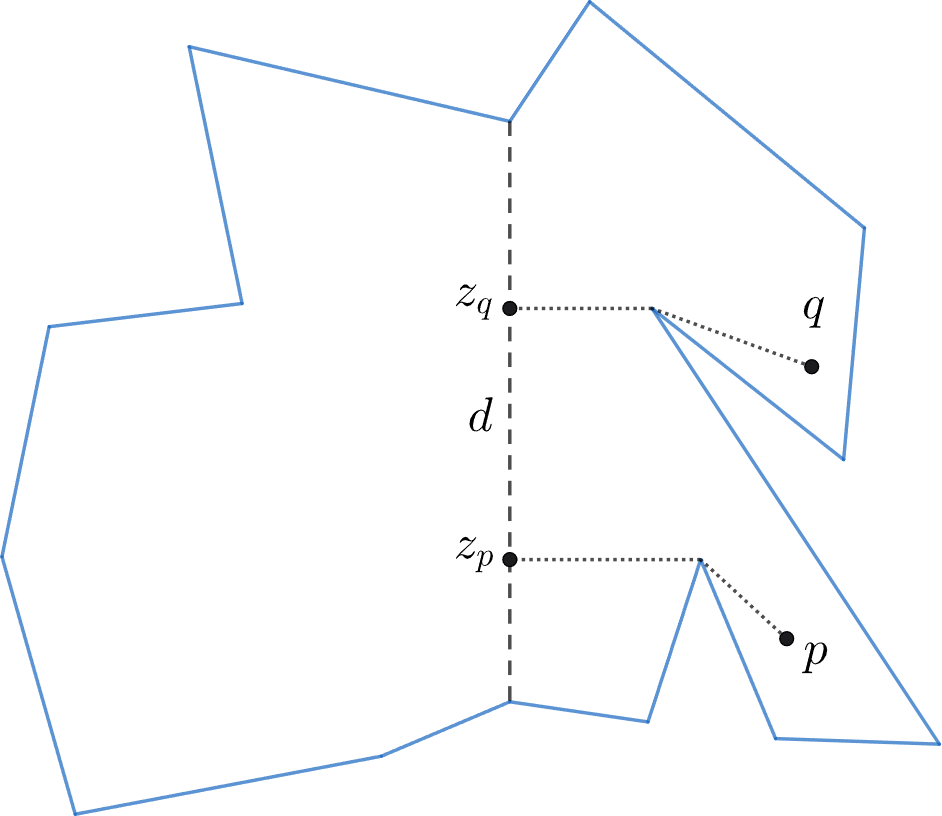}
        \subcaption{$p \prec_R p$ due to $z_p$ lying below $z_q$.}
        \label{fig:LTR1}
    \end{subfigure}
    \hfill
    \begin{subfigure}[b]{.45\textwidth}
        \centering
        \includegraphics[width=\linewidth]{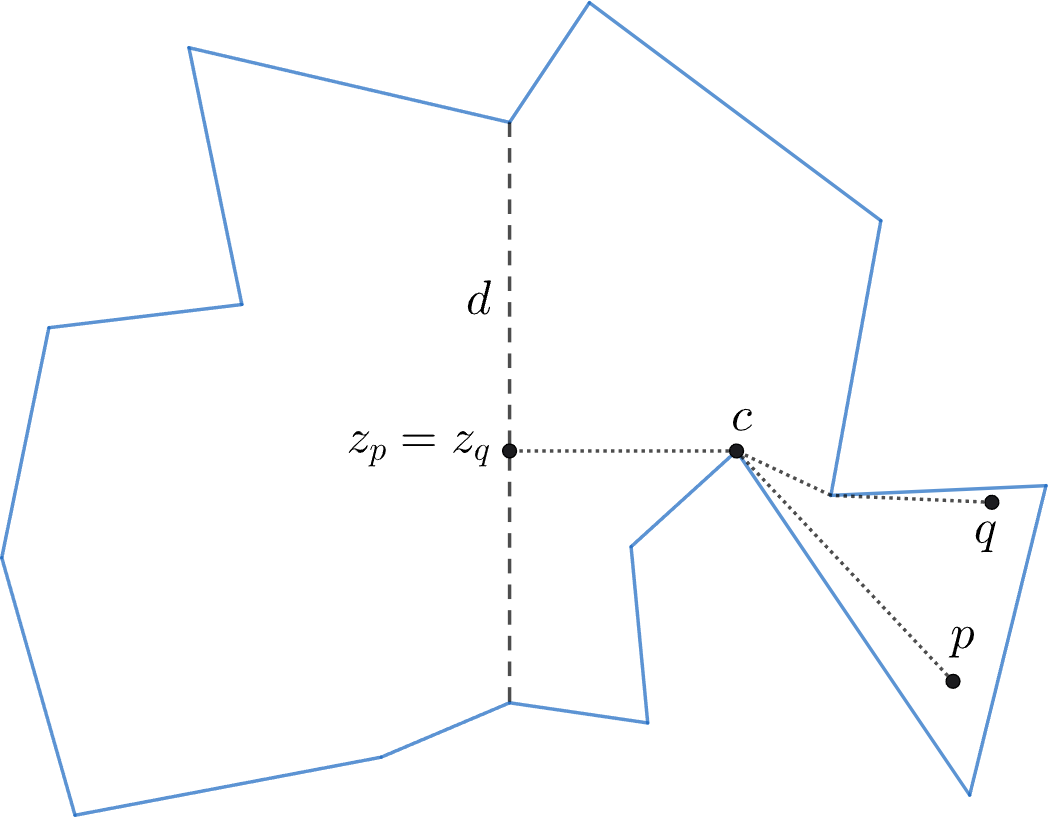}
        \subcaption{$p \prec_R q$ due to the clockwise order around $c$.}
        \label{fig:LTR2}
    \end{subfigure}
    \caption{Illustrations of $p \prec_R p$.}
\end{figure}

\begin{lemma} \label{lem:Risorder}
    $\prec_R$ is a partial order on $P_R$.
\end{lemma}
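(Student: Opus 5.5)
We will read the non-comparability clause as referring to $\pi(d,\cdot)$, i.e.\ $p$ and $q$ are incomparable when $p\in\pi(d,q)$ or $q\in\pi(d,p)$. We will show that $\prec_R$ is a strict partial order: it is irreflexive, asymmetric and transitive.

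\textbf{Irreflexivity.} This is immediate, since $p\in\pi(d,p)$.

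\textbf{Asymmetry.} If $z_p\neq z_q$, exactly one of them lies below the other. If $z_p=z_q=z$, then $p$ and $q$ are comparable only if neither lies on the other's path from $z$. In that case the junction vertex $c$ of $\pi(z,p)$ and $\pi(z,q)$ is distinct from $p$ and $q$. The three paths $\pi(c,z)$, $\pi(c,p)$, $\pi(c,q)$ leave $c$ in three distinct directions, by uniqueness of geodesics in a simple polygon and the maximality of $c$. Exactly one of the two cyclic orders (counterclockwise or clockwise) then holds, so at most one of $p\prec_R q$ and $q\prec_R p$ is true.

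\textbf{Transitivity: mixed cases.} Assume $p\prec_R q$ and $q\prec_R r$, with $p,r$ comparable; in the case analysis we also check that incomparability cannot spoil the conclusion. We split according to the positions of $z_p,z_q,z_r$ on $d$, ordered from bottom to top.
\begin{itemize}
\item If $z_p$ is strictly below $z_q$, then $z_q$ is below or equal to $z_r$, so $z_p$ is strictly below $z_r$. Hence $p\prec_R r$ by the first rule, provided $p$ and $r$ are comparable.
\item If $z_q$ is strictly below $z_r$, the same argument applies symmetrically.
\item To see that $p,r$ are comparable in these cases, note that if, say, $p\in\pi(d,r)$, then $z_p=z_r$. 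This holds because a subpath of a shortest path from $d$ is a shortest path from $d$, so $z$ is inherited along the path. But $z_p=z_r$ contradicts $z_p$ lying strictly below $z_r$.
\end{itemize}

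\textbf{Transitivity: the main case $z_p=z_q=z_r=z$.} Here we use that the union $\pi(z,p)\cup\pi(z,q)\cup\pi(z,r)$ is a planar embedded tree $T$ rooted at $z$, since geodesics in a simple polygon are unique. We attach a dummy root edge pointing from $z$ into $d$ (equivalently, into $P_L$).

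The key claim is that, for incomparable-free pairs, $x\prec_R y$ holds exactly when $x$ precedes $y$ in a fixed traversal of $T$. This traversal is the counterclockwise Euler tour starting along the root edge, with each vertex visited in the order of first appearance.

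To verify the claim, look at the junction $c$ of $x$ and $y$. The traversal arrives at $c$ along the direction of $\pi(c,z)$. It then explores the branches at $c$ in counterclockwise order starting from that direction. So the branch containing $x$ is explored before the branch containing $y$ exactly when $\pi(c,z),\pi(c,x),\pi(c,y)$ are in counterclockwise order.

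Since precedence in a traversal is a linear order, transitivity follows. Comparability of $p$ and $r$ is obtained the same way: if $p$ were an ancestor of $r$, then, because $q$ is comparable with both, $q$ would lie in a branch off $\pi(z,r)$. One then checks that $q$ cannot be strictly between $p$ and $r$ in the traversal, a contradiction. Alternatively, transitivity can be stated only for comparable triples, which is what the paper uses.

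\textbf{Main obstacle.} The hardest step is making the ``counterclockwise around $c$'' rule rigorous as traversal order. The delicate points are degenerate junctions where $c=z$, with the reference direction being toward $d$, and a junction at a reflex vertex of $P$. I would handle these by the standard non-crossing property of shortest paths from a common source in a simple polygon: two such paths share a prefix and then never meet again. This property is what guarantees that $T$ is a tree with a well-defined planar rotation system.
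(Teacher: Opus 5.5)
Your proposal is correct and follows the same route as the paper. The paper also settles the case where $z_{p_1}$ lies below $z_{p_3}$ directly. When $z_{p_1}=z_{p_2}=z_{p_3}=z$, it identifies $\prec_R$ with the order of a counterclockwise pre-order traversal of the tree formed by the three paths from $z$, and transitivity then follows. You add irreflexivity, asymmetry, the comparability of $p$ and $r$, and an explicit reference direction at the root, all of which the paper leaves implicit; your choice of direction is the more careful one, since the paper's $\overline{zz^*_0}$ degenerates when $z=z^*_0$.
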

\begin{proof}
Consider three points $p_1,p_2,p_3\in P_R$ such that $p_1\prec_R p_2$ and $p_2\prec_R p_3$. In the following, we prove that $p_1\prec_R p_3$, which leads to the lemma.

If $z_{p_1}$ lies below $z_{p_3}$, then it immediately follows that $p_1\prec_R p_3$. Otherwise, since $p_1\prec_R p_2$ and $p_2\prec_R p_3$, it must be that $z_{p_1}=z_{p_2}=z_{p_3}$. Let $z=z_{p_1}$.

Let $T_z$ be a tree rooted at $z$ and formed by the three paths $\pi(s,p_1)$, $\pi(s,p_2)$, and $\pi(s,p_3)$. Consider the pre-order traversal list $L$ of the nodes of $T_z$ such that at each node $v\in T_z$, we visit their children counterclockwise starting from the one right after the parent of $v$. In particular, if $v$ is the root $z$, then we start visiting the first node counterclockwise from $\overline{zz^*_0}$. Observe that the order of $p_1,p_2,p_3$ following $\prec_R$ by our definition is consistent with their order in $L$. Since $p_1\prec_R p_2$ and $p_2\prec_R p_3$, $p_1$ is before $p_2$ and $p_2$ is before $p_3$ in $L$. Hence, $p_1$ is before $p_3$ in $L$. We thus conclude that $p_1\prec_R p_3$.
\end{proof}

\paragraph{The traversal order $\boldsymbol{\precxi}$ for $\boldsymbol{\Xi(S)}$.}
As discussed above, $\Xi(S)$ in general consists of multiple component curves, and each curve has an upper and lower endpoints on $d$.

For any $p, q \in \Xi(S)$, if $p$ and $q$ are in the same component curve $\gamma$ of $\Xi(S)$,
we define $p \precxi q$ if,
in the traversal of $\gamma$ from its lower endpoint to its upper endpoint, $p$ is before $q$.
If $p$ and $q$ are in different component curves, then the upper endpoint of one curve $\gamma_1$ must lie below the lower endpoint of the other $\gamma_2$ (i.e., $\gamma_1$ is ``lower'' than $\gamma_2$).
We define $p \precxi q$ if $p\in \gamma_1$ and $q\in \gamma_2$.

It is not difficult to see that $\precxi$ is a total order on the points of $\Xi(S)$.


\paragraph{The order $\boldsymbol{\prec_L}$ for $\boldsymbol{S}$.}
Due to our general position assumption, for any two points $s,t\in S$, $d(s,z^*_0)\neq d(t,z^*_0)$. If $d(s,z^*_0)<d(t,z^*_0)$,  we define $s\prec_L t$; otherwise, $t\prec_L s$. Thus, $\prec_L$ is a total order on $S$.
We write $s\preceq_L t$ if either $s=t$ or $s\prec_L t$.

\paragraph{Properties of the three orders.}
We prove some properties of the three orders, which eventually lead to the conclusion that they are consistent.

We start with the following lemma, which will be frequently used.

\begin{restatable}{lemma}{extdist} \label{lem:ext_dist}
    For any two points $p, q \in P_R$ with $p \in \pi(d, q)\setminus\{q\}$, $d(s, p) < d(s, q)$ holds for any point $s\in S$, and more generally, $d(S',p)<d(S',q)$ holds for any subset $S'\subseteq S$.
\end{restatable}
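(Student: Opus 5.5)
The plan is to reduce the statement to a geometric fact about shortest paths that cross the diagonal $d$. Fix $s\in S\subseteq P_L$ and $q\in P_R$, and let $p\in\pi(d,q)\setminus\{q\}$, where $\pi(d,q)=\pi(z_q,q)$.

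\textbf{Step 1: route $\pi(s,q)$ through $d$.} Since $s\in P_L$, $q\in P_R$, and $d$ separates $P_L$ from $P_R$ inside $P$, the path $\pi(s,q)$ meets $d$. Let $z$ be its last point on $d$. Then
\[
d(s,q)=d(s,z)+d(z,q).
\]

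\textbf{Step 2: show $d(z,q)=d(z,p)+d(p,q)$.} The idea is that $\pi(z,q)$ and $\pi(z_q,q)$, viewed from $q$, share a common suffix that contains $p$. Consider the funnel $F_q(d)$ with apex side toward $q$. Because $z_q$ is the closest point of $d$ to $q$, either $z_q$ is an endpoint of $d$ or the last edge of $\pi(z_q,q)$ is perpendicular to $d$.

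I would argue from this that $\pi(z_q,q)$ is contained in $\pi(z,q)\cup\overline{zz_q}$ up to the junction vertex, and I am not certain that containment holds as stated. A cleaner route is the following. Let $c$ be the junction vertex of $\pi(q,z)$ and $\pi(q,z_q)$. If $p\in\pi(c,q)$, then $p$ lies on $\pi(z,q)$ and the identity is immediate.

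Otherwise $p$ lies on $\pi(z_q,c)\setminus\{c\}$. In that case I would use the funnel structure together with the minimality of $z_q$. The concave chains of the funnel and the perpendicularity condition should force $d(z,p)+d(p,q)\le d(z,q)$. This holds because $p$ is "between" $d$ and $q$: the region bounded by $\pi(z,q)$, $\pi(z_q,q)$, and $\overline{zz_q}$ is a pseudo-triangle, and any point of $P_R$ in it can be reached from $z$ without lengthening the path.

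\textbf{Step 3: conclude.} Granting the weaker inequality $d(z,p)\le d(z,q)-d(p,q)$, we get
\[
d(s,p)\le d(s,z)+d(z,p)\le d(s,q)-d(p,q)<d(s,q),
\]
where the last inequality uses $d(p,q)>0$ since $p\ne q$. For a subset $S'$, take $s\in S'$ with $d(S',q)=d(s,q)$. Then $d(S',p)\le d(s,p)<d(s,q)=d(S',q)$.

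\textbf{Main obstacle.} The hard part is Step 2: proving that the perpendicular-foot path $\pi(z_q,q)$ stays inside the pseudo-triangle bounded by $\pi(z,q)$, $\pi(z_q,q)$, and $\overline{zz_q}$, so that $d(z,p)+d(p,q)\le d(z,q)$.

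I would try first to show that $\pi(z,p)$ followed by $\pi(p,q)$ is taut at $p$. The argument is by the first-order optimality of $z_q$: the angle at $p$ between the direction back toward $d$ and the direction toward $q$ is at least $\pi$ on the side facing $z$. Since both subpaths are convex chains in a simple polygon, tautness at every bend makes the concatenation a geodesic, which gives equality $d(z,q)=d(z,p)+d(p,q)$.
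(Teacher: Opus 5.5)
\textbf{There is a genuine gap in Step~2: its target identity is false.} In a simple polygon shortest paths are unique, so $d(z,q)=d(z,p)+d(p,q)$ holds only if $p\in\pi(z,q)$, that is, only in your easy case $p\in\pi(c,q)$. Your ``weaker'' inequality $d(z,p)\le d(z,q)-d(p,q)$ is, by the triangle inequality, the same statement.

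Here is a concrete counterexample. Take $P$ convex (so all geodesics are segments) with the diagonal $d=\overline{(0,-1)(0,1)}$. Let $q=(1,0)$, so $z_q=(0,0)$; let $p=(1/2,0)$ and $s=(-1,1)$, so that $\overline{sq}$ crosses $d$ at $z=(0,1/2)$. Then $d(z,p)+d(p,q)=(\sqrt{2}+1)/2>\sqrt{5}/2=d(z,q)$, although $d(z,p)<d(z,q)$ does hold.

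The tautness argument cannot rescue this. In a simple polygon a locally shortest path is the shortest path, so tautness of $\pi(z,p)\cup\pi(p,q)$ at $p$ would again force $p\in\pi(z,q)$. Also, the first-order optimality of $z_q$ gives an angle condition at $z_q$, not at $p$. The pseudo-triangle heuristic is refuted by the same example.

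\textbf{What you actually need is only the strict inequality $d(z,p)<d(z,q)$.} Drop the $-d(p,q)$ term: strictness should come from monotonicity, not from $d(p,q)>0$. This inequality is true, for two reasons:
\begin{enumerate}
\item $\pi(z,z_q)=\overline{zz_q}$, and the first edge of $\pi(z_q,q)$ makes an angle of at least $90^\circ$ with $\overline{z_qz}$ (it is perpendicular to $d$ when $z_q$ is interior to $d$). Hence $d(z,\cdot)$ starts out increasing along $\pi(z_q,q)$.
\item The Pollack--Sharir--Rote lemma says $d(a,a')<\max\{d(a,b),d(a,c)\}$ for $a'$ in the interior of $\pi(b,c)$. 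Applied to subpaths, it makes $d(z,\cdot)$ strictly quasiconvex along $\pi(z_q,q)$, so it keeps increasing all the way to $q$.
\end{enumerate}

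With this, your Step~3 becomes $d(s,p)\le d(s,z)+d(z,p)<d(s,z)+d(z,q)=d(s,q)$, and your extension to $S'$ is fine as written. The paper organizes this slightly differently. It applies the Pollack--Sharir--Rote lemma directly to $s$ on $\pi(z_q,q)$, getting $d(s,p)<\max\{d(s,z_q),d(s,q)\}$. It then uses the monotonicity of $d(z,\cdot)$ only at the point $z_q$, to show $d(s,z_q)\le d(s,q)$.
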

\begin{proof}
    According to Pollack, Sharir, and Rote~\cite[Lemma 1]{ref:PollackCo89},
    for any three points $a, b, c \in P$ and any point $a' \in \pi(b, c) \setminus \{b, c\}$,
    $d(a, a') < \max\{d(a, b), d(a, c)\}$.
    Applying this with $a = s$, $b = z_q$, $c = q$, and $a' = p$,
    we obtain $d(s, p) < \max \{d(s, z_q), d(s, q)\}$.
    In the following, we argue that $d(s,z_q)\leq d(s,q)$, which will lead to $d(s,p)<d(s,q)$.

    Let $z=d\cap \pi(s,q)$; see Figure~\ref{fig:disincrease}.
    We claim that $d(z,z_q)\leq d(z,q)$. Note that the claim immediately leads to $d(s,z_q)\leq d(s,q)$. Indeed,
    \[d(s,z_q)\os 1\leq d(s,z)+d(z,z_q)\os 2\leq d(s,z)+d(z,q)\os 3=d(s,q),\]
    where (1) is due to triangle inequality, (2) is due to the claim, and (3) holds because $z\in \pi(s,q)$.

\begin{figure}[t]
\begin{minipage}[t]{\linewidth}
\begin{center}
\includegraphics[totalheight=1.5in]{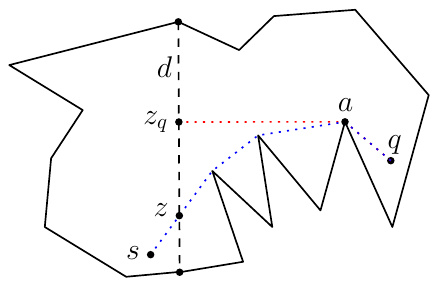}
\caption{Illustrating the case where $z_p$ is in the interior of $d$.}
\label{fig:disincrease}
\end{center}
\end{minipage}
\end{figure}

\begin{figure}[t]
\begin{minipage}[t]{\linewidth}
\begin{center}
\includegraphics[totalheight=1.5in]{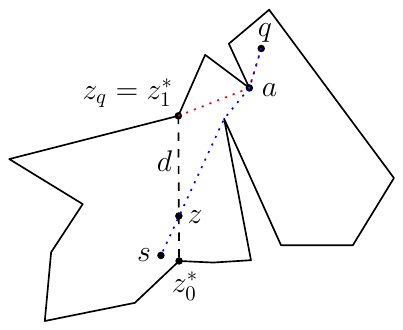}
\caption{Illustrating the case where $z_p$ is the upper endpoint of $d$.}
\label{fig:disincrease10}
\end{center}
\end{minipage}
\end{figure}

    It remains to prove the claim. First of all, since both $z$ and $z_q$ are on $d$, $\pi(z,z_q)$ is the line segment $\overline{zz_q}$. Let $a$ be the vertex of $\pi(z_q,q)$ adjacent to $z_q$. Depending on whether $z_q$ is in the interior of $d$, there are two cases.

    \begin{itemize}
    \item

    If $z_q$ is in the interior of $d$, then $\overline{az_q}$ must be perpendicular to $d$; see Figure~\ref{fig:disincrease}. According to the analysis in the proof of \cite[Lemma 1]{ref:PollackCo89}, if we move a point $b$ from $z_q$ to $q$ along $\pi(z_q,q)$, $d(z,b)$ will be strictly increasing. Hence, we obtain the claim $d(z,z_q)<d(z,q)$.

\item
    If $z_q$ is an endpoint of $d$, without loss of generality, we assume that $z_q$ is the upper endpoint $z^*_1$ of $d$; see Figure~\ref{fig:disincrease10}.
    Then, the angle $\angle z^*_0z^*_1a$ counterclockwise from $\overline{z^*_1z^*_0}$ to $\overline{z^*_1a}$ around $z^*_1$ must be at least $90^{\circ}$. Again, following the analysis in the proof of \cite[Lemma 1]{ref:PollackCo89}, if we move a point $b$ from $z_q$ to $q$ along $\pi(z_q,q)$, $d(z,b)$ will be strictly increasing. Hence, we obtain the claim $d(z,z_q)<d(z,q)$.
    \end{itemize}

    The above proves that $d(s,p)<d(s,q)$. Given any $S'\subseteq S$, we now argue $d(S',p)<d(S',q)$. Let $s=\beta_{S'}(q)$, i.e., the nearest neighbor of $q$ in $S'$. According to the above proof, we have $d(s,p)<d(s,q)$. Consequently, $d(S',p)\leq d(s,p)<d(s,q)=d(S',q)$.
\end{proof}

Lemma~\ref{lem:consistentRT} shows that the orders $\prec_R$ and $\precxi$ are consistent.
\begin{lemma} \label{lem:consistentRT}
    For any two points $p, q \in \Xi(S)$, if $p \prec_R q$, then ${p \precxi q}$.
\end{lemma}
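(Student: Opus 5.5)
We argue by contradiction: take $p,q\in\Xi(S)$ with $p\prec_R q$ and suppose $q\precxi p$. First we dispose of the case where $p$ and $q$ lie on different component curves. By Lemma~\ref{lem:boundary}, each component $\calD$ of $\calD_R(S)$ meets $d$ in one segment, and its boundary curve $\gamma$ joins the two endpoints of that segment. For any point $x\in\gamma$, Lemma~\ref{lem:ext_dist} gives that the whole path $\pi(d,x)$ lies in $\calD_R(S)$, since distances to $S$ strictly decrease toward $z_x$. The path is connected, so it lies in the same component $\calD$. Hence $z_x\in \calD\cap d$. If $\gamma_1$ is lower than $\gamma_2$, then every $x\in\gamma_1$ has $z_x$ strictly below every $z_y$ with $y\in\gamma_2$. (The two segments are disjoint, since distinct components are disjoint closed sets.) Consequently, if $q\in\gamma_1$ and $p\in\gamma_2$, then $z_q$ is below $z_p$, which gives $q\prec_R p$ or incomparability, and this contradicts $p\prec_R q$.

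The main case is when $p$ and $q$ lie on the same curve $\gamma$. We use a Jordan-curve argument. The region $\calD$ is bounded by $\gamma$ together with the segment $\sigma=\calD\cap d$. Consider the closed curve $C$ formed by the subarc $\gamma[p,q]$ of $\gamma$ between $p$ and $q$ and the two paths $\pi(d,p)$ and $\pi(d,q)$, joined along $d$ between $z_p$ and $z_q$. By Lemma~\ref{lem:ext_dist} both paths lie in $\calD$ and meet $\gamma$ only at their endpoints, since points of $\gamma$ have $d(S,\cdot)=1$ or lie on $\partial P_R$. We also use that the two shortest paths do not cross in a simple polygon: they are either disjoint or merge after the junction vertex $c$ (in the case $z_p=z_q$). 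This lets us split $\calD$ into the part ``below'' $\pi(d,p)$, the part between the two paths, and the part ``above'' $\pi(d,q)$.

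Traversing $\gamma$ from its lower endpoint to its upper endpoint means walking along $\partial\calD$ counterclockwise, with $\calD$ on the left. Suppose first that $z_p$ is below $z_q$. The lower endpoint of $\gamma$ and $z_p$ bound a subregion cut off by $\pi(d,p)$, and the boundary walk therefore reaches the endpoint of $\pi(d,p)$, namely $p$, before the endpoint of $\pi(d,q)$, namely $q$. This gives $p\precxi q$.

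Now suppose $z_p=z_q$. Let $c$ be the junction vertex. The counterclockwise order of $\pi(c,z_p)$, $\pi(c,p)$, $\pi(c,q)$ around $c$ determines which side of the fork $\gamma[p,q]$ lies on. Because $\calD$ is on the left of the walk, the same conclusion $p\precxi q$ follows. The non-comparable situation, where $p\in\pi(d,q)$, cannot occur: Lemma~\ref{lem:ext_dist} would give $d(S,p)<d(S,q)\le 1$, which forces $p\in\Xi\sa$ while $p$ also lies on the interior path. This case is excluded by hypothesis in any event.

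The main obstacle is making this planar-topology argument rigorous. Three things must be justified. First, $\gamma$ must not wrap around so that the boundary walk meets $q$ before $p$; the simple connectivity in Lemma~\ref{lem:boundary} rules this out. Second, $\pi(d,p)$ must separate $\calD$ into exactly two pieces, one containing the lower part of $\gamma$. Third, the counterclockwise convention at $c$ and the counterclockwise traversal of $\partial\calD$ must match. We would handle the first two by applying the Jordan curve theorem to the closed curve made of $\pi(d,p)$, the subsegment of $d$ from $z^*_0$ side, and the portion of $\gamma$ from its lower endpoint to $p$. We would then check that $\pi(d,q)$ lies entirely outside this closed region unless $q\precxi p$ fails.
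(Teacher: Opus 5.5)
Your decomposition matches the paper's. The different-component case goes through $z_x\in\calD\cap d$ and is fine as written. The same-component case uses three facts: $\pi(d,p)$ splits $\calD$, $\pi(d,q)$ cannot cross it, and $\prec_R$ dictates on which side $\pi(d,q)$ lies.

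The same-component argument, however, rests on a false claim: that $\pi(d,p)$ and $\pi(d,q)$ meet $\gamma$ only at their endpoints ``since points of $\gamma$ have $d(S,\cdot)=1$ or lie on $\partial P_R$.'' The second alternative is exactly what can happen in the interior of these paths. $\pi(d,p)$ bends at reflex vertices of $P$, and such a vertex $v$ lies on $\partial P_R$. By Lemma~\ref{lem:ext_dist}, $d(S,v)<d(S,p)\le 1$, so $v\in\Xi\sa(S)$ and $v$ lies on $\gamma$. The same misconception is behind your remark that $p\in\pi(d,q)$ ``cannot occur''; it can, and it is excluded only because $p\prec_R q$ presupposes comparability. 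Consequently $\pi(d,p)$ may cut $\calD$ into more than two pieces. The closed curve you plan to feed to the Jordan curve theorem need not be simple: $\pi(d,p)$, the part of $\gamma$ from its lower endpoint to $p$, and part of $\sigma=\calD\cap d$. So the step ``$\pi(d,p)$ separates $\calD$ into exactly two pieces, one containing the lower part of $\gamma$'' fails as stated.

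The repair is to argue with sides of the path rather than with two pieces. This is what the paper's brief ``portions of $Q$ left and right of $\pi(z_p,p)$'' implicitly does. Orient $\pi(d,p)$ from $z_p$ to $p$. Suppose it touches $\gamma$ at a point $v$ with the exterior of $P$ on its right. Then after $v$ it continues into the region to the left of its part from $z_p$ to $v$, which forces $v\precxi p$. Symmetrically, a touch with the exterior on its left comes after $p$ on $\gamma$. Hence the closure of the part of $\calD\setminus\pi(d,p)$ to the left of the path (possibly several components) meets $\partial\calD$ exactly in the portion of $\gamma$ after $p$ together with the portion of $\sigma$ above $z_p$.

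Now use two facts about $\pi(d,q)$. It is disjoint from $\pi(d,p)$ when $z_p\neq z_q$, and otherwise shares only $\pi(z_p,c)$ with it. It also starts on the left side (as $z_q$ is above $z_p$) or branches off to the left at $c$ (by the counterclockwise condition). Therefore $q$ lies on $\gamma$ after $p$, i.e., $p\precxi q$. With this modification your argument is complete and agrees with the paper's proof.
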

\begin{proof}
Assuming that $p \prec_R q$, in the following we prove $p \precxi q$.

Let $s = \beta_p(S)$ and $t = \beta_q(S)$. Since $p,q\in \Xi(S)$, $d(s,p)\leq 1$ and $d(t,q)\leq 1$.
By Lemma~\ref{lem:ext_dist}, $d(s,p')<d(s,p)$ for all points $p'\in \pi(z_p,p)\setminus\{p\}$. This implies that $\pi(z_p, p) \subseteq D_s$. Thus, $\pi(z_p,p)$ is in the same component $\calD_p$ of $\calD_R$ containing $p$. Similarly, $\pi(z_q, q) \subseteq D_t$ and $\pi(z_q,q)$ is in the same component $\calD_q$ of $\calD_R$ containing $q$. Depending on whether $\calD_p=\calD_q$, there are two cases.

\begin{itemize}
    \item
If $\calD_p\neq \calD_q$, then the intersections $d\cap \calD_p$ and $d\cap \calD_q$ are disjoint. Since $\pi(z_p,p)\subseteq \calD_p$ and $\pi(z_q,q)\subseteq \calD_q$, we get that $z_p\neq z_q$. As  $p \prec_R q$, by definition, $z_p$ must lie below $z_q$. Hence, the intersection $d\cap \calD_p$ lies below $d\cap \calD_q$. According to our definition of $\precxi$, we obtain that $p\precxi q$.

    \begin{figure}
        \centering
        \begin{subfigure}[b]{.45\textwidth}
            \centering
            \includegraphics[totalheight=1.7in]{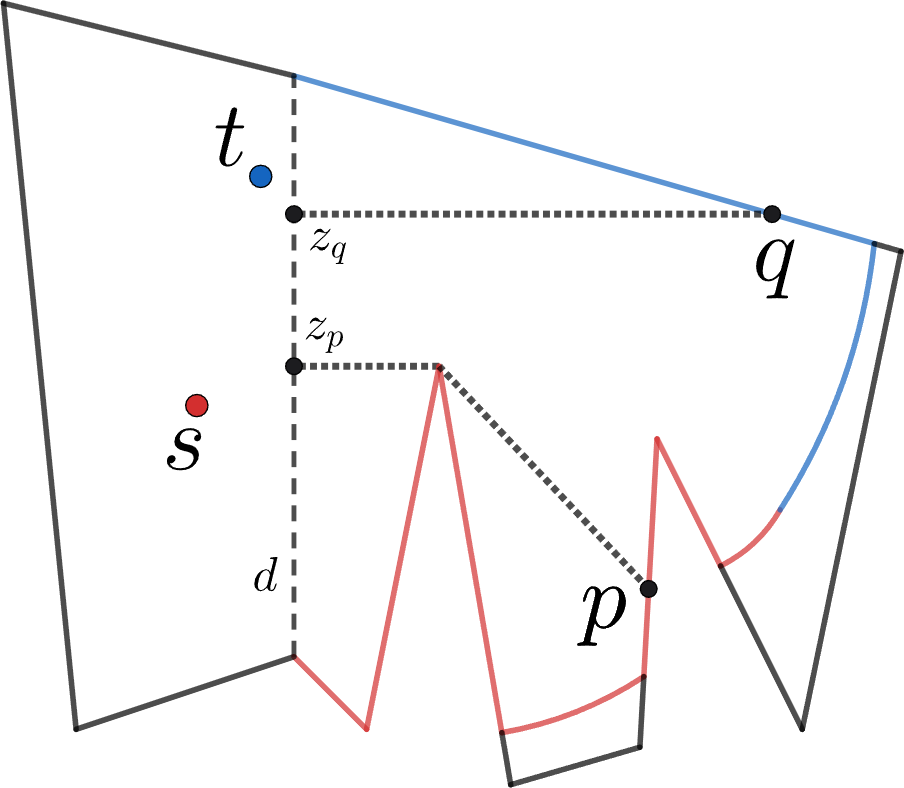}
            \subcaption{Case 1: $p \prec_R p$ due to $p^*$ lying below $q^*$.}
            \label{fig:plrqc1}
        \end{subfigure}
        \hfill
        \begin{subfigure}[b]{.45\textwidth}
            \centering
            \includegraphics[totalheight=1.7in]{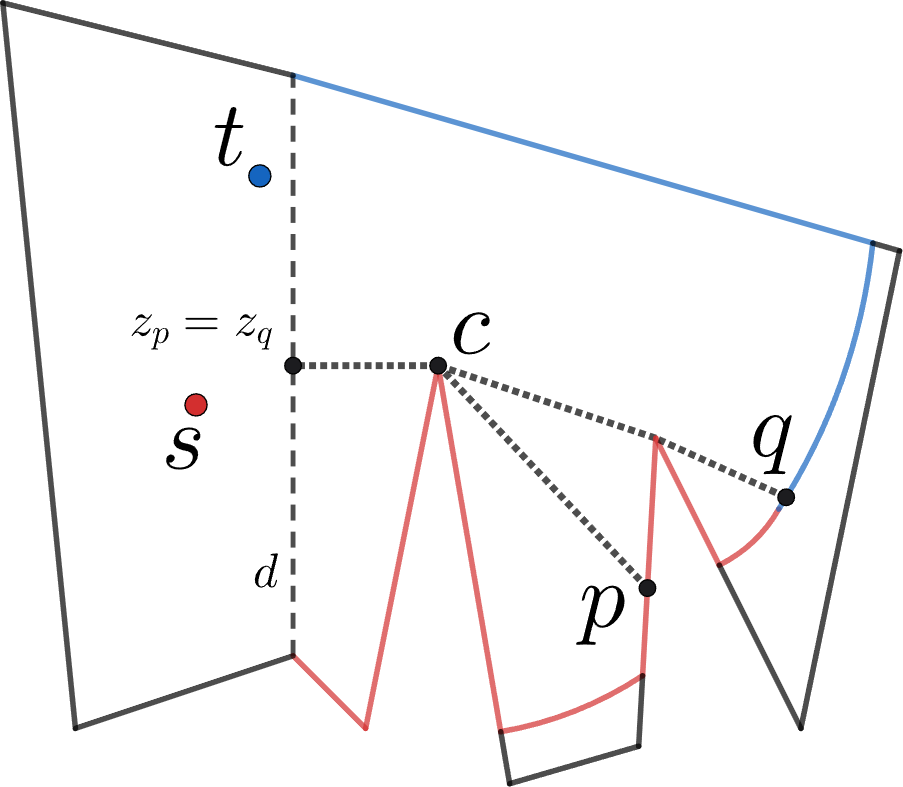}
            \subcaption{Case 2: $p \prec_R q$ due to the clockwise order around the point $c$.}
            \label{fig:plrqc2}
        \end{subfigure}

        \caption{Illustrations for the proof of Lemma~\ref{lem:consistentRT}.}
        \label{fig:consistent}
    \end{figure}

\item
If $\calD_p=\calD_q$, then let $Q=\calD_p$.
See Figure~\ref{fig:consistent} for illustrations of the proof for this case.
The path $\pi(z_p, p)$ divides $Q$ into two portions.
Let $Q_L$ and $Q_R$ be the portions of $Q$ left and right of $\pi(z_p, p)$, respectively.
Since $\pi(z_p,p)$ is the shortest path from $p$ to $d$ and $\pi(z_q,q)$ is the shortest path from $q$ to $d$, $\pi(z_p,p)$ and $\pi(z_q,q)$ cannot cross each other (since otherwise at least one of them could be shortened). This implies that $\pi(z_q,q)$ is either in $Q_L$ or in $Q_R$. Since $p\prec_R q$, by our definition of $\prec_R$, it must be that $\pi(z_q,q)\subseteq Q_L$. Hence, $q\in Q_L$. According to our definition of $\precxi$, we obtain that $p \precxi q$.
\end{itemize}

The lemma thus follows.
\end{proof}

Next, we will show that the order $\precxi$ is consistent with the order $\prec_L$ in Lemma~\ref{lem:consistentTL}. Before that, we present several properties that are mostly from the previous work. These properties will be helpful in proving Lemma~\ref{lem:consistentTL} and will also be used later in the paper.

Recall that $B(s,t)$ is the bisector of two points $s,t\in S$. The properties in the following lemma are from \cite{ref:AronovOn89,ref:AgarwalIm18}.

\begin{lemma} {\em \cite{ref:AgarwalIm18,ref:AronovOn89}} \label{lem:star_shaped}
\begin{enumerate}
    \item Each Voronoi region of $\vd(S)$ is star-shaped, i.e., for any point $p$ in the Voronoi region of a site $s\in S$, $\pi(s,p)$ is in the interior of the Voronoi region except that $p$ may be on the boundary of the region.
    \item For any $s, t \in S$ and $p\in P$, $\pi(s, p) \cap B(s,t)$ consists of at most one point.
    \item For any $s \in S$ and $p \in P$ such that $s = \beta_p(S)$, for all $q \in \pi(s, p) \setminus \{p\}$, $d(s, q) < d(S \setminus \{s\}, q)$.
    \item For any $s, t \in S$, $B(s,t)$ intersects $d$ at most once.
    \item The intersection of $d$ with each Voronoi region of $\vd(S)$ is connected (i.e., it is either $\emptyset$ or a single segment of $d$).
\end{enumerate}
\end{lemma}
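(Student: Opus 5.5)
This lemma is attributed to \cite{ref:AgarwalIm18,ref:AronovOn89}. I would still verify each item briefly from two standard facts about simple polygons. First, shortest paths are unique. Second, two shortest paths that share two points share the whole subpath between them. I will take the items in order, since the later ones build on the earlier ones.

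For item~1 (star-shapedness), take $p\in\R(s)$ and $q\in\pi(s,p)$, and suppose some $t$ has $d(t,q)<d(s,q)$. Then
\[
d(t,p)\le d(t,q)+d(q,p)<d(s,q)+d(q,p)=d(s,p),
\]
which contradicts $p\in\R(s)$. So $d(s,q)\le d(t,q)$, i.e., $q\in\R(s)$. Now suppose $q\neq p$ and equality $d(t,q)=d(s,q)$ held. Then the concatenation of $\pi(t,q)$ and $\pi(q,p)$ would be a shortest $t$-$p$ path. It has length $d(t,p)\le d(s,p)$, and $d(t,p)\ge d(s,p)$ because $p\in\R(s)$, so equality holds throughout. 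By uniqueness of shortest paths, $\pi(t,p)$ and $\pi(s,p)$ would both contain $\pi(q,p)$ and then diverge at $q$ or earlier. If $q$ is in the interior of $P$, this creates a bend at $q$ on one of the two paths that is not at a polygon vertex, which is impossible. The degenerate cases are excluded by general position, since no vertex of $P$ is equidistant to two sites. This gives strictness for $q\neq p$, which is exactly item~3, stated for $s=\beta_p(S)$.

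Item~2 then follows. Suppose $\pi(s,p)$ met $B(s,t)$ at two points $q_1$ before $q_2$. Applying the strict inequality of item~3 with respect to the two sites $s,t$ and the endpoint $q_2$ gives $d(s,q_1)<d(t,q_1)$, contradicting $q_1\in B(s,t)$.

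For item~4, the natural route is the known fact that in a simple polygon the bisector $B(s,t)$ is a single curve that separates $s$ from $t$. Suppose $B(s,t)$ crossed $d$ at two points $z_1,z_2$. The open subsegment between them would lie on one side, say $t$'s, while $s$ is on the other side. For a point $y$ strictly between $z_1$ and $z_2$, the path $\pi(s,y)$ would have to cross $B(s,t)$. I would then argue, using the convexity of the segment $d$ together with the funnel structure from $s$ to $d$, that $\pi(s,y)$ must either cross $B(s,t)$ twice or pass through the region enclosed by $B(s,t)$ and $\overline{z_1z_2}$ in a way that contradicts item~2. This step is where I expect the main obstacle to be: making the topological separation argument precise, and in particular showing that the enclosed region cannot contain $s$ itself. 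If the direct argument stalls, I would fall back on citing \cite{ref:AronovOn89} for the structure of bisectors.

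Item~5 then follows directly. The set $\R(s)\cap d$ is the intersection of the sets $\{z\in d : d(s,z)\le d(t,z)\}$ over all $t$. By item~4 each of these is an interval of $d$, so their intersection is connected.
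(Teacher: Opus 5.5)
Your arguments for items 1--3 are essentially right, and item 5 does follow from item 4 exactly as you say. But item 4, which carries the real content of the lemma in this setting, is not proved, and the route you sketch cannot work as stated.

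\textbf{The missing hypothesis in item 4.} Your sketch never uses that all sites lie in $P_L$, i.e., on the same side of $d$, and without this the statement is false. Take $d$ the vertical diagonal from $(0,-10)$ to $(0,10)$ and $s=(-1,0)$ in an empty left part. Put $t=(0.95,0)$ on the other side, inside a thin horizontal channel bounded by two spikes of $\partial P$, with the mouth of the channel at $x=0.1$. The midpoint of $d$ is closer to $t$ ($0.95<1$), while both endpoints of $d$ are closer to $s$ (about $10.05$ versus $10.85$). So $B(s,t)$ crosses $d$ twice, even though items 1--3 and Aronov's structural description of bisectors all hold there.

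Consequently, neither a contradiction with item 2 nor a fallback citation of the structure of bisectors can finish item 4. In such a configuration $\pi(s,y)$ crosses $B(s,t)$ exactly once, so item 2 is not violated. You need the same-side hypothesis together with a metric inequality.

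\textbf{A route that works.} Two crossings with sign changes would give points $z_1,z_2,z_3$, in this order on $d$, with $z_1,z_3$ strictly closer to $s$ and $z_2$ strictly closer to $t$. The reverse pattern is symmetric, since both sites are in $P_L$.
\begin{enumerate}
\item By item 1 applied to $\{s,t\}$, $\pi(s,z_1)\cup\pi(s,z_3)$ lies in the open region of points closer to $s$, and $\pi(t,z_2)$ lies in the open region of points closer to $t$. Hence these paths are disjoint.
\item Because $t\in P_L$, the path $\pi(t,z_2)$ stays in $P_L$, so it cannot leave the funnel $F_s(\overline{z_1z_3})$ through $d$ either. Therefore $t$ lies in that funnel, i.e., $t\in\pi(s,y)$ for some $y\in\overline{z_1z_3}$. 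This is exactly where the same-side hypothesis enters.
\item The Pollack--Sharir--Rote lemma used in Lemma~\ref{lem:ext_dist} gives $d(z_i,t)<\max\{d(z_i,s),|z_iy|\}$. Combined with $d(z_i,t)>d(z_i,s)$, this yields $d(z_i,t)\le |z_iy|$ for $i\in\{1,3\}$; the bound is trivial if $t=y$.
\item Summing, $d(s,z_1)+d(s,z_3)<d(t,z_1)+d(t,z_3)\le |z_1y|+|yz_3|=|z_1z_3|$, which contradicts the triangle inequality.
\end{enumerate}
This proves item 5 directly, and item 4 under general position.

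\textbf{A smaller repair in item 1.} The two shortest paths to $p$ need not diverge at $q$, so there need not be a bend at $q$. Let $x^*$ be the point farthest from $p$ up to which $\pi(s,p)$ and $\pi(t,p)$ coincide. Subtracting $d(x,p)$ from $d(s,p)=d(t,p)$ shows that every point $x$ of the common part, in particular $x^*$, is equidistant from $s$ and $t$. Also $x^*\notin\{s,t\}$, since $s\neq t$. So one path bends at $x^*$, which makes $x^*$ a vertex of $P$ equidistant from two sites. That is the contradiction with general position.
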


We further have the following observation.

\begin{lemma} \label{lem:Lvor}
Suppose the Voronoi regions $\R(s)$ and $\R(t)$ of two points $s,t\in S$ both intersect $d$. Then, $d\cap \R(s)$ is below $d\cap \R(t)$ if and only if $s\prec_L t$.
In other words, the order of the Voronoi regions of $\vd(S)$ intersecting $d$ from bottom to top is consistent with the order $\prec_L$ of $S$.
\end{lemma}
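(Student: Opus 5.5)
The plan is to work with the restriction of the geodesic Voronoi diagram $\vd(S)$ to the diagonal $d$. By Lemma~\ref{lem:star_shaped}(5), every region $\R(s)$ meets $d$ in either nothing or a single segment. These segments tile $d$ from $z^*_0$ to $z^*_1$, so the regions meeting $d$ appear in a well-defined bottom-to-top sequence. It therefore suffices to prove one implication: if $d\cap\R(s)$ lies below $d\cap\R(t)$, then $d(s,z^*_0)<d(t,z^*_0)$. The converse then follows automatically, because both orders are total on the sites whose regions meet $d$, and by general position $d(s,z^*_0)\neq d(t,z^*_0)$.

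The key tool is the function $f(z)=d(s,z)-d(t,z)$ for $z$ on $d$. It is continuous, and its zero set is exactly $d\cap B(s,t)$. By Lemma~\ref{lem:star_shaped}(4) that set has at most one point, so $f$ changes sign at most once along $d$.

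Now take $z_1\in d\cap\R(s)$ and $z_2\in d\cap\R(t)$ with $z_1$ below $z_2$. Membership in the regions gives $f(z_1)\le 0$ and $f(z_2)\ge 0$. To make these strict, I would pick $z_1,z_2$ in the relative interiors of the two segments, where $s$ (resp.\ $t$) is the unique nearest site. If one segment degenerates to a point, I would use the general position assumption, or simply note that $f$ cannot vanish on an interval.

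Since $f$ has at most one zero on $d$, we get $f<0$ on the whole part of $d$ below the zero $z'\in\overline{z_1z_2}$. In particular $f(z^*_0)<0$ (if $z_1=z^*_0$ this is immediate). Hence $d(s,z^*_0)<d(t,z^*_0)$, that is, $s\prec_L t$.

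The only delicate step is the argument that $f$ keeps a constant sign on each side of its unique zero. I must rule out that $f$ touches $0$ without crossing, or vanishes on a whole subsegment of $d$. Both cases are covered by Lemma~\ref{lem:star_shaped}(4), since the bisector meets $d$ in at most one point. Continuity of geodesic distance along the segment $d$ then finishes the argument via the intermediate value theorem.
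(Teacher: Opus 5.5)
Your proposal is correct and is essentially the paper's argument. Both rely only on Lemma~\ref{lem:star_shaped}(4), that $B(s,t)$ meets $d$ at most once, combined with the intermediate value theorem. The paper argues by contradiction: $t\prec_L s$ would force one crossing between $z^*_0$ and $d\cap\R(s)$ and a second between $d\cap\R(s)$ and $d\cap\R(t)$. You argue directly through the sign of $d(s,\cdot)-d(t,\cdot)$ instead.

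One small correction to your last paragraph: (4) does not forbid $f$ touching zero without crossing. This is harmless, because your strict signs $f(z_1)<0<f(z_2)$ already place the unique zero strictly between $z_1$ and $z_2$, and that is all the argument uses.
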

\begin{proof}
Let $d_s=d\cap \R(s)$ and $d_t=d\cap \R(t)$. Suppose that $d_s$ is below $d_t$. Assume to the contrary that $t\prec_L s$. Then, $d(t,z^*_0)<d(s,z^*_0)$. Hence, the bisector $B(s,t)$ must intersect $d$ somewhere between $z^*_0$ and $d_s$. Since $d_s$ is below $d_t$, $B(s,t)$ must also intersect $d$ between $d_t$ and $d_s$. Therefore, $B(s,t)$ must intersect $d$ at least twice. But this contradicts with Lemma~\ref{lem:star_shaped}.

If  $d_t$ is below $d_s$, then following the same analysis as above, we can obtain $t\prec_L s$. The lemma thus follows.
\end{proof}

\begin{lemma} \label{lem:consistentTL}
For any two points $p, q \in \Xi(S)$, let $s = \beta_p(S)$ and $t = \beta_q(S)$. If $p \precxi q$, then $s \prec_L t$.
\end{lemma}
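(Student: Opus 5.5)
The plan is to transfer the order of $p$ and $q$ along $\Xi(S)$ back to the diagonal $d$, and then apply Lemma~\ref{lem:Lvor}. Throughout I assume $s\neq t$; if $s=t$, only $s\preceq_L t$ can be meant.

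\textbf{Step 1 (reduce to the crossing points).} Since $s,t\in P_L$ and $p,q\in P_R$, each of the shortest paths $\pi(s,p)$ and $\pi(t,q)$ crosses the diagonal $d$. A shortest path in a simple polygon meets a segment in a connected set, so this crossing is a single point or segment. Let $x$ be the last point of $\pi(s,p)$ on $d$, and let $y$ be the last point of $\pi(t,q)$ on $d$. By Lemma~\ref{lem:star_shaped}(1), $\pi(s,p)$ lies in $\R(s)$, so $x\in\R(s)\cap d$; likewise $y\in\R(t)\cap d$. In particular, both Voronoi regions meet $d$. By Lemma~\ref{lem:star_shaped}(5) and Lemma~\ref{lem:Lvor}, it therefore suffices to show that $x$ lies strictly below $y$.

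\textbf{Step 2 (the subpaths lie in the components).} Along $\pi(s,p)$ the distance to $s$ increases, so every point of the subpath $\pi(x,p)$ is at distance at most $d(s,p)\le 1$ from $s$. This subpath lies in $P_R$, so it lies in $\calD_R(S)$. Being connected, it lies in the component of $\calD_R(S)$ containing $p$. The same holds for $\pi(y,q)$.

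\textbf{Step 3 (different components).} Suppose $p$ and $q$ lie on different component curves. Since $p\precxi q$, the component $\calD_p$ containing $p$ meets $d$ strictly below $\calD_q\cap d$. By Step 2, $x\in \calD_p\cap d$ and $y\in\calD_q\cap d$, so $x$ lies strictly below $y$.

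\textbf{Step 4 (same component, the main obstacle).} Suppose $p$ and $q$ lie on the same component curve $\gamma$, and let $Q$ be the component they bound. By Lemma~\ref{lem:boundary}, $Q$ is a topological disk. Its boundary, traversed cyclically, runs from the lower endpoint $a$ up along $d$ to the upper endpoint $b$, and then back along $\gamma$ from $b$ to $a$. Since $p\precxi q$, this cyclic order of boundary points is $a,\dots,b,\dots,q,\dots,p,\dots$, with $q$ met before $p$ on the way back.

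The two chords $\pi(x,p)$ and $\pi(y,q)$ of $Q$ are disjoint. To see this, suppose they share a point $r$. By Lemma~\ref{lem:star_shaped}(1), $r$ would be interior to both $\R(s)$ and $\R(t)$, unless $r$ is an endpoint. An endpoint on $d$ is excluded by general position, and an endpoint on $\gamma$ forces $p=q$, which contradicts $p\precxi q$.

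Two disjoint chords of a disk must have non-interleaving endpoints. With $q$ before $p$ on the return arc, the only non-interleaving placement puts $x$ before $y$ on the way up $d$, that is, below $y$; the other placement would force the chords to cross. We have $x\neq y$ because they lie in distinct Voronoi regions, by general position.

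I expect the delicate points to be making this planar non-crossing argument rigorous and making sure that the tail $\pi(x,p)$ stays inside $P_R\cap Q$. The latter is where the fact that a shortest path meets the diagonal in a connected set is used.
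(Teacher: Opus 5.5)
Your proposal is correct and follows the paper's proof: you reduce to the points where $\pi(s,p)$ and $\pi(t,q)$ cross $d$ using star-shapedness and Lemma~\ref{lem:Lvor}, and you split on whether $p$ and $q$ lie in the same component of $\calD_R(S)$, using disjointness of the two subpaths in the same-component case. Your chord non-interleaving argument spells out what the paper compresses into ``by the definition of $\precxi$.'' Your caveat that $s\neq t$ must be assumed (otherwise only $s\preceq_L t$ holds) is a legitimate correction to the statement as written.
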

\begin{proof}
Let $p' = \pi(s, p) \cap d$ and $q' = \pi(t, q) \cap d$.

Since $s = \beta_p(S)$ and $\R(s)$ is star-shaped, $\pi(s,p)\subseteq \R(s)$. Hence, $p'\in \R(s)$. Similarly, $q'\in \R(t)$. To prove that $s\prec_L t$, if suffices to show that $p'$ is below $q'$.

Let $\calD_p$ be the connected component of $\calD_R(S)$ containing $p$. Define $\calD_q$ for $q$ similarly.
As $p\in \Xi(S)$, $d(s,p)\leq 1$ and thus $\pi(s,p)\subseteq D_s$. This implies that $\pi(p',p)\subseteq \calD_p$. Similarly, $\pi(q',q)\subseteq \calD_q$. Depending on whether $\calD_p=\calD_q$, there are two cases.

\begin{itemize}
    \item

If $\calD_p\neq \calD_q$, then since $p\precxi q$, by the definition of $\precxi$, $\calD_p\cap d$ is below $\calD_q\cap d$. As $\pi(p',p)\subseteq \calD_p$ and $\pi(q',q)\subseteq \calD_q$, we have $p'\in \calD_p\cap d$ and $q'\in \calD_q\cap d$. Therefore, $p'$ must be below $q'$.

\item
If $\calD_p= \calD_q$, then both $\pi(p',p)$ and $\pi(q',q)$ are in $\calD_p$. Because $\pi(p',p)\subseteq \pi(s,p)\subseteq \R(s)$ and $\pi(q',q)\subseteq \pi(t,q)\subseteq \R(t)$, $\pi(p',p)$ and $\pi(q',q)$ do not intersect each other. As $p\precxi q$ and both points are in $\calD_p$, by the definition of  $\precxi$, $p'$ must be below $q'$.

\end{itemize}
\end{proof}

Note that the above also implies that for any $s\in S$, if $\xi_s(S)\neq \emptyset$, then $\xi_s(S)$ is a single connected portion of $\Xi(S)$.

\paragraph{Proof of Lemma~\ref{lem:boundary}.}
We now prove Lemma~\ref{lem:boundary} by using Lemma~\ref{lem:ext_dist}.
\begin{proof}
We first argue that $\calD\cap d$ consists of a single segment of $d$. Assume to the contrary that $\calD\cap d$ has two disconnected segments $d_1$ and $d_2$ such that there is a point $z\in d$ between them that is not in $\calD$. Then, since $\calD$ is connected, if we shoot a horizontal ray $\rho$ rightwards from $z$, $\rho$ must hit a point $p\in \calD$ (since otherwise $d_1$ and $d_2$ wound not be connected in $\calD$). Because $p\in \calD$, we know that $d(S,p)\leq 1$. Note that the horizontal segment $\overline{zp}$ is $\pi(z,p)$. By Lemma~\ref{lem:ext_dist}, $d(S,p')\leq d(S,p)\leq 1$ holds for any $p'\in \overline{zp}$, meaning that $\overline{zp}\subseteq \calD$ and thus $z\in \calD$. But this contradicts with $z\not\in \calD$.

We now argue that $\calD$ does not have a hole. Assume to the contrary that it has a hole $O$ that contains a point $z\not\in \calD$. Consider the shortest path $\pi(d,z)$. We extend the last segment of $\pi(d,z)$ beyond $z$ until a point $p$ on the boundary of $O$. Note that $\pi(d,p)=\pi(d,z)\cup \overline{zp}$.
Since $O$ is a hole of $\calD$ and $p$ is on the boundary of $O$, $p$ is also on the boundary of $\calD$. Hence, $d(S,p)\leq 1$. By Lemma~\ref{lem:ext_dist}, for any point $p'\in \pi(d,p)$, $d(S,p')\leq d(S,p)\leq 1$, meaning that the entire path $\pi(d,p)$ is inside $\calD$. As $z\in \pi(d,p)$, we obtain that $z$ is in $\calD$, contradicting with $z\not\in \calD$.
\end{proof}


\subsection{The data structure $\boldsymbol{\Psi(S)}$}
\label{sec:dsdescription}

In this section, we describe our data structure for $S$, denoted by $\Psi(S)$.

The data structure $\Psi(S)$ is a complete binary tree whose leaves from left to right store the points of $S$ under the order $\prec_L$. For each node $v$, let $S_v$ be the subset of points $S$ stored in the leaves of the subtree rooted at $v$.
We wish to store an implicit representation of $\Xi(S_v)$ at $v$. Recall that $I(S_v)$ is the set of points $s\in S_v$ such that $\xi_s(S_v)\neq \emptyset$. In the following, unless otherwise stated, we assume that $I(S_v)$ is a list of its points sorted under $\prec_L$.

If $v$ is the root, then we store $I(S_v)$ at $v$ explicitly.
If we store $I(S_v)$ at $v$ explicitly for all nodes $v\in \Psi(S)$, then $\Psi(S)$ would take super-linear space.
We instead employ a standard space saving technique~\cite{ref:HershbergerAp92,ref:OvermarsMa81}, where we store a point $s \in S$ in the highest node $v$ such that $s \in I(S_v)$. Using this strategy, the total space of $\Psi(S)$ is $O(n)$. The details are given below.

For a node $v$ with left child $u$ and right child $w$,
define $r_u = \max_{\prec_L} I(S_v) \cap I(S_u)$, i.e., among all points in $I(S_v) \cap I(S_u)$, $r_u$ is the largest one following the order $\prec_L$. Note that $r_u$ may not always exist, e.g., if $v$ does not have a left child or if $I(S_v) \cap I(S_u) = \emptyset$; we let $r_u=null$ if $r_u$ does not exist. Symmetrically, define $r_w = \min_{\prec_L} I(S_v) \cap I(S_w)$.
Lemma~\ref{lem:concatenate} establishes that $I(S_u) \cap I(S_v)$ is a prefix of $I(S_u)$ and $I(S_w) \cap I(S_v)$ is a suffix of $I(S_w)$. The proof of Lemma~\ref{lem:concatenate} relies on the following lemma, which will also be needed in other parts of the paper.

\begin{figure}[t]
\begin{minipage}[t]{\linewidth}
\begin{center}
\includegraphics[totalheight=1.5in]{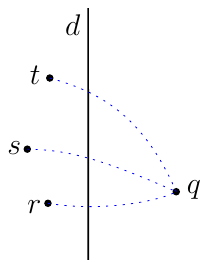}
\caption{Illustrating Lemma~\ref{lem:consistent2v3}. The three blue dotted curves are $\pi(r,q)$, $\pi(s,q)$, and $\pi(t,q)$. If $d(r, q) \leq d(s, q)$ and $d(r, q) \leq 1$, then $d(r, q) < d(t, q)$.}
\label{fig:threepoints10}
\end{center}
\end{minipage}
\end{figure}

\begin{lemma} \label{lem:consistent2v3}
    Suppose $S' = \{r, s, t\}$ with $r \prec_L s \prec_L t$ (or $t \prec_L s \prec_L r$) and $s \in I(S')$. Then, for any point $q \in P_R$, if $d(r, q) \leq d(s, q)$ and $d(r, q) \leq 1$, then $d(r, q) < d(t, q)$; see Figure~\ref{fig:threepoints10}.
\end{lemma}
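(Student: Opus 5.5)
Suppose to the contrary that $d(t,q)\le d(r,q)$. Then $d(t,q)\le d(r,q)\le d(s,q)$ and $d(t,q)\le d(r,q)\le 1$. We also use that $s\in I(S')$, i.e., $\xi_s(S')\neq\emptyset$. So there is a point $x\in\Xi(S')$ with $\beta_x(S')=s$, meaning $d(s,x)\le d(r,x)$ and $d(s,x)\le d(t,x)$. I would handle the case $r\prec_L s\prec_L t$; the symmetric case is identical after reversing the vertical direction on $d$.

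\textbf{Step 1: place the point of $\Xi(S')$ nearest to $q$.} Let $y$ be the point where $q$ meets $\Xi(S')$, with the point of $\Xi(S')$ on $\xi_r(S')$ or $\xi_t(S')$ chosen accordingly. Concretely, $q\in\calD_R(S')$ because $d(r,q)\le 1$. Extend the last segment of $\pi(d,q)$ beyond $q$ until it reaches a point $y\in\Xi(S')$. By Lemma~\ref{lem:ext_dist}, $d(S',\cdot)$ is strictly increasing along this extended path, so $y$ is well defined and $\pi(d,y)\supseteq\pi(d,q)$. Let $u=\beta_y(S')$.

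\textbf{Step 2: argue that $u\neq s$.} I would show that along the extended path, $\beta$ cannot switch from $r$ or $t$ to $s$. If at $q$ the site $t$ (or $r$) is at least as close as $s$, I would use the bisector facts of Lemma~\ref{lem:star_shaped}. Together with the fact that $\pi(s,y)$ passes through $\R(s)$ (star-shapedness), this should force $\pi(s,y)$ to cross a path from $t$ to $q$ in a way that contradicts item 2 or 3 of Lemma~\ref{lem:star_shaped}. This is the step I am least sure of, and I would first try a direct crossing argument on $\pi(s,y)$ versus $\pi(t,q)$ and $\pi(r,q)$.

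\textbf{Step 3: the ordering contradiction, which is the core.} Points $x\in\xi_s(S')$ and $y$ lie on $\Xi(S')$, with $\beta_y\in\{r,t\}$. By Lemma~\ref{lem:consistentTL}, the $\precxi$-order on $\Xi(S')$ matches $\prec_L$ on the owners, so $\xi_r(S')\precxi\xi_s(S')\precxi\xi_t(S')$. In particular $\xi_s$ separates $\xi_r$ from $\xi_t$ along $\Xi(S')$.

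Now consider the paths $\pi(r,q)$ and $\pi(t,q)$. Both reach $q$ with length at most $1$, and both cross $d$ at points $z_r$ and $z_t$. By Lemma~\ref{lem:Lvor} and the cusp structure, $z_r$ lies below the portion of $d$ belonging to $\R(s)$, and $z_t$ lies above it. Take the closed curve $\gamma=\pi(z_r,q)\cup\pi(q,z_t)\cup\overline{z_tz_r}$ inside $P_R$. The whole curve lies in $D_r\cup D_t$, hence in the closure of $\calD_R(\{r,t\})$, and $\gamma$ encloses the part of $d$ that meets $\R(s)$.

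Next take any $x\in\xi_s(S')$. Because of the $\prec_R$/$\precxi$ consistency (Lemma~\ref{lem:consistentRT}) and the placement of $\xi_s$ between the $r$- and $t$-portions, the path $\pi(d,x)$ must start from the segment of $d$ between $z_r$ and $z_t$, so it starts inside $\gamma$. On the other hand, $x$ lies on the boundary of $\calD_R(S')$, so $x$ cannot be strictly inside the region bounded by $\gamma$ unless $d(r,x)<1$ or $d(t,x)<1$. Therefore $\pi(s,x)$ must exit $\gamma$, either through $\pi(r,q)$ or through $\pi(t,q)$. Suppose it crosses $\pi(t,q)$ at a point $w$. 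Then $d(t,w)$ and $d(s,w)$ compare against $q$ as follows: from $d(t,q)\le d(s,q)$ and item 3 of Lemma~\ref{lem:star_shaped}, we get $d(t,w)<d(s,w)$ when $t$ is the strict nearest at $q$. This contradicts $\pi(s,x)\subset\R(s)$, which holds by star-shapedness. A crossing of $\pi(r,q)$ is handled the same way using $d(r,q)\le d(s,q)$.

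Finally, the tie cases, where equality holds and item 3 requires strictness, would be handled using the general position assumption.
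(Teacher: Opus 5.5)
Your closing argument in Step 3 is essentially the paper's proof in dual form. The paper picks $p\in\xi_s(S')$ with $p\neq q$ and sets $p'=\pi(s,p)\cap d$. It uses Lemma~\ref{lem:Lvor} on the two-site sets $\{r,s\}$ and $\{s,t\}$ to show that $\pi(r,q)$ meets $d$ below $p'$ and $\pi(t,q)$ meets $d$ above it. It then lets $\pi(p',p)$ split the connected set $Q=(D_r\cup D_s\cup D_t)\cap P_R$. Since $q$ lies on one side, $\pi(r,q)$ or $\pi(t,q)$ must meet $\pi(s,p)$, contradicting star-shapedness. You instead let $\gamma$ enclose a region and argue that $x$ lies outside it, which is an equally good way to run the same separation argument.

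As written, though, the step that makes this work is not justified.
\begin{itemize}
\item Steps 1--2 and the $\precxi$-ordering of $\xi_r,\xi_s,\xi_t$ never enter the contradiction ($y$ is not used again). Drop them; the step you were unsure of is moot.
\item The real gap is the sentence claiming that $\pi(d,x)$ starts between $z_r$ and $z_t$ ``by Lemma~\ref{lem:consistentRT} and the placement of $\xi_s$.'' First, $\pi(d,x)=\pi(z_x,x)$ is in general not a subpath of $\pi(s,x)$, and $\pi(s,x)$ is the path you then need to exit $\gamma$. Second, Lemma~\ref{lem:consistentRT} says nothing about $z_r,z_t$, which need not lie on $\Xi(S')$. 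Third, $\xi_r(S')$ or $\xi_t(S')$ may be empty.
\item What you actually need is that $x'=\pi(s,x)\cap d$ lies strictly between $z_r$ and $z_t$. This comes from the two-site diagrams. By Lemma~\ref{lem:star_shaped}(1), $d(r,q)\le d(s,q)$ puts $\pi(r,q)$, hence $z_r$, in $r$'s region of $\vd(\{r,s\})$. Likewise $d(s,x)\le d(r,x)$ puts $x'$ in $s$'s region. Lemma~\ref{lem:Lvor} then places $z_r$ below $x'$, and symmetrically $z_t$ lies above $x'$. In $\vd(S')$ itself $z_r$ need not lie in $r$'s region, so ``Lemma~\ref{lem:Lvor} and the cusp structure'' is not enough.
\end{itemize}

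Two smaller points also need fixing.
\begin{itemize}
\item The claim that $x$ is not strictly inside $\gamma$ needs Lemma~\ref{lem:boundary}. The component of $\calD_R(\{r,t\})$ containing $\gamma$ meets $d$ in a single segment, so $\overline{z_rz_t}\subseteq D_r\cup D_t$, and it has no holes. Hence the open region bounded by $\gamma$ lies in the interior of $\calD_R(S')$. Since $x\in\Xi(S')$ lies on the boundary of $\calD_R(S')$, it is excluded. Your distance condition alone does not exclude points of $\Xi\sa(S')$.
\item You must choose $x\neq q$, which general position allows because $\xi_s(S')$ is not a single point. This is exactly what rules out a crossing at $w=q$: by Lemma~\ref{lem:star_shaped}(3) with $s=\beta_x(S')$, $q\in\pi(s,x)\setminus\{x\}$ would give $d(s,q)<d(t,q)$, contradicting $d(t,q)\le d(s,q)$.
\end{itemize}
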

\begin{proof}
We only prove the case $r \prec_L s \prec_L t$ since the other case $t \prec_L s \prec_L r$ can be handled analogously.

Define $Q_r=P_R\cap D_r$. Define $Q_s$ and $Q_t$ similarly. Let $Q = Q_r \cup Q_s \cup Q_t$.
Note that since $d(r,q)\leq 1$, we have $q\in Q_r$.
If $Q_t\cap Q_r=\emptyset$, then since $q\in Q_r$, we obtain $q\not\in Q_t$ and thus $d(t,q)>1$, meaning that $d(r,q)<d(t,q)$.
In the following, we assume $Q_t\cap Q_r\neq \emptyset$.

Suppose for the sake of contradiction that $d(t, q) \leq d(r, q)$. Then, we have $d(t, q) \leq d(r, q) \leq d(s, q)$.
As $s\in I(S')$, we know that $\xi_s(S')\neq\emptyset$. Furthermore, by our general position assumption,
    $\xi_s(S')$ cannot contain exactly one point.
    Consider a point $p \in \xi_s(S')$ such that $p \neq q$. Let $p' = \pi(s, p) \cap d$. See Figure~\ref{fig:threepoints}.

Because $d(t, q) \leq d(r, q) \leq 1$, we have $\pi(t,q)\subseteq D_t$ and $\pi(r,q)\subseteq D_r$. Hence, both $\pi(t, q)$ and $\pi(r, q)$ must be in $Q$.

Since $p\in \xi_s(S')$, we have $s=\beta_p(S')$.
    By applying Lemmas~\ref{lem:star_shaped} and \ref{lem:Lvor} to $\{r, s\}$ with $d(r,q)\leq d(s,q)$,
    we know that $\pi(r, q)$ must intersect $d$ at a point $a$ below $p'$.
    Similarly, applying the two lemmas to $\{s, t\}$ with $d(t,q)\leq d(s,q)$, we know that $\pi(t, q)$ must intersect $d$ at a point $b$ above $p'$. See Figure~\ref{fig:threepoints}.

\begin{figure}[h]
\begin{minipage}[t]{\linewidth}
\begin{center}
\includegraphics[totalheight=1.5in]{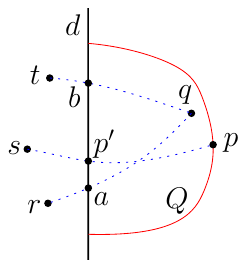}
\caption{Illustrating the proof of Lemma~\ref{lem:consistent2v3}. The red curve is the boundary of $Q$. The three blue dotted curves are $\pi(r,q)$, $\pi(t,q)$, and $\pi(s,p)$.}
\label{fig:threepoints}
\end{center}
\end{minipage}
\end{figure}

We claim that $Q_s\cap Q'\neq \emptyset$, where $Q'=Q_r\cup Q_t$. Indeed, since $Q_r\cap Q_t\neq \emptyset$, $\calD_R(\{s,t\})=Q'$ has a single connected component. By Lemma~\ref{lem:boundary}, $Q'\cap d$ is a single segment, denoted by $d_{tr}$. Since $\pi(t,q)\subseteq D_t$ and $\pi(r,q)\subseteq D_r$, both $a$ and $b$ are on $d_{tr}$. Since $p'$ is above $a$ and below $b$, we have $p'\in d_{tr}$. Since $p'\in \pi(s,p)$ and $d(s,p)\leq 1$, it follow that $d(s,p')\leq 1$ and $p'\in D_s$. As $p'\in d_{tr}=Q'\cap d$, we conclude that $D_s\cap Q'\neq \emptyset$.


The claim implies that $Q$ is a single connected component. Notice that $\pi(p', p)$ divides $Q$ into two region $Q_L$ and $Q_R$.
    Let $Q_L$ be the region relatively left of $\pi(p', p)$
    and $Q_R$ the one right of $\pi(p', p)$.
As $a$ is below $p'$ and $b$ is above $p'$, we have $a \in Q_R$ and $b \in Q_L$. Since $p\neq q$, $q$ is either in $Q_L$ or in $Q_R$, implying that either $\pi(a, q)$ or $\pi(b, q)$ must intersect $\pi(p', p)$.

    If $\pi(a, q)$ intersects $\pi(p',p)$ (see Figure~\ref{fig:threepoints}), then $\pi(r,q)$ intersects $\pi(s,p)$ since $a\in \pi(r,q)$ and $p'\in \pi(s,p)$. Since $d(r,q)\leq d(s,q)$ and $d(r,p)\geq d(s,p)$ (as $p\in \xi_s(S')$), if we consider the geodesic Voronoi diagram of $\{r,s\}$, then $p$ is in the Voronoi region of $s$ and $q$ is in the Voronoi region of $r$. As the Voronoi regions are star-shaped by Lemma~\ref{lem:star_shaped}(1) and $p\neq q$, $\pi(r,q)$ cannot intersect $\pi(s,p)$. We thus obtain contradiction.

    If $\pi(b, q)$ intersects $\pi(p',p)$, then $\pi(t,q)$ intersects $\pi(s,p)$. Since $d(t,q)\leq d(s,q)$, following similar  analysis to the above, we can also obtain contradiction.
\end{proof}

\begin{lemma} \label{lem:concatenate}
    $I(S_u) \cap I(S_v)$ is a prefix of $I(S_u)$ and $I(S_w) \cap I(S_v)$ is a suffix of $I(S_w)$.
\end{lemma}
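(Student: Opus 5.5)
By symmetry it suffices to prove that $I(S_u)\cap I(S_v)$ is a prefix of $I(S_u)$; the suffix statement for $I(S_w)$ follows by the mirror argument, using the case $t\prec_L s\prec_L r$ of Lemma~\ref{lem:consistent2v3}. Note that $S_u\subseteq S_v$, so any point of $S_u$ that survives on $\Xi(S_v)$ also survives on $\Xi(S_u)$. Hence $I(S_u)\cap I(S_v)=I(S_v)\cap S_u$. Also, every point of $S_u$ precedes every point of $S_w$ under $\prec_L$.

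The prefix property is equivalent to the following claim: if $s,s'\in I(S_u)$ with $s\prec_L s'$ and $s'\in I(S_v)$, then $s\in I(S_v)$. I would argue by contradiction. Suppose $s\notin I(S_v)$, and pick a point $p\in\xi_s(S_u)$. Since $s\in I(S_u)$, we have $d(s,p)\le 1$, and $s$ is nearest to $p$ within $S_u$. Since $s\notin I(S_v)$, the point $p$ is not on $\xi_s(S_v)$. Because $\calD_R(S_u)\subseteq\calD_R(S_v)$ and $p$ lies on the boundary of the smaller union, $p$ must be covered by some $t\in S_w$ with $d(t,p)\le d(s,p)\le 1$. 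More carefully, I would choose $p$ as a point of $\Xi(S_u)$ attributed to $s$ that is still in the closure of $\calD_R(S_v)$. One could also take a point of $\xi_s(S_u)$ and note that $d(S_w,p)\le d(s,p)$; otherwise $p$ would stay on $\Xi(S_v)$ attributed to $s$, using the general position assumption to avoid ties. So there is $t\in S_w$ with $d(t,p)\le d(s,p)$, and $s\prec_L s'\prec_L t$.

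Now apply Lemma~\ref{lem:consistent2v3} to the triple $\{s,s',t\}$ ordered $s\prec_L s'\prec_L t$. The middle point $s'$ must lie in $I(\{s,s',t\})$. This holds because $s'\in I(S_v)$ and $\{s,s',t\}\subseteq S_v$: a witness point of $\xi_{s'}(S_v)$ has $d(s',\cdot)\le 1$ and is nearest to $s'$ among all of $S_v$, so its envelope position is only pushed outward relative to the smaller set. I would make this rigorous by following the ray from $d$ along $\pi(d,\cdot)$ using Lemma~\ref{lem:ext_dist}. Let $t$ play the role of $r$, which is legitimate by the reversed-order version $t\prec_L s'\prec_L s$. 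We have $d(t,p)\le d(s,p)$, $d(t,p)\le 1$, and $t$ is compared against $s'$. To invoke the lemma we also need $d(t,p)\le d(s',p)$. This follows because $d(s,p)\le d(s',p)$, since $s$ is the nearest neighbor of $p$ in $S_u$. The lemma then gives $d(t,p)<d(s,p)$, which is consistent and yields no contradiction yet. So I would instead take $s$ as the $r$-role: the conclusion needed is that the outer element $t$ cannot beat $s$ at $p$ while $s'$ sits between them. The cleaner route is to choose $p\in\xi_s(S_u)$ and $t$ as above, then locate a point $q\in\xi_{s'}(S_v)$. Using the consistency Lemmas~\ref{lem:consistentRT} and~\ref{lem:consistentTL}, the envelope order $\precxi$ on $\Xi(S_v)$ places the parts owned by $s'$ between parts owned by lower and higher points. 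We then derive a crossing of $\pi(t,p)$ with $\pi(s',q)$, contradicting star-shapedness (Lemma~\ref{lem:star_shaped}(1)). This mimics the proof of Lemma~\ref{lem:consistent2v3}.

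The main obstacle is exactly this last step: choosing the witness points so that Lemma~\ref{lem:consistent2v3} or the star-shapedness crossing argument applies with the correct roles. I expect the crossing argument, with $\pi(s',q)$ separating the region of $\calD_R(S_v)$ containing the connection from $t$'s entry point on $d$ (above, by Lemma~\ref{lem:Lvor}) to $p$, to be the workable version. I would try that first and fall back on a direct application of Lemma~\ref{lem:consistent2v3} if the roles align.
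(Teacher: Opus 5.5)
Your setup matches the paper's proof: assume $s,s'\in I(S_u)$ with $s\prec_L s'$, $s'\in I(S_v)$ and $s\notin I(S_v)$, pick $p\in\xi_s(S_u)$, and obtain $t\in S_w$ with $d(t,p)\le d(s,p)$. (In fact $d(t,p)<d(s,p)$: if $d(S_w,p)\ge d(s,p)$, then $d(S_v,p)=d(s,p)$ and $p\in\xi_s(S_v)$.)

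\textbf{The gap is that you never close the argument.} You correctly see that putting $t$ in the $r$-role of Lemma~\ref{lem:consistent2v3} gives no contradiction. You then name the right assignment (``take $s$ as the $r$-role'') but abandon it for a crossing argument that you do not carry out. You do not say which witness $q\in\xi_{s'}(S_v)$ to use, which region is separated, or why $\pi(t,p)$ must cross $\pi(s',q)$. You end by calling this step the main obstacle. As written, the proposal is not a proof.

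\textbf{The fix is the direct application you set aside, and it is exactly the paper's proof.} Apply Lemma~\ref{lem:consistent2v3} to $S'=\{s,s',t\}$ with $r:=s$, middle point $s'$, outer point $t$, and query point $p$. The hypotheses all hold:
\begin{itemize}
\item $d(s,p)\le d(s',p)$, because $s=\beta_p(S_u)$ and $s'\in S_u$ (you derived this yourself).
\item $d(s,p)\le 1$, because $p\in\Xi(S_u)$ and $d(S_u,p)=d(s,p)$.
\item $s'\in I(S')$.
\end{itemize}
The lemma then gives $d(s,p)<d(t,p)$, which contradicts $d(t,p)\le d(s,p)$.

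The fact $s'\in I(S')$ needs no ray argument via Lemma~\ref{lem:ext_dist}. Take $S'\subseteq S''$ and $x\in S'$. Any $q\in\xi_x(S'')$ satisfies $d(S',q)=d(S'',q)=d(x,q)$. Membership in $\Xi(\cdot)$ depends only on this value and on whether $q\in\partial P_R$, so $\xi_x(S'')\subseteq\xi_x(S')$. The crossing argument you had in mind would only re-prove Lemma~\ref{lem:consistent2v3} inline.
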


\begin{proof}
We only prove that $I(S_u) \cap I(S_v)$ is a prefix of $I(S_u)$. By a similar method, we can also prove that $I(S_w) \cap I(S_v)$ is a suffix of $I(S_w)$.

Suppose to the contrary that $I(S_u) \cap I(S_v)$ is not a prefix of $I(S_u)$.
    Then there must exist $s_1, s_2 \in I(S_u)$ such that $s_1 \prec_L s_2$, $s_1 \notin I(S_v)$, and $s_2 \in I(S_v)$.
    Because $s_1 \in I(S_u)$, we know that $\xi_{s_1}(S_u) \neq \emptyset$.
    Pick any $q \in \xi_{s_1}(S_u)$. Since $s_1,s_2\in S_u$, $d(s_1,q)\leq d(s_2,q)$ and $d(s_1,q)\leq 1$.
    Because $s_1 \notin I(S_v)$, we know that there must exist some $s_3 \in I(S_v)$ such that $d(s_3, q) < d(s_1, q)$.
    Because $s_1 = \beta_q(S_u)$, it must be the case that $s_3 \in I(S_w)$.
    Therefore, $s_1 \prec_L s_2 \prec_L s_3$.

    Applying Lemma~\ref{lem:consistent2v3} to $S'=\{s_1,s_2,s_3\}$ with $d(s_1,q)\leq d(s_2,q)$ and $d(s_1,q)\leq 1$ (note that $s_2\in I(S')$ since $s_2\in I(S_v)$ and $S'\subseteq S_v$), we get that $d(s_1, q) < d(s_3, q)$,
    contradicting $d(s_3, q) < d(s_1, q)$.
\end{proof}

By Lemma~\ref{lem:concatenate}, $I(S_v)$ is the concatenation of $I(S_u)_{\leq r_u}$ and $I(S_w)_{\geq r_w}$, where the notation $I(S_u)_{\leq r_u}$ represents the prefix of $I(S_u)$ up to and including $r_u$ (if $r_u=null$, then $I(S_u)_{\leq r_u}=\emptyset$) and $I(S_w)_{\geq r_w}$ represents the suffix of $I(S_w)$ from and including $r_w$.
Similarly, we define $I(S_u)_{> r_u}=I(S_u)\setminus I(S_u)_{\leq r_u}$ and $I(S_w)_{< r_w}=I(S_w)\setminus I(S_w)_{\geq r_w}$.

We store both $r_u$ and $r_w$ in a node field $X(v)$ at $v$.
As $I(S_v)$ is the concatenation of $I(S_u)_{\leq r_u}$ and $I(S_w)_{\geq r_w}$, we only store $I(S_u)_{>r_u}$ in a list $arcs(u)$ at $u$ and similarly store $I(S_w)_{<r_w}$ in $arcs(w)$ at $w$. We use a doubly-linked list for each $arcs(\cdot)$ following the order of $\prec_L$.
This means that in a descent from the root of $\Psi(S)$, we can do the following.
Let $v$ be the current node in our descent and assume $arcs(v) = I(S_v)$, which is true initially when $v$ is the root. Suppose $v$ has left child $u$ and right child $w$.
We can cut $arcs(v)$ at $r_u$ to obtain $I(S_u)_{\leq r_u}$ and $I(S_w)_{\geq r_w}$.
As $I(S_u)_{> r_u}$ is stored in $arcs(u)$,
we can concatenate $I(S_u)_{\leq r_u}$ and $I(S_u)_{> r_u}$ to obtain $I(S_u)$.
We can do likewise to obtain $I(S_w)$.
In an ascent, we can simply reverse this process to obtain $I(S_v)$ from $I(S_u)$ and $I(S_w)$ using $r_u$ and $r_w$.
More details will be given later.


For each point $s\in I(S)$, we define the {\em ends} $a_s(S)$ and $b_s(S)$ of $\xi_s(S)$
to be the minimum and maximum points, respectively, of $\xi_s(S)$ under $\precxi$.

By Lemma~\ref{lem:concatenate}, for all $s \in I(S_u)_{< r_u}$,
the ends of $\xi_s(S_u)$ are also the ends of $\xi_s(S_v)$.
For $r_u$, we have $a_{r_u}(S_u) = a_{r_u}(S_v)$,
but $b_{r_u}(S_u) \neq b_{r_u}(S_v)$ is possible.
Likewise, for all $s \in I(S_w)_{> r_w}$,
the ends of $\xi_s(S_w)$ are also the ends of $\xi_s(S_v)$.
For $r_w$, we have $b_{r_w}(S_w) = b_{r_w}(S_v)$ holds, but $a_{r_w}(S_w) \neq a_{r_w}(S_v)$ is possible.

Based on these, for each $s \in S$,
we store in a field $ends(s)$ the ends of $\xi_s(S_v)$
where $v$ is the highest node such that $s \in I(S_v)$.
Also, for each node $v$ with left child $u$ and right child $w$,
we store $b_{r_u}(S_u)$ and $a_{r_w}(S_w)$ in a node field $Y_1(v)$ and store $b_{r_u}(S_v)$ and $a_{r_w}(S_v)$ in a node field $Y_2(v)$ at $v$.
With $Y_1(v)$ and $Y_2(v)$, we can easily obtain $ends(s)$ for all $s\in I(S_v)$ at each node $v$ during any descent/ascent of $\Psi(S)$.

Algorithm~\ref{algo:descend} gives pseudocode for how we can descend one level in a descent from the root. That is, assuming that we have $\Psi(S_v)$ at $v$, i.e., $I(S_v)$ is stored in $arcs(v)$ with $ends(s)$ storing the ends of $\xi_s(S_v)$ for each $s\in I(S_v)$, Algorithm~\ref{algo:descend} returns $\Psi(S_{v'})$ for each child $v'$ of $v$. In particular, for each $v'$, $arcs(v') = I(S_{v'})$, and for all $s \in S_{v'}$, $ends(s)$ contains the ends of $\xi_s(S_{v'})$.
Note that Algorithm~\ref{algo:descend} takes $O(1)$ time as $arcs(\cdot)$ is stored in a doubly-linked list. Algorithm~\ref{algo:ascend} shows how to ascend to $v$ to obtain $\Psi(S_v)$ given $\Psi(S_{v'})$ of each child $v'$ of $v$, which also takes $O(1)$ time.
Note also that Algorithm~\ref{algo:descend} requires $X(v)$ and $Y_1(v)$ while Algorithm~\ref{algo:ascend} requires $X(v)$ and $Y_2(v)$.

\begin{algorithm}
    \caption{Descend from a node $v$ to its children} \label{algo:descend}
    \KwIn{$\Psi(S_v)$}
    \KwOut{$\Psi(S_{v'})$ for each child $v'$ of $v$}
    \If{$v$ has no children}{
        \Return\;
    }
    \ElseIf{$v$ has only one child $v'$}{
        $arcs(v') \gets arcs(v)$\;
        $arcs(v) \gets \emptyset$\;
    }
    \ElseIf{$v$ has left child $u$ and right child $w$}{
        $r_u, r_w \gets X(v)$\;
        \If{both $r_u$ and $r_w$ exist}{
            $I(S_u)_{\leq r_u}, I(S_w)_{\geq r_w} \gets$ split $arcs(v)$ at $r_u$\;
            $I(S_u)_{> r_u} \gets arcs(u)$\;
            $I(S_w)_{< r_w} \gets arcs(w)$\;
            $arcs(u) \gets I(S_u)_{\leq r_u} + I(S_u)_{> r_u}$\;
            $arcs(w) \gets I(S_w)_{< r_w} + I(S_w)_{\geq r_w}$\;
            $b_{r_u}(S_u), a_{r_w}(S_w) \gets Y_1(v)$\;
            $a_{r_u}(S_v), b_{r_u}(S_v) \gets ends(r_u)$\;
            $a_{r_w}(S_v), b_{r_w}(S_v) \gets ends(r_w)$\;
            $ends(r_u) \gets \{a_{r_u}(S_v), b_{r_u}(S_u)\}$\;
            $ends(r_w) \gets \{a_{r_w}(S_w), b_{r_w}(S_v)\}$\;
        }
        \ElseIf{$r_u$ exists but $r_w=null$}{
            $arcs(u) \gets arcs(v)$\;
            $arcs(v) \gets \emptyset$\;
        }
        \ElseIf{$r_w$ exists but $r_u=null$}{
            $arcs(w) \gets arcs(v)$\;
            $arcs(v) \gets \emptyset$\;
        }
    }
\end{algorithm}


\begin{algorithm}
    \caption{Ascend to a node $v$ from its children} \label{algo:ascend}
    \KwIn{$\Psi(S_{v'})$ for each child $v'$ of $v$
    }
    \KwOut{$\Psi(S_v)$}
    \If{$v$ has only one child $v'$}{
        $arcs(v) \gets arcs(v')$\;
        $arcs(v') \gets \emptyset$\;
    }
    \ElseIf{$v$ has left child $u$ and right child $w$}{
        $r_u, r_w \gets X(v)$\;
        \If{both $r_u$ and $r_w$ exist}{
            $I(S_u)_{\leq r_u}, I(S_u)_{> r_u} \gets$ split $arcs(u)$ at $r_u$\;
            $I(S_w)_{< r_w}, I(S_w)_{\geq r_w} \gets$ split $arcs(w)$ at $r_w$\;
            $arcs(v) \gets I(S_u)_{\leq r_u} + I(S_w)_{\geq r_w}$\;
            $arcs(u) \gets I(S_u)_{> r_u}$\;
            $arcs(w) \gets I(S_w)_{< r_w}$\;
            $a_{r_u}(S_u), b_{r_u}(S_u) \gets ends(r_u)$\;
            $a_{r_w}(S_w), b_{r_w}(S_w) \gets ends(r_w)$\;
            $b_{r_u}(S_v), a_{r_w}(S_v) \gets Y_2(v)$\;
            $Y_1(v) \gets \{b_{r_u}(S_u), a_{r_w}(S_w)\}$\;
            $ends(r_u) \gets \{a_{r_u}(S_u), b_{r_u}(S_v)\}$\;
            $ends(r_w) \gets \{a_{r_w}(S_v), b_{r_w}(S_w)\}$\;
        }
        \ElseIf{$r_u$ exists but $r_w=null$}{
            $arcs(v) \gets arcs(u)$\;
            $arcs(u) \gets \emptyset$\;
        }
        \ElseIf{$r_w$ exists but $r_u=null$}{
            $arcs(v) \gets arcs(w)$\;
            $arcs(w) \gets \emptyset$\;
        }
    }
\end{algorithm}

\subsection{Handling queries with $\boldsymbol{\Psi(S)}$}
\label{sec:rangeemptyquery}

We now discuss how to use the tree $\Psi(S)$ to answer geodesic unit-disk range emptiness queries: Given a query point $q \in P_R$,
determine if $D_q \cap S = \emptyset$, and if not, report a witness. We will present an algorithm that runs in $O(\log n\log m)$ time.

\paragraph{Query algorithm description.}
Starting from the root of $\Psi(S)$, for each node $v$, if $v$ is a leaf, we check whether $d(s_v,q)\leq 1$, where $s_v$ is the point of $S$ stored in $v$. If so, we return $s$; otherwise, we return null, meaning that $D_q \cap S = \emptyset$. If $v$ has a single child, then we simply descend to it. Otherwise, let $u$ and $w$ be the left and right children of $v$, respectively. We decide whether to descend to $u$ or $w$ depending on the comparison of $q$ with $b_{r_u}(S_v)$ (or $a_{r_w}(S_v)$) under $\prec_R$. Recall that $b_{r_u}(S_v)$ and $a_{r_w}(S_v)$ are stored in $Y_2(v)$. To simplify the notation, let $b=b_{r_u}(S_v)$ and $a=a_{r_w}(S_v)$.

If $r_u=null$, then we descend to $w$. If $r_w=null$, then we descend to $u$. In the following, we assume both $r_u$ and $r_w$ exist.

If $q \prec_R b$, then we descend to $u$. If $b\prec_R q$, then we descend to $w$. The remaining case is when $q$ and $b$ is not $\prec_R$-comparable, which happens when either $q\in \pi(d,b)$ or $b\in \pi(d,q)$.

\begin{figure}[t]
\begin{minipage}[t]{\linewidth}
\begin{center}
\includegraphics[totalheight=1.7in]{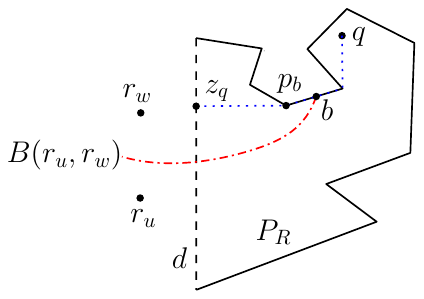}
\caption{Illustrating the case where $a=b\in \partial P_R$.}
\label{fig:query}
\end{center}
\end{minipage}
\end{figure}

First of all, if $q\in \pi(d,b)$, then by definition, $d(r_u,b)\leq 1$. Since $d(r_u,q)\leq d(r_u,b)$ by Lemma~\ref{lem:ext_dist}, we have $d(r_u,q)\leq 1$ and thus we simply return $r_u$. We next assume $b\in \pi(d,q)\setminus\{q\}$. If $b\not\in \partial P_R$ or $b\neq a$, then we return $null$. Otherwise, $b= a$ is on an edge $e$ of $\partial P_R$. As $b= a$, $b$ is on the bisector of $r_u$ and $r_w$. Due to our general position assumption, $b$ cannot be a vertex of $P$. This means that $b$ is in the interior of $e$. As $b$ is in the interior of $\pi(z_q,q)$, exactly one vertex $p_b$ of $e$ must be contained in $\pi(z_q,b)$; see Figure~\ref{fig:query}. We can find $p_b$ by checking whether $d(z_q,b)=d(z_q,p')+|\overline{p'b}|$ for each vertex $p'$ of $e$. Then, we check whether $P$ is locally on the right side of the directed segment $\overrightarrow{p_bb}$ from $p_b$ to $b$. If so, we descend to $u$; otherwise, we descend to $w$.


Algorithm~\ref{algo:DEDquery} gives the pseudocode of the query algorithm.

\begin{algorithm}
    \caption{Query$(v,q)$} \label{algo:DEDquery}
    \KwIn{A node $v$ of $\Psi(S)$ and a query point $q \in P_R$.}
    \KwOut{A point in $D_q \cap S_v$ or $null$ if $D_q \cap S_v = \emptyset$.}
    $u, w \gets$ left and right children of $v$, if they exist, respectively\;
    $b_{r_u}(S_v), a_{r_w}(S_v) \gets Y_2(v)$\;
    $b\gets b_{r_u}(S_v), a\gets a_{r_w}(S_v)$\;
    \If{$v$ is a leaf}{
        $\{s_v\} \gets arcs(v)$\;
        \If{$d(s_v, q) \leq 1$}{
            \Return $s_v$\;
        }
        \Else{
            \Return $null$\;
        }
    }
    \If{$u=null$ or $r_u=null$}{
        \Return Query$(w, q)$\label{line:nosu}\;
    }
    \If{$w=null$ or $r_w=null$}{
        \Return Query$(u, q)$\label{line:nosw}\;
    }
    \If{$q \prec_R b$}{
        \Return Query$(u, q)$\label{line:qlrx}\;
    }
    \If{$b \prec_R q$}{
        \Return Query$(w, q)$\label{line:xlrq}\;
    }
    \If{$q \in \pi(d, b)$}{
        \Return $r_u$ \label{line:abneq}\;
    }
    \If{$b \in \pi(d, q)\setminus\{q\}$}{
        \If{$b\not\in \partial P_R$ or $b\neq a$}{
            \Return $null$\;
        }
        \Else
        {
            compute the vertex $p_b$\;
            \If{$P$ is locally on the right of $\overline{p_bb}$}
            {
                \Return  Query$(u, q)$\;
            }
            \Else
            {
                \Return  Query$(w, q)$\;
            }
        }
    }
\end{algorithm}

\subsubsection{Query time analysis}

We now analyze the query time. All the discussions here are based on the assumption that we already have a GH data structrue for $P$.
We start with the following two lemmas.

\begin{lemma} \label{lem:projection}
Suppose a GH data structure has been built for $P$. Then, given any point $p\in P_R$, the point $z_p$ on $d$ can be computed in $O(\log m)$ time.
\end{lemma}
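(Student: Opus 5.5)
We need to compute $z_p=\argmin_{z\in d} d(z,p)$ in $O(\log m)$ time using the GH data structure. The plan is to compute the funnel $F_p(d)$ implicitly and binary search on it.

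\textbf{Funnel computation.} Use the GH data structure to obtain compact representations of $\pi(p,z^*_0)$ and $\pi(p,z^*_1)$, together with their junction vertex $c$. The two paths $\pi(c,z^*_0)$ and $\pi(c,z^*_1)$ are the two concave sides of the funnel with base $d$. Each side supports binary search along its vertices in $O(\log m)$ time, and we can retrieve the $k$-th vertex and its incident edge in $O(\log m)$ time. The shortest path from $p$ to $d$ follows $\pi(p,c)$ and then continues down the funnel.

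\textbf{Characterizing $z_p$.} Do a case analysis on the last segment of $\pi(z_p,p)$.

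\begin{itemize}
\item If $z_p$ is in the interior of $d$, the segment from $z_p$ to the anchor $a$ is horizontal, since it is perpendicular to the vertical $d$. Here $a$ is a vertex of one of the funnel sides, or $c$, or $p$ itself. It is exactly the funnel vertex $a$ whose horizontal projection onto the line of $d$ lands on $d$ and lies in the ``visibility cone'' of the two funnel edges incident to $a$. By concavity, the direction of the funnel edges changes monotonically along each side. So along, say, the lower side $\pi(c,z^*_0)$, the edges first point ``away'' from horizontal-left and then ``toward'' it. The vertex where the horizontal direction is tangent can therefore be found by binary search comparing edge slopes with the horizontal.
\item Otherwise $z_p$ is $z^*_0$ or $z^*_1$. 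By the angle condition used in the proof of Lemma~\ref{lem:ext_dist}, this happens exactly when the last edge of the corresponding side makes an angle of at least $90^\circ$ with $d$.
\end{itemize}

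\textbf{Procedure.} First test the two endpoint conditions in $O(\log m)$ time using the anchors of $z^*_0$ and $z^*_1$ returned by GH. If neither holds, binary search on the two sides for the vertex with horizontal tangent direction, and take $z_p$ as its horizontal projection onto $d$. Each probe retrieves a path vertex and compares an edge direction with the horizontal in $O(1)$ extra time, so the total is $O(\log m)$. Each compact path representation admits $O(\log m)$-time binary search, so with a careful nested implementation the total is $O(\log m)$ rather than $O(\log^2 m)$.

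\textbf{Main obstacle.} The main obstacle is ensuring the binary search runs on the GH compact representation within $O(\log m)$ total. The representation is a balanced tree of hull fragments, so a predicate that is monotone along the path (edge slope relative to horizontal) can be evaluated top-down in one descent. I would argue this monotonicity from concavity of the funnel sides, and appeal to the standard GH funnel/tangent search (as in shortest path queries to a segment) for the descent.
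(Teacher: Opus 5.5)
Your proposal is correct and takes essentially the same route as the paper. Both compute the junction vertex $c$ of $\pi(p,z^*_0)$ and $\pi(p,z^*_1)$. Both then binary search along the two concave sides of the funnel, using the GH tree representations, for the vertex where the horizontal line is tangent, and project that vertex horizontally onto $d$. Your explicit $90^\circ$-angle test for the cases $z_p\in\{z^*_0,z^*_1\}$ is more careful than the paper, which silently folds these cases into its ``single tangent vertex'' claim.
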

\begin{proof}
Let $c$ be the junction vertex of $\pi(p,z^*_0)$ and $\pi(p,z^*_1)$.
Consider the funnel $F$ bounded by $d$, $\pi(c,z^*_0)$ and $\pi(c,z^*_1)$.
There is a single vertex $v$ in the two sides of $F$ such that the horizontal line $\ell_v$ through $v$ is tangent to the side containing $v$. The intersection of $\ell_v$ with $d$ is $z_p$. Therefore, to compute $z_p$, it suffices to compute $v$. As the slopes of the tangents at the points of the two sides continuously change from $z^*_0$ to $c$ along $\pi(c,z^*_0)$ and then from $c$ to $z^*_1$ along $\pi(c,z^*_1)$, we can find $v$ by binary search on the two sides of $F$.

Specifically, using the GH data structure, we can compute $c$ in $O(\log m)$ time. The data structure can also obtain a binary search tree of height $O(\log m)$ representing $\pi(c,z^*_0)$ (resp., $\pi(c,z^*_1)$) in $O(\log m)$ time. Using the two trees, the vertex $v$ can be computed in $O(\log m)$ time by binary search (similar techniques have been used in the literature, e.g., \cite{ref:PollackCo89}).
\end{proof}

\begin{lemma}\label{lem:orderRalgo}
Suppose a GH data structure has been built for $P$.
    For any two points $p_1, p_2 \in P_R$,
    we can determine in $O(\log m)$ time if $p_1 \prec_R p_2$, $p_2 \prec_R p_1$, or that they are not $\prec_R$-comparable, and in the case that they are not $\prec_R$-comparable, we can determine whether $p_1 \in \pi(d, p_2)$ or $p_2 \in \pi(d, p_1)$.
\end{lemma}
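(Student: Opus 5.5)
The plan is to reduce the comparison to a constant number of GH queries plus the two projections $z_{p_1}$ and $z_{p_2}$.

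First, apply Lemma~\ref{lem:projection} to compute $z_{p_1}$ and $z_{p_2}$, in $O(\log m)$ time each. If the two projections differ, compare their positions on $d$. The definition of $\prec_R$ then decides the order directly, except that we must still rule out non-comparability. This is easy: if $p_1\in\pi(d,p_2)$, then $\pi(d,p_1)$ is a subpath of $\pi(d,p_2)$, which forces $z_{p_1}=z_{p_2}$. So distinct projections imply that the points are comparable, and the comparison of $z_{p_1}$ and $z_{p_2}$ along $d$ answers the query.

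Now suppose $z_{p_1}=z_{p_2}=z$. Using the GH data structure, compute the junction vertex $c$ of $\pi(z,p_1)$ and $\pi(z,p_2)$ in $O(\log m)$ time. We test non-comparability by checking whether $c=p_1$ or $c=p_2$. A distance test such as $d(z,p_2)=d(z,p_1)+d(p_1,p_2)$ is an equivalent alternative, again using $O(1)$ GH queries. If $c=p_1$, then $p_1\in\pi(d,p_2)$; symmetrically, if $c=p_2$, then $p_2\in\pi(d,p_1)$. Otherwise, the two paths branch at $c$. Let $x_0$ be the vertex of $\pi(z,c)$ preceding $c$ (or, if $c=z$, the point $z^*_0$, following the convention in the proof of Lemma~\ref{lem:Risorder}). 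Let $x_1$ and $x_2$ be the vertices following $c$ on $\pi(c,p_1)$ and $\pi(c,p_2)$. The GH structure returns the anchors of $c$ on $\pi(c,p_1)$, $\pi(c,p_2)$ and $\pi(c,z)$ in $O(\log m)$ time. Deciding whether $x_0,x_1,x_2$ are counterclockwise around $c$ then takes $O(1)$ orientation tests, and this decides $p_1\prec_R p_2$ or $p_2\prec_R p_1$.

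The main obstacle is handling the degenerate configurations carefully. The first is $c=z$ on $d$, where the reference direction must be taken along $\overline{zz^*_0}$, or along $\overline{zz^*_1}$ reversed if $z=z^*_0$. The second is when one path's next vertex after $c$ is collinear with another's, which cannot happen in a simple polygon once the paths have diverged. The third is making sure the junction-vertex query is valid when $z$ is not a vertex of $P$. Since the GH structure handles arbitrary points of $P$, this last case is fine. The total is $O(1)$ GH queries, giving $O(\log m)$ time.
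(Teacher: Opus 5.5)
Your proposal is correct and follows essentially the same route as the paper: compute $z_{p_1}$ and $z_{p_2}$ via Lemma~\ref{lem:projection} and compare them along $d$ when they differ; otherwise detect non-comparability with $O(1)$ GH queries (the paper uses exactly your alternative test via $d(z,p_1)$, $d(z,p_2)$, $d(p_1,p_2)$) and decide the order by an $O(1)$ orientation test on the neighbors of the junction vertex $c$. Your extra remarks --- that distinct projections force comparability, and the choice of reference direction when $c=z$ --- fill in details the paper leaves implicit.
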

\begin{proof}
We first compute $z_{p_1}$ and $z_{p_2}$ in $O(\log m)$ time by Lemma~\ref{lem:projection}.
If $z_{p_1}\neq z_{p_2}$, then either $p_1 \prec_R q_1$ or $p_2 \prec_R p_1$, which can be easily determined. In the following, we assume $z_{p_1}= z_{p_2}$. Let $z=z_{p_1}$.

    We can use the GH-data structure to determine in $O(\log m)$ time whether $p_1 \in \pi(z, p_2)$ or $p_2 \in \pi(z, p_1)$ by computing $d(z,p_1)$, $d(z,p_2)$, and $d(p_1,p_2)$.
    If this is the case, then $p_1$ and $p_2$ are not $\prec_R$-comparable.
    Otherwise, we can use the GH-data structure to find in $O(\log m)$ time the junction vertex $c$ of $\pi(z,p_1)$ and $\pi(z,p_2)$ as well as the two adjacent vertices of $c$ in each path. Using these vertices, we can determine whether $p_1 \prec_R p_2$ or $p_2 \prec_R p_1$ in additional $O(1)$ time.
\end{proof}

With the above two lemmas, we now analyze the time complexity of the query algorithm.

The algorithm traverse $\Psi(S)$ along a path from the root to a leaf.
Let $v$ be a node in the path. If $v$ is a leaf, we need to check whether $d(s_v,q)\leq 1$, which can be done in  $O(\log m)$ time using the GH data structure. If $v$ has a single child, we spend $O(1)$ time to descend to its child. It remains to consider the case where $v$ has both a left child $u$ and the right child $w$. Again, let $b=b_{r_u}(S_v)$ and $a=a_{r_w}(S_v)$.

By Lemma~\ref{lem:orderRalgo}, we can determine in $O(\log m)$ time whether $p\prec_R b$, $b\prec_R p$, or they are not $\prec_R$-comparable, and in the case where they are not $\prec_R$-comparable, whether $q\in \pi(d,b)$ or $b\in \pi(d,q)$. The remaining steps of the algorithm at $v$ take $O(\log m)$ in total.

In summary, the algorithm spends $O(\log m)$ time at each node $v$. As the height of $\Psi(S)$ is $O(\log n)$, the total time of the query algorithm is $O(\log n\log m)$.

\subsubsection{Correctness of the query algorithm}

We now show that the algorithm is correct. First of all, if the algorithm returns a point $s$, then according to the algorithm, $d(s,q)\leq 1$ holds and thus the algorithm is correct in this case. On the other hand, suppose that $D_q\cap S\neq\emptyset$. We next argue that the algorithm must return a point $s$ such that $s\in D_q$, i.e., $d(s,q)\leq 1$. We do so by induction.

Let $v$ be a node of $\Psi(S)$ that is considered by the query algorithm. We assume that $D_q\cap S_v\neq \emptyset$, which is true initially when $v$ is the root since $S=S_v$ and $D_q\cap S\neq \emptyset$. If $v$ is a leaf, then $S_v$ contains a single point $s$. As $D_q\cap S_v\neq \emptyset$, we have $s\in D_q$ and thus the algorithm is correct. Assume that $v$ is not a leaf.
It suffices to argue that if we descend to a child $v'$, then it must be $D_q\cap S_{v'}\neq\emptyset$. If $v$ has a single child $v'$, then $S_v=S_{v'}$ and thus it is obviously true that $D_q\cap S_{v'}\neq\emptyset$. It remains to argue the case where $v$ has a left child $u$ and a right child $w$. Again, let $b=b_{r_u}(S_v)$ and $a=a_{r_w}(S_v)$.

According to our algorithm, if $r_u=null$, then we descend to $w$, and if $r_w= null$, we descend to $u$. The correctness follows the following lemma.
\begin{lemma}
If $r_u=null$, then $D_q\cap S_w \neq\emptyset$. If $r_w=null$, then $D_q\cap S_u \neq\emptyset$.
\end{lemma}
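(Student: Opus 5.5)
We argue the first claim; the second is symmetric. The context is the induction step: $v$ has both children $u$ and $w$, and $D_q\cap S_v\neq\emptyset$. Suppose $r_u=null$. By the definition of $r_u=\max_{\prec_L} I(S_v)\cap I(S_u)$, this means $I(S_v)\cap I(S_u)=\emptyset$. Since $S_v=S_u\cup S_w$ and every point of $I(S_v)$ belongs to $I(S_u)$ or $I(S_w)$, we get $I(S_v)\subseteq S_w$.

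The plan is to show that some point of $S_w$ lies within distance $1$ of $q$. Let $s\in S_v$ with $d(s,q)\le 1$; if $s\in S_w$ we are done, so assume $s\in S_u$. Then $q\in \calD_R(S_v)$. We locate a point $p$ of $\Xi(S_v)$ whose nearest neighbor in $S_v$ is also within distance $1$ of $q$. The natural choice is to walk from $q$ away from $d$: extend $\pi(d,q)$ beyond $q$ (extending its last segment, as in the proof of Lemma~\ref{lem:boundary}) until it leaves $\calD_R(S_v)$ or hits $\partial P_R$. This gives a point $p$ on the boundary of the component containing $q$, and that boundary point lies on $\Xi(S_v)$. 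By construction $\pi(d,p)\supseteq\pi(d,q)$, and $p$ satisfies $d(S_v,p)\le 1$.

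Let $t=\beta_p(S_v)$. Then $t\in I(S_v)\subseteq S_w$ since $p\in\xi_t(S_v)$. By Lemma~\ref{lem:ext_dist}, since $q\in\pi(d,p)$, we have $d(t,q)\le d(t,p)\le 1$. Hence $t\in D_q\cap S_w$.

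The main obstacle is making sure the extension lands on $\Xi(S_v)$ rather than escaping through $d$. A subtlety is the degenerate case $q\in\partial P_R$ already, in which case $q$ itself is in $\Xi\sa(S_v)$ and we take $p=q$. The ray starting from $q$ moves away from $d$, and by the structure of $\calD_R(S_v)$ (Lemma~\ref{lem:boundary}) the first exit point lies on either $\Xi\ca$ (distance exactly $1$) or $\Xi\sa$ (on $\partial P_R$ with distance $<1$ or $\le1$ by closedness). In both cases it belongs to $\Xi(S_v)$, so the argument closes. The symmetric argument with $I(S_v)\subseteq S_u$ gives the second claim.
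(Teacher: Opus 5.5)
Your proof is correct and uses the same idea as the paper's: $r_u=null$ forces $\calD_R(S_v)\subseteq\calD_R(S_w)$. The paper asserts $\Xi(S_w)=\Xi(S_v)$, and hence $\calD_R(S_w)=\calD_R(S_v)$, ``by definition.'' You actually justify this step: you push $q$ along the extension of $\pi(d,q)$ to a point $p\in\Xi(S_v)$ (the same device as in Lemma~\ref{lem:invlr}), note $\beta_p(S_v)\in I(S_v)\subseteq S_w$, and apply Lemma~\ref{lem:ext_dist}. Two details should be stated precisely. The extension needs $q\notin d$, which is the paper's general position assumption. And $p\notin d$ holds because $\pi(z_q,p)=\pi(z_q,q)\cup\overline{qp}$ passes through $q\notin d$, not merely because the ray ``moves away'' from $d$.
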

\begin{proof}
If $r_u=null$, then by definition, $\Xi(S_w)=\Xi(S_v)$, which implies that $\calD_R(S_w)=\calD_R(S_v)$. Since $D_q\cap S_v\neq \emptyset$, $q\in \calD_R(S_v)$. Hence, $q\in \calD_R(S_w)$, meaning that $D_q\cap S_w \neq\emptyset$.

If $r_w=null$, a similar argument to the above can show that $D_q\cap S_u \neq\emptyset$.
\end{proof}

We now assume that both $r_u$ and $r_w$ exist. In this case, according to our algorithm, if $q\prec_R b$, then we descend to $u$, and if $b\prec_R q$, we descend to $w$. The following lemma justifies the correctness of this step.

\begin{lemma} \label{lem:invlr}
    If $q \prec_R b$, then $D_q \cap S_u\neq\emptyset$.
    If $b \prec_R q$, then $D_q \cap S_w\neq\emptyset$.
\end{lemma}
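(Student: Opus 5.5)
We prove only the first statement; the second is symmetric. The plan is to locate $q$ relative to the envelope $\Xi(S_v)$ by a point $q'$ on it, and then transfer the comparison $q \prec_R b$ into the order $\prec_L$ of the sites via the consistency lemmas.

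\textbf{Choosing $q'$ and $t$.} Assume $q\prec_R b$ and, as in the induction, $D_q\cap S_v\neq\emptyset$, so $q\in\calD_R(S_v)$. Let $t=\beta_q(S_v)$; then $d(t,q)\le 1$. If $t\in S_u$ we are done. So suppose $t\in S_w$; we derive a contradiction.

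Extend $\pi(z_q,q)$ beyond $q$: continue its last segment along the geodesic ray away from $d$ until the first point $q'$ where either $d(S_v,q')=1$ or $\partial P_R$ is hit. By Lemma~\ref{lem:ext_dist}, $d(S_v,\cdot)$ strictly increases along this extended path, so the path stays in $\calD_R(S_v)$ up to $q'$, and $q'\in\Xi(S_v)$.

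\textbf{Orders of $q'$ and its site.} Since $q\in\pi(d,q')$, the points $q$ and $q'$ share the same $z$-point and the same branching structure. Hence $q\prec_R b$ should give $q'\prec_R b$, unless $b$ lies on $\pi(d,q')$ beyond $q$. I expect the non-comparable sub-case to be the main obstacle. There one must use that $b$ is on $\Xi(S_v)$ while the portion of $\pi(d,q')$ before $q'$ lies strictly inside $\calD_R(S_v)$, so $b=q'$. This is handled like the $a=b\in\partial P_R$ case of the query algorithm.

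By Lemma~\ref{lem:consistentRT}, $q'\prec_R b$ gives $q'\precxi b$. Let $t'=\beta_{q'}(S_v)$. By Lemma~\ref{lem:consistentTL}, applied to $q'\precxi b$ with $\beta_b(S_v)=r_u$, either $t'\prec_L r_u$ or $t'=r_u$; in the equal-site case we use instead that $\xi_{r_u}(S_v)$ is connected. Either way $t'\in S_u$, because $I(S_v)$ is the concatenation of $I(S_u)_{\le r_u}$ and $I(S_w)_{\ge r_w}$ and the leaves are in $\prec_L$ order.

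\textbf{Pulling back to $q$.} Now compare $t'$ with $t$. Along $\pi(z_q,q')$, the nearest site changes only according to the Voronoi diagram. Since $t$ is nearest at $q$, Lemma~\ref{lem:star_shaped}(1) makes $\pi(t,q)$ lie inside $\R(t)$. The plan is to show directly that $d(t',q)\le 1$, contradicting nothing but giving $D_q\cap S_u\ni t'$. Indeed $d(t',q')\le 1$, and Lemma~\ref{lem:ext_dist}, applied with the single site $t'$ and $q\in\pi(d,q')$, gives $d(t',q)\le d(t',q')\le 1$. Thus $t'\in D_q\cap S_u$, which proves the claim; this route in fact avoids needing $t$ at all. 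The key steps are therefore the construction of $q'$ and the transfer $q\prec_R b\Rightarrow q'\prec_R b$, with the degenerate comparability case being the delicate part.
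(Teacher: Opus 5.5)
Your argument for the first statement is essentially the paper's. You extend $\pi(z_q,q)$ to a point $q'\in\Xi(S_v)$, transfer $q\prec_R b$ to $q'\prec_R b$, and chain Lemmas~\ref{lem:consistentRT} and~\ref{lem:consistentTL} to get $\beta_{q'}(S_v)\preceq_L r_u$, hence $\beta_{q'}(S_v)\in S_u$. You then pull back to $q$ with Lemma~\ref{lem:ext_dist}. Allowing $q'=q$ merges the paper's two subcases $q\in\Xi(S_v)$ and $q\notin\Xi(S_v)$. The sub-case you flag as the main obstacle cannot occur. If $b$ lay on $\pi(d,q')$ beyond $q$, then $q\in\pi(d,b)$; if it lay before $q$, then $b\in\pi(d,q)$. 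Either way $q$ and $b$ would not be $\prec_R$-comparable, contradicting $q\prec_R b$, so the junction-vertex argument applies directly. Likewise, the case $t'=r_u$ in the first statement needs no connectivity argument, since $r_u\in S_u$.

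The genuine gap is the claim that the second statement is symmetric. It is not a mirror image of the first: both statements use the same point $b=b_{r_u}(S_v)$. The true mirror of your argument would need the hypothesis $a\prec_R q$, with $a=a_{r_w}(S_v)$ and the site $r_w$. Running your argument verbatim with $b\prec_R q$ gives $b\precxi q'$ and then only $r_u\preceq_L t'$. The equality case $t'=r_u$ would put $t'$ in $S_u$, which is the wrong conclusion. You need an extra step, and there are two ways to supply it.

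\textbf{The paper's route.} Show that $b\prec_R q$ forces $a\prec_R q$. This is immediate if $a=b$. Otherwise $a$ and $b$ both lie on $d$ with $b$ below $a$, and $\calD_R(S_v)$ misses the interior of $\overline{ab}$, so $d(S_v,\cdot)\ge 1$ on $\overline{ab}$. If $a\prec_R q$ failed, then $z_q\in\overline{ab}$, and Lemma~\ref{lem:ext_dist} gives $d(S_v,q)>d(S_v,z_q)\ge 1$, contradicting $D_q\cap S_v\neq\emptyset$. The mirrored argument with $a$ and $r_w$ then applies.

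\textbf{A direct fix within your framework.} Since $b$ is the $\precxi$-maximum of $\xi_{r_u}(S_v)$ and $b\precxi q'$ strictly, $q'\notin\xi_{r_u}(S_v)$, so $t'\neq r_u$. Hence $r_u\prec_L t'$. Because $t'\in I(S_v)$, which is the concatenation of $I(S_u)_{\le r_u}$ and $I(S_w)_{\ge r_w}$, this gives $t'\in S_w$. Then $d(t',q)\le d(t',q')\le 1$ as before.
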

\begin{proof}
We first prove the case $q\prec_R b$.

As $D_q\cap S_v\neq \emptyset$, we have $q\in \calD_R(S_v)$. Depending on whether $q\in \Xi(S_v)$, there are two subcases.

\paragraph{The subcase $q\in \Xi(S_v)$.}
If $q\in \Xi(S_v)$, then we have
            \begin{equation} \label{eq:invlrab}
                q \prec_R b
                \os 1 \implies q \precxi b
                \os 2 \implies \beta_q(S_v) \prec_L r_u
                \os 3 \implies \beta_q(S_v) \in S_u,
            \end{equation}
            where (1) is due to Lemma~\ref{lem:consistentRT} by replacing $S$ with $S_v$,
            (2) is due to Lemma~\ref{lem:consistentTL}  by replacing $S$ with $S_v$ and $r_u=\beta_b(S_v)$,
            and (3) is due to the definition of $S_u$.

As $D_q\cap S_v\neq \emptyset$, we know that $\beta_q(S_v)\in D_q$. Since $\beta_q(S_v)\in S_u$, we obtain that $D_q\cap S_u\neq\emptyset$.

\paragraph{The subcase $\boldsymbol{q\not\in \Xi(S_v)}$.}
We now discuss the case $q\not\in \Xi(S_v)$. Recall our general position assumption that $q\not\in d$. Hence,  $z_q\neq q$. Since $q$ is inside $\calD_R(S_v)$, we extend the last segment of $\pi(z_q,q)$ beyond $q$ until a point $q'$ on the boundary of $\calD_R(S_v)$; see Figure~\ref{fig:querycorrect}. Recall that the boundary of $\calD_R(S_v)$ consists of $\Xi(S_v)$ and some segments of $d$. We claim that $q'$ cannot be on $d$. To see this, notice that $\pi(z_q,q')=\pi(z_q,q)\cup \overline{qq'}$. As $z_q\in d$, $q\not\in d$, and $q\in \pi(z_q,q')$, $q'$ cannot be on $d$ (otherwise the shortest path $\pi(z_q,q')$ is just $\overline{z_qq'}$). Therefore, $q'\in \Xi(S_v)$.

As $q\prec_R b$, we next argue that $q'\prec_R b$. Indeed, as $\pi(z_q,q)\cup\overline{qq'}$ is $\pi(z_q,q')$, we have $z_{q'}=z_q$. As $q\prec_R b$, if $z_q=z_{q'}$ is below $z_b$, then it immediately follows that $q'\prec_R b$. Otherwise, $z_q=z_b$; see Figure~\ref{fig:querycorrect}. Since $\pi(z_q,q')=\pi(z_q,q)\cup\overline{qq'}$, the junction vertex $c$ of $\pi(z_q,q)$ and $\pi(z_q,b)$ is also the junction vertex of $\pi(z_q,q')$ and $\pi(z_q,b)$. Since $q\prec_R b$, the counterclockwise ordering of $\pi(z_q,q)$ and $\pi(z_q,b)$ around $c$ is $\pi(z_q,c)$, $\pi(c,q)$, and $\pi(c,b)$. As $\pi(c,q')=\pi(c,q')\cup \overline{qq'}$, the counterclockwise ordering of $\pi(z_q,q')$ and $\pi(z_q,b)$ around $c$ is $\pi(z_q,c)$, $\pi(c,q')$, and $\pi(c,b)$. We thus obtain $q'\prec_R b$.

\begin{figure}[h]
\begin{minipage}[t]{\linewidth}
\begin{center}
\includegraphics[totalheight=1.7in]{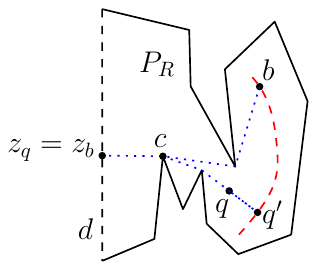}
\caption{Illustrating the case where $q\not\in \Xi(S_v)$ and $z_q=z_b$. The red curve is a portion of $\Xi(S_v)$.}
\label{fig:querycorrect}
\end{center}
\end{minipage}
\end{figure}


Replacing $q$ by $q'$ in Equation~\eqref{eq:invlrab} leads to $\beta_{q'}(S_v)\in S_u$. Let $s=\beta_{q'}(S_v)$.
Since $q'\in \calD_R(S_v)$, we have $d(s,q')\leq 1$. As $q\in \pi(d,q')$, by Lemma~\ref{lem:ext_dist}, $d(s,q)<d(s,q')\leq 1$. Hence, $s\in D_q$. Since $s\in S_u$, we obtain that $D_q\cap S_u\neq \emptyset$.

The above proves that if $q\prec_R b$, then $D_q \cap S_u\neq\emptyset$.

\paragraph{The case $\boldsymbol{b\prec_R q}$.}
We now prove that the other case $b\prec_R q$. If $a\prec_R q$, then following a symmetric argument to the above, we can prove $D_q\cap S_w\neq\emptyset$. In the following, we prove that $a\prec_R q$ must hold.

If $a=b$, then since $b\prec_R q$, it obviously holds that $a\prec_R q$. So we assume that $a\neq b$. Then, both $a$ and $b$ must be on $d$. By the definitions of $a$ and $b$ and Lemma~\ref{lem:Lvor}, $b$ is below $a$ and $\calD_R(S_v)$ does not intersect any point in the interior of $\overline{ab}$. Assume to the contrary that $a\prec_R q$ is not true. Then, since $b\prec_R q$, $z_q$ must be on $\overline{ab}$. By the definitions of $a$ and $b$, we have $d(S_v,a)=D(S_v,b)=1$. As $\calD_R(S_v)$ does not intersect any point in the interior of $\overline{ab}$, $d(S_v,p)\geq 1$ for any $p\in \overline{ab}$, and in particular, $d(S_v,z_q)\geq 1$. As $q\not\in d$, we have $z_q\in \pi(d,q)\setminus \{q\}$.
By Lemma~\ref{lem:ext_dist}, $d(S_v,q)>d(S_v,z_q)\geq 1$. We thus obtain $D_q\cap S_v=\emptyset$, contradicting $D_q\cap S_v\neq \emptyset$.

The lemma thus follows.
\end{proof}

We next discuss the case where $q$ and $b$ are not $\prec_R$-comparable.
If $q\in \pi(d,b)$, then we return $r_u$ and the correctness has already been justified.
We next assume that $b\in \pi(d,q)\setminus\{q\}$. We first have the following lemma.
\begin{lemma}
If $b\in \pi(d,q)\setminus\{q\}$, then $b=a\in \partial P_R$.
\end{lemma}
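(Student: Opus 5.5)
We argue by contradiction, using that $D_q\cap S_v\neq\emptyset$ (the standing inductive assumption in this part of the correctness proof) together with Lemma~\ref{lem:ext_dist}. Recall that $b=b_{r_u}(S_v)$ is a point of $\Xi(S_v)$, so either $b\in\Xi\ca(S_v)$, i.e., $d(S_v,b)=1$, or $b\in\Xi\sa(S_v)\subseteq\partial P_R$ with $d(S_v,b)<1$. The plan has three steps: exclude the first alternative, exclude $b$ lying on $d$, and then show $a=b$.

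\textbf{Step 1: $b\in\partial P_R$.} Suppose $b\notin\partial P_R$. Then $b\in\Xi\ca(S_v)$, so $d(S_v,b)=1$. Since $b\in\pi(d,q)\setminus\{q\}$, Lemma~\ref{lem:ext_dist} gives $d(S_v,b)<d(S_v,q)$. Hence $d(S_v,q)>1$, i.e., $D_q\cap S_v=\emptyset$, which is a contradiction.

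\textbf{Step 2: $b\notin d$.} The same argument excludes the case that $b$ is an endpoint of $\Xi(S_v)$ on $d$, because there $d(S_v,b)=1$ as well. We conclude that $b$ lies on an edge of $\partial P_R$ and $d(S_v,b)<1$.

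\textbf{Step 3: $a=b$.} Points of $\Xi\sa(S_v)$ lie strictly inside $\calD_R(S_v)$ in distance but on $\partial P$. The ends $b=b_{r_u}(S_v)$ and $a=a_{r_w}(S_v)$ are consecutive along $\Xi(S_v)$ under $\precxi$: $r_u$ and $r_w$ are adjacent in $I(S_v)$, and by Lemma~\ref{lem:consistentTL} the pieces $\xi_{r_u}(S_v)$ and $\xi_{r_w}(S_v)$ are consecutive. There are two possibilities.
\begin{itemize}
\item[(i)] $a$ and $b$ lie on the same component curve. Then the two pieces meet at a common point, which is $a=b$.
\item[(ii)] $b$ is the upper endpoint of its component curve and $a$ is the lower endpoint of the next one. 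Then $b\in d$ with $d(S_v,b)=1$, which Step 2 has excluded.
\end{itemize}
Hence $a=b$, and together with Step 1 this gives $b=a\in\partial P_R$.

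The main obstacle is Step 3. It needs care that, when $\xi_{r_u}(S_v)$ ends at a point of $\Xi\sa$, the next piece begins at that same point rather than after a gap. The star-shaped property of Voronoi regions (Lemma~\ref{lem:star_shaped}) and the consecutive-pieces consequence of Lemma~\ref{lem:consistentTL} should suffice: within one component curve, $\Xi(S_v)$ is a connected curve partitioned into consecutive closed pieces $\xi_s(S_v)$, so adjacent pieces share their boundary point.
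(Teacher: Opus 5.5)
Your proof is correct and is essentially the paper's argument. Both show via Lemma~\ref{lem:ext_dist} that $d(S_v,b)=1$ would force $d(S_v,q)>1$. This rules out $b\notin\partial P_R$, and it also rules out $b$ being the upper endpoint on $d$ of a component curve, which is the only way $a\neq b$ can occur (the paper asserts this ``by definition'', and you justify it via the consecutiveness of $\xi_{r_u}(S_v)$ and $\xi_{r_w}(S_v)$).

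One small imprecision: your Step~2 claim that every endpoint of $\Xi(S_v)$ on $d$ has $d(S_v,b)=1$ fails at the vertex $z^*_1$, which lies in $\partial P_R$ and may have $d(S_v,z^*_1)<1$. This does not break the proof, because in case (ii) another component curve lies above $b$ on $d$, so $b$ is an interior point of $d$, where $d(S_v,b)=1$ does hold.
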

\begin{proof}
We first prove that $b=a$. Assume to the contrary that $a\neq b$. Then, by definition, $b$ is on $d$ and $d(S_v,b)=1$. As $b\in \pi(d,q)\setminus\{q\}$, by Lemma~\ref{lem:ext_dist}, we have $d(S_v,b)<d(S_v,q)$. Hence, $d(S_v,q)>1$, implying that $D_q\cap S_v=\emptyset$. But this contradicts $D_q\cap S_v\neq \emptyset$.

Similarly, we can argue that $b\in \partial P_R$. Assume to the contrary this is not true. Then, we also have $d(S_v,b)=1$. Following the same analysis as above, we can obtain contradiction.
\end{proof}

By the above lemma, it suffices to consider the case where $b=a$ and $b$ is on an edge $e$ of $\partial P_R$. According to our algorithm, if $P$ is on the right side of $\overrightarrow{p_bb}$, then we descend to $u$; otherwise, we descend to $w$. The following lemma justifies the correctness.

\begin{lemma}
If $P$ is on the right side of $\overrightarrow{p_bb}$, then $D_q\cap S_u\neq\emptyset$; otherwise, $D_q\cap S_w\neq\emptyset$.
\end{lemma}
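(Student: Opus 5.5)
Under the standing assumptions (both $r_u$ and $r_w$ exist, $D_q\cap S_v\neq\emptyset$, and by the preceding lemma $b=a$ lies in the interior of an edge $e$ of $\partial P_R$), I would perturb $q$ back to the previous case: find a point $q'$ with $b\prec_R q'$ or $q'\prec_R b$ for which Lemma~\ref{lem:invlr}, or its argument, still applies. The key fact is that $b=a$ sits on $\Xi\sa(S_v)$, which is a union of pieces of $e$, and $d(S_v,b)<1$. Near $b$ the portion of $e$ to one side of $b$ lies in $\xi_{r_u}(S_v)$ and the other side in $\xi_{r_w}(S_v)$, since $b$ is the common end $b_{r_u}(S_v)=a_{r_w}(S_v)$. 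So $e$ near $b$ is a piece of $\Xi(S_v)$ and is traversed in $\precxi$ order. The orientation test on $\overrightarrow{p_bb}$ (with $p_b$ the vertex of $e$ on $\pi(z_q,b)$) tells which direction along $e$, from $b$ away from $p_b$ or toward it, is $\precxi$-increasing.

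The first step is to show that $\pi(z_q,q)$ leaves $b$ into the interior of $P$. It passes through $p_b$, then along $e$ to $b$, then bends into $P$ at $b$. Since $b$ is not a vertex, the path can only continue past $b$ if it goes straight along $e$ or into the side of $e$ where $P$ lies. Straight continuation would make $q\in e$, and I would handle that by a similar argument.

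The second step is a case split on which side of $\overrightarrow{p_bb}$ contains $P$. Suppose $P$ lies on the right. I claim that the points $b'\in e$ just beyond $b$ (on the far side from $p_b$) satisfy $b'\prec_R q$ or the reverse in a controlled way, and likewise for the points just before $b$. Concretely, $\pi(d,b')$ equals $\pi(d,b)$ extended along $e$, while $\pi(d,q)$ turns right into $P$ at $b$. Using the counterclockwise convention in the definition of $\prec_R$ at junction vertex $b$, and Lemma~\ref{lem:consistentRT}, the $\Xi(S_v)$ points on $e$ adjacent to $q$'s branch should belong to $\xi_{r_u}(S_v)$. I then want $\beta_q(S_v)\in S_u$. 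For this I would mimic the subcase $q\notin\Xi(S_v)$ of Lemma~\ref{lem:invlr}: extend $\pi(z_q,q)$ beyond $q$ to a point $q^*\in\Xi(S_v)$. Since $\pi(z_q,q^*)$ branches off $\pi(z_q,b)$ at $b$ on the $P$-side, the fact that $\calD_R(S_v)$ is simply connected with nested paths (Lemma~\ref{lem:boundary}) places $q^*$ on the $\precxi$-side of $b$ that the orientation dictates, here $q^*\precxi b$. Lemma~\ref{lem:consistentTL} then gives $\beta_{q^*}(S_v)\preceq_L r_u$, so $\beta_{q^*}(S_v)\in S_u$. Lemma~\ref{lem:ext_dist} gives $d(\beta_{q^*}(S_v),q)<d(\beta_{q^*}(S_v),q^*)\le 1$. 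The other orientation is symmetric and yields $S_w$.

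I expect the main obstacle to be the geometric step linking the orientation of $P$ relative to $\overrightarrow{p_bb}$ to the $\precxi$ position of $q^*$ relative to $b$. This needs care with the counterclockwise tree order used in the proof of Lemma~\ref{lem:Risorder}, and with the convention that $P_R$ lies locally right of $d$. I would settle it by a planarity argument: $\pi(b,q^*)$ separates $\calD_R(S_v)$, and the boundary arc of $\Xi(S_v)$ along $e$ beyond $b$ lies on the predicted side.
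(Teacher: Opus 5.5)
Your skeleton is the paper's. You extend $\pi(z_q,q)$ beyond $q$ to a point $q^*$ on the curve $C=\partial\calD\setminus d$ of the component $\calD$ of $\calD_R(S_v)$ containing $q$. You then show $q^*\precxi b$ because $\pi(z,b)$ cuts $\calD$ and $\pi(b,q^*)\setminus\{b\}$ cannot meet it, and you finish with Lemmas~\ref{lem:consistentTL} and~\ref{lem:ext_dist}. The gap is in exactly the step you flag as the crux, because the local picture at $b$ you build it on cannot occur.

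A shortest path bends only at vertices of $P$, and $b$ is an interior point of the edge $e$. So $\pi(z_q,q)$ does not ``bend into $P$ at $b$'' and does not ``branch off $\pi(z_q,b)$ at $b$.'' It arrives along $\overline{p_bb}\subseteq e$ and must continue straight along $e$ past $b$, away from $p_b$. This does not force $q\in e$, since the path may run to the far endpoint of $e$ and bend there. Consequently:
\begin{itemize}
\item your main case is empty;
\item the case you deferred ``by a similar argument'' is the only one;
\item your second step is vacuous: for $b'\in e$ just past $b$, either $b'\in\pi(d,q)$ or $q\in\pi(d,b')$, so $b'$ and $q$ are not $\prec_R$-comparable and Lemma~\ref{lem:consistentRT} says nothing (there is no junction vertex at $b$).
\end{itemize}

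What is missing is the orientation fact that actually fixes the side. You rightly note that the test on $\overrightarrow{p_bb}$ decides which direction along $e$ is $\precxi$-increasing, but you never determine which. Since $P_R$ is locally to the right of $d$, $\calD$ lies to the left of $C$ when $C$ is traversed from its lower to its upper endpoint. Near $b$, $C$ runs along $e$ with $\calD$ on the $P$-side. So if $P$ is to the right of $\overrightarrow{p_bb}$, moving from $p_b$ through $b$ is $\precxi$-decreasing. Two consequences follow:
\begin{itemize}
\item the part of $\overline{p_bb}$ near $b$, which lies on $\pi(z,b)$, belongs to the portion $C_1$ of $C$ after $b$;
\item the part of $e$ just past $b$ belongs to the portion $C_2$ of $C$ before $b$ (and to $\xi_{r_u}(S_v)$).
\end{itemize}
Hence $\pi(b,q^*)$ starts along $C_2$. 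Since $\pi(b,q^*)\setminus\{b\}$ neither meets $\pi(z,b)$ nor returns to $d$, it cannot reach $C_1\setminus\{b\}$. So $q^*\in C_2$, i.e.\ $q^*\precxi b$. With this in place of your ``branches into the $P$-side'' step, the rest of your argument goes through as in the paper, and the other orientation is symmetric.
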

\begin{proof}
We only prove the ``right side'' case since the other case can be handled analogously.

Let $\calD$ be the connected component of $\calD_R(S_v)$ containing $q$. We claim that $\pi(d,q)\subseteq \calD$. To see this, let $s=\beta_q(S_v)$. As $D_q\cap S_v\neq \emptyset$, $s\in D_q$ and thus $q\in D_s$ and $d(s,q)\leq 1$. Hence, $D_q\cap P_R\subseteq \calD$.
By Lemma~\ref{lem:ext_dist}, $d(s,p)\leq d(s,q)\leq 1$ holds for any $p\in \pi(d,q)$. Hence, $\pi(d,q)\subseteq D_q\subseteq \calD$. As $b\in \pi(d,q)$, $b\in \calD$.
Let $C$ denote the portion of $\partial \calD$ excluding the points on $d$. By Lemma~\ref{lem:boundary}, $C$ is a curve connecting two points $c_1$ and $c_2$ of $d$. We assume that $c_1$ lies above $c_2$.

Let $z=z_b$, which is also $z_q$ since $b\in \pi(d,q)$. As $b\in e$ and $b\in \calD$, $b$ must be on $C$. The point $b$ divides $C$ into two sub-curves, with one $C_1$ consists of all points after $b$ under the traversal order $\precxi$ and the other one $C_2=C\setminus C_1$. See Figure~\ref{fig:querycorrect20}.
Since $\pi(z,b)\subseteq \calD$, both $z$ and $b$ are on the boundary of $\calD$. The path $\pi(z,b)$ partitions $\calD$ into two portions $\calD_1$ and $\calD_2$, with $\calD_1$ bounded by $\pi(z,b)$, $C_1$, and $\overline{c_1z}$, and $\calD_2$ bounded by $\pi(z,b)$, $C_2$, and $\overline{c_2z}$.

\begin{figure}[h]
\begin{minipage}[t]{\linewidth}
\begin{center}
\includegraphics[totalheight=1.9in]{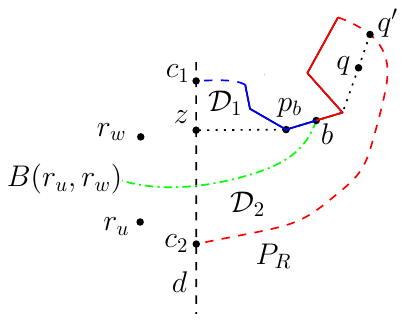}
\caption{The blue segments and the blue dashed curve constitute $C_1$, while the red segments and the red dashed curve constitute $C_2$. The blue and red segments are in $\partial P_R$.}
\label{fig:querycorrect20}
\end{center}
\end{minipage}
\end{figure}

As $q\in \calD$, if we extend the last edge of $\pi(z,q)$ beyond $q$, it will hit a point $q'\in \partial \calD$; see Figure~\ref{fig:querycorrect20}. Hence, $\pi(z,q')=\pi(z,q)\cup \overline{qq'}$. As $z\in d$ and $q\not\in d$, $q'$ cannot be on $d$. Hence, $q'\in C$. We claim that $q'$ must be on $C_2$. To see this, since $P$ is on the right side of $\overrightarrow{p_bb}$, $\pi(s,q')$ will enter $\calD_2$ right after $b$. Because $C_1$ is bounding $\calD_1$, to have $q'\in C_1$, $\pi(b,q')\setminus \{b\}$ must intersect $\pi(z,b)$. But this is not possible since $\pi(z,b)$ and $\pi(b,q')\setminus \{b\}$ are two subpaths of $\pi(z,q')$.

Since $q'\in C_2$ (and $q'\neq b$), by the definition of $C_2$, $q'\precxi b$. By Lemma~\ref{lem:consistentTL}, $\beta_{q'}(S_v)\prec_L \beta_{b}(S_v)=r_u$. As $r_u\in S_u$, we have $\beta_{q'}(S_v)\in S_u$. Let $s=\beta_{q'}(S_v)$. As $q\in \pi(d,q')$, by Lemma~\ref{lem:ext_dist}, $d(s,q)\leq d(s,q')$. Since $q'\in C_2$, $d(s,q')\leq 1$. Therefore, $d(s,q)\leq 1$, and thus $s\in D_q$. Since $s\in S_u$, we conclude that $D_q\cap S_u\neq \emptyset$.

\end{proof}

\subsection{Constructing $\boldsymbol{\Psi(S)}$}
\label{sec:dsconstruct}


Before describing the construction algorithm for $\Psi(S)$, we first give an algorithm to compute the ends in the following lemma. The algorithm will be needed in our construction algorithm for $\Psi(S)$ and also in our algorithm for handling deletions in Section~\ref{sec:deletion}. Again, we assume that a GH data structure has been computed for $P$.

\begin{lemma} \label{lem:ends_compute}
For any $S' \subseteq S$ and any $s \in I(S')$, if we know the predecessor and successor of $s$ in $I(S')$ under the order $\prec_L$, then the ends $a_s(S')$ and $b_s(S')$ of $\xi_s(S')$ can be computed in $O(\log^2 m)$ time.
\end{lemma}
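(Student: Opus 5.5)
By symmetry it suffices to compute the lower end $a_s(S')$; the upper end $b_s(S')$ is handled the same way with the successor in place of the predecessor. Let $r$ be the predecessor of $s$ in $I(S')$ under $\prec_L$. If $r$ does not exist, then $a_s(S')$ is the lower endpoint of the component curve of $\Xi(\{s\})$ that starts $\xi_s$: it is the lowest point of $d\cap D_s$, and hence the point of $d$ where $d(s,\cdot)=1$. Since $d(s,\cdot)$ restricted to $d$ is convex (Lemma~\ref{lem:ext_dist} style monotonicity), this point can be found by binary search on $d$ with GH distance queries, or directly in $O(\log m)$ time from the funnel $F_s(d)$.

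\textbf{Main claim.} When $r$ exists, I will show that $a_s(S')$ is determined by $\{r,s\}$ alone: $a_s(S')=a_s(\{r,s\})$. The argument follows the proof of Lemma~\ref{lem:concatenate}, using Lemma~\ref{lem:consistent2v3} to rule out interference from any third site $t$. By Lemmas~\ref{lem:consistentTL} and~\ref{lem:consistentRT}, the pieces $\xi_r,\xi_s$ are consecutive along $\Xi(S')$. So near $a_s(S')$ only $r$ and $s$ are relevant, and $a_s(S')$ falls into one of three cases:
\begin{itemize}
\item[(i)] it is a point of $B(r,s)$ at distance $1$ from both sites;
\item[(ii)] it is a point of $B(r,s)\cap\partial P_R$ at distance less than $1$;
\item[(iii)] $\xi_r$ and $\xi_s$ lie on different component curves, and $a_s$ is the lowest point of $D_s\cap d$.
\end{itemize}

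\textbf{Computing the candidate.} Case (iii) is handled as above. For cases (i) and (ii), I trace the part of the bisector $B(r,s)$ inside $P_R$. It starts on $d$ (Lemma~\ref{lem:star_shaped}(4)) and ends on $\partial P_R$. Along it, $d(s,\cdot)$ increases; this monotonicity must be verified, using Lemma~\ref{lem:ext_dist} and star-shapedness. I then binary search along the bisector for the point where $d(s,\cdot)=1$. If that point is reached before $\partial P_R$, we are in case (i); otherwise the bisector endpoint on $\partial P_R$ gives case (ii). If $d(s,\cdot)>1$ already where the bisector meets $d$, we are in case (iii).

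\textbf{Main obstacle.} The hard step is doing the bisector binary search in $O(\log^2 m)$ time without building the bisector explicitly. My plan is a two-level search. The outer level runs over the $O(\log m)$-height hierarchical decomposition underlying the GH structure (equivalently, over the vertices of the shortest path maps). In each outer step, a GH query in $O(\log m)$ time identifies the anchors of $r$ and $s$ at a probe point; for fixed anchors the bisector is a single hyperbolic arc. The inner level then locates the crossing with $\partial P_R$ or with the distance-$1$ level on that arc in $O(1)$ time. This is the approach used in~\cite{ref:AgarwalIm18} to find a bisector–segment intersection in $O(\log^2 m)$ time. I expect adapting that routine, together with the unit-distance stopping condition, to be the most delicate part.
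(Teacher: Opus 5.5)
Your proposal matches the paper's proof: you use Lemma~\ref{lem:consistentTL} to reduce to the two-site envelope with the neighbor of $s$ in $I(S')$. In the different-component case you locate the endpoint of $d\cap D_s$ via the funnel $F_s(d)$ in $O(\log m)$ time; otherwise you check the bisector's endpoint on $\partial P_R$ and then binary search along $B(r,s)\cap P_R$, on which $d(s,\cdot)$ increases, using $O(\log m)$ GH queries as in Agarwal--Arge--Staals. The differences are minor: you detect the different-component case by testing $d(s,\cdot)>1$ at $B(r,s)\cap d$, whereas the paper tests whether the neighbor's distance to the relevant endpoint $z'$ of $d\cap D_s$ exceeds $1$; and you should note that this endpoint can be an endpoint of $d$ itself (e.g.\ $z^*_0$ when $d(s,z^*_0)\le 1$) rather than a point with $d(s,\cdot)=1$, a case the paper handles explicitly.
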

\begin{proof}
    We will show how to compute $b_s(S')$ using the successor $t$ of $s$ in $I(S')$.
    Computing $a_s(S')$ can be done similarly using the predecessor of $s$.

    Recall that $\Xi(S')$ is a collection of disjoint curves whose endpoints are on $d$.
    Let $C$ be the curve containing $b_s(S')$. Note that there are two cases for $b_s(S')$: (1) it is the upper endpoint of $C$; (2) it is in the interior of $C$.
    In the first case, $b_s(S')$ is the upper endpoint of $d\cap D_s$. In the second case, by Lemma~\ref{lem:consistentTL}, $a_t(S') = b_s(S')= B(s,t) \cap \Xi(\{s, t\})$.

    Let $d_s=d\cap D_s$ and $z'$ the upper endpoint of $d_s$. Assuming that $z'$ has been computed, we can determine whether $b_s(S')$ belongs to the first or second case as follows. If $t$ does not exist, then obviously $b_s(S')$ belongs to the first case. If $t$ exists, then since $d(t, z') = 1$ is not possible due to our general position assumption, $d(t, z') > 1$ if and only if $b_s(S')$ belongs to the first case. As such, once $z'$ is known, we can determine whether it is the first or second case in $O(\log m)$ time by computing $d(t,z')$ using the GH data structure.

    In what follows, we first discuss how to compute $z'$ and then present an algorithm to compute $B(s,t) \cap \Xi(\{s, t\})$ in the second case.

    \paragraph{Computing the point $\boldsymbol{z'}$.}
    First of all, if $d(s, z^*_1) \leq 1$, we immediately obtain $z' = z^*_1$. Otherwise, we do the following. Let $c$ be the junction vertex of $\pi(s,z^*_0)$ and $\pi(s,z^*_1)$. We consider the funnel $F_s(d)$, formed by $\pi(c,z^*_0)$, $\pi(c,z^*_1)$, and $d$. Define $z^* = \argmin_{z \in d} d(s, z)$ (see Figure~\ref{fig:endscompute}), which can be computed in $O(\log m)$ time in a similar way to the algorithm of Lemma~\ref{lem:projection}.

    \begin{figure}
        \centering
        \includegraphics[width=.45\textwidth]{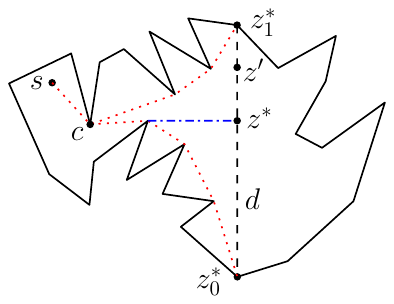}
        \caption{Illustrating the proof of Lemma~\ref{lem:ends_compute}.}
        \label{fig:endscompute}
    \end{figure}

    Note that $z'\in \overline{z^*z^*_1}$. In addition, if we move a point $p\in \overline{z^*z^*_1}$ from $z^*$ to $z^*_1$, $d(s,p)$ is strictly increasing~\cite{ref:PollackCo89}. Hence, we can do binary search on $\pi(c, z^*_0)$ and $\pi(c, z^*_1)$ to compute $z'$. Using the GH data structure, we can obtain a binary search tree representing $\pi(c, z^*_0)$ (resp., $\pi(c, z^*_1)$). With these binary search trees, $z'$ can be computed in $O(\log m)$ time (similar methods have been used before~\cite{ref:AgarwalIm18,ref:OhVo20}).

    \paragraph{Computing the intersection $\boldsymbol{B(s,t) \cap \Xi(\{s, t\})}$.}
    We now discuss how to compute $B(s,t) \cap \Xi(\{s, t\})$ in the second case. Let $B_R(s,t)=B(s,t)\cap P_R$. Hence, $b_s(S')=B_R(s,t)\cap \Xi(\{s, t\})$ and $B_R(s,t)\neq \emptyset$. To simplify the notation, let $b^*=b_s(S')$.

    Since $s,t\in P_L$, it is known that $B_R(s,t)$ is an $x$-monotone curve connecting a point $p_1 \in d$ with a point $p_2 \in \partial P_R$ and no interior point of the curve intersects $\partial P$~\cite{ref:AgarwalIm18}. Furthermore, if we move a point $p$ from $p_1$ to $p_2$ along $B_R(s,t)$, then $d(s,p)$ is strictly increasing~\cite{ref:AronovOn89}. We will use this property to perform binary search on $B_R(s,t)$ to search $b^*$.

    To compute $b^*=B_R(s,t) \cap \Xi(\{s, t\})$, we first compute $p_1$ and $p_2$, which can be done in $O(\log^2 m)$ time~\cite{ref:AgarwalIm18} (alternatively, we could also use Lemmas~\ref{lem:algointerdiagonal} and \ref{lem:bisectorintersection}). If $d(s, p_2) \leq 1$, then $p_2 \in \Xi(\{s, t\})$ and thus $b^* = p_2$.
    Otherwise, $b^*$ is the point on $B_R(s,t)$ such that $d(s, b^*) = 1$ and we can find $b^*$ by binary searh as follows.

    Using the GH data structure, it is possible to obtain a representation of $B_R(s,t)$ so that the $i$-th vertex of $B_R(s,t)$ from $p_1$ can be computed in $O(\log m)$ time~\cite{ref:AgarwalIm18}.
    To find $b^*$, we can do binary search on $B_R(s,t)$ by accessing the $i$-th vertex in each iteration to find the predecessor and successor of $b^*$ in $B_R(s,t)$. More specifically, in each iteration, let $v$ be the vertex of $B_R(s,t)$ examined in the iteration. We compare $d(s,v)$ with $1$. If $d(s,v)=1$, then we simply return $b^*=v$. If $d(s,v)<1$, then the predecessor of $b^*$ is $v$ or after $v$. If $d(s,v)>1$, then the predecessor is before $v$.
    Each iteration takes $O(\log m)$ time. As $B_R(s,t)$ has $O(m)$ vertices, there are $O(\log m)$ iterations in the binary search. Hence, the predecessor can be found in $O(\log^2 m)$ time. The successor of $b^*$ can be determined similarly. Note that the edge of $B_R(s,t)$ connecting the predecessor and successor of $b^*$ contains $b^*$. Using the edge we can compute $b^*$ in additional $O(1)$ time.

    Alternatively, we could compute $b^*$ use a similar approach to the algorithm for computing $B_{ij}\cap B_{jk}$ in the proof of Lemma~\ref{lem:bisectorintersection}, as follows. Let $c_s$ be the junction vertex of $\pi(s,p_1)$ and $\pi(s,p_2)$, and $\gamma_s$ concatenation of $\pi(c_s,p_1)$ and $\pi(c_s,p_2)$. Similarly, we define $c_t$ and $\gamma_t$ for $t$. We do binary search on the vertices of $\gamma_s$. In each iteration, given a vertex $p_s\in \gamma_s$, we can compute in $O(\log m)$ time $f(p_s)\in \gamma_t$  and a point $p$ that is the intersection of $B_R(s,t)$ and the tangent ray at $p_s$. By comparing $d(s,p)$ with $1$, we can decide which way to search on $\gamma_s$. Since each iteration takes $O(\log m)$ time, $b^*$ can be computed in $O(\log^2 m)$ time.

    In summary, in the second case $b_s(S')$ can be computed in $O(\log^2 m)$ time.
\end{proof}



\subsubsection{Construction algorithm of $\boldsymbol{\Psi(S)}$}

We now describe the construction algorithm for $\Psi(S)$.
We construct $\Psi(S)$ recursively. For each node $v$, we need to construct the node fields $arcs(v)$, $X(v)$, $Y_1(v)$, and $Y_2(v)$. Algorithm~\ref{algo:build} gives the full pseudocode.

Initially when $v$ is a leaf, $S_v$ contains a single point $s$. Then, we set $arcs(v)=\{s\}$, and set $X(v)$, $Y_1(v)$, and $Y_2(v)$ all to $\emptyset$. In addition, we compute $ends(s) = \{a_s(\{s\}), b_s(\{s\})\}$ by Lemma~\ref{lem:ends_compute}.

In general, suppose $v$ has a left child $u$ and a right child $w$ such that $||S_u| - |S_w|| \leq 1$ and the subtrees $\Psi(S_u)$ and $\Psi(S_w)$ at both $u$ and $w$ have been constructed. In particular, for each point $s\in S_u$, $ends(s) = \{a_s(S_u), b_s(S_u)\}$ are available, and for each $s\in S_w$, $ends(s) = \{a_s(S_w), b_s(S_w)\}$ are also available. We  construct $\Psi(S_v)$ as follows.

 We first compute $r_u$ and $r_w$, whose algorithm will be given later. We then set $X(v)=\{r_u,r_w\}$. Next we split $arcs(u)$ at $r_u$ and split $arcs(w)$ at $r_w$, and merge the portion of $arcs(u)$ before and including $r_u$ with the portion of $arcs(w)$ after and including $r_w$ and set $arcs(v)$ to the merged list. The remaining portion of $arcs(u)$ is the new $arcs(u)$, and the remaining portion of $arcs(w)$ is the new $arcs(w)$. In addition, we set $Y_1(v) = \{b_{r_u}(S_u), a_{r_w}(S_w)\}$ (these two ends are already available), and compute $b_{r_u}(S_v)$ and $a_{r_w}(S_v)$ using Lemma~\ref{lem:ends_compute} (note that we can access the predecessors and successors of $r_u$ and $r_w$ from $arcs(v)$), and add them to $Y_2(v)$. Note that for each point $s$ in $arcs(v)$ before $r_u$, we have $a_s(S_v)=a_s(S_u)$ and $b_s(S_v)=b_s(S_u)$, and therefore, we do not need to reset $ends(s)$. Similarly, for each point $s$ in $arcs(v)$ after $r_w$, we have $a_s(S_v)=a_s(S_w)$ and $b_s(S_v)=b_s(S_w)$, and therefore, we do not need to reset $ends(s)$. But we reset $ends(r_u)$ to $\{a_{r_u}(S_u), b_{r_u}(S_v)\}$ and reset $ends(r_w)$ to $\{a_{r_w}(S_v), b_{r_w}(S_w)\}$.

\begin{algorithm}
    \caption{Build($S$)} \label{algo:build}
    \KwIn{$S$}
    \KwOut{$\Psi(S)$}
    Create a node $v$\;
    $S_v\gets S$\;
    \If{$|S_v| = 1$}{
        Let $s$ be the only point in $S_v$\;
        $arcs(v) \gets \{s\}$\;
        Compute $a_s(\{s\})$ and $b_s(\{s\})$\;
        $ends(s) \gets \{a_s(\{s\}), b_s(\{s\})\}$\;
    }
    \If{$|S_v| \geq 2$}{
        $S_u, S_w \gets$ split $S_v$ in equal halves where points of $S_u$ are before $S_w$ under $\prec_L$\;
        $u \gets$ Build$(S_u)$\;
        $w \gets$ Build$(S_w)$\;
        Set $u$ and $w$ as the left and right children of $v$, respectively\;
        Compute $r_u$ and $r_w$\;
        $X(v) \gets \{r_u, r_w\}$\;
        \If{both $r_u$ and $r_w$ exist}{
            $I(S_u)_{\leq r_u}, I(S_u)_{> r_u} \gets$ split $arcs(u)$ at $r_u$\;
            $I(S_w)_{< r_w}, I(S_w)_{\geq r_w} \gets$ split $arcs(w)$ at $r_w$\;
            $arcs(v) \gets I(S_u)_{\leq r_u} + I(S_w)_{\geq r_w}$\;
            $arcs(u) \gets I(S_u)_{> r_u}$\;
            $arcs(w) \gets I(S_w)_{< r_w}$\;
            $a_{r_u}(S_u), b_{r_u}(S_u) \gets ends(r_u)$\;
            $a_{r_w}(S_w), b_{r_w}(S_w) \gets ends(r_w)$\;
            Compute $b_{r_u}(S_v)$ and $a_{r_w}(S_v)$\;
            $Y_1(v) \gets \{b_{r_u}(S_u), a_{r_w}(S_w)\}$\;
            $Y_2(v) \gets \{b_{r_u}(S_v), a_{r_w}(S_v)\}$\;
            $ends(r_u) \gets \{a_{r_u}(S_u), b_{r_u}(S_v)\}$\;
            $ends(r_w) \gets \{a_{r_w}(S_v), b_{r_w}(S_w)\}$\;
        }
        \ElseIf{$r_u$ exists but $r_w=null$}{
            $arcs(v) \gets arcs(u)$\;
            $arcs(u) \gets \emptyset$\;
        }
        \ElseIf{$r_w$ exists but $r_u=null$}{
            $arcs(v) \gets arcs(w)$\;
            $arcs(w) \gets \emptyset$\;
        }
    }
    \Return{$v$}\;
\end{algorithm}


For the time analysis, let $T(n)$ be the time that it takes to compute $\Psi(S_v)$ with $S_v=S$.
Using the selection algorithm, we can compute $S_u$ and $S_w$ using $O(n)$ $\prec_L$-comparisons. Each such comparison can be resoved in $O(\log m)$ time by two geodesic distance queries using the GH-data structure.
For computing $r_u$ and $r_w$, we will give an $O(n \log m)$-time algorithm.
By Lemma~\ref{lem:ends_compute}, we can compute $b_{r_u}(S_v)$ and $a_{r_w}(S_v)$ in $O(\log^2 m)$ time.
The rest of the algorithm for $v$ takes $O(n)$ time.
Therefore, we have the following recurrence $T(n) = 2 T(n / 2) + O(n \log m) + O(\log^2 m)$, which solves to $T(n) = O(n \log n \log m + n \log^2 m)$.

In summary, assuming that a GH data structure has been constructed for $P$, $\Psi(S)$ can be built in $O(n \log n \log m + n \log^2 m)$ time.

\subsubsection{Algorithm for computing $\boldsymbol{r_u}$ and $\boldsymbol{r_w}$}
We now present an algorithm to compute $r_u$ and $r_w$ in $O(|S_v|\log m)$ time.
We will do this by scanning $I(S_u)$ and $I(S_w)$ (which are stored in $arcs(u)$ and $arcs(w)$, respectively). Recall that $I(S_u)$ is a list of points ordered by $\prec_L$. Let $first(I(S_u))$ and $last(I(S_u))$ represent the first and last points of $I(S_u)$, respectively. For each point $p\in I(S_u)$, we define $prev(p)$ and $next(p)$ as its previous and next points in $I(S_u)$, respectively. We define the same notations for $I(S_w)$.

Our algorithm starts with $s = start(I(S_u))$ and $t = start(I(S_w))$ and scan the two lists $I(S_u)$ and $I(S_w)$.
For the sake of brevity, we let $a_s=a_s(S_u)$, $b_s=b_s(S_u)$, $a_t=a_t(S_w)$, and $b_t=b_t(S_w)$.
Algorithm~\ref{algo:iuiw} gives the pseudocode.

\begin{algorithm}
    \caption{Compute $r_u$ and $r_w$} \label{algo:iuiw}
    \KwIn{$I(S_u)$ and $I(S_w)$, stored in $arcs(u)$ and $arcs(w)$, respectively.}
    \KwOut{$r_u$ and $r_w$.}
    $s \gets first(I(S_u))$   \tcp*[l]{Let $a_s, b_s$ denote the endpoints of $\xi_s(S_u)$.}
    $t \gets first(I(S_w))$ \tcp*[l]{Let $a_t, b_t$ denote the endpoints of $\xi_t(S_w)$.}
    \While{True}{
        \If{$d(s,b_t)<d(t,b_t)$}
        {
            \If{$t\neq last(I(S_w))$}
            {
                $t \gets next(t)$\label{ln:advancet}\;
            }
            \Else{
                $r_w\gets null$ \;
                $r_u\gets last(I(S_u))$\;
                \Return{$r_u, r_w$}\;
            }
        }
        \Else{
            \If{$d(s,a_s)\leq d(t,a_s)$}
            {
                \If{$s\neq last(I(S_u))$}
                {
                    $s \gets next(s)$\label{ln:10}\;
                }
                \Else
                {
                    $r_u\gets s$ \;
                    $r_w\gets t$\;
                    \Return{$r_u, r_w$}\;
                }
            }
            \Else
            {\label{ln:20}
                \If{$s=first(I(S_u))$}
                {
                    $r_u\gets null$\;
                    $r_w\gets first(I(S_w))$\;
                }
                \Else{\label{ln:30}
                    $r_u\gets prev(s)$\;
                    \While{$d(prev(s),b_t)<d(t,b_t)$\label{ln:while}}
                    {
                        $t\gets next(t)$\;
                    }
                    $r_w\gets t$\;
                }
                \Return{$r_u, r_w$}\;\label{ln:final}
            }
        }
    }
\end{algorithm}


We describe the algorithm and in the meanwhile argue its correctness. The algorithm maintains the {\em invariants} that $I(S_u)_{< s} \subseteq I(S_v)$ and $I(S_w)_{< t} \cap I(S_v) = \emptyset$. Each iteration (the while loop in the pseudocode) either advances one of $s$ and $t$ (with $s \gets next(s)$ and $t \gets next(t)$), or find $r_u$ and $r_w$ and halt the algorithm.

In the beginning of each iteration, we first check if $d(s,b_t)<d(t,b_t)$. We have the following lemma.

\begin{lemma} \label{lem:tnotin}
    If $d(s, b_t) < d(t, b_t)$, then $t \notin I(S_v)$.
\end{lemma}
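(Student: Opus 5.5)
We show that $t\notin I(S_v)$ by arguing that no point of $\xi_t(S_w)$ can survive in $\Xi(S_v)$. Since $S_w\subseteq S_v$, any point $p$ with $\beta_p(S_v)=t$ and $p\in\Xi(S_v)$ satisfies $d(t,p)\le 1$ and $d(t,p)\le d(S_w,p)$. It therefore lies in $\xi_t(S_w)$: it is on $\Xi(S_w)$, because $d(S_w,p)\ge d(S_v,p)$ and $p$ is on the boundary or on $\partial P_R$. So it suffices to show that every $p\in\xi_t(S_w)$ has a point of $S_v$ strictly closer than $t$, or at least that such $p$ is not in $\xi_t(S_v)$. The witness will be $s$, or something at least as close via $S_u$.

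\textbf{Step 1 (at $b_t$).} By hypothesis $d(s,b_t)<d(t,b_t)\le 1$, so $b_t$ is strictly closer to $s$ than to $t$.

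\textbf{Step 2 (other points of $\xi_t(S_w)$).} Take any $p\in\xi_t(S_w)$ with $p\neq b_t$, and suppose for contradiction that $d(t,p)\le d(s,p)$. Apply Lemma~\ref{lem:consistent2v3} to three points ordered under $\prec_L$; recall that every point of $S_u$ precedes every point of $S_w$, so $s\prec_L t$. For a third point take $t'=next(t)$ in $I(S_w)$, whose region $\xi_{t'}(S_w)$ begins at $b_t$. Then $t\in I(\{s,t,t'\})$, because $t$ owns $p$ there. The lemma, applied with $r=s$ at the point $b_t$, would give $d(s,b_t)<d(t',b_t)$, which is consistent and yields nothing. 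So I would instead argue directly with the bisector $B(s,t)$.

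In $\vd(\{s,t\})$, the point $p$ lies in $\R(t)$ and $b_t$ lies in $\R(s)$. Both $\pi(t,p)$ and $\pi(t,b_t)$ cross $d$, and by Lemmas~\ref{lem:star_shaped} and \ref{lem:Lvor} the crossing of $\pi(s,\cdot)$ lies below that of $\pi(t,\cdot)$. The arc of $\Xi(S_w)$ from $p$ to $b_t$, together with the two paths $\pi(d,p)$ and $\pi(d,b_t)$ inside the component, bounds a region. Since $B_R(s,t)$ is a single curve from $d$ to $\partial P_R$ that meets $d$ only once, it must cross this arc between $p$ and $b_t$ exactly once. Orienting by $\precxi$ (Lemmas~\ref{lem:consistentRT} and \ref{lem:consistentTL}), the points of the arc after the crossing belong to $s$, and the points before it to $t$.

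\textbf{Step 3 (main obstacle).} This orientation is the delicate point. I want to show that along $\xi_t(S_w)$ traversed in $\precxi$ order, the $s$-side comes \emph{after} the $t$-side. That is forced by $s\prec_L t$ only when the points are nearest within $\{s,t\}$, and the argument must be localized. The plan is to derive a contradiction from a point $p\prec_\Xi b_t$ where $t$ beats $s$ while $s$ beats $t$ at $b_t$. In that case the segment of $d$ reachable from $b_t$ by $\pi(s,b_t)$ lies above where $\pi(t,p)$ meets $d$, and the star-shapedness and non-crossing argument used in the proof of Lemma~\ref{lem:consistent2v3} gives crossing paths, which is a contradiction. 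Once this is established, every point of $\xi_t(S_w)$ before $b_t$ is also closer to $s$, because $d(s,\cdot)<d(t,\cdot)$ can only switch once along the arc and that switch would have to go in the wrong direction. Hence $\xi_t(S_v)=\emptyset$ and $t\notin I(S_v)$.
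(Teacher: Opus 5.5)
Your skeleton is the paper's: work on $t$'s arc, use that $s$ strictly wins at $b_t$, locate a sign change of $d(s,\cdot)-d(t,\cdot)$, and contradict $s\prec_L t$ by order consistency for the pair $\{s,t\}$. The gap is Step~3, which is the only step meant to produce the contradiction. It is never carried out, and as written its orientation is reversed. By Lemma~\ref{lem:consistentTL} applied to $S'=\{s,t\}$, the relation $s\prec_L t$ forces points of $\Xi(\{s,t\})$ claimed by $s$ to come \emph{before} points claimed by $t$ in $\precxi$. So ``$t$-side, then $s$-side'' is the configuration to be refuted, not something ``forced by $s\prec_L t$''. Your Step~2 sentence attributing this orientation to Lemmas~\ref{lem:consistentRT} and~\ref{lem:consistentTL} has the same reversal. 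Your assertion that $\pi(s,b_t)$ meets $d$ above $\pi(t,p)$ contradicts what you correctly derived in Step~2 from Lemma~\ref{lem:Lvor}, and you give no argument for it.

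Running the argument at $b_t$ also hits a concrete obstruction: $b_t$ need not lie on $\Xi(\{s,t\})$. If $b_t\notin\partial P_R$, then $d(t,b_t)=1$ while $d(s,b_t)<1$, so $b_t$ is an interior point of $\calD_R(\{s,t\})$. In that case neither Lemma~\ref{lem:consistentTL} nor the non-crossing argument in its proof lets you compare $b_t$ with $p$ without further work.

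The missing step is to use the sign-change point itself.
\begin{itemize}
\item Take $p\in\xi_t(S_v)$ with $d(t,p)<d(s,p)$.
\item Take $q$ on $\Xi(\{t\})$ between $p$ and $b_t$ with $d(s,q)=d(t,q)$.
\item Both $p$ and $q$ lie on $\Xi(\{s,t\})$, with $p\precxi q$, $\beta_p(\{s,t\})=t$, and one may take $\beta_q(\{s,t\})=s$.
\item Lemma~\ref{lem:consistentTL} then gives $t\prec_L s$, contradicting $s\prec_L t$.
\end{itemize}
This is the paper's proof. Your detour through Lemma~\ref{lem:consistent2v3} and the unjustified ``exactly once'' crossing claim are not needed.
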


\begin{proof}
    Assume to the contrary that $t\in I(S_v)$.
    Then there must be some point $p \in \xi_t(S_v)$ such that $d(t, p) < d(s, p)$.
    As $d(s, b_t) < d(t, b_t)$ and $\xi_t(\{t\})$ is continuous,
    there must be some point $q \in \xi_t(\{t\})$ such that $d(s, q) = d(t, q)$ and $p \prec_{\Xi} q \precxi b_t$ where $\precxi$ is taken with respect to $\Xi(\{t\})=\xi_t(\{t\})$; see Figure~\ref{fig:computeuw}.

    \begin{figure}[h]
\begin{minipage}[t]{\linewidth}
\begin{center}
\includegraphics[totalheight=1.4in]{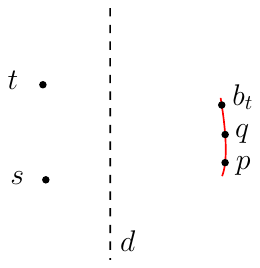}
\caption{Illustrating the proof of Lemma~\ref{lem:tnotin}. The red curve shows a portion of $\Xi(\{t\})$.}
\label{fig:computeuw}
\end{center}
\end{minipage}
\end{figure}


    Let $S' = \{s, t\}$. Note that $s = \beta_q(S')$ and $t = \beta_p(S')$.
    Applying Lemma~\ref{lem:consistentTL} to $S'$
    with $t = \beta_p(S')$ and $s = \beta_{q}(S')$, we obtain $t \prec_L s$ since $p \prec_{\Xi} q$.
    However, because $s \in S_u$ and $t \in S_w$, we have $s \prec_L t$. This is a contradiction.
\end{proof}

If $d(s,b_t)<d(t,b_t)$, then $t\not\in I(S_v)$ by Lemma~\ref{lem:tnotin}. If $t\neq last(I(S_w))$, then we advance $t$ (Line~\ref{ln:advancet}), which preserves the algorithm invariants since $t\not\in I(S_v)$. If $t= last(I(S_w))$, then by our algorithm invariants and due to $t\not\in I(S_v)$, $I(S_w)\cap I(S_v)=\emptyset$ and thus we set $r_w=null$. As $r_w=null$, $I(S_v)=I(S_u)$ and thus we simply set $r_u=last(I(S_u))$, and then terminate the algorithm by returning $r_u$ and $r_w$.

If $d(s,b_t)\geq d(t,b_t)$, we further check if $d(s,a_s)\leq d(t,a_s)$. We have the following lemma.

\begin{lemma} \label{lem:sisin}
    If $d(s, b_t)\geq d(t, b_t)$ and $d(s, a_s) \leq d(t, a_s)$, then $s \in I(S_v)$.
\end{lemma}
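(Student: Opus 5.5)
We aim to show that some point of $\xi_s(S_u)$ survives in $S_v$, i.e., lies on $\Xi(S_v)$ and is closer to $s$ than to every other point of $S_v$. The natural candidate is the lower end $a_s=a_s(S_u)$. By definition $d(s,a_s)\le 1$ and $s=\beta_{a_s}(S_u)$. So it suffices to show that no point of $S_w$ is strictly closer to $a_s$ than $s$ is. Then $a_s\in\xi_s(S_v)$ (up to ties resolved by general position), which gives $s\in I(S_v)$.

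\textbf{Step 1 (reduce to $S_w\cap I(S_w)$).} Suppose some $t'\in S_w$ has $d(t',a_s)<d(s,a_s)\le 1$. Then $a_s\in\calD_R(S_w)$. Move along the extension of $\pi(d,a_s)$ beyond $a_s$ to a point of $\Xi(S_w)$, or stay at $a_s$ itself. Using Lemma~\ref{lem:ext_dist}, we can replace $t'$ by $\beta_{a_s}(S_w)$, which lies in $I(S_w)$; call it $t''$. We then compare $t''$ with the current pointer $t$.

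\textbf{Step 2 (case $t''\prec_L t$).} By the algorithm invariant, $I(S_w)_{<t}\cap I(S_v)=\emptyset$. Each such $t''$ was discarded in some earlier iteration because $d(s',b_{t''})<d(t'',b_{t''})$ held for some $s'\preceq_L s$ in $I(S_u)$. We then use Lemma~\ref{lem:consistent2v3} with three points ordered $s'\preceq_L s\prec_L t''$ (or a direct two-site argument when $s'=s$). This should show that $t''$ cannot beat $s$ at $a_s$. Intuitively, $t''$'s portion of the envelope is ``dominated from the left'' entirely.

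\textbf{Step 3 (case $t\preceq_L t''$).} Here we use $d(s,b_t)\ge d(t,b_t)$ together with $d(s,a_s)\le d(t,a_s)$. In $\vd(\{s,t\})$, $a_s$ lies in $\R(s)$ and $b_t$ lies in $\R(t)$. If $t''=t$, the hypothesis $d(s,a_s)\le d(t,a_s)$ already contradicts $d(t,a_s)<d(s,a_s)$. If $t\prec_L t''$, we apply Lemma~\ref{lem:consistent2v3} to $\{s,t,t''\}$ with $t$ as the middle element. For this we need $t\in I(\{s,t,t''\})$, and this is witnessed by $b_t$: since $d(t,b_t)\le d(s,b_t)$ and $t=\beta_{b_t}(S_w)$, $t$ is nearest at $b_t$ among the three, and $b_t\in\Xi$ because $d(t,b_t)\le 1$. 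Taking $r=s$ and $q=a_s$, with $d(s,a_s)\le d(t,a_s)$ and $d(s,a_s)\le 1$, Lemma~\ref{lem:consistent2v3} gives $d(s,a_s)<d(t'',a_s)$, a contradiction.

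The main obstacle is Step 2: turning the historical discard condition into a statement about $a_s$. If that proves awkward, I would instead argue through $\prec_\Xi$-consistency, combining Lemmas~\ref{lem:consistentRT} and \ref{lem:consistentTL} with the prefix/suffix structure of Lemma~\ref{lem:concatenate}. A nearest site at $a_s$ from $S_w$ that is $\prec_L$-smaller than $t$ would force its envelope piece to lie before that of $s$'s predecessors, contradicting the invariant $I(S_u)_{<s}\subseteq I(S_v)$.
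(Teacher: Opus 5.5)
Your frame (show that no site of $S_w$ is strictly closer to $a_s$ than $s$) and your Step~3 match the paper. Step~3 applies Lemma~\ref{lem:consistent2v3} to $\{s,t,t^*\}$ with $t$ in the middle, uses $b_t$ to witness $t\in I(\{s,t,t^*\})$, and excludes $t^*=t$ via $d(s,a_s)\le d(t,a_s)$. One small point there: $b_t\in\Xi(\{s,t,t^*\})$ follows from $b_t\in\Xi(S_w)$ together with $t$ being nearest among the three, not from $d(t,b_t)\le 1$ alone.

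\textbf{The gap.} The gap is how you choose the competing site, which is what pushes you into Step~2. First, $\beta_{a_s}(S_w)$ need not lie in $I(S_w)$ when $a_s$ is interior to $\calD_R(S_w)$: a site can be nearest at interior points of the union without contributing to the envelope. You need the nearest neighbor at the point where the extension of $\pi(d,a_s)$ meets the envelope. (When $a_s\in\partial P$, your $\beta_{a_s}(S_w)$ is fine.) Second, even a site taken as nearest in $S_w$ is only known to lie in $I(S_w)$, not in $I(S_v)$. So the invariant $I(S_w)_{<t}\cap I(S_v)=\emptyset$ says nothing about where it sits relative to $t$. Your Step~2 cannot repair this. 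In Lemma~\ref{lem:consistent2v3} applied to $s'\preceq_L s\prec_L t''$, $s$ is the middle element, so the lemma only compares $s'$ with $t''$ (for instance at $b_{t''}$, which you already know). It never compares $s$ with $t''$ at $a_s$, and it would also require $s\in I(\{s',s,t''\})$, which is essentially what you are trying to prove. The fallback via $\precxi$-consistency is likewise only a sketch.

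\textbf{The missing idea.} Take the witness from $S_v$ instead of $S_w$. Suppose some site of $S_w$ is strictly closer to $a_s$ than $s$.
\begin{itemize}
\item If $a_s\in\partial P$, then $d(S_v,a_s)<d(s,a_s)\le 1$, so $a_s\in\Xi(S_v)$. Set $t^*=\beta_{a_s}(S_v)$.
\item Otherwise $d(s,a_s)=1$. Extend the last edge of $\pi(d,a_s)$ until it hits $\Xi(S_v)$ at a point $p$, and set $t^*=\beta_p(S_v)$. Lemma~\ref{lem:ext_dist} gives $d(t^*,a_s)<d(t^*,p)\le 1$.
\end{itemize}
In both cases $t^*\in I(S_v)$. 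Also $t^*\in S_w$, because $s$ is nearest to $a_s$ within $S_u$ (respectively, $d(S_u,a_s)=1$). Hence $t^*\in I(S_v)\cap I(S_w)$, and the invariant gives $t\preceq_L t^*$ directly. Step~2 disappears, and your Step~3 finishes the proof exactly as in the paper.
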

\begin{proof}
    Suppose for the sake of contradiction that $s \not\in I(S_v)$.
    Then it must be that $d(S_v, a_s) < d(s, a_s) \leq 1$. Depending on whether $a_s \in \partial P$, there are two cases.

\begin{itemize}
    \item
    If $a_s \in \partial P$, 
    then since $d(S_v, a_s) < 1$, we have $a_s \in \Xi(S_v)$ by definition. Let $t^* = \beta_{a_s}(S_v)$. Hence, $d(t^*,a_s)=d(S_v, a_s) < d(s, a_s)$.
    Because $a_s \in \Xi(S_v)$ and $t^* = \beta_{a_s}(S_v)$, we know that $a_s \in \xi_{t^*}(S_v)$ and thus $t^* \in I(S_v)$.
    Additionally, because $d(t^*, a_s) = d(S_v, a_s) < d(s, a_s) = d(S_u, a_s)$,
    it must be that $t^* \in S_w$.
    Therefore, $t^* \in I(S_v) \cap S_w = I(S_v) \cap I(S_w)$. By the algorithm invariants, $t \prec_L t^*$ (note that $t\neq t^*$ since $d(s, a_s) \leq d(t, a_s)$ while $d(t^*, a_s)<d(s,a_s)$).

    We next derive a contradiction using Lemma~\ref{lem:consistent2v3}.
    Let $S' = \{s, t, t^*\}$. Note that $s\prec_L t\prec_L t^*$. Since both $t$ and $t^*$ are in $S_w$, by the definition of $b_t$, we have $d(t,b_t)\leq d(t^*,b_t)$. As $d(s, b_t)\geq d(t, b_t)$, we obtain that $t = \beta_{b_t}(S')$. Since $b_t\in \Xi(S_w)$ by definition, we have $b_t\in \xi_t(S_w)$ and thus $b_t\in \xi_t(\{t\})$. As $t = \beta_{b_t}(S')$, it follows  that $b_t\in \Xi_t(S')$.  Hence, $t \in I(S')$.
    Because $d(s, a_s) \leq d(t, a_s)$ and $d(s, a_s) \leq 1$,
    Lemma~\ref{lem:consistent2v3} says that $d(s, a_s) < d(t^*, a_s)$.
    However, this contradicts $d(t^*,a_s)< d(s, a_s)$.

\item

    If $a_s \notin \partial P$, then $d(S_u, a_s) = d(s, a_s) = 1$.
    Because $d(S_v, a_s) < d(s, a_s) = 1$, if we extend the last edge of $\pi(d, a_s)$ beyond $a_s$, the extension will hit $\Xi(S_v)$ at a point $p$.
    Let $t^* = \beta_{p}(S_v)$.
    Because $p \in \Xi(S_v)$ and $t^* = \beta_{p}(S_v)$, we know that $p \in \xi_{t^*}(S_v)$ and thus $t^* \in I(S_v)$.
    By Lemma~\ref{lem:ext_dist}, we get that $d(t^*, a_s) < d(t^*, p) \leq 1$.
    Because $d(S_u, a_s) = d(s, a_s) = 1$,
    it must be that $t^* \in S_w$.
    As in the above case, $t^* \in I(S_v) \cap S_w = I(S_v) \cap I(S_w)$. By the algorithm invariants, $t \prec_L t^*$ (note that $t\neq t^*$ since $d(t^*, a_s) < d(t^*, p) \leq 1 = d(s, a_s) \leq d(t, a_s)$).

    We next derive a contradiction using Lemma~\ref{lem:consistent2v3}. Let $S' = \{s, t, t^*\}$. Note that $s\prec_L t\prec_L t^*$. Since both $t$ and $t^*$ are in $S_w$, by the definition of $b_t$, we have $d(t,b_t)\leq d(t^*,b_t)$. As $d(s, b_t)\geq d(t, b_t)$, we obtain that $t = \beta_{b_t}(S')$. As argued above, it follows that $t \in I(S')$. Because $d(s, a_s) \leq d(t, a_s)$ and $d(s, a_s) \leq 1$, Lemma~\ref{lem:consistent2v3} says that $d(s, a_s) < d(t^*, a_s)$.  However, this contradicts $d(t^*, a_s) < d(s, a_s)$.
\end{itemize}

The lemma thus follows.
\end{proof}

By the above lemma, if $d(s,a_s)\leq d(t,a_s)$, then we advance $s$ if $s\neq last(I(S_u))$, which preserves the algorithm invariants as $s\in I(S_v)$. If $s=last(I(S_u))$, then by the algorithm invariants, $I(S_u)\subseteq I(S_v)$. Thus, we set $r_u=s$, which is $last(I(S_u))$. The next lemma proves that $r_w$ is the current $t$, and thus we set $r_w=t$ and terminate the algorithm.

\begin{lemma}
 If $d(s, b_t)\geq d(t, b_t)$, $d(s, a_s) \leq d(t, a_s)$, and $s=last(I(S_u))$, then $t \in I(S_v)$.
\end{lemma}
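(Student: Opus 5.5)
We argue by contradiction, assuming $t\notin I(S_v)$. The plan is to produce a point of $\xi_t(S_w)$ that survives in $\Xi(S_v)$, using the endpoint $b_t=b_t(S_w)$.

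First we identify the nearest neighbor of $b_t$ in $S_v$. By definition $b_t\in\xi_t(S_w)$, so $t=\beta_{b_t}(S_w)$ and $d(t,b_t)\le 1$. For $S_u$, we use $s=last(I(S_u))$: the final point of $\Xi(S_u)$ under $\precxi$ lies in $\xi_s(S_u)$. We want $\beta_{b_t}(S_u)$ to lose to $t$. Suppose instead some $s'\in S_u$ has $d(s',b_t)<d(t,b_t)$. We may take $s'=\beta_{b_t}(S_u)\in I(S_u)$; this needs a short argument, extending along $\pi(d,b_t)$ and invoking Lemma~\ref{lem:ext_dist}, as in the proof of Lemma~\ref{lem:sisin}. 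Then $s'\preceq_L s$. If $s'=s$, this contradicts $d(s,b_t)\ge d(t,b_t)$. If $s'\prec_L s$, apply Lemma~\ref{lem:consistent2v3} to $S'=\{s',s,t\}$ with $s'\prec_L s\prec_L t$. The hypotheses are $d(s',b_t)\le d(s,b_t)$ and $d(s',b_t)<1$, and we also need $s\in I(S')$. For the latter, witness $a_s$: we have $d(s,a_s)\le d(t,a_s)$ by hypothesis and $d(s,a_s)\le d(s',a_s)$ because $a_s\in\xi_s(S_u)$, so $a_s\in\xi_s(S')$. The lemma then gives $d(s',b_t)<d(t,b_t)$ is impossible, a contradiction. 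Hence $d(t,b_t)\le d(S_u,b_t)$, and so $t=\beta_{b_t}(S_v)$ (up to ties, which general position excludes).

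Next we check that $b_t$ lies on $\Xi(S_v)$. There are two cases. If $b_t\in\partial P_R$ with $d(t,b_t)<1$, then $d(S_v,b_t)<1$ and $b_t\in\Xi\sa(S_v)$. Otherwise $d(t,b_t)=1=d(S_v,b_t)$, so $b_t\in\Xi\ca(S_v)$. In either case $b_t\in\xi_t(S_v)$, so $t\in I(S_v)$, contradicting the assumption; alternatively, this gives the claim directly without contradiction.

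The main obstacle is the first step, which rules out a point of $S_u$ beating $t$ at $b_t$. The danger is the case where $\beta_{b_t}(S_u)$ is not in $I(S_u)$, or equals $s$ but with a tie. We handle the tie by general position. For the non-$I$ case, we rely on the extension argument of Lemma~\ref{lem:sisin}: push $b_t$ outward along $\pi(d,b_t)$ to a point of $\Xi(S_u)$ whose owner lies in $I(S_u)$ and is still closer to $b_t$ than $t$. We then apply Lemma~\ref{lem:consistent2v3} to that owner exactly as above.
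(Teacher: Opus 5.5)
Your plan is the right one and matches the paper's: show that $b_t$ stays in $\xi_t(S_v)$ by ruling out any $s'\in I(S_u)$ that beats $t$ at $b_t$, using Lemma~\ref{lem:consistent2v3} on $\{s',s,t\}$. \textbf{The decisive step, however, is instantiated the wrong way around, so as written it proves nothing.}

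You put $s'$ in the role of $r$ and check $d(s',b_t)\le d(s,b_t)$ and $d(s',b_t)\le 1$. With that instantiation the lemma concludes $d(s',b_t)<d(t,b_t)$. That is exactly your standing assumption, not its negation, so there is no contradiction.

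The lemma has to be used in its mirrored form, $t\succ_L s\succ_L s'$, with your $t$ playing the role of $r$ and $s'$ as the far point:
\begin{itemize}
\item The hypotheses become $d(t,b_t)\le d(s,b_t)$, which is the first hypothesis of the statement, and $d(t,b_t)\le 1$, which holds because $b_t\in\xi_t(S_w)\subseteq\Xi(S_w)$.
\item The conclusion is $d(t,b_t)<d(s',b_t)$, which contradicts the choice of $s'$.
\end{itemize}
Your proof that $s\in I(\{s',s,t\})$, using the witness $a_s$, is correct and serves this instantiation unchanged, since the middle point is still $s$. The paper instead gets this from Lemma~\ref{lem:sisin}, via $s\in I(S_v)$.

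With this swap, your argument is essentially the paper's, and somewhat more complete.
\begin{itemize}
\item The paper just asserts that it suffices to compare $t$ against $I(S_u)$. Your extension along $\pi(d,b_t)$ with Lemma~\ref{lem:ext_dist} justifies this. It produces an owner in $I(S_u)$ that still beats $t$, though not necessarily $\beta_{b_t}(S_u)$, as you note at the end.
\item The paper never checks that $b_t\in\Xi(S_v)$. Your two-case argument does.
\end{itemize}
One small inaccuracy: general position does not exclude a tie $d(t,b_t)=d(S_u,b_t)$. You do not need it to, because $\xi_t(S_v)$ is defined with non-strict inequalities, so a tie still puts $b_t$ in $\xi_t(S_v)$.
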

\begin{proof}
To prove the lemma, it suffices to argue that $d(t,b_t)\leq d(s',b_t)$ for any $s'\in I(S_u)$. This is obviously true if $s$ is the only point of $I(S_u)$ since $d(s, b_t)\geq d(t, b_t)$. We thus assume that $|I(S_u)|>1$. Let $s'$ be any point of $I(S_u)$ other than $s$. As $s=last(I(s_u))$, we have $s'\prec_L s\prec_L t$. We next prove $d(t,b_t)\leq d(s',b_t)$ by applying Lemma~\ref{lem:consistent2v3} to $S'=\{s',s,t\}$.

First of all, we know that $s\in I(S_v)$ by Lemma~\ref{lem:sisin}. Hence, $s\in I(S')$. Since $d(t, b_t)\leq d(s, b_t)$ and  $d(t, b_t)\leq 1$,  Lemma~\ref{lem:sisin} says that $d(t,b_t)< d(s',b_t)$. The lemma thus follows.
\end{proof}

Next, we consider the case $d(s,a_s)> d(t,a_s)$ (Line~\ref{ln:20}). We have the following lemma.

\begin{lemma} \label{lem:snotin}
    If $d(s,a_s)> d(t,a_s)$, then $s \notin I(S_u)$.
\end{lemma}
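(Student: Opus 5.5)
The statement as printed says $s\notin I(S_u)$, but $s$ is taken from $I(S_u)$, so I read it as the intended $s\notin I(S_v)$. This is what the algorithm needs at Line~\ref{ln:20} to set $r_u=prev(s)$. I would also use the branch condition $d(s,b_t)\ge d(t,b_t)$, which holds there, although the core argument does not seem to need it.

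The proof is by contradiction, in the style of Lemma~\ref{lem:tnotin}. Suppose $s\in I(S_v)$ and pick $p\in\xi_s(S_v)$. By the general position assumption we may take $p\neq a_s$.

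\textbf{Step 1: $p$ lies on both relevant envelopes.} Since $s=\beta_p(S_v)$, we have $d(S_u,p)=d(s,p)=d(S_v,p)$. So $p\in\Xi(S_u)$, and in fact $p\in\xi_s(S_u)$. Because $a_s$ is the $\precxi$-minimum of $\xi_s(S_u)$, this gives $a_s\precxi p$ in $\Xi(S_u)$. Likewise, with $S'=\{s,t\}$ we get $d(S',p)=d(s,p)$, so $p\in\xi_s(S')$ and $\beta_p(S')=s$.

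\textbf{Step 2: a witness for $t$ on $\Xi(S')$ before $p$.} From $d(t,a_s)<d(s,a_s)\le 1$ we have $\beta_{a_s}(S')=t$ and $d(S',a_s)<1$.
\begin{itemize}
\item If $a_s\in\partial P_R$, then $a_s\in\Xi\sa(S')$, and we set $a'=a_s$.
\item Otherwise, extend the last edge of $\pi(d,a_s)$ beyond $a_s$ until it meets $\Xi(S')$ at a point $a'$, as in the proofs of Lemmas~\ref{lem:sisin} and~\ref{lem:invlr}.
\end{itemize}
In the second case I would argue $\beta_{a'}(S')=t$ as follows. The bisector $B(s,t)$ crosses $\pi(d,a')$ at most once (Lemma~\ref{lem:star_shaped}(2), applied through the star-shapedness of the two Voronoi regions). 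The Voronoi region of $t$ must contain the relevant portion of the extension, because $a_s\in\R(t)$ and $z_{a_s}$ lies on the side determined by Lemma~\ref{lem:Lvor}.

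\textbf{Step 3: ordering.} I would show $a'\precxi p$ in $\Xi(S')$. The paths $\pi(d,a_s)$ and $\pi(d,p)$ do not cross, and $a_s\precxi p$ in $\Xi(S_u)$. The plan is to pass to $\prec_R$ using the component argument of Lemma~\ref{lem:consistentRT}, giving $a_s\prec_R p$ and hence $a'\prec_R p$, since $a'$ extends $a_s$ along $\pi(d,\cdot)$ exactly as in Lemma~\ref{lem:invlr}. Lemma~\ref{lem:consistentRT}, applied to $S'$, then yields $a'\precxi p$.

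\textbf{Step 4: conclusion.} Lemma~\ref{lem:consistentTL} applied to $S'$ gives $t=\beta_{a'}(S')\prec_L\beta_p(S')=s$. This contradicts $s\in S_u$ and $t\in S_w$, which force $s\prec_L t$.

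\textbf{Main obstacle.} I expect the hard part to be Steps 2--3, namely transferring the order $a_s\precxi p$ from $\Xi(S_u)$ to $\Xi(S')$ while keeping the nearest site of the transferred point equal to $t$.

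If the extension argument becomes messy, my fallback is the continuity argument of Lemma~\ref{lem:tnotin}. Along $\xi_s(S_u)$ from $a_s$ to $p$, the function $d(s,\cdot)-d(t,\cdot)$ changes sign. This gives a point $q$ on $B(s,t)$ with $d(s,q)\le1$. By Lemma~\ref{lem:star_shaped}(2) and Lemma~\ref{lem:Lvor}, the side of $B(s,t)$ containing the earlier portion must then correspond to the $\prec_L$-smaller site, which again forces $t\prec_L s$.
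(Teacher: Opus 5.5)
Your reading of the typo is correct: the intended statement is $s\notin I(S_v)$. Your fallback is in fact the paper's proof, since the paper only says the lemma is symmetric to Lemma~\ref{lem:tnotin}.

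\textbf{Gap in Steps 1--4 (Step 3).} Lemma~\ref{lem:consistentRT} only gives $\prec_R\Rightarrow\precxi$. To get $a_s\prec_R p$ from $a_s\precxi p$ you need its contrapositive plus totality of $\precxi$. That works only if $a_s$ and $p$ are $\prec_R$-comparable, which you never check, and comparability can fail. If $a_s$ lies in the interior of an edge of $\partial P_R$ along which $\pi(d,p)$ runs, then $a_s\in\pi(d,p)$ and the two points are incomparable. The same happens if $p\in\partial P_R$ lies on $\pi(d,a_s)$. The first situation is exactly your case $a'=a_s$, and there Step 3 yields nothing.

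\textbf{Step 2's justification is also wrong, though the claim holds.} Lemma~\ref{lem:star_shaped}(2) concerns shortest paths from a site, not $\pi(d,\cdot)$, and $z_{a_s}$ need not lie in $t$'s region. The claim $\beta_{a'}(S')=t$ follows directly instead. When $a_s\notin\partial P_R$ we have $d(s,a_s)=1$. By Lemma~\ref{lem:ext_dist}, $d(s,\cdot)>1$ strictly beyond $a_s$ on the extension. Hence $d(s,a')>1\ge d(S',a')$.

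\textbf{The fix: drop $\prec_R$ and make the fallback precise, mirroring Lemma~\ref{lem:tnotin}.}
\begin{enumerate}
\item Choose $p\in\xi_s(S_v)$ with $d(s,p)<d(t,p)$; general position allows this.
\item Consider the arc of $\xi_s(S_u)\subseteq\Xi(\{s\})$ from $a_s$ to $p$. At $a_s$ we have $d(s,\cdot)-d(t,\cdot)>0$, and at $p$ it is negative. Let $q$ be the last point of the arc where $d(s,q)=d(t,q)$.
\item Every point $x$ of the sub-arc from $q$ to $p$ satisfies $d(S',x)=d(s,x)$ with $S'=\{s,t\}$. Since $x\in\Xi(\{s\})$, it follows that $x\in\Xi(S')$. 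So this sub-arc is a common piece of $\Xi(\{s\})$ and $\Xi(S')$, and $q\precxi p$ in $\Xi(S')$.
\item Take $\beta_q(S')=t$ (a tie, handled as in Lemma~\ref{lem:tnotin}) and $\beta_p(S')=s$. Apply Lemma~\ref{lem:consistentTL} to $S'$ to get $t\prec_L s$.
\item This contradicts $s\prec_L t$, which holds because $s\in S_u$ and $t\in S_w$.
\end{enumerate}
The appeal to Lemma~\ref{lem:star_shaped}(2) and Lemma~\ref{lem:Lvor} in your sketch is then unnecessary; it is already packaged inside Lemma~\ref{lem:consistentTL}.
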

\begin{proof}
    Symmetric to Lemma~\ref{lem:tnotin}.
\end{proof}

By the above lemma and our algorithm invariants, $r_u=prev(s)$ if $prev(s)$ exists, i.e., if $s\neq first(I(S_u))$. If $s=first(I(S_u))$, then we know that $r_u=null$. In this case, we have $S(I_v)=S(I_w)$ and $r_w=first(I(S_w))$, and thus we can terminate the algorithm.

If $s\neq first(I(S_u))$ (Line~\ref{ln:30}), then we set $r_u=prev(s)$. Next, as long as $d(prev(s),b_t)<d(t,b_t)$, by Lemma~\ref{lem:tnotin} (replacing $s$ with $prev(s)$), $t\not\in I(S_v)$ and thus we advance $t$. We claim that $t$ cannot reach $last(I(S_w))$ with $d(prev(s),b_t)<d(t,b_t)$, since otherwise $I(S_w)\cap I(S_v)=\emptyset$ and thus $I(S_v)=I(S_u)$, contradicting $s\not\in I(S_v)$. Hence, the while loop in Line~\ref{ln:while} will stop at a $t$ with $d(prev(s),b_t)\geq d(t,b_t)$. The following lemma proves that $t\in I(S_v)$, implying that this $t$ is $r_w$ due to the algorithm invariants, and thus we can set $r_w=t$ and terminate the algorithm (Line~\ref{ln:final}).

\begin{lemma} \label{lem:tisin}
If $d(s, b_t)\geq d(t, b_t)$, $d(s,a_s)>d(t,a_s)$, and $d(prev(s), b_t)\geq d(t, b_t)$, then
$t \in I(S_v)$.
\end{lemma}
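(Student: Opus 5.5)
We want to show that $b_t\in\xi_t(S_v)$, i.e., that $d(t,b_t)\le d(s',b_t)$ for every $s'\in S_v$. This suffices because $t=\beta_{b_t}(S_w)$ by the definition of $b_t=b_t(S_w)$, and $b_t$ already lies in $\Xi(\{t\})$.

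\textbf{Comparison with $S_w$.} Any $s'\in S_w$ satisfies $d(t,b_t)\le d(s',b_t)$, again because $t=\beta_{b_t}(S_w)$.

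\textbf{Comparison with $S_u$.} By the definition of $\Xi(S_u)$ and of the $\xi$'s, $d(S_u,b_t)$ is attained by some point of $I(S_u)$: extend $\pi(d,b_t)$ if necessary and use Lemma~\ref{lem:ext_dist}, exactly as in the proof of Lemma~\ref{lem:sisin}. So it is enough to show $d(t,b_t)\le d(s',b_t)$ for all $s'\in I(S_u)$. Let $r=prev(s)$. By the algorithm invariants, $r\in I(S_v)$, hence $r\in I(S')$ for any $S'\subseteq S_v$ containing $r$. There are two kinds of $s'$.

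\emph{Case $s'\prec_L r$.} Apply Lemma~\ref{lem:consistent2v3} to $S'=\{s',r,t\}$ with $s'\prec_L r\prec_L t$, in the reversed orientation $t\prec_L r\prec_L s'$ with $t$ playing the role of ``$r$''. The hypotheses hold: $r\in I(S')$, $d(t,b_t)\le d(r,b_t)$ by assumption, and $d(t,b_t)\le 1$. The lemma gives $d(t,b_t)<d(s',b_t)$. The case $s'=r$ is the hypothesis itself.

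\emph{Case $s\preceq_L s'$.} These points lie after $r$ in $I(S_u)$. For $s'=s$, the bound $d(s,b_t)\ge d(t,b_t)$ is given. For $s'$ strictly after $s$, I would use Lemma~\ref{lem:consistent2v3} on $\{s,s',t\}$ or $\{r,s',t\}$; but neither $s$ nor $s'$ is known to be in $I(S')$, so this does not apply directly. The alternative I would try is a Voronoi/ordering argument. Suppose to the contrary that $d(s',b_t)<d(t,b_t)\le1$ for some such $s'$. Then $b_t$ lies in $\calD_R(S_u)$, so extending $\pi(d,b_t)$ meets $\Xi(S_u)$ at a point $p$. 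Its owner $\beta_p(S_u)$ has $d<d(t,b_t)$ at $b_t$ by Lemma~\ref{lem:ext_dist}, so we may take $s'=\beta_p(S_u)\in I(S_u)$. If $s'\preceq_L r$, the first case gives a contradiction. Otherwise $r\prec_L s'\prec_L t$, and I would use $r\in I(S_v)$ together with the triple $\{r,s',t\}$ to derive a contradiction with Lemma~\ref{lem:consistent2v3}.

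\textbf{Main obstacle.} The hard step is this last one: ruling out a point of $I(S_u)$ at or beyond $s$ that beats $t$ at $b_t$. There I would also exploit $d(s,a_s)>d(t,a_s)$, which says $t$ beats $s$ at $a_s$. Combined with the contrary assumption that $s'$ beats $t$ at $b_t$, this forces the bisector of $t$ and the later points of $I(S_u)$ to cross $\Xi$ in an order contradicting Lemmas~\ref{lem:consistentTL} and~\ref{lem:Lvor}: $\xi_t$ would need to lie both before and after the pieces of $\xi$ belonging to points of $S_u$ that come later in $\prec_L$. I would formalize this by applying Lemma~\ref{lem:consistentTL} to $S'=\{s,s',t\}$ at the points $a_s$ and $b_t$. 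Then I would check, via the invariant $I(S_u)_{<s}\subseteq I(S_v)$, that $a_s$ and $b_t$ lie on $\Xi(S')$ in the required order.
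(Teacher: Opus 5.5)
Your reduction to $I(S_u)$ is valid, and your case $s'\preceq_L prev(s)$ is correct. That case is exactly the paper's closing step: Lemma~\ref{lem:consistent2v3} applied to $\{s',prev(s),t\}$, using $prev(s)\in I(S_v)$.

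The gap is the case $s\prec_L s'$, which you leave open. You never exclude a point of $I(S_u)$ after $s$ that beats $t$ at $b_t$. Your proposed fix, applying Lemma~\ref{lem:consistentTL} to $\{s,s',t\}$ at $a_s$ and $b_t$, does not work as stated, for two reasons:
\begin{itemize}
\item Neither point need lie on $\Xi(\{s,s',t\})$. Whenever $a_s\notin\partial P$, the inequality $d(t,a_s)<d(s,a_s)\le 1$ already puts $a_s$ off $\Xi(\{s,s',t\})$, and the same happens to $b_t$ under your contrary assumption.
\item Even if both points lay on that curve, nothing you have established fixes their order under $\precxi$, and a contradiction would need exactly that.
\end{itemize}
Your argument also never uses the hypothesis $d(s,a_s)>d(t,a_s)$ concretely, and that hypothesis is what resolves the case.

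The missing idea is to take the witness as a nearest neighbor in $S_v$ rather than in $S_u$. Suppose $t\notin I(S_v)$; then $d(S_v,b_t)<d(t,b_t)\le 1$.
\begin{itemize}
\item If $b_t\in\partial P$, then $b_t\in\Xi(S_v)$, and set $s^*=\beta_{b_t}(S_v)$.
\item Otherwise $d(t,b_t)=1$. Extend $\pi(d,b_t)$ to a point $p\in\Xi(S_v)$ and set $s^*=\beta_p(S_v)$. Then $d(s^*,b_t)<1$ by Lemma~\ref{lem:ext_dist}.
\end{itemize}
In both cases $s^*\in I(S_v)$. Also $d(s^*,b_t)<d(t,b_t)=d(S_w,b_t)$ forces $s^*\in S_u$, so $s^*\in I(S_v)\cap I(S_u)$.

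This is where $d(s,a_s)>d(t,a_s)$ enters. Lemma~\ref{lem:snotin} gives $s\notin I(S_v)$; its stated conclusion ``$s\notin I(S_u)$'' should read $s\notin I(S_v)$, symmetric to Lemma~\ref{lem:tnotin}. By Lemma~\ref{lem:concatenate}, $I(S_v)\cap I(S_u)$ is a prefix of $I(S_u)$. By the invariant, it contains $I(S_u)_{<s}$. Since it excludes $s$, it equals $I(S_u)_{<s}$. Hence $s^*\preceq_L prev(s)$, and $s^*\neq prev(s)$ by the third hypothesis. Your first case then yields the contradiction, and the troublesome case $s\preceq_L s'$ never arises.
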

\begin{proof}
    Suppose for the sake of contradiction that $t \not\in I(S_v)$.
    Then it must be that $d(S_v, b_t) < d(t, b_t) \leq 1$.
    Depending on whether $b_t \in \partial P$, there are two cases.

    \begin{itemize}
    \item
    If $b_t \in \partial P$, then since $d(S_v, b_t) < 1$, we have $b_t \in \Xi(S_v)$ by definition.
    Let $s^* = \beta_{b_t}(S_v)$. Hence, $d(s^*,b_t)=d(S_v, b_t) < d(t, b_t)$.
    Because $b_t \in \Xi(S_v)$ and $s^* = \beta_{b_t}(S_v)$, we know that $b_t \in \xi_{s^*}(S_v)$ and thus $s^* \in I(S_v)$.
    Additionally, because $d(s^*, b_t) = d(S_v, b_t) < d(t, b_t) = d(S_w, b_t)$,
    it must be that $s^* \in S_u$.
    Therefore, $s^* \in I(S_v) \cap S_u = I(S_v) \cap I(S_u)$.
    Because $r_u = prev(s)$, we have $s^* \prec_L s$.
     Let $s' = prev(s)$.
    Because $d(s^*, b_t) < d(t, b_t)$ and $d(s', b_t)\geq d(t, b_t)$, we have $s^* \neq s'$.
    Hence, it follows that $s^* \prec_L s' \prec_L t$.

    We next derive contradiction by applying Lemma~\ref{lem:consistent2v3} to $S' = \{s^*, s', t\}$.
    Indeed, as $s' \in I(S_v)$, $s'$ must also be in $I(S')$.
    Since $d(s', b_t)\geq d(t, b_t)$ and $d(t, b_t) \leq 1$, applying Lemma~\ref{lem:consistent2v3} leads to
    $d(t, b_t) < d(s^*, b_t)$. But this contradicts $d(s^*,b_t)< d(t, b_t)$.

\item
    If $b_t \notin \partial P$, then $d(S_w, b_t) = d(t, b_t) = 1$.
    Because $d(S_v, b_t) < d(t, b_t) = 1$, if we extend the last edge of $\pi(d, b_t)$ beyond $b_t$, we will hit $\Xi(S_v)$ at point $p$.
    Let $s^* = \beta_{p}(S_v)$.
    Because $p \in \Xi(S_v)$ and $s^* = \beta_{p}(S_v)$, we know that $p \in \xi_{s^*}(S_v)$ and thus $s^* \in I(S_v)$.
    By Lemma~\ref{lem:ext_dist}, we get that $d(s^*, b_t) < d(s^*, p) \leq 1$.
    Because $d(S_w, b_t) = d(t, b_t) = 1$, it must be that $s^* \in S_u$.
    Hence, we have that $s^* \in I(S_v) \cap S_u = I(S_v) \cap I(S_u)$.
    Because $r_u = prev(s)$, we have $s^* \prec_L s$.
    Let $s' = prev(s)$.
    Because $d(s^*, b_t) < d(s^*, p) \leq 1 = d(t, b_t)$ and $d(s', b_t)\geq d(t, b_t)$, we have $s^* \neq s'$.
    Hence, it follows that $s^* \prec_L s' \prec_L t$.

    We will derive contradiction by applying Lemma~\ref{lem:consistent2v3} to $S' = \{s^*, s', t\}$.
    Indeed, as $s' \in I(S_v)$, $s'$ must also be in $I(S')$.
    Since $d(s', b_t)\geq d(t, b_t)$ and $d(t, b_t) \leq 1$, applying Lemma~\ref{lem:consistent2v3} leads to
    $d(t, b_t) < d(s^*, b_t)$. But this contradicts $d(s^*, b_t) < d(t, b_t)$.
    \end{itemize}

The lemma thus follows.
\end{proof}

\paragraph{The time analysis.} The algorithm has $O(|S_v|)$ iterations and each iteration either advances one of $s$ and $t$, or computes $r_u$ and $r_w$. Each iteration takes $O(\log m)$ time for computing the involved geodesic distances using the GH data structure. Hence, the total time of the algorithm is $O(|S_v|\log m)$.

\subsection{Deletions}
\label{sec:deletion}

To delete a point $s \in S$, we need to delete the leaf of $\Psi(S)$ storing $s$.
We also need to update the fields of certain nodes of $\Psi(S)$ to reflect this change.
Notice that the nodes of $\Psi(S)$ that are affected are those in path from the root to the leaf containing $s$.
Starting from the root, suppose we are at a node $v$ with left child $u$ and right child $w$, and $s \in S_w$ (the case $s \in S_u$ is symmetric). We do the following at $v$.

We first call the descend algorithm at $v$ (Algorithm~\ref{algo:descend}), which gives us $\Xi(S_u)$ and $\Xi(S_w)$.
Then, we recursively delete $s$ from $S_w$ (i.e., call the deletion algorithm at $w$). If $s\neq r_w$, then $r_u$ and $r_w$ do not change after deleting $s$ from $S_v$, so we have $I(S_v \setminus \{s\}) = I(S_u)_{\leq r_u} + I(S_w \setminus \{s\})_{\geq r_w}$,
$b_{r_u}(S_v \setminus \{s\}) = b_{r_u}(S_v)$, and $a_{r_w}(S_v \setminus \{s\}) = a_{r_w}(S_v)$, i.e., $Y_2(v)$ does not change.
This means that even though it may be that $I(S_w \setminus \{s\}) \neq I(S_w)$,
we can still ascend as in Algorithm~\ref{algo:ascend} without issue.

The difficulty is in handling the case where $s = r_w$. In this case, we must compute the new $r_u$ and $r_w$. If we do so using Algorithm~\ref{algo:iuiw} without any modification, it would take $O(|S_v| \log m)$ time, making the overall deletion cost $\Omega(n)$ time. Therefore, we must do something smarter.

Let $r_u'$ and $r_w'$ refer to the new $r_u$ and $r_w$ after deleting $s$,
and let $r_u$ and $r_w$ refer to the old ones before the deletion.
Recall that the two invariants of Algorithm~\ref{algo:iuiw} are,
for current points $s \in I(S_u)$ and $t \in I(S_w)$ in the scanning process,
that $I(S_u)_{< s} \subseteq I(S_v)$ and $I(S_w)_{< t} \cap I(S_v) = \emptyset$.
Notice that $I(S_u)_{< r_u} \subseteq I(S_v \setminus \{s\})$.
This means that we can safely start the scan with $s = r_u$ rather than $first(I(S_u))$.
If $r_w \neq first(I(S_w))$, let $t_0 = prev(r_w)$ in $I(S_w)$.
Because $t_0 \in I(S_w)$ and $t_0 \notin I(S_v)$,
it must be that $t_0 \notin I(S_u \cup \{t_0\})$,
and thus we have $t_0 \notin I(S_v \setminus \{s\})$.
This means that we can safely start the scan from $t = t_0$ because
$I(S_w \setminus \{s\})_{< t_0} \cap I(S_v \setminus \{s\}) = \emptyset$.
If $r_w = first(I(S_w))$, then we can start with $t_0 = next(r_w)$.

Let $k_u$ be the number of points of $I(S_u)$ between $r_u$ and $r_u'$
and $k_v$ the number of points of $I(S_w)$ between $t_0$ and $r_w'$.
Then by starting Algorithm~\ref{algo:iuiw} with $s = r_u$ and $t = t_0$,
we can compute $r_u'$ and $r_w'$ in $O((1 + k_u + k_w) \log m)$ time.
Notice that the points between $r_u$ and $r_u'$ (except $r_u$ itself) were moved up from $u$ to $v$ (i.e., there were in $I(S_u)$ but not in $I(S_v)$, but they are now in $I(S_v)$).
Because future deletions cannot move a point down and the height of $\Psi(S)$ is $O(\log n)$,
each point can only be moved up $O(\log n)$ times.
Therefore, the sum of $k_u$ across all deletions is $O(n \log n)$.
Similarly, the points between $t_0$ and $r_w'$ (except $t_0$ and possibly $r_w'$) were moved up to $w$ from descendants of $w$.
By the same reasoning, the sum of $k_w$ across all deletions is $O(n \log n)$ as well.
Therefore, the computation of $r_u'$ and $r_w'$ across all levels of $\Psi(S)$ takes amortized $O(\log n \log m)$ time per deletion.

For the rest of the time analysis,
notice that the remaining costly operation is computing
$b_{r_u'}(S_v \setminus \{s\})$ and $a_{r_w'}(S_v \setminus \{s\})$ for each node $v$ in which $s \in X(v)$.
By Lemma~\ref{lem:ends_compute}, this takes $O(\log^2 m)$ time per node.
Because the height of $\Psi(S)$ is $O(\log n)$,
the time spent on them throughout the whole deletion is $O(\log n \log^2 m)$.

In summary, each deletion takes amortized $O(\log n \log m + \log n \log^2 m) = O(\log n \log^2 m)$ time.
Algorithm~\ref{algo:delete} gives the psuedocode.

\begin{algorithm}
    \caption{Delete($v,s$)} \label{algo:delete}
    \KwIn{$\Psi(S_v)$ and point $s \in S_v$ to be deleted}
    \KwOut{$\Psi(S_v \setminus \{s\})$}
    Descend from $v$ (Algorithm~\ref{algo:descend})\;
    \If{$v$ is a leaf}{
        Mark $v$ as deleted\;
    }
    \If{$v$ has only one child $v'$}{
        Delete$(v', s)$\;
        \If{$v'$ is marked as deleted}{
            Mark $v$ as deleted\;
        }
    }
    \ElseIf{$v$ has left child $u$ and right child $w$}{
        \If{$s \in S_u$}{
            Delete$(u, s)$\;
            \If{$u$ is marked as deleted}{
                Remove $u$ from being a child\;
            }
        }
        \Else{
            Delete$(w, s)$\;
            \If{$w$ is marked as deleted}{
                Remove $w$ from being a child\;
            }
        }
    }

    \If{$v$ has left child $u$, right child $w$, and $s \in X(v)$}{
        Compute $r_u'$ and $r_w'$\;
        $X(v) \gets \{r_u', r_w'\}$\;
        \If{$r_u'$ and $r_w'$ exist}{
            Compute $b_{r_u'}(S_v \setminus \{s\})$ and $a_{r_w'}(S_v \setminus \{s\})$, and add them to $Y_2(v)$\;
        }
    }
    Ascend to $v$ (Algorithm~\ref{algo:ascend})\;
\end{algorithm}

\section{Implicit geodesic additively-weighted Voronoi diagram}
\label{sec:ls_gawvd}

In this section, we prove Theorem~\ref{theo:nearneighbor}. Let $S$ be a set of $n$ weighted points in $P$. Our goal is to construct a data structure for $S$ to answer (additively-weighted geodesic) nearest neighbor queries: Given a query point $q$, find its nearest neighbor in $S$.

The problem can be reduced to a {\em diagonal-separated} problem by using the balanced polygon decompositions (BPD) of simple polygons as described in Section~\ref{sec:pre}.


\paragraph{Diagonal-separated nearest neighbor queries.}
Consider a diagonal $d$ that partitions $P$ into two subpolygons $P_L$ and $P_R$. In the diagonal-separated problem, we assume that all points of $S$ are in $P_L$ and every query point $q$ is in $P_R$. We will prove the following result.

\begin{lemma}\label{lem:diasepneighborquery}
Let $P$ be a simple polygon of $m$ vertices, and assume that a GH data structure has been constructed for $P$. Given a diagonal $d$ that partitions $P$ into two subpolygons $P_L$ and $P_R$, and a set $S$ of $n$ weighted points in $P_L$, we can construct a weighted nearest neighbor data structure for $S$ in $O(n\log^2 m+n\log n\log m)$ time such that for any query point in $P_R$, its nearest neighbor in $S$ can be computed in $O(\log n\log m)$ time.
\end{lemma}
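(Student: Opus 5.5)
The plan is to build an implicit representation of the portion of the additively weighted geodesic Voronoi diagram $\vd_w(S)$ lying in $P_R$, imitating the unweighted diagonal-separated approach of Agarwal, Arge, and Staals~\cite{ref:AgarwalIm18}. We identify a weighted site $p$ with weighted distance $w(p)+d(p,q)$. Every path from $p\in P_L$ to $q\in P_R$ crosses $d$, so $w(p)+d(p,q)=\min_{z\in d}\bigl(w(p)+d(p,z)+d(z,q)\bigr)$. Hence, restricted to $P_R$, each site behaves like a ``source'' lying on $d$ whose weight function is $f_p(z)=w(p)+d(p,z)$.

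First we establish the structural facts for the weighted setting. We show that each weighted bisector $B_w(s,t)\cap P_R$ is a single curve that meets $d$ at most once, and that each weighted Voronoi region is star-shaped with respect to shortest paths through $d$; the latter follows from the standard argument that $\pi(s,q)$ stays inside the region of $s$. From these facts, the regions meeting $d$ appear along $d$ in a fixed order, and we claim that the regions intersecting $P_R$ are exactly those meeting $d$. Every region meeting $P_R$ must meet $d$, since $\pi(s,q)$ crosses $d$ inside $\R(s)$. This gives a sequence $I(S)$ of sites ordered bottom to top along $d$, consistent with a fixed ordering of $S$ analogous to $\prec_L$. Here we must show that the ordering is consistent for any subset, playing the role of Lemma~\ref{lem:consistent2v3}. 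That property lets a divide-and-conquer merge proceed.

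Next comes construction. We sort $S$ by the chosen order and build a balanced binary tree. We compute $I(S_v)$ by merging the children's lists with a linear scan similar to Algorithm~\ref{algo:iuiw}, where each step compares weighted distances at the endpoints on $d$ of the relevant regions. Each such comparison, and each computation of the intersection of a bisector with $d$, is done by binary search along the funnels using the GH data structure, as in Lemmas~\ref{lem:projection} and~\ref{lem:ends_compute}. This yields the recurrence $T(n)=2T(n/2)+O(n\log m)+O(\log^2 m)$ per node, which gives the claimed $O(n\log n\log m+n\log^2 m)$. A single final pass computes, for consecutive sites $s,t$ in $I(S)$, the point $B_w(s,t)\cap d$ in $O(\log^2 m)$ each, for $O(n\log^2 m)$ total.

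For a query $q\in P_R$, we binary search over the bisectors $B_w(s_i,s_{i+1})$ of consecutive sites in $I(S)$. Each bisector splits $P_R$ into a lower and an upper part, and by the ordering these splits are nested. Deciding which side $q$ lies on reduces to comparing $w(s_i)+d(s_i,q)$ against $w(s_{i+1})+d(s_{i+1},q)$. Each comparison takes $O(\log m)$ with the GH data structure, so the whole query takes $O(\log n\log m)$.

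The main obstacle is the structural step. The weighted bisector may fail to reach $d$, or a region may fail to be connected in the way it is in the unweighted case. The most delicate claim is that regions of the restricted diagram meet $d$ in one interval and in an order independent of the subset. I would prove it through the one-dimensional lower envelope of the functions $g_p(z)=w(p)+d(p,z)$ on $d$. Two such functions should cross at most once, by the convexity-type property of $d(p,z)$ along a segment in~\cite{ref:PollackCo89}, extended to handle the additive constant. With that single-crossing property, the ordering follows as in Lemma~\ref{lem:Lvor}.
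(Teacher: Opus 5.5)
\textbf{There is a genuine gap.} The structural step you flag as delicate is false, and the rest of the plan rests on it.

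For additively weighted sites, the functions $g_p(z)=w(p)+d(p,z)$ on $d$ can cross twice. Take $P$ convex, so $d(\cdot,\cdot)$ is Euclidean, with $d$ the diagonal from $(0,-3)$ to $(0,3)$. Put $s=(-1,0)$ with $w(s)=8.5$ and $t=(-10,0)$ with $w(t)=0$. Then $g_s(0,0)=9.5<10=g_t(0,0)$, but $g_s(0,\pm 2)=8.5+\sqrt{5}>\sqrt{104}=g_t(0,\pm 2)$. So $t$ claims two intervals of $d$ separated by the interval of $s$, and $B(s,t)\cap P_R$ consists of two arcs.

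Convexity of each $d(p,\cdot)$ along $d$ says nothing about how often two vertically shifted copies cross. The correct bound is Lemma~\ref{lem:xmonotone}: at most two crossings, one per half-bisector. Consequently:
\begin{itemize}
\item a region of $\vd(S,d)$ can consist of $\Omega(n)$ intervals;
\item no order on $S$ matches the order of regions along $d$ for every subset, since the analogues of Lemmas~\ref{lem:Lvor} and~\ref{lem:consistent2v3} fail;
\item the list $I(S)$ and the merge modeled on Algorithm~\ref{algo:iuiw} therefore have no foundation.
\end{itemize}
The paper uses no global order. It inserts sites in increasing weighted distance to the endpoint $b$ of $d$, and relies on Lemma~\ref{lem:one_seg}: only the \emph{newest} site is guaranteed a single interval. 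The reason is that $b$ lies on the older site's side, so two separate intervals would force three crossings. Each insertion is then a splice in a balanced search tree. The splice point is found with $O(\log n)$ side tests of $O(\log m)$ each (Lemma~\ref{lem:vdd_bs}), plus $O(\log^2 m)$ for the bisector--diagonal intersections.

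Your query step fails independently, already for unweighted points. The full bisector of two consecutive sites is not a Voronoi edge: beyond a Voronoi vertex it runs through other regions, so the sides it defines are not nested. In the plane, use the line $x=0$ and the sites $s_1=(-1,-1)$, $s_2=(-1.2,0)$, $s_3=(-1,1)$; all three regions meet the line, in this order. The point $q=(3,-0.1)$ has nearest site $s_1$, yet $d(s_3,q)<d(s_2,q)$. A binary search that tests the pair $(s_2,s_3)$ therefore answers $s_3$.

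With weights there is a further problem. As above, $B(s_i,s_{i+1})\cap P_R$ can have two components. Comparing $w(s_i)+d(s_i,q)$ with $w(s_{i+1})+d(s_{i+1},q)$ then does not tell you on which side of the relevant piece $q$ lies.

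What is missing is an implicit construction of the diagram inside $P_R$, together with genuine point location. The paper proceeds as follows:
\begin{itemize}
\item It treats every interval of $\vd(S,d)$ as a separate generator.
\item It proves that the relevant half-bisector piece $B_{ij}$ is $x$-monotone, and that $B_{ij}$ and $B_{jk}$ meet at most once (Observation~\ref{obser:bisectorintersect}).
\item It runs a continuous-Dijkstra wavefront with events costing $O(\log^2 m+\log n)$ each (Lemma~\ref{lem:bisectorintersection}). This produces the abstract diagram $\avd(S,P_R)$ of size $O(n)$.
\item Queries use Edelsbrunner--Guibas--Stolfi point location. The above/below test against $B_{ij}$ takes $O(\log m)$: first locate $q$ relative to the last-edge extension of $\pi(s_j,s_i)$ or $\pi(s_i,s_j)$ that crosses $d$, and only then compare weighted distances.
\end{itemize}
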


\paragraph{Proving Theorem~\ref{theo:nearneighbor}.}
Before proving Lemma~\ref{lem:diasepneighborquery}, we first prove Theorem~\ref{theo:nearneighbor} using the lemma.

In the preprocessing, we compute the BPD-tree $T_P$ for $P$ in $O(m)$ time~\cite{ref:GuibasOp89}. In addition, we build a point location data structure on the triangulation of $P$ formed by the triangles of the leaves of $T_P$ in $O(m)$ time so that given a point $q\in P$, the triangle that contains $q$ can be computed in $O(\log m)$ time~\cite{ref:KirkpatrickOp83,ref:EdelsbrunnerOp86}. Finally, we  construct the GH data structure for $P$ in $O(m)$ time. This finishes our preprocessing, which takes $O(m)$ time.

Given a set $S$ of $n$ weighted points in $P$, for each node $v\in T_P$, let $S_v$ be the set of sites of $S$ in $P_v$. We compute $S_v$ for all nodes $v\in T_P$ in $O(n\log n + n\log m)$ time, as follows.
First, using the point location data structure, for each point $p\in S$, we find the leaf $v_p$ of $T_P$ whose triangle contains $p$ in $O(\log m)$ time. Then, for each node $v$ in the path from $v_p$ to the root of $T_P$, we add $p$ to $S_v$. As the height of $T_P$ is $O(\log m)$, the total time is $O(n\log m)$ for computing $S_v$ for all $v\in T_P$.

In the above algorithm, we also keep a list $L$ of nodes $v$ with $S_v\neq \emptyset$. As each point of $S$ can appear in $S_v$  for $O(\log m)$ nodes $v\in T_P$, we have $L=O(n\log m)$ and $\sum_{v\in L}|S_v|=O(n\log m)$. The reason we use $L$ is to avoid spending $\Omega(m)$ time on constructing the data structure for $S$.

For each node $v\in L$, if $v$ is a leaf, then $P_v$ is a triangle. In this case, we construct a weighted Euclidean Voronoi diagram $\vd_v$ for $S_v$ inside $P_v$, which can be done in $O(|S_v|\log |S_v|)$ time~\cite{ref:FortuneA87}. We further build a point location data structure for $\vd_v$ in $O(|S_v|)$ time~\cite{ref:KirkpatrickOp83,ref:EdelsbrunnerOp86} since $|\vd_v|=O(|S_v|)$~\cite{ref:FortuneA87,ref:SharirIn85}. Doing this for all leaves of $T_P$ takes $O(n\log n)$ time in total.

If $v$ is not a leaf, then let $u$ and $w$ be the left and right children of $v$, respectively. We construct a nearest neighbor data structure of Lemma~\ref{lem:diasepneighborquery} for $S_u$ in $P_u$ with respect to the diagonal $d_v$ for query points in $P_w$, so the points of $S_u$ and query points are in two different sides of $d_v$; let $\calD_u(w)$ denote the data structure. This takes $O(|S_u|\log^2 m+|S_u|\log n\log m)$ time by Lemma~\ref{lem:diasepneighborquery}.
Symmetrically, we construct a nearest neighbor data structure $\calD_w(u)$ of Lemma~\ref{lem:diasepneighborquery} for $S_w$ in $P_w$ with respect to the diagonal $d_v$ for query points in $P_u$, which takes $O(|S_w|\log^2 m+|S_w|\log n\log m)$ time.
Doing the above for all internal nodes $v$ of $T_P$ takes $O(n\log^3m + n\log n\log^2 m)$ time as $\sum_{v\in L}|S_v|=O(n\log m)$.

This finishes the construction of our nearest neighbor data structure for $S$, which takes $O(n\log^3 m+n\log n\log^2 m)$ time in total.

Given a query point $q\in P$, our goal is to find a nearest neighbor of $q$ in $S$, denoted by $\beta_q(S)$. We first find the leaf $v_q$ of $T_P$ whose triangle contains $q$. This takes $O(\log m)$ time using the point location data structure for the leaf triangles of $T_P$. Then, using the Voronoi diagram $\vd_{v_q}$, we find the nearest neighbor of $q$ in $S_{v_q}$ in $O(\log n)$ time and consider it a candidate for $\beta_q(S)$. Next, for each ancestor $v$ of $v_p$ in $T_P$, we do the following. Let $u$ and $w$ be the left and right children of $v$, respectively. Note that either $u$ or $w$ is an ancestor of $v_q$. We assume that $w$ is an ancestor of $v_q$ and the other case can be handled similarly. Using the data structure $\calD_u(w)$, we find the nearest neighbor of $q$ in $S_u$, which can be done in $O(\log n\log m)$ time by Lemma~\ref{lem:diasepneighborquery} and we consider it a candidate for $\beta_q(S)$. As the height of $T_P$ is $O(\log m)$, the above finds $O(\log m)$ candidates in $O(\log n\log^2 m)$ time. Finally, among all candidates, we find the one nearest to $q$ and return it as $\beta_q(S)$. To this end, for each candidate $p$, we need to compute the geodesic distance $d(p,q)$, which can be done in $O(\log m)$ time using the GH data structure.

In summary, a nearest neighbor of $q$ in $S$ can be found in a total of $O(\log n\log^2 m)$ time. This proves Theorem~\ref{theo:nearneighbor}.

\paragraph{Proving Lemma~\ref{lem:diasepneighborquery}.}
In the remainder of this section, we prove Lemma~\ref{lem:diasepneighborquery}.

Agarwal, Arge, and Staals~\cite{ref:AgarwalIm18} developed a related structure for the {\em unweighted} diagonal-separated case (their construction time is $O(n\log n + n\log^2 m)$, and their query procedure is essentially identical to ours). Their approach relies on the algorithm for Hamiltonian abstract Voronoi diagrams~\cite{ref:KleinHa94}.
A natural idea is to attempt to extend their method to the {\em weighted} setting. However, several obstacles arise. For example, a key property used in their algorithm is that the bisector of two sites intersects the diagonal $d$ at most once. This holds in the unweighted case because bisectors are (geodesically) “line-like,” but it fails in the weighted setting: even in the Euclidean plane, the bisector of two additively weighted sites is a hyperbola, which may intersect a line twice.
A second crucial property used in~\cite{ref:AgarwalIm18} is that, for any three sites $s,t,r$, the bisector of $s$ and $t$ intersects the bisector of $t$ and $r$ at most once. This again is true in the unweighted case but false for additively weighted sites, where two weighted bisectors may intersect twice. These difficulties make it challenging to adapt their Hamiltonian-diagram–based techniques.
For this reason, we develop a different approach.



In the following, we first discuss several properties and operations involving bisectors of weighted points in $P$ (Section~\ref{sec:bis_data_struct}). They will be essential for our algorithm of Lemma~\ref{lem:diasepneighborquery}.

\subsection{Bisectors -- properties and operations}
\label{sec:bis_data_struct}

Let $S$ be a set of weighted points in $P$ (we sometimes refer to these points as {\em sites}). For each site $s\in S$, let $w(s)$ denote the weight of $s$, and define $d_s(p)=w(s)+d(s,p)$ as the {\em weighted distance} from $s$ to $p$ for any point $p\in P$.

For two sites $s,t\in S$, {\em their bisector} $B(s,t)$ is defined as the set of points $p\in P$ such that $d_s(p)=d_t(p)$, i.e., $B(s,t)$ consists of all points of $P$ equidistant from $s$ and $t$. Note that if $d_s(t) < w(t)$ or $d_t(s) < w(s)$, then $B(s,t) = \emptyset$.


We present several properties and algorithms regarding bisectors of weighted points of $S$.
Similar results for the bisectors of unweighted points have been studied in the literature, e.g., \cite{ref:AronovOn89,ref:AronovTh93,ref:AgarwalIm18,ref:WangAn23DCG}. We will try to extend them to the weighted setting. Some of these extensions can be done by following the unweighted methods in a straightforward manner, while others present more challenging and new approaches are required.

As usually in the unweighted case~\cite{ref:AronovOn89,ref:AronovTh93,ref:AgarwalIm18,ref:WangAn23DCG}, we make a general position assumption that no vertex of $P$ is equidistant (with respect to the weighted distance) from two sites of $S$. Under the assumption, the bisector of any two sites of $S$ does not contain a vertex of $P$.



Suppose $p_t$ is the anchor of $t$ in the shortest path $\pi(s,t)$ for two points $s,t\in P$. Recall that $\pi(s,t)$ is oriented from $s$ to $t$.
In our discussion, we often need to extend the last edge of $\pi(s,t)$ (i.e., $\overline{p_tt}$) beyond $t$ along the direction from $p_t$ to $t$ until a point $p\in \partial P$, we refer to $\overline{tp}$ as the {\em last-edge extension} of $\pi(s,t)$. Similarly, we define the {\em first-edge extension} of $\pi(s,t)$ by extending the first edge of the path beyond $s$.

We next introduce some properties regarding bisectors. We start with the following lemma, whose proof is a straightforward extension of the unweighted case~\cite[Lemma 3.22]{ref:AronovOn89}.

\begin{lemma} {\em (Aronov \cite[Lemma 3.22]{ref:AronovOn89})}\label{lem:struct2}
    If $B(s,t)$ is not empty, then the followings hold.
    \begin{enumerate}
    \item $B(s,t)$ is a smooth curve connecting two points in $\partial P$ and does not intersect $\partial P$ in its interior.
    \item $B(s,t)$ is the union of $O(m)$ hyperbolic arcs (where a line segment is considered a special hyperbolic arc), and each endpoint of these arcs is the intersection of $B(s,t)$ with the last-edge extension of $\pi(s,p)$ or $\pi(t,p)$ for a vertex $p$ of $P$.
    \item The tangent to $B(s,t)$ at any point $p$ bisects the angle formed by the two segments connecting $p$ and its anchors in $\pi(s,p)$ and $\pi(t,p)$, respectively.
    \end{enumerate}
\end{lemma}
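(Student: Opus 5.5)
The plan is to follow Aronov's unweighted argument \cite[Lemma 3.22]{ref:AronovOn89}, using the same shortest path map decomposition. The additive weights enter only as constants in the distance functions. Fix nonempty $B(s,t)$.

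\textbf{Step 1: the decomposition.} Overlay the two shortest path maps $\mathrm{SPM}(s)$ and $\mathrm{SPM}(t)$ in $P$. A cell of $\mathrm{SPM}(s)$ is the set of points sharing a common anchor vertex $a$, and on it $d_s(p)=w(s)+d(s,a)+|\overline{ap}|$. The cell boundaries are exactly the last-edge extensions of $\pi(s,v)$ for vertices $v$ of $P$. Within one cell $c_s$ (anchor $a$) intersected with a cell $c_t$ (anchor $b$), the bisector is the locus
\[
|\overline{ap}|-|\overline{bp}| = \bigl(w(t)+d(t,b)\bigr)-\bigl(w(s)+d(s,a)\bigr),
\]
a constant difference. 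This is a branch of a hyperbola, or a line segment in degenerate cases. Its arc endpoints therefore lie on cell boundaries, i.e.\ on last-edge extensions of $\pi(s,v)$ or $\pi(t,v)$. The $O(m)$ bound on the number of arcs follows once we know $B(s,t)$ is a single simple curve. Such a curve crosses each extension segment at most once, because along an extension segment of $\pi(s,v)$ the function $d_s-d_t$ is strictly monotone (the extension points directly away from $v$, so $d_s$ grows at unit rate while $d_t$ grows at rate less than $1$ unless the two paths are aligned). There are $O(m)$ such extensions, which gives item 2.

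\textbf{Step 2: smoothness and the tangent (item 3).} In the interior of $P$, $d_s$ is differentiable away from the extension segments, with gradient the unit vector from its anchor $a$ to $p$. It is in fact $C^1$ across the extensions, since the two anchors there are collinear with $p$. Hence $f=d_s-d_t$ is $C^1$ with $\nabla f=u_a-u_b$, a difference of two unit vectors. This gradient is nonzero unless $u_a=u_b$. That case would force the anchors to be collinear on the same side, and then $d_s-d_t$ would be constant along a ray through a vertex, placing that vertex on $B(s,t)$; this is excluded by general position. So $B(s,t)=f^{-1}(0)$ is a smooth curve by the implicit function theorem. Its normal $u_a-u_b$ is perpendicular to the bisector of the angle between $u_a$ and $u_b$, which gives item 3.

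\textbf{Step 3: topology (item 1).} This is the main obstacle. We must show that $f^{-1}(0)$ is a single arc with both endpoints on $\partial P$, and not a closed loop or several components. A closed loop is ruled out by the fact that $f$ has no critical points and no interior local extrema, together with simple connectivity of $P$: a loop would bound a region on which $f$ attains an interior extremum. To show the curve reaches $\partial P$ at both ends without touching it in between, follow Aronov. Any $p\in B(s,t)$ on $\partial P$ in the interior of the curve would have to be a vertex (a tangency with an edge contradicts the monotonicity along segments), and vertices are excluded by general position.

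For connectivity, consider the two regions $\{f<0\}$ and $\{f>0\}$. The region $\{f<0\}$ is star-shaped from $s$ along shortest paths, because $f$ strictly decreases toward $s$ along $\pi(s,p)$: there $d_s$ drops at unit rate while $d_t$ drops by at most that rate. Similarly $\{f>0\}$ is star-shaped from $t$. Both regions are therefore connected, and in a simply connected polygon two complementary connected open sets share a connected boundary interface. This should give a single component.

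The weighted twist is only that $B(s,t)$ can be empty, namely when $s\in\{f\ge0\}$ fails, i.e.\ when one site dominates the other. This case is excluded by the hypothesis that $B(s,t)$ is nonempty.
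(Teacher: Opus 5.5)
Your overall plan is the paper's, which simply reruns Aronov's proof with $d_s,d_t$ in place of the unweighted distances. The local parts are sound: the hyperbolic pieces in the overlay of the two shortest path maps, $f=d_s-d_t$ being $C^1$ on $\mathrm{int}(P)\setminus\{s,t\}$ with $\nabla f=u_a-u_b$, the tangent bisecting the anchor angle, and at most one crossing per extension segment.

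The gap is in Step~3. You exclude closed components of $B(s,t)$ by asserting that $f$ has no critical points and no interior local extrema. But you only proved $\nabla f\neq 0$ on $B(s,t)$, and the assertion is false elsewhere:
\begin{itemize}
\item $f(p)\ge w(s)-d_t(s)=f(s)$ for every $p$, so $s$ is a global minimum of $f$ (and $t$ a global maximum), at points where $f$ is not differentiable;
\item $f$ is locally constant, equal to $d_s(v)-d_t(v)$, on regions where $\pi(s,\cdot)$ and $\pi(t,\cdot)$ both pass through a common vertex $v$.
\end{itemize}
Hence a closed component of $B(s,t)$ encircling $s$, with $f<0$ inside and the minimum at $s$, is not excluded. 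Your connectivity argument does not exclude it either: such a loop is connected and leaves exactly two connected regions $\{f<0\}$ and $\{f>0\}$.

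The missing ingredient is that both regions reach $\partial P$. On the last-edge extension of $\pi(t,s)$ beyond $s$ we have $d_t(p)=d_t(s)+|\overline{sp}|$ and $d_s(p)=w(s)+|\overline{sp}|$. So $f\equiv f(s)<0$ all the way to $\partial P$; this is the same observation used in the proof of Lemma~\ref{lem:algointerdiagonal}. The symmetric argument works for $t$.

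Now $\{f<0\}$ and $\{f>0\}$ are connected (by your geodesic star-shapedness) and each meets $\partial P$. So neither can lie in the open disk bounded by a closed component of $B(s,t)$, since that disk lies in $\mathrm{int}(P)$. The disk would then be contained in $\{f=0\}$, contradicting $\nabla f\neq 0$ on $B(s,t)$. Every component is therefore an arc with both endpoints on $\partial P$. Since $k$ disjoint such arcs cut $P$ into $k+1$ regions while $P\setminus B(s,t)$ has at most two, $k=1$.

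Three smaller repairs:
\begin{itemize}
\item $B(s,t)$ cannot be tangent to an edge $e$ at an interior point of $e$ because $\nabla f\perp e$, with $u_a,u_b$ both pointing from the $P$-side into $e$, forces $u_a=u_b$. Monotonicity along extension segments says nothing about polygon edges.
\item You need $f(s)<0<f(t)$, i.e.\ $|w(s)-w(t)|<d(s,t)$, not merely $B(s,t)\neq\emptyset$. If $f(s)=0$, then $B(s,t)$ is the segment from $s$ along that extension, and item~1 fails. Your gradient argument also breaks there, because the nearer anchor is $s$ itself, which the vertex general-position assumption does not cover.
\item Your emptiness condition is reversed: $B(s,t)=\emptyset$ iff $f(s)>0$ or $f(t)<0$.
\end{itemize}
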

\begin{proof}
We use exactly the same proof as \cite[Lemma 3.22]{ref:AronovOn89} but replacing the geodesic distance $d(s,p)$ (resp., $d(t,p)$) for any $p\in P$ used in the proof by $d_s(p)$ (resp., $d_t(p)$), e.g., replacing $d(s,x)$ by $d_s(x)$ and replacing $d(s,\hat{s})$ by $d_s(\hat{s})$ in the proof. We omit the details.
\end{proof}


By a straightforward extension of \cite[Lemma 3.28]{ref:AronovOn89}, we have the following observation.


\begin{observation}\label{lem:star}
For all $p \in P$, $\pi(s, p)\cap B(s,t)$ consists of at most one point.
\end{observation}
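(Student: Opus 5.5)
We prove Observation~\ref{lem:star} by contradiction, following the unweighted argument of \cite[Lemma 3.28]{ref:AronovOn89} with $d(s,\cdot)$ replaced by $d_s(\cdot)$ and $d(t,\cdot)$ by $d_t(\cdot)$. The tool is the function
\[
f(x)=d_t(x)-d_s(x)=w(t)-w(s)+d(t,x)-d(s,x),
\]
considered as $x$ moves along $\pi(s,p)$ from $s$ to $p$. Its zeros on $\pi(s,p)$ are exactly the points of $\pi(s,p)\cap B(s,t)$. So the goal is to show that $f$ is strictly increasing along $\pi(s,p)$. A strictly increasing function has at most one zero, which gives the observation.

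The monotonicity comes from two facts. First, for $x$ on $\pi(s,p)$ with arclength parameter $\lambda$ from $s$, we have $d(s,x)=\lambda$, since every subpath of a shortest path is a shortest path. This term grows at rate exactly $1$. Second, $d(t,\cdot)$ is $1$-Lipschitz under geodesic distance, which gives
\[
d(t,x')-d(t,x)\ge -d(x,x')=-(\lambda'-\lambda)
\]
for $x$ before $x'$ on the path. Together these give $f(x')\ge f(x)$, so $f$ is nondecreasing; the additive weights only contribute a constant and do not affect this step.

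Strict monotonicity is the only delicate point. Suppose $f(x)=f(x')$ for some $x\ne x'$ on $\pi(s,p)$. Then $d(t,x)=d(t,x')+d(x,x')$, so $x'$ lies on a shortest path from $t$ to $x$, and $\pi(x,x')\subseteq\pi(s,p)$ is a subpath of such a path. In a simple polygon shortest paths are unique. Hence $\pi(t,x)$ runs through $x'$ and continues along $\pi(s,p)$ backwards toward $x$, so on the segment between $x$ and $x'$ the paths from $s$ and from $t$ run in opposite directions along the same geodesic.

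I expect excluding this configuration cleanly to be the main obstacle; the rest is routine. The most direct approach is to show it forces a degenerate bisector, that is, $f$ constant on an arc, or two zeros of $f$ bracketing an arc on which $f$ is constant. Those cases can be excluded by the general position assumption (the bisector does not contain a vertex of $P$), together with Lemma~\ref{lem:struct2}(1), which says $B(s,t)$ is a simple smooth curve. If this becomes messy, I would instead argue directly that between two zeros of $f$ the function must be constant: it is nondecreasing and takes the value $0$ at both ends. That would mean $\pi(s,p)$ contains a subarc of $B(s,t)$. On such a subarc the anchors from $s$ and from $t$ point in opposite directions along the path, which contradicts the tangent-bisection property of Lemma~\ref{lem:struct2}(3).
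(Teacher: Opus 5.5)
There is a genuine gap, compounded by a sign error. The paper gives no argument beyond deferring to the weighted analogue of \cite[Lemma 3.28]{ref:AronovOn89}, and your monotonicity framework is a natural way to carry that out. As written, though, the direction is wrong and the one step with real content is not done. From $d(s,x')-d(s,x)=d(x,x')$ and the bound you quote, $d(t,x')-d(t,x)\ge -d(x,x')$, you only get $f(x')-f(x)\ge -2\,d(x,x')$. The useful side is $d(t,x')\le d(t,x)+d(x,x')$, which makes $f$ \emph{nonincreasing}. That is harmless for ``at most one zero,'' but it flips your equality case. $f(x)=f(x')$ means $d(t,x')=d(t,x)+d(x,x')$, so $x\in\pi(t,x')$, and the paths from $s$ and from $t$ traverse $\pi(x,x')$ in the \emph{same} direction. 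The opposite-direction configuration you describe gives $f(x)-f(x')=2\,d(x,x')$, the largest possible drop, not equality. Hence your fallback via part (3) of Lemma~\ref{lem:struct2} fails. On a shared subpath both anchors lie on the same ray behind the point, the angle they span is zero, and tangent bisection is consistent with $B(s,t)$ containing that subpath. Smoothness (part (1)) does not forbid straight pieces either, since part (2) allows segments as arcs.

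What is missing is an argument that produces a point actually excluded by nondegeneracy. Suppose $x\ne x'$ both lie in $\pi(s,p)\cap B(s,t)$. Monotonicity gives $d_s-d_t\equiv 0$ on $\pi(x,x')$, so by uniqueness of shortest paths $\pi(x,x')$ is a common subpath of $\pi(s,x')$ and $\pi(t,x')$. Follow this common part back from $x'$ to the junction vertex $c$ of $\pi(x',s)$ and $\pi(x',t)$. A shortest path is straight at every interior point that is not a vertex of $P$, so two paths sharing a segment can only separate at a vertex of $P$; thus $c$ is a vertex of $P$ or $c\in\{s,t\}$. Along the common part $d_s$ and $d_t$ both drop at unit rate, so $d_s(c)=d_t(c)$.

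If $c$ is a vertex of $P$, this contradicts the general position assumption. If $c=s$ (resp.\ $c=t$), then $d_t(s)=w(s)$ (resp.\ $d_s(t)=w(t)$). In that boundary case the statement really can fail: $B(s,t)$ then contains the whole last-edge extension of $\pi(t,s)$ beyond $s$. So this case must be excluded by an extra nondegeneracy assumption. The paper makes that assumption only tacitly (see the strict inequality $d_s(p)<d_t(p)$ on $\overline{ss'}$ in the proof of Lemma~\ref{lem:algointerdiagonal}), not through its stated condition on vertices of $P$. Your appeal to ``the bisector does not contain a vertex of $P$'' alone would therefore not close this case.
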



By the above observation, $B(s,t)$ intersects $\pi(s,t)$ at exactly one point, which partitions $B(s,t)$ into two {\em half-bisectors}.

Consider a diagonal $d$ that partitions $P$ into two subpolygons $P_L$ and $P_R$. Without loss of generality, we assume that $d$ is vertical and $P_L$ lies locally on the left of $d$. We have the following lemma.

\begin{lemma}\label{lem:xmonotone}
For two weighted sites $s,t\in S\cap P_L$, for each half-bisector of $B(s,t)$, its portion inside $P_R$ is $x$-monotone, and its  intersection with $d$ consists of at most one point. Hence, $B(s,t)$ intersects $d$ at most twice.
\end{lemma}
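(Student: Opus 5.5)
Fix one half-bisector $\beta$ of $B(s,t)$; it starts at the point $o=\pi(s,t)\cap B(s,t)$, which lies in $P_L$ because $s,t\in P_L$ and $P_L$ is geodesically convex. The plan is to parametrize $\beta$ from $o$ to its endpoint on $\partial P$ (Lemma~\ref{lem:struct2}(1)). Along this parametrization I would establish that the common weighted distance $f(p)=d_s(p)=d_t(p)$ is strictly increasing. The two claims then follow from this monotonicity combined with a crossing argument at $d$.

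\textbf{Monotonicity of $f$.} By Lemma~\ref{lem:struct2}(3), the tangent of $\beta$ at $p$ bisects the angle between the two anchor directions $\overrightarrow{a_sp}$ and $\overrightarrow{a_tp}$. Here $a_s$ and $a_t$ are the anchors of $p$ in $\pi(s,p)$ and $\pi(t,p)$. Moving along the tangent in the direction pointing away from both anchors, which is the bisector of the outward directions, increases both $d(s,\cdot)$ and $d(t,\cdot)$ at the same positive rate $\cos(\theta/2)$, where $\theta<\pi$ is the angle between the outward directions (strictly less than $\pi$ away from $o$). Continuity and Observation~\ref{lem:star} ensure the orientation of $\beta$ never flips to the opposite tangent direction: otherwise some shortest path $\pi(s,p)$ would be crossed twice. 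Hence $f$ is strictly increasing from $o$.

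\textbf{At most one intersection with $d$.} Suppose $\beta$ meets $d$ at two points $p_1$ and $p_2$, with $p_1$ earlier, so $f(p_1)<f(p_2)$. Both $\pi(s,p_2)$ and $\pi(t,p_2)$ stay in $P_L$ and reach $p_2$ from the left side. The subarc of $\beta$ between $p_1$ and $p_2$, together with the segment $\overline{p_1p_2}\subseteq d$, bounds a region $R$. I would argue that $\pi(s,p_2)$, coming from $P_L$, must cross the subarc of $\beta$ before $p_2$, or else $\pi(s,p_1)$ must. Either case would give a shortest path from $s$ meeting $B(s,t)$ twice, contradicting Observation~\ref{lem:star}. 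An alternative route to the same contradiction is the triangle inequality along $\overline{p_1p_2}$, using the angle $(\ge 90^\circ)$ argument from Lemma~\ref{lem:ext_dist}. This step, which rules out the bisector bulging back across $d$, is the main obstacle. I expect it to need a careful case analysis of whether $p_1$ and $p_2$ lie in the funnel from $s$ to $d$ or the funnel from $t$ to $d$.

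\textbf{$x$-monotonicity in $P_R$.} For $p\in\beta\cap P_R$, both anchors satisfy $x(a_s),x(a_t)<x(p)$, or the anchors are endpoints of $d$ with the path entering $P_R$ through $d$. The reason is that every shortest path from $P_L$ to $p$ passes through $d$, and inside $P_R$ the last edge moves away from $d$. Since $P_R$ is locally right of $d$ and geodesics into $P_R$ leave $d$ rightward, both outward directions have positive $x$-component. Their bisector, which is the direction of travel along $\beta$, therefore has positive $x$-component. Consequently $x$ is strictly increasing along $\beta\cap P_R$ in the parametrization, and this portion is a single $x$-monotone piece. Finally, since there are two half-bisectors and each meets $d$ at most once, $B(s,t)$ intersects $d$ at most twice.
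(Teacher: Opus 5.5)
Your route differs from the paper's, and as written it does not go through. The load-bearing step is the claim in your last paragraph that, for $p\in\beta\cap P_R$, both anchors lie strictly left of $p$. You justify it by a property of a \emph{single} geodesic: it leaves $d$ rightward, so its last edge points rightward. That property is false. Suppose $P_R$ contains a corridor that leaves $d$ to the right, bends downward, and then doubles back to the left. For $p$ in the last leg, the last edge of $\pi(s,p)$ points in the $-x$ direction, so the anchor has larger $x$-coordinate than $p$. Similarly, a geodesic through an endpoint of $d$ can enter $P_R$ with negative $x$-component when $P_R$ wraps around that endpoint. For points of $B(s,t)$ the conclusion you want is actually true, but only because two paths with \emph{distinct} anchors arrive at $p$, and your argument never uses this.

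The paper supplies exactly this two-path ingredient, by a local contradiction rather than by following $\beta$. Suppose $\beta\cap P_R$ had a vertical tangent $\ell_p$ at some $p$. By Lemma~\ref{lem:struct2}(3) and general position, the two anchors are mirror images across $\ell_p$, so one lies strictly on each side. The maximal vertical segment $\overline{p_1p_2}\subseteq P$ through $p$ cuts $P$ into two pieces, one of which contains $d$ and hence $s$ and $t$. The anchor on the other side is reached by a shortest path that must cross $\overline{p_1p_2}$ at some $q$. The subpath from $q$ to $p$ would then have to be the straight segment $\overline{qp}$, yet it passes through an anchor that is off $\ell_p$.

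If you prefer to keep your own route, a turning-angle count also works. Let $z_s,z_t$ be the last points of $\pi(s,p)$ and $\pi(t,p)$ on $d$, and consider the funnel bounded by $\pi(z_s,p)$, $\pi(z_t,p)$ and the segment of $d$ between $z_s$ and $z_t$. The side starting at the lower of $z_s,z_t$ turns only clockwise and starts with heading at most $90^\circ$. The other side turns only counterclockwise and starts with heading at least $-90^\circ$. Since the apex angle at $p$ is positive, both final headings (measured continuously) are forced into $[-90^\circ,90^\circ]$.

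The other gap is the step bounding $|\beta\cap d|$, which you leave as an unresolved sketch. You do not need it as a separate step. In the paper it is read off directly from $x$-monotonicity. In your framework it is even more immediate once the orientation is in place. Your monotonicity-of-$f$ paragraph is fine in substance: the rate $\cos(\theta/2)$ is continuous and nonzero away from $o$. It shows that the direction of travel is the one pointing away from both anchors. Combined with the corrected two-path fact, the direction of travel then has positive $x$-component at every point of $\beta\cap P_R$. This includes points of $d$, where it is easy because both paths arrive from $P_L$. Hence $\beta$ can meet $d$ only by crossing it from left to right, once, and can never return.

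So the single missing idea is the two-path argument. With it, both claims of the lemma follow. Your approach then even yields the stronger statement that $x$ increases along $\beta$ in the direction of increasing $d_s=d_t$. The paper only rules out vertical tangents and never needs monotonicity of $f$.
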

\begin{proof}
Consider a half-bisector $B_1$ of $B(s,t)$ and let $B^*_1$ be its portion inside $P_R$. If $B^*_1=\emptyset$, then the lemma is obviously true. In the following, we assume $B^*_1\neq \emptyset$. To prove the lemma, it suffices to show that $B^*_1$ is $x$-monotone, since it immediately leads to that $B^*_1$ intersects $d$ at most once.

Assume to the contrary that $B^*_1$ is not $x$-monotone. Then, $B^*_1$ has a point $p$ whose tangent line $\ell_p$ to $B^*_1$ is vertical. If we move along $\ell_p$ from $p$ upwards (resp., downwards), let $p_1$ (resp., $p_2$) be the first point of $\partial P$ encountered by the traversal.

\begin{figure}[t]
\begin{minipage}[t]{\linewidth}
\begin{center}
\includegraphics[totalheight=1.3in]{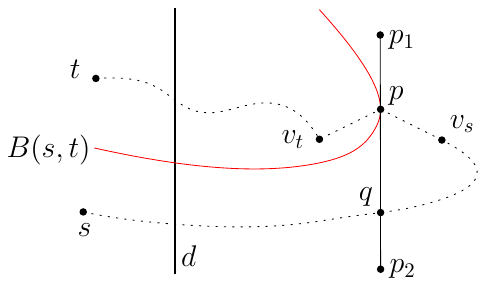}
\caption{Illustrating the proof of Lemma~\ref{lem:xmonotone}. The two dotted curves represent shortest paths $\pi(s,p)$ and $\pi(t,p)$.}
\label{fig:monotone}
\end{center}
\end{minipage}
\vspace*{-0.15in}
\end{figure}

Let $v_s$ and $v_t$ be the anchors of $p$ in $\pi(s,p)$ and $\pi(t,p)$, respectively. By Lemma~\ref{lem:struct2}(c), $\ell_p$ bisects the angle $\angle v_spv_t$. Note that $\angle v_spv_t$ cannot be zero degree since otherwise we would have $v_s=v_t$, and thus $d_s(v_s)=d_t(v_t)$, contradicting our general position assumption. Therefore, one of $v_s$ and $v_t$ must be strictly to the right of $\ell_p$, and without loss of generality we assume it is $v_s$ (see Figure~\ref{fig:monotone}).
This means that $\pi(s,v_s)$ must cross $\overline{p_1p_2}$ at a point $q$. As both $p$ and $q$ are on $\overline{p_1p_2}\subseteq P$, we have $\pi(p,q)=\overline{pq}$. On other other hand, since both $p$ and $q$ are in $\pi(s,p)$, the portion $\pi(t,p)$ of $\pi(s,p)$ between $p$ and $q$ is the shortest path $\pi(p,q)$, which is apparently not $\overline{pq}$ as it contains $v_s$ that is strictly to the right of $\ell_p$. We thus obtain a contradiction.
\end{proof}


The following lemma will be used frequently in our algorithm.

\begin{lemma}\label{lem:algointerdiagonal}
Assuming that a GH data structure has been computed for $P$, given any two sites $s,t\in S$, the intersection of $d$ with each half-bisector of $B(s,t)$ can be computed in $O(\log^2 m)$ time.
\end{lemma}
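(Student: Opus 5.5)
By Lemma~\ref{lem:xmonotone}, each half-bisector of $B(s,t)$ meets $d$ in at most one point. My plan is to reduce finding that point to a monotone binary search along $d$, where each probe costs $O(\log m)$ time via the GH data structure, and to organize the probes so that only $O(\log m)$ are needed.

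\textbf{Step 1: make the search one-dimensional.} Define $f(z)=d_s(z)-d_t(z)$ for $z\in d$. The bisector $B(s,t)$ meets $d$ exactly at the zeros of $f$, and there are at most two of them. I first find the intersection point $x$ of $B(s,t)$ with $\pi(s,t)$; this point separates the two half-bisectors. It is found by binary search along the compact representation of $\pi(s,t)$, since $d_s-d_t$ is strictly increasing along that path from $s$ to $t$. To decide which half-bisector a zero on $d$ belongs to, I use the star-shapedness in Observation~\ref{lem:star}: a zero $z$ lies on the half-bisector on a fixed side of $\pi(s,t)$ according to the side of the closed curve $\pi(s,x)\cup\pi(x,t)$ on which $z$ lies. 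This can be tested in $O(\log m)$ time from the anchors returned by the GH data structure.

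\textbf{Step 2: locate the zeros of $f$.} For $z$ moving along $d$, the function $d(s,z)$ is convex; it is the funnel distance from the funnel $F_s(d)$. Hence $f$ is a difference of two such functions, and each is piecewise hyperbolic with breakpoints at the points where $d$ meets extensions of the funnel edges, giving $O(m)$ breakpoints. Then:
\begin{itemize}
\item Compute the cusps of $F_s(d)$ and $F_t(d)$ and, as in Lemma~\ref{lem:projection}, the points $z_s,z_t\in d$ closest to $s$ and $t$. These points split $d$ into at most three intervals.
\item On each interval, a derivative argument shows that $f$ is monotone. The derivatives of $d_s$ and $d_t$ along $d$ are cosines of the angles their last edges make with $d$, and these angles change monotonically along the funnel chains. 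So on each interval $f$ has at most one sign change, consistent with there being at most two zeros overall.
\item On each interval, binary search over the breakpoints of the two funnel chains. Each probe evaluates $d_s(z)$ and $d_t(z)$ in $O(\log m)$ time, there are $O(\log m)$ probes, and the final piece requires solving a constant-degree equation in $O(1)$ time.
\end{itemize}
This gives $O(\log^2 m)$ time overall.

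\textbf{Main obstacle.} The difficulty is showing that $f$ is monotone on each interval between the critical points, or more precisely, identifying the right splitting points. With additive weights, $f$ need not be monotone on all of $d$, which is exactly why there can be two intersections. If that monotonicity claim fails, I would fall back on a different search. The idea is to use the parametrization from the alternative proof of Lemma~\ref{lem:ends_compute}: binary search over the vertices of the funnel chain $\gamma_s$. For each vertex $p_s$, intersect the tangent ray at $p_s$ with the corresponding ray from $\gamma_t$ to obtain a point of the bisector, and compare its position with $d$. Since the portion of the half-bisector inside $P_R$ is $x$-monotone (Lemma~\ref{lem:xmonotone}), that comparison is a monotone predicate, which again gives $O(\log m)$ iterations of $O(\log m)$ time each.
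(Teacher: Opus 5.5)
\textbf{The gap is in Step 2: the claim that $f=d_s-d_t$ is monotone on each interval of $d$ cut out by $z_s$ and $z_t$ is false, so the splitting is wrong.} Knowing that $d_s'$ and $d_t'$ are each monotone (cosines of monotonically turning angles) says nothing about the sign of $d_s'-d_t'$.

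A Euclidean counterexample suffices; take $P$ convex, so geodesic distance is Euclidean. Put $d$ on the $y$-axis, $s=(-3,0)$ with $w(s)=0$, and $t=(-1,1)$ with $w(t)=2.2$. Then $z_s=(0,0)$, $z_t=(0,1)$, and $f(y)=\sqrt{9+y^2}-\sqrt{1+(y-1)^2}-2.2$. It is negative at $y=1$, positive at $y=1.5$, negative at $y=2$, and tends to $-1.2$. So both intersections (near $y\approx 1.14$ and $y\approx 1.94$) lie in the single interval above $z_t$, where $f$ rises and then falls. A sign-change binary search there returns at most one of them.

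The correct splitting point is $z=(0,1.5)$, where the extension of $\pi(s,t)$ beyond $t$ crosses $d$. Indeed $f(p)\le w(s)-w(t)+d(s,t)$, with equality exactly on $\overline{tt'}$.

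This is the missing idea. Extend $\pi(s,t)$ at both ends to points $s',t'\in\partial P$.
\begin{itemize}
\item The curve $\overline{s's}\cup\pi(s,t)\cup\overline{tt'}$ meets $B(s,t)$ only at $x$. So it splits $P$ into two subpolygons, each containing exactly one half-bisector.
\item Since $\pi(s,t)\subseteq P_L$ and a shortest path cannot cross $d$ twice, at most one of $\overline{ss'},\overline{tt'}$ crosses $d$.
\item Hence either all of $d$ lies in one subpolygon, or the crossing point $z$ splits $d$ into two pieces (toward its two endpoints), one in each subpolygon. By Lemma~\ref{lem:xmonotone}, each piece contains at most one zero of $f$.
\item The point $z$ is found in $O(\log m)$ time from the junction vertex of the shortest paths from $t$ to the two endpoints of $d$, together with the anchors.
\item The funnel binary search (extend funnel edges to $d$, compare $d_s$ with $d_t$) then works on each piece in $O(\log^2 m)$ time.
\end{itemize}
Your curve $\pi(s,x)\cup\pi(x,t)$ is just $\pi(s,t)$. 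It is neither closed nor separating, so it cannot play this role.

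\textbf{Your fallback does not repair this.} The natural predicate is whether the bisector point on a tangent ray lies before or after $d$. Its value is governed by the sign of $f$ where the ray meets $d$, so in the example it reads false, true, false along the chain and is not monotone. The $x$-monotonicity in Lemma~\ref{lem:xmonotone} holds per half-bisector. It helps only once you know which part of $d$ belongs to which half-bisector, and that again requires the point $z$.

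Moreover, the chains in the alternative proof of Lemma~\ref{lem:ends_compute} and in Lemma~\ref{lem:bisectorintersection} are defined using the endpoint of the bisector on $d$. That endpoint is exactly what this lemma is supposed to compute, so reusing that parametrization here is circular.
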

\begin{proof}
Let $v_s$ and $v_t$ be the anchors of $s$ and $t$ in $\pi(s,t)$, respectively. Let $t'$ denote the first point of $\partial P$ hit by extending the segment $\overline{v_tt}$ along the direction from $v_t$ to $t$. We define $s'$ similarly. It is easy to see that $\pi(s',t')=\overline{s's}\cup \pi(s,t)\cup \overline{tt'}$~\cite{ref:AronovOn89}. See Figure~\ref{fig:diagonalintersect}.

Note that for any point $p\in \overline{s's}$, $d(p,t)=d(p,s)+d(s,t)$. Since $B(s,t)\neq \emptyset$, we can obtain that $d_s(p)<d_t(p)$. This implies that $B(s,t)$ cannot intersect any point of $\overline{ss'}$. Similarly, $B(s,t)$ cannot intersect any point of $\overline{tt'}$. As $B(s,t)$ intersects $\pi(s,t)$ at exactly one point, $B(s,t)$ intersects $\pi(s',t')$ at exactly one point.

The path $\pi(s',t')$ partitions $P$ into two subpolygons, one bounded by $\pi(s',t')$ and the portion of $\partial P$ from $s'$ to $t'$ counterclockwise and the other bounded by $\pi(s',t')$ and the portion of $\partial P$ from $s'$ to $t'$ clockwise. Let $P_1$ and $P_2$ denote the first and second subpolygons, respectively (e.g., in Figure~\ref{fig:diagonalintersect}, $P_1$ is the gray portion). Since $B(s,t)$ intersects $\pi(s',t')$ at exactly one point, $P_1$ and $P_2$ each contains exactly one half-bisector of $B(s,t)$.

As both $s$ and $t$ are in $P_L$, $\pi(s,t)\subseteq P_L$. Since $\pi(s',t')$ cannot intersect $d$ twice (otherwise the path could be shortened along $d$), at most one of $\overline{ss'}$ and $\overline{tt'}$ can cross $d$.

If neither $\overline{ss'}$ nor $\overline{tt'}$ crosses $d$, then one of $P_1$ and $P_2$ must be in $P_L$ (e.g., in  Figure~\ref{fig:diagonalintersect}, $P_1\subseteq P_L$), and thus only one half-bisector of $B(s,t)$ can intersect $d$. Otherwise, without loss of generality, we assume that
$\overline{tt'}$ crosses $d$ at a point $z$. Then, $z$ divides $d$ into two subsegments that belong to $P_1$ and $P_2$, respectively. By Lemma~\ref{lem:xmonotone}, $B(s,t)$ intersects each subsegment of $d$ at most once.

\begin{figure}[t]
\begin{minipage}[t]{\linewidth}
\begin{center}
\includegraphics[totalheight=1.5in]{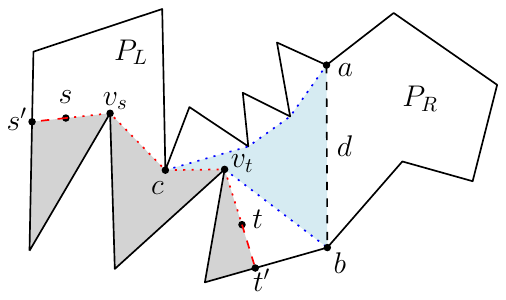}
\caption{The gray region is $P_1$. The light blue region is the funnel $F_s(d)$.}
\label{fig:diagonalintersect}
\end{center}
\end{minipage}
\vspace*{-0.15in}
\end{figure}

With the above discussion, we present an algorithm for the lemma as follows.

Using the GH data structure, we can compute the two anchors $v_s$ and $v_t$ in $O(\log m)$ time. Then, using the GH data structure, we determine in $O(\log m)$ time whether $d$ intersects $\overline{ss'}$ or $\overline{tt'}$ (and if so, compute the intersection) without explicitly computing $s'$ and $t'$ as follows\footnote{Note that we could construct a ray-shooting data structure for $P$ in the preprocessing so that $s'$ and $t'$  can be computed in $O(\log m)$ time using ray-shooting queries~\cite{ref:HershbergerA95,ref:ChazelleRa94}. However, we want to restrict our preprocessing work to the GH data structure only as this is required in the statement of Lemma~\ref{lem:diasepneighborquery}.}. Let $a$ and $b$ be the upper and lower endpoints of $d$, respectively.

We only discuss how to determine whether $d$ intersects $\overline{tt'}$ (and if so, compute the intersection), since the other case can be handled similarly.
Let $\overrightarrow{r_t}$ be the ray out of $t$ in the direction opposite $v_t$.
Let $c_t$ be the junction vertex of $\pi(t,a)$ and $\pi(t,b)$. Let $v_a$ be the anchor of $t$ in $\pi(a, t)$ and $v_b$ the anchor of $t$ in $\pi(b, t)$. Notice that if $t \neq c_t$, then $t$ is not ``visible'' to $d$ and thus $\overline{tt'}$ cannot intersect $d$.
Also note that $t \neq c_t$ if and only if $v_a = v_b$.
We can compute $c_t$, $v_a$, and $v_b$ in $O(\log m)$ time using the GH data structure.
If $v_a = v_b$, we know that $\overline{tt'}$ does not intersect $d$. Otherwise, observe that $\overline{tt'}$ intersects $d$ if and only if the ray $\overrightarrow{r_t}$ lies in the clockwise angle from
$\overrightarrow{t v_a}$ to $\overrightarrow{t v_b}$.
If  $\overrightarrow{r_t}$ lies in the angle, then the intersection $d\cap \overline{tt'}$ is
the intersection of $d$ with $\overrightarrow{r_t}$, which can be computed in $O(1)$ time. In summary, we can determine whether $d$ intersects $\overline{tt'}$ (and if so, compute the intersection) in $O(\log m)$ time.


Depending on whether $d$ intersects $\overline{ss'}\cup \overline{tt'}$, we proceed with two cases.

\paragraph{The non-intersection case.}
If neither $\overline{ss'}$ nor $\overline{tt'}$ intersects $d$, then there is at most one intersection between $B(s,t)$ and $d$, and we compute it in $O(\log^2 m)$ time as follows.

Let $c$ be the junction vertex of $\pi(s,a)$ and $\pi(s,b)$. Consider the funnel $F_s(d)$ formed by $\pi(c,a)$, $\pi(c,b)$, and $d$. Note that each side of $F_s(d)$ is a concave chain. Let $\gamma_s$ be the concatenation of the two sides of $F_s(d)$. Hence, $a$ and $b$ are the two endpoints of $\gamma_s$.
For any point $z\in d$, there is a single vertex $p_z\in \gamma_s$ such that $\overline{p_zz}$ is tangent to the chain of $\gamma_s$ containing $p_z$ and $p_z$ is the anchor of $z$ in $\pi(s,z)$; see Figure~\ref{fig:diagonalintersect40}. If we move $z$ along $d$ from $a$ to $b$, the point $p_z$ also moves from $a$ to $b$ along $\gamma_s$.

\begin{figure}[t]
\begin{minipage}[t]{\linewidth}
\begin{center}
\includegraphics[totalheight=1.5in]{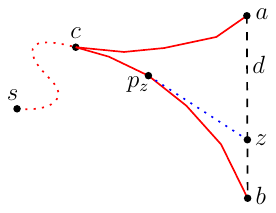}
\caption{Illustrating $p_z$. The red solid curves are the two sides of the funnel $F_s(d)$ while the red dotted curve connecting $s$ and $t$ is $\pi(s,c)$.}
\label{fig:diagonalintersect40}
\end{center}
\end{minipage}
\end{figure}

Let $z^*$ be the intersection of $B(s,t)$ and $d$. Our goal is to compute $z^*$. We first show how to compute its anchor $p_{z^*}$ in $\pi(s,z^*)$ in $O(\log^2 m)$ time. A similar algorithm was already given in \cite{ref:AgarwalIm18} for the unweighted case. Hence we briefly discuss the main idea below for completeness.

Due to our general position assumption that a bisector of two sites of $S$ cannot intersect a vertex of $P$, $z^*$ must be in the interior of $d$. Hence, either $d_s(a)<d_t(a)$ or $d_s(a)>d_t(a)$, which can be determined in $O(\log m)$ time using the GH data structure. Without loss of generality, we assume that $d_s(a)<d_t(a)$. Hence, $d_s(b)>d_t(b)$. This means that for all points $p\in \overline{az^*}$, $d_s(a)\leq d_t(a)$, and for all $p\in \overline{bz^*}$, $d_s(a)\geq d_t(a)$. We will use this property to perform binary search on the vertices of $\gamma_s$ to compute $p_{z^*}$.

Using the GH data structure, we can obtain in $O(\log m)$ time a tree structure representing $\pi(a,c)$ and a tree structure representing $\pi(c,b)$, so that we can do binary search on them.
In each iteration of the binary search, we have a vertex $p$, say in $\pi(c,a)$, and we also have $d(s,p)$ and thus $d_{s}(p)$ available from the GH data structure. Let $p'$ be the adjacent vertex of $p$ in $\gamma_s$ closer to $a$ (see Figure~\ref{fig:diagonalintersect50}); note that $p'$ can be obtained from the GH data structure. We extend the segment $\overline{pp'}$ from $p$ to $p'$ until a point $z\in d$. Then, we know that $d_s(z)=d_s(p)+|\overline{pz}|$. As $d_s(p)$ is already available, we can have $d_s(z)$ immediately. We compute $d(t,p)$ in $O(\log m)$ time using the GH data structure and thereby obtain $d_t(p)$. We compare $d_s(p)$ with $d_t(p)$. If they are equal, then $z$ is $z^*$ and we can terminate the entire algorithm. Otherwise, if
$d_s(z)<d_t(z)$, then according to the above discussion, $z^*$ is below $z$ and thus we continue to search $\gamma_s$ on the side  toward $b$ after $p$. If $d_s(z)>d_t(z)$, we search $\gamma_s$ on the other side of $p$. In this way, $p_{z^*}$ can be computed within $O(\log m)$ iterations. As each iteration takes $O(\log m)$ time, the total time for computing $p_{z^*}$ is $O(\log^2 m)$.

\begin{figure}[t]
\begin{minipage}[t]{\linewidth}
\begin{center}
\includegraphics[totalheight=1.5in]{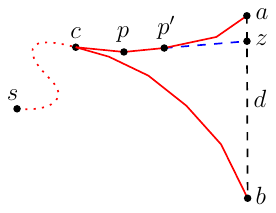}
\caption{The dashed blue segment is the extension of $\overline{pp'}$.}
\label{fig:diagonalintersect50}
\end{center}
\end{minipage}
\end{figure}

Using a similar method, we can compute the anchor $p_{z^*}'$ of $z^*$ in $\pi(t,z^*)$ in $O(\log^2m)$ time. With the two anchors $p_{z^*}$ and $p_{z^*}'$, $z^*$ can be computed in additional $O(1)$ time.

\paragraph{The intersection case.}
If one of $\overline{ss'}$ and $\overline{tt'}$ intersects $d$, then let $z$ be the intersection. As discussed above, $B(s,t)$ intersects $\overline{az}$ at most once. We can compute the intersection in $O(\log^2 m)$ time using the same algorithm as above but with respect to $\overline{az}$ (instead of with respect to $d$ as above). Similarly, $B(s,t)$ intersects $\overline{bz}$ at most once and we can compute the intersection in $O(\log^2 m)$ time using the same algorithm as above but with respect to $\overline{bz}$.
\end{proof}

\subsection{Overview}
We now in a position to prove Lemma~\ref{lem:diasepneighborquery}.
We follow the notation in the lemma, e.g., $P$, $S$, $d$, $P_L$, $P_R$, $n$, $m$. Without loss of generality, we assume that $d$ is vertical and $P_R$ is locally to the right of $d$. Let $a$ and $b$ be the upper and lower vertices of $d$, respectively.

For any $S'\subseteq S$, we let $\vd(S')$ denote the (weighted) geodesic Voronoi diagram of $S'$ in $P$. For any subset $P'\subset P$, let $\vd(S',P')$ denote the portion of $\vd(S')$ in $P'$. In particular, $\vd(S',P_R)$ is the portion of $\vd(S')$ in $P_R$ and $\vd(S',d)$ is the intersection of $d$ with $\vd(S')$. If a point $p$ is in a Voronoi region in $\vd(S')$ of a site $s\in S'$, we say that $p$ is {\em claimed} by $s$.

The following observation from the literature \cite{ref:AronovOn89,ref:HershbergerAn99} will be used later.
\begin{observation}{\em \cite{ref:AronovOn89,ref:HershbergerAn99}} \label{obser:star}
For any $S'\subseteq S$, the Voronoi region of each site $s\in S'$ in $\vd(S')$ is star-shaped, i.e., for any point $p$ in the Voronoi region, $\pi(s,p)$ is in the interior of the Voronoi region except that $p$ may be on the boundary of the region.
\end{observation}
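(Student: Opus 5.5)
The statement to prove is Observation~\ref{obser:star}: every Voronoi region of the additively weighted diagram $\vd(S')$ is star-shaped with respect to its site. I will argue directly from the triangle inequality and the structure of shortest paths in a simple polygon.

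\textbf{Step 1 (weak inclusion).} Fix $s\in S'$, a point $p$ claimed by $s$, and a point $q\in\pi(s,p)$. For any other site $t\in S'$ I want $d_s(q)\le d_t(q)$. Because $q$ lies on $\pi(s,p)$, we have $d_s(p)=d_s(q)+d(q,p)$. Because $p$ is claimed by $s$, we have $d_s(p)\le d_t(p)$. The triangle inequality gives $d_t(p)\le d_t(q)+d(q,p)$. Chaining these,
\[
d_s(q)+d(q,p)\le d_t(q)+d(q,p),
\]
so $d_s(q)\le d_t(q)$. Hence $\pi(s,p)$ lies in the region of $s$.

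\textbf{Step 2 (strict inequality for $q\neq p$).} Suppose $q\in\pi(s,p)\setminus\{p\}$ and $d_s(q)=d_t(q)$ for some $t\neq s$. Then every inequality in Step~1 is tight, so $d(t,p)=d(t,q)+d(q,p)$. This means the concatenation $\pi(t,q)\cup\pi(q,p)$ is a shortest path from $t$ to $p$. In a simple polygon shortest paths are unique, so $\pi(t,p)$ passes through $q$ and shares the subpath $\pi(q,p)$ with $\pi(s,p)$.

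Near $q$ the point $q$ lies on $B(s,t)$, and I need a contradiction. I split into two cases.
\begin{itemize}
\item If $q$ is an interior point of a segment of $\pi(s,p)$, then $\pi(t,p)$ and $\pi(s,p)$ arrive at $q$ along the same direction. Since shortest paths are locally straight except at reflex vertices of $P$, the two paths must coincide on a neighbourhood before $q$. Take a point $q'$ just before $q$ on this common portion. Subtracting $d(q',q)$ from both sides of $d_s(q)=d_t(q)$ gives $d_s(q')=d_t(q')$, so $q'\in B(s,t)\cap\pi(s,p)$. This yields a second point of $B(s,t)$ on $\pi(s,p)$, contradicting Observation~\ref{lem:star}.
\item If $q$ is a vertex of $P$, then $d_s(q)=d_t(q)$ places a polygon vertex on $B(s,t)$, contradicting the general position assumption.
\end{itemize}
The case $q=s$ needs separate handling. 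Here $w(s)=d_t(s)$ would force $B(s,t)$ to degenerate so that $s$ lies on it. I will exclude this, or else absorb it by noting that $s$ would then share its region on a measure-zero set, which general position also rules out.

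\textbf{Step 3 (interior).} With $d_s<d_t$ strictly on $\pi(s,p)\setminus\{p\}$ for each of the finitely many $t$, continuity of the geodesic distance gives a neighbourhood of each such point claimed only by $s$. So these points lie in the interior of the region, and only $p$ may lie on its boundary.

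\textbf{Main obstacle.} The delicate part is Step~2: turning ``the two paths share $\pi(q,p)$'' into a second intersection point of $\pi(s,p)$ with $B(s,t)$, while correctly handling the cases where $q$ is a polygon vertex or the site itself. I will handle these with Observation~\ref{lem:star} and general position as described above.
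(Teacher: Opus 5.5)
Your argument is correct except for one degenerate configuration, discussed below. It differs from the paper only in that the paper gives no proof: Observation~\ref{obser:star} is simply imported from \cite{ref:AronovOn89,ref:HershbergerAn99}. Your Step~1 is the standard triangle-inequality argument. Steps~2--3 derive the interior claim from facts the paper does state (Observation~\ref{lem:star} and the general-position assumption), so your version is self-contained where the paper only cites.

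\textbf{Step~2 can be shortened.} If you take Observation~\ref{lem:star} as a black box, no case split is needed. Tightness of the chain in Step~1 also gives $d_s(p)=d_t(p)$, so $q$ and $p$ are already two distinct points of $\pi(s,p)\cap B(s,t)$. This also covers the subcase $q=t$, which your case~(a) silently skips: there $\pi(t,p)$ starts at $q$, so the two paths do not both ``arrive'' at $q$.

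\textbf{You can also avoid Observation~\ref{lem:star}.} The paper only asserts it, so it is worth not depending on it; instead, push your case~(a) all the way back.
\begin{itemize}
\item Let $r$ be the endpoint farthest from $p$ of the common part of $\pi(s,p)$ and $\pi(t,p)$. This common part is connected by uniqueness of shortest paths, and it contains $\pi(q,p)$.
\item Then $d_s(r)=d_s(q)-d(r,q)=d_t(q)-d(r,q)=d_t(r)$.
\item By maximality and local straightness, $r$ is either a vertex of $P$, which general position excludes, or one of $s,t$.
\end{itemize}

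\textbf{The degenerate case must be assumed away, not proved away.} The possibility $r\in\{s,t\}$ (your case $q=s$ and the missing case $q=t$) means $|w(s)-w(t)|=d(s,t)$. If, say, $w(s)=w(t)+d(s,t)$, then $d_s\ge d_t$ on all of $P$. Every point claimed by $s$ is then tied with $t$, so the strict inequality you aim for in Step~2 is simply false there. The paper's general-position assumption concerns only vertices of $P$, so it does not cover this. You should state $|w(s)-w(t)|\neq d(s,t)$ explicitly as a nondegeneracy assumption rather than say general position ``also rules it out.'' The paper's bisector facts (Lemma~\ref{lem:struct2}, Observation~\ref{lem:star}) tacitly rely on the same assumption.
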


As our query points are in $P_R$, we are interested in $\vd(S,P_R)$. Since the size of $\vd(S,P_R)$ is $\Omega(m)$ as $|P_R|=\Omega(m)$, we will construct $\vd(S,P_R)$ implicitly so that give a query point $q$ we can still efficiently locate site whose Voronoi region contains $q$ (and the site is the nearest neighbor of $q$ in $S$).

In the following, we first compute $\vd(S,d)$ in $O(n\log n\log m+n\log^2 m)$ time in Section~\ref{sec:vddnew} and then extend $\vd(S,d)$ to compute $\vd(S,P_R)$ implicitly in additional $O(n\log n+n\log^2 m)$ time in Section~\ref{sec:vdextend}. We show in Section~\ref{sec:point_location} that we can construct a point location data structure in additional $O(n)$ time so that a nearest neighbor query for any point $q\in P_R$ can be answered in $O(\log n\log m)$ time.
Note that all these algorithms are based on the assumption that a GH data structure for $P$ is already available.

\subsection{Computing $\boldsymbol{\vd(S,d)}$}
\label{sec:vddnew}

In the unweighted case, Agarwal, Arge, and Staals~\cite{ref:AgarwalIm18} gave an algorithm to compute $\vd(S, d)$ in $O(n \log n + n \log^2 m)$ time. Their algorithm relies on the fact that each Voronoi region in $\vd(S,d)$ is a single segment of $d$. However, this does not hold anymore in our weighted problem because a Voronoi region of $\vd(S,d)$ may have $\Omega(n)$ segments. Consequently, it appears difficult to extend their approach. Instead, we propose a new algorithm of runtime is $O(n \log n \log m + n \log^2 m)$.

We compute $\vd(S,d)$ incrementally. First, we sort the sites of $S$ by their weighted distance to the lower vertex $b$ of $d$.
This can be done in $O(n \log n + n \log m)$ time by using the GH data structure to compute the weighted distances and then sorting.
Let $s_1, s_2, \ldots, s_n$ be the sites of $S$ in increasing order by their weighted distances to $b$. Note that due to our general position assumption, no two sites have the same weighted distance to $b$. Hence, the above sorted order is unique.
Define $S_i = \{s_j \in S : j \leq i\}$.

Note that for any $S'\subseteq S$, $\vd(S',d)$ a set of points (these points are considered vertices of $\vd(S',d)$) partitioning $d$ into segments such that each segment is claimed by one site of $S'$ (i.e., it belongs to a Voronoi region of $\vd(S',d)$).

We use a balanced binary search tree $T$ to store the vertices of $\vd(S_i, d)$ in the order along $d$ from $b$ to $a$.
Initially, for $\vd(S_0, d)$, $T$ contains only $a$ and $b$.
Assuming that $\vd(S_{i-1},d)$ is already computed and stored in $T$, we compute $\vd(S_{i},d)$ by updating $T$, as follows.

First of all, although a Voronoi region in $\vd(S,d)$ may consist of multiple segments in general, we have the following lemma,
which makes the task of computing $\vd(S_{i},d)$ much easier.


\begin{lemma} \label{lem:one_seg}
    The Voronoi region of $s_i$ in $\vd(S_i, d)$ is either empty or a single segment of $d$.
\end{lemma}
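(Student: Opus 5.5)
We argue by contradiction. Suppose the Voronoi region of $s_i$ in $\vd(S_i,d)$ contains two points $z_1,z_2\in d$ with $z_1$ below $z_2$, and some point $z\in\overline{z_1z_2}$ between them is not claimed by $s_i$. Then some $s_j$ with $j<i$ satisfies $d_{s_j}(z)<d_{s_i}(z)$. Since $j<i$, the sorted order gives $d_{s_j}(b)<d_{s_i}(b)$. So along $d$, going upward from $b$, the sign of $d_{s_i}-d_{s_j}$ is positive at $b$, nonpositive at $z_1$, positive at $z$, and nonpositive at $z_2$. By the general position assumption we may take the inequalities at $z_1$ and $z_2$ to be strict after a slight perturbation. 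The bisector $B(s_i,s_j)$ would then cross $d$ at least three times: once in each of $\overline{bz_1}$, $\overline{z_1z}$ and $\overline{zz_2}$.

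This contradicts Lemma~\ref{lem:xmonotone}. Both sites lie in $P_L$, and that lemma says $B(s_i,s_j)$ meets $d$ at most twice, at most once per half-bisector. The argument needs only a pairwise comparison of $s_i$ with one earlier site. The key ingredient is that the earlier site is strictly closer at $b$, which forces one sign change to be "used up" before $z_1$.

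The main thing to handle carefully is the degenerate situation where $z_1$ or $z_2$ lies on the bisector itself, so that the sign there is zero rather than negative. I plan to choose $z_1$ and $z_2$ in the interior of the claimed portions. This is possible because the region is a union of segments, not isolated points, under general position. With that choice, $s_i$ is strictly closer than $s_j$ at $z_1$ and $z_2$, and continuity gives three genuine crossings.

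A second subtlety concerns a region that is a single point, or empty. Both cases are covered trivially by the statement.

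The lemma follows.
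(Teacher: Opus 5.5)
Your proposal is correct and is essentially the paper's proof. You compare $s_i$ with the earlier site $s_j$ that claims a point between two claimed pieces of $s_i$'s region, use $d_{s_j}(b)<d_{s_i}(b)$ to force a sign change below the lower piece, and obtain three intersections of $B(s_i,s_j)$ with $d$, contradicting Lemma~\ref{lem:xmonotone}. Your explicit handling of the zero-sign case, by choosing $z_1,z_2$ inside nondegenerate claimed pieces, relies on the same general-position assumption that the paper uses implicitly when it speaks of segments $e_1,e_2,e$.
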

\begin{proof}
Suppose for the sake of contradiction that this is not true. Then, $d$ must have two segments $e_1$ and $e_2$ claimed by $s_i$ in $\vd(S_i, d)$ such that between $e_1$ and $e_2$ there is a segment $e$ claimed by some site $s_j \in S_{i - 1}$.
Note that such $e$ and $s_j$ must exist or else $e_1$ and $e_2$ would not be disconnected segments.

Without loss of generality, we assume that $e_1$ is below $e_2$.
Since $d_{s_j}(b)<d_{s_i}(b)$, the bisector $B(s_i,s_j)$ must intersect $d$ at least three times: an odd number of times between $b$ and $e_1$, an odd number of times between $e_1$ and $e$, and an odd number of times between $e$ and $e_2$.
However, $B(s_i,s_j)$ can intersect $d$ at most twice by Lemma~\ref{lem:xmonotone}, a contradiction.
\end{proof}

Let $e$ be the Voronoi region of $s_i$ in $\vd(S_i, d)$. By the above lemma, either $e$ is $\emptyset$ or a single segment of $d$.
We first show how to compute $\vd(S_i,d)$ by assuming $e\neq\emptyset$. Let $v_1$ and $v_2$ be the lower and upper endpoints of $e$, respectively. Once $v_1$ and $v_2$ are known, to obtain $\vd(S_i, d)$, we can simply remove all vertices of $T$ between $v_1$ and $v_2$, and then insert $v_1$ and $v_2$ to $T$. This can be done in $O(\log |T|)$ time by updating $T$. Note that this also proves that $\vd(S_{i},d)$ has at most $2i$ vertices. Hence, $|T|\leq 2n$ throughout the algorithm.

We next show how to compute $v_1$ and $v_2$. We will only describe how to compute $v_1$ since computing $v_2$ can be done analogously. Let $v_0$ and $v_0'$ be the vertices of $T$ immediately below and above $v_1$, respectively.
Then, $\overline{v_0v_0'}$ is claimed by a site $s_j\in S_{i-1}$ in $\vd(S_{i-1}, d)$ and thus $v_1$ is an intersection between $B(s_i,s_j)$ and $d$. By Lemma~\ref{lem:xmonotone}, $B(s_i,s_j)$ intersects $d$ at most twice.
If $B(s_i,s_j)$ intersects $d$ exactly once, then the intersection is $v_1$. Otherwise, if only one intersection of $B(s_i,s_j)\cap d$ is on $\overline{v_0v_0'}$, then that intersection is $v_1$. If both intersections are on $\overline{v_0v_0'}$, then the lower one is $v_1$. Since we can compute $B(s_i,s_j)\cap d$ in $O(\log^2 m)$ time by Lemma~\ref{lem:algointerdiagonal}, once we know $v_0$ and $v_0'$, $v_1$ can be computed in $O(\log^2 m)$ time. In what follows, we discuss how to compute $v_0$, after which $v_0'$ can be easily obtained from $T$ as it is the vertex of $T$ immediately above $v_0$.

We will compute $v_0$ by searching in $T$. To this end, we have the following lemma.

\begin{lemma} \label{lem:vdd_bs}
For any vertex $v$ of $\vd(S_{i - 1}, d)$, if $e\neq\emptyset$, then we can determine whether $v$ is below or above $v_0$ in $O(\log m)$ time. In addition, if the algorithm returns $-1$, then $e=\emptyset$.
\end{lemma}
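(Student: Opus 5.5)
The test compares the weighted distances of $s_i$ and of the site $s_v$ claiming the segment just above $v$. The key monotonicity fact is this: since $d_{s_i}(b)>d_{s_j}(b)$ for every $s_j\in S_{i-1}$, the point $b$ is not claimed by $s_i$. So if $e\neq\emptyset$, the region $e=\overline{v_1v_2}$ is the only part of $d$ claimed by $s_i$ in $\vd(S_i,d)$, by Lemma~\ref{lem:one_seg}. Every point of $d$ below $v_1$ is therefore strictly closer (weighted) to some site of $S_{i-1}$ than to $s_i$, and every point of $e$ is closer to $s_i$ than to all of $S_{i-1}$.

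Given a vertex $v$ of $\vd(S_{i-1},d)$, I would take $s_v$ to be the site claiming the segment of $\vd(S_{i-1},d)$ immediately above $v$ and $s'_v$ the one claiming the segment immediately below. Both can be stored with $v$ in $T$ and maintained in $O(1)$ extra time per update. Since $v$ lies on $B(s_v,s'_v)$, we have $d_{s_v}(v)=d(S_{i-1},v)$. Using the GH data structure, I compute $d_{s_i}(v)$ and $d_{s_v}(v)$ in $O(\log m)$ time and proceed in two cases.

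If $d_{s_i}(v)<d_{s_v}(v)$, then $v$ is claimed by $s_i$ in $\vd(S_i,d)$, so $v\in e$. Hence $v$ lies above $v_1$, and since $v_0$ is the vertex of $T$ immediately below $v_1$, $v$ is above $v_0$ (or $v=v_0'$). Then $e\neq\emptyset$, and we report ``above''. If $d_{s_i}(v)\ge d_{s_v}(v)$, then $v\notin e$ (up to general position), so $v$ is either below $v_1$ or above $v_2$.

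Distinguishing these two sub-cases is the main obstacle. For it I would use the geometry of $B(s_i,s_v)$, which meets $d$ in at most two points by Lemma~\ref{lem:xmonotone}, each computable in $O(\log^2 m)$ by Lemma~\ref{lem:algointerdiagonal}. That is too slow, though: the binary search over $T$ makes $O(\log n)$ tests and the budget is $O(\log m)$ per test. So I would instead test a local derivative-type criterion. Compare the anchors of $v$ on $\pi(s_i,v)$ and $\pi(s_v,v)$, which the GH structure returns in $O(\log m)$ time. From them, decide whether $d_{s_i}-d_{s_v}$ is increasing or decreasing as one moves upward along $d$ through $v$: the derivative of each weighted distance along $d$ is the cosine of the angle between $d$ and the last edge.

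Below $v_1$ the function $f=d_{s_i}-d_{s_v}$ is positive at $b$ and must eventually become negative on $e$. Above $v_2$, $f$ has already crossed back. Because $B(s_i,s_v)$ meets $d$ at most twice and $f(b)>0$, $f$ changes sign exactly at its crossings. A point with $f\ge0$ is therefore below the first crossing or above the second. I would argue that at a point below the first crossing $f$ is decreasing (heading toward the negative region) when $e$ is nonempty, while above the second crossing it is increasing, and I expect this convexity-like claim along $d$ to be the delicate point. So the sign of the derivative resolves ``below'' versus ``above''. If no consistent answer exists, for instance if $f$ is nonnegative and increasing at the lowest vertex while the search also rules out the top, the procedure returns $-1$. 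In that case no point of $d$ is claimed by $s_i$, i.e., $e=\emptyset$.
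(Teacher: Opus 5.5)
Your first case ($d_{s_i}(v)<d_{s_v}(v)\Rightarrow v\in e\Rightarrow v$ is above $v_0$) matches the paper. The second case, however, rests on a false claim. It is not true that $f=d_{s_i}-d_{s_v}$ is decreasing at every point of $d$ below its first zero. Nothing forces $f$ to be monotone there, and the claim fails even when all relevant distances are Euclidean (e.g.\ $P_L$ convex). If $s_v$ lies between $s_i$ and $d$, then $f$ has a local maximum where the line through $s_i$ and $s_v$ meets $d$.

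Concretely, let $d$ run from $b=(0,0)$ to $a=(0,10)$, with $s_j=(-1,1)$, $s_i=(-2,1)$, $w(s_j)=1$, $w(s_i)=\tfrac12$. Then $f(y)=-\tfrac12+\sqrt{4+(y-1)^2}-\sqrt{1+(y-1)^2}$ is positive at $b$, increasing on $[0,1]$, and has a single zero at $y\approx 3.56$. Now add $s_k=(-0.1,0)$ with weight $\approx1.905$ and $s_m=(-0.05,0.75)$ with weight $1.919$. The insertion order is $s_k,s_j,s_m,s_i$. The diagram $\vd(S_{i-1},d)$ has interior vertices at $y\approx0.3,0.6,0.83$, with the four segments claimed, bottom to top, by $s_k,s_j,s_m,s_j$. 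Also $e\approx[3.56,10]$, so $v_0\approx0.83$.

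At $v=0.3$ (where $s_v=s_j$) and at $v=0.6$ (where $s_v=s_m$, which hugs $d$, so $d_{s_m}$ drops at rate close to $1$ through $v$), you get $f>0$ and $f'>0$. Your test therefore answers ``above'' although $v$ is below $v_0$, and the search in $T$ goes the wrong way. Separately, your $-1$ rule is not a concrete $O(\log m)$ test at a single vertex, and the conclusion $e=\emptyset$ is asserted rather than proved.

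The missing idea is a \emph{witness point} on $d$ instead of local slope information. The paper first compares at $a$. If $d_{s_i}(a)<d_{s_j}(a)$, then since $d_{s_j}(b)<d_{s_i}(b)$ and $B(s_i,s_j)$ meets $d$ at most twice (Lemma~\ref{lem:xmonotone}), it meets $d$ exactly once, above $v$, so $v$ is below $e$. This is exactly the one-crossing situation your test mishandles.

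Otherwise $f>0$ at both endpoints, so $e'=\{p\in d: d_{s_i}(p)\le d_{s_j}(p)\}\supseteq e$ is empty or a single segment not containing $v$. If $d_{s_j}(s_i)\le w(s_i)$, the paper returns $-1$. Otherwise it takes the last-edge extension $\overline{s_is_i'}$ of $\pi(s_j,s_i)$, every point of which is strictly closer to $s_i$ than to $s_j$. Whether this segment crosses $d$, and at which point $z$, is found in $O(\log m)$ with the GH structure, using the junction vertex of $\pi(s_i,a)$ and $\pi(s_i,b)$ and the anchors at $s_i$. Then $z\in e'$, so comparing $z$ with $v$ tells you on which side of $v$ the segment $e'$, and hence $e$, lies. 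If the extension misses $d$, the analysis of Lemma~\ref{lem:algointerdiagonal} shows that $B(s_i,s_j)$ cannot cross $d$ twice here, so $e=\emptyset$ and returning $-1$ is correct.
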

\begin{proof}
Let $s_j$ be a site of $\vd(S_{i-1},d)$ claiming $v$ (note that $s_j$ can be maintained along with $T$).
Using the GH data structure, we compute $d_{s_i}(v)$ and $d_{s_j}(v)$ in $O(\log m)$ time. If $d_{s_i}(v) \leq d_{s_j}(v)$, then $v \in e$. As $v_0$ is the lower endpoint of $e$, we know that $v$ is above $v_0$ (including the case $v=v_0$) and thus we are done in this case.

If $d_{s_i}(v) > d_{s_j}(v)$, then  $v \notin e$. Using the GH data structure, we compute $d_{s_i}(a)$ and $d_{s_j}(a)$.
If $d_{s_i}(a) < d_{s_j}(a)$, then we claim that $v$ is below $e$ and thus is below $v_0$. Indeed, because $d_{s_j}(b) < d_{s_i}(b)$,
the bisector $B(s_i, s_j)$ must intersect $d$ an even number of times between $b$ and $v$ and an odd number of times between $v$ and $a$.
As $B(s_i, s_j)$ can intersect $d$ at most twice by Lemma~\ref{lem:xmonotone}, it must intersect $d$ zero times between $b$ and $v$ and once between $v$ and $a$. Hence, $B(s_i, s_j)$ intersects $d$ exactly once. Since both $v$ and $b$ are closer to $s_j$ than to $s_i$ and $b$ is the lower vertex of $d$, $v$ must lie below $B(s_i, s_j)\cap d$ while $e$ must be above $B(s_i, s_j)\cap d$. The claim thus follows.


The remaining case is when $d_{s_j}(a) < d_{s_i}(a)$ (recall that $d_{s_j}(a) \neq d_{s_i}(a)$ due to our general position assumption).
In this case, following the above argument, $B(s_i,s_j)$ must intersect $d$ either zero times or twice.
    We start by checking if $d_{s_j}(s_i) \leq d_{s_i}(s_i)$ (note that $d_{s_i}(s_i)$ is just the weight of $s_i$).
    If so, then we know that the Voronoi region of $s_i$ in $\vd(S)$ is empty and thus $e$ is also empty. In this case we return $-1$. In the following, we assume $d_{s_j}(s_i) >  d_{s_i}(s_i)$.

If we extend the last edge of $\pi(s_j,s_i)$ beyond $s_i$, let $s_i'$ be the first point on $\partial P$ hit by the extension. As discussed in the proof of Lemma~\ref{lem:algointerdiagonal}, every point of $\overline{s_is_i'}$ is closer to $s_i$ than to $s_j$. According to the proof of Lemma~\ref{lem:algointerdiagonal}, we can do the following in $O(\log m)$ time: determine if $\overline{s_is_i'}$ intersects $d$ and if so compute the intersection. Depending on whether $\overline{s_is_i'}$ intersects $d$, there are two cases.

\begin{itemize}
    \item
If $\overline{s_is_i'}$ intersects $d$, let $z$ be the intersection. Let $e'$ be the set of all points $p\in d$ with $d_{s_i}(p)\leq d_{s_j}(p)$. Clearly, $e\subseteq e'$, $z\in e'$, and $v\not\in e'$. Since $\overline{s_is_i'}$ intersects $d$, according to the analysis of Lemma~\ref{lem:algointerdiagonal}, $e'$ is a single segment of $d$. Hence, if $z$ lies below (resp., above) $v$, then we know that $e'$ and thus $e$ must lie below (resp., above) $v$.

\item
If $\overline{s_is_i'}$ does not intersect $d$, then we claim that $e=\emptyset$ and thus we return $-1$. We now prove the claim. If we extend the last edge of $\pi(s_i,s_j)$ beyond $s_j$, let $s_j'$ the first point of $\partial P$ hit by the extension.

If $\overline{s_js_j'}$ does not intersect $d$, then according to the proof of Lemma~\ref{lem:algointerdiagonal}, $B(s_i,s_j)$ intersects $d$ at most once. As we know that $B(s_i,s_j)$ intersect $d$ either zero times or two times, we obtain that $B(s_i,s_j)$ does not intersect $d$. Since $d_{s_j}(b) < d_{s_i}(b)$, it follows that every point of $d$ is closer to $s_j$ than to $s_i$. Hence, $e=\emptyset$.

If $\overline{s_js_j'}$ intersects $d$, then let $z'$ be their intersection. If $B(s_i,s_j)$ does not intersect $d$, then as above we still have $e=\emptyset$. Otherwise, $B(s_i,s_j)$ intersects $d$ twice. Then, according to the proof of Lemma~\ref{lem:algointerdiagonal}, $z'$ is closer to $s_j$ than to $s_i$ and $z'$ is between the two intersections of $B(s_i,s_j)$ and $d$. Hence, $B(s_i,s_j)$ intersect $\overline{bz'}$ exactly once. As $z'$ is closer to $s_j$ than to $s_i$, i.e., $d_{s_j}(z')<d_{s_i}(z')$, we obtain that $d_{s_j}(b)>d_{s_i}(b)$. But this contradicts with the fact that $d_{s_j}(b)<d_{s_i}(b)$. Hence, the case in which $B(s_i,s_j)$ intersects $d$ twice cannot happen. The claim thus follows.
\end{itemize}

In summary, if $e$ exists, then we can determine whether $v$ is below or above $v_0$ in $O(\log m)$ time, and if the algorithm return $-1$, then $e=\emptyset$. The lemma thus follows.
\end{proof}

By the above lemma, $v_0$ can be computed in $O(\log m\log n)$ time using $T$ since the height of $T$ is $O(\log n)$. Therefore, if $e$ exists, then it can be computed in $O(\log n\log m)$ time. Consequently, we can compute $\vd(S_i,d)$ by updating $T$ in $O(\log n\log m)$ time.

The above computes $\vd(S_i,d)$ assuming that $e$ exists. Otherwise, $\vd(S_i,d)=\vd(S_{i-1},d)$ and thus we do not need to make any changes to $T$. It remains to discuss how to determine if $e$ exists. To this end, we still run the above algorithm. If the procedure of Lemma~\ref{lem:vdd_bs} returns $-1$, then we know that $e$ does not exist. Otherwise, the algorithm will eventually computes a point $v_1$. Let $s_j\in S_{i-1}$ be the site that claims $v_1$ in $\vd(S_{i-1},d)$. Note that $s_j$ is the site claiming $\overline{v_0v_0'}$. To have $s_j$, we can associate each Voronoi region of $\vd(S_{i-1},d)$ with its site in $T$. As such, once $\overline{v_0v_0'}$ is computed, $s_j$ can be obtained as well. With $s_j$, observe that $e=\emptyset$ if and only if $d_{s_j}(v_1) \leq d_{s_i}(v_1)$. Hence, we can determine whether $e=\emptyset$ in additional $O(\log m)$ time by computing $d_{s_j}(v_1)$ and $d_{s_i}(v_1)$ using the GH data structure.

In summary, we can compute $\vd(S_i,d)$ from $\vd(S_{i-1},d)$ in $O(\log n\log m+\log^2 m)$ time. Therefore, $\vd(S,d)$ can be constructed in $O(n\log n\log m+n\log^2 m)$ time.

\subsection{Computing $\boldsymbol{\vd(S,P_R)}$ implicitly}
\label{sec:vdextend}

Given $\vd(S,d)$, we now present an algorithm to compute $\vd(S,P_R)$ by extending $\vd(S,d)$ into $P_R$. We will compute $\vd(S,P_R)$ in the following implicit manner.

For each vertex of $\vd(S,P_R)$, if it is not a vertex of $P_R$, then
it is either a Voronoi vertex (which is equidistant to three sites of $S$) or an intersection of an Voronoi edge of $\vd(S,P_R)$ with the boundary of $P_R$, and we will compute its location explicitly. For each Voronoi edge of $\vd(S,P_R)$, it is a portion of the bisector $B(s,t)$ of two sites $s$ and $t$. Since its size may be $\Omega(m)$, instead of computing it explicitly, we will only store its two vertices and the two sites $(s,t)$. In this way, we essentially will compute $\vd(S,P_R)$ as a graph such that each edge is represent by the two sites that define it; for differentiation, we call it the {\em abstract graph} of $\vd(S,P_R)$ and use $\avd(S,P_R)$ to denote it. Note that such an abstract graph has been considered in the literature for unweighted case, e.g.,\cite{ref:OhVo20,ref:AgarwalIm18}. Note also that since $\vd(S)$ has $O(n)$ Voronoi vertices and Voronoi edges~\cite{ref:AronovOn89,ref:HershbergerAn99}, $\vd(S,P_R)$ also has $O(n)$ Voronoi vertices and Voronoi edges, and thus $\avd(S,P_R)$ has $O(n)$ vertices and edges.

Given $\vd(S,d)$, we let $d_1,d_2,\ldots,d_h$ denote the segments of $d$ partitioned by the vertices of $\vd(S,d)$ from top to bottom, with $h=O(n)$. For each $d_i$, let $s_i$ denote the site of $S$ claiming $d_i$ in $\vd(S,d)$. Note that $s_i$ and $s_j$ for $i\neq j$ may refer to the same physical site of $S$, but we consider them two occurrences of the site. Indeed, each $s_i$ is associated with its corresponding $d_i$.
For differentiation, we let $S_d$ refer to the multiset $\{s_1,s_2,\ldots,s_h\}$.

To compute $\vd(S,P_R)$, we will use the continuous Dijkstra's approach~\cite{ref:HershbergerAn99}. Imagine that each site $s_i$ generates a wavelet and expands the wavelet through $d_i$ into $P_R$. During the propagation, we maintain a wavefront. The wavefront at ``time'' $t$ consists of all points of $P_R$ whose weighted distance from its nearest neighbor in $S_d$ is equal to $t$. We propagate the wavefront in the temporal order. Some sites in the wavefront will be eliminated when two adjacent bisectors intersect.
Our algorithm is similar to that in \cite[Section 5.2]{ref:HershbergerAn99} in the high level.

For any two sites $s_i$ and $s_j$ of $S_d$, each of their half-bisectors can intersect $d$ at most once by Lemma~\ref{lem:xmonotone}. However, we claim that only one half-bisector can intersect the segment of $d$ between $d_i$ and $d_j$. To see this, if we consider the Voronoi diagram $\vd(\{s_i,s_j\},d)$, then $d_i$ must be in the Voronoi region of $s_i$ and $d_j$ must be in the Voronoi region of $s_j$. Hence, $B(s_i,s_j)$ has exactly one intersection with $d$ between $d_i$ and $d_j$. We let $B_1(s_i,s_j)$ denote the half-bisector of $B(s_i,s_j)$ intersecting $d$ between $d_i$ and $d_j$. We use $B_{ij}$ to denote the portion of $B_1(s_i,s_j)$ inside $P_R$. By Lemmas~\ref{lem:struct2} and \ref{lem:xmonotone}, $B_{ij}$ is an $x$-monotone curve connecting a point on $d$ to a point on $\partial P_R$ and its interior lies in the interior of $P_R$. This also implies that each Voronoi edge of $\vd(S,P_R)$ is a portion of $B_{ij}$ for two sites $s_i,s_j\in S$. Thus each Voronoi edge is $x$-monotone and its two vertices can be naturally defined as {\em left vertex} and {\em right vertex}, respectively.


For any three sites $s_i,s_j,s_k$ with $i<j<k$, while in general the two bisectors $B(s_i,s_j)$ and $B(s_j,s_k)$ may intersect twice, we have the observation that $B_{ij}$ and $B_{jk}$ can only intersect once, as proved in the following lemma.

\begin{observation}\label{obser:bisectorintersect}
For any three sites $s_i,s_j,s_k\in S_d$ with $i<j<k$, $B_{ij}$ and $B_{jk}$ can intersect at most once.
\end{observation}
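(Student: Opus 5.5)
Suppose, for contradiction, that $B_{ij}$ and $B_{jk}$ meet at two distinct points $x_1$ and $x_2$. We may take $x_1$ to be the one closer to $d$ along $B_{ij}$; since $B_{ij}$ is $x$-monotone by Lemma~\ref{lem:xmonotone}, this means $x_1$ is to the left of $x_2$. At both points $d_{s_i}=d_{s_j}=d_{s_k}$. The idea is to use the ordering of the three sites along $d$ to confine $s_j$'s region to a thin region squeezed between the two bisectors, and then show that this region cannot be bounded by both curves twice.

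\textbf{Step 1: set up the region $R$.} The curve $B_{ij}$ starts at a point $z_{ij}\in d$ lying between $d_i$ and $d_j$. The curve $B_{jk}$ starts at a point $z_{jk}\in d$ lying between $d_j$ and $d_k$. Hence $z_{jk}$ is below $z_{ij}$, and $d_j$ lies between them. Consider the region $R\subseteq P_R$ bounded by:
\begin{itemize}
\item the segment $\overline{z_{jk}z_{ij}}$ of $d$,
\item the portion of $B_{ij}$ from $z_{ij}$ to $x_1$, and
\item the portion of $B_{jk}$ from $z_{jk}$ to $x_1$.
\end{itemize}
These two arcs do not cross before $x_1$: by the choice of $x_1$ they do not meet before it along $B_{ij}$, and we must also check this along $B_{jk}$, which I would verify using $x$-monotonicity of both curves. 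So $R$ is a closed pseudo-triangle.

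\textbf{Step 2: classify points of $R$.} Points of $d_j$ are strictly closer (in weighted distance) to $s_j$ than to $s_i$ or $s_k$. Moving within $P_R$, the sign of $d_{s_i}-d_{s_j}$ can change only across $B(s_i,s_j)\cap P_R$. I must make sure the other half-bisector of $B(s_i,s_j)$ does not cut into $R$. Here I would use Observation~\ref{lem:star} together with the fact that the other half-bisector meets $d$ outside $\overline{z_{jk}z_{ij}}$, or not at all. Given that, every point of $R$ satisfies $d_{s_j}\le d_{s_i}$ and $d_{s_j}\le d_{s_k}$. In other words, $R$ lies in the Voronoi region of $s_j$ in $\vd(\{s_i,s_j,s_k\})$.

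\textbf{Step 3: the star-shape contradiction.} Take the point $x_2$, which is equidistant from all three sites and lies outside $R$, to the right of $x_1$. Consider $\pi(s_j,x_2)$. It enters $P_R$ through a point $z\in d$. By Observation~\ref{obser:star} applied to $\{s_i,s_j,s_k\}$, the whole path except its endpoint lies strictly inside the Voronoi region of $s_j$. So $z$ is claimed by $s_j$ among the three sites, and by Step~2 the only such points of $d$ lie within $\overline{z_{jk}z_{ij}}$. Therefore the path leaves $R$ after entering it through $d$. To exit, it must cross $\partial R$ at a point of $B_{ij}$ or $B_{jk}$ other than its endpoint $x_2$. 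This contradicts Observation~\ref{lem:star}, which says $\pi(s_j,x_2)$ meets each of $B(s_i,s_j)$ and $B(s_j,s_k)$ at most once, and it already meets both at $x_2$. The exception is the case where it exits exactly through $x_1$, which is impossible because the path would then contain two points of $B(s_i,s_j)$.

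\textbf{Expected obstacle.} The delicate part is Step~2: showing that the sites other than $s_j$ cannot claim any point of $\overline{z_{jk}z_{ij}}$, and that the second half-bisector of each pair does not intrude into $R$. Both bisectors may meet $d$ twice, so I would rely on the choice of $B_{ij}$ as the half-bisector crossing $d$ between $d_i$ and $d_j$. I would also use Lemma~\ref{lem:xmonotone} (each half-bisector meets $d$ at most once) to exclude extra crossings on that segment.
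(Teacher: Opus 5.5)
\textbf{There is a genuine gap.} It lies in how you locate the point $z$ where $\pi(s_j,x_2)$ enters $P_R$.

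In Step~3 you infer $z\in\overline{z_{jk}z_{ij}}$ from two facts: $z$ is claimed by $s_j$ among the three sites, and Step~2 holds. But Step~2 (all of $R$ lies in the cell of $s_j$) gives the opposite inclusion. What you need is that the cell of $s_j$ in $\vd(\{s_i,s_j,s_k\})$ meets $d$ \emph{only} inside $\overline{z_{jk}z_{ij}}$, and for additively weighted sites this is false. Cells on $d$ need not be connected. If $B(s_i,s_j)$ crosses $d$ a second time beyond $d_i$, the part of $d$ past that crossing is nearer to $s_j$ than to $s_i$, and it can also be nearer to $s_j$ than to $s_k$.

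A Euclidean instance: take $P$ a large convex polygon and $d$ on the $y$-axis from $(0,-60)$ to $(0,60)$. Let $s_i=(-1,8)$ and $s_k=(-1,-5)$ have weight $9$, and $s_j=(-10,0)$ weight $0$. From the top, $d$ reads $s_j,s_i,s_j,s_k,s_j$, the triple is $(2,3,4)$, and $\overline{z_{jk}z_{ij}}$ is roughly $-3.6\le y\le 5.7$. Yet $s_j$ is the nearest of the three for $y>49$.

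Step~2 is also unsupported on its own terms. Your plan assumes the other half-bisector of $B(s_i,s_j)$ meets $d$ outside $\overline{z_{jk}z_{ij}}$. This fails when some $d_l$ with $j<l<k$ belongs to the same physical site as $s_i$ and $z_{jk}$ lies below the top of $d_l$. For example, take $s_i=(-10,0)$ with weight $0$, and $s_j=(-1,5)$, $s_k=(-1,-5)$ with weight $9$. Then $z_{ij}\approx(0,10.2)$ and $z_{jk}=(0,0)$, but $B(s_i,s_j)$ crosses $d$ again near $(0,3.6)$. So the definitions do not control the sign of $d_{s_i}-d_{s_j}$ on $\overline{z_{jk}z_{ij}}\subseteq\partial R$.

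\textbf{The missing idea} is to locate $z$ by \emph{sides of the half-bisectors} rather than by cells. This is exactly the ordering fact the paper invokes: $\pi(s_j,p_1)\cap d$ lies between $\pi(s_i,p_1)\cap d$ and $\pi(s_k,p_1)\cap d$. The argument runs as follows.
\begin{enumerate}
\item $B_{ij}$ splits $P_R$ into two components.
\item By Observation~\ref{lem:star}, the subpath of $\pi(s_j,x_2)$ from $z$ to $x_2$ meets $B(s_i,s_j)$ only at $x_2$. So it stays in the component from which it approaches $x_2$, namely the side of $B_{ij}$ where $d_{s_j}<d_{s_i}$.
\item A thin one-sided neighbourhood of $B_{ij}$ avoids the other half-bisector, so this sign is constant along $B_{ij}$. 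Near $z_{ij}$, the $s_j$-side is the side containing $d_j$.
\item Hence $z$ lies on $d_j$'s side of $z_{ij}$, and symmetrically on $d_j$'s side of $z_{jk}$.
\end{enumerate}

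With this, you can drop Step~2 entirely, and the rest of your Step~3 goes through. Two small additions are needed. Take $x_1$ to be the leftmost common point, so the two arcs meet only at $x_1$. Then note that $\partial R$, hence $R$, lies in the vertical slab left of $x_1$ by Lemma~\ref{lem:xmonotone}, so $x_2\notin R$.

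The repaired argument is a genuinely different and somewhat leaner finish than the paper's. The paper compares the two triples of shortest paths from $p_1$ and $p_2$ and derives a crossing. You cut out $R$ with the bisector arcs and follow a single path, at the cost of relying on $x$-monotonicity to control $R$.
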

\begin{proof}
Assume to the contrary that $B_{ij}$ and $B_{jk}$ intersect at two points $p_1$ and $p_2$. Then, we have $d_{s_i}(p_1)=d_{s_j}(p_1)=d_{s_k}(p_1)$. If we consider the Voronoi diagram $\vd(\{s_i,s_j,s_k\})$, $p_1$ must be a vertex of the diagram and thus $\pi(s_i,p_1)$ must be in the Voronoi region of $s_i$ since the Voronoi region is star-shaped by Observation~\ref{obser:star}. Similarly, $\pi(s_j,p_1)$ must be in the Voronoi region of $s_j$ and $\pi(s_k,p_1)$ must be in the Voronoi region of $s_k$.
Since $i<j<k$ and $p_1\in B_{ij}\cap B_{jk}$, by the definitions of $B_{ij}$ and $B_{jk}$, the intersection $\pi(s_j,p_1)\cap d$ must be above $\pi(s_i,p_1)\cap d$ and below $\pi(s_k,p_1)\cap d$; see Figure~\ref{fig:twointersection}.

\begin{figure}[t]
\begin{minipage}[t]{\linewidth}
\begin{center}
\includegraphics[totalheight=1.8in]{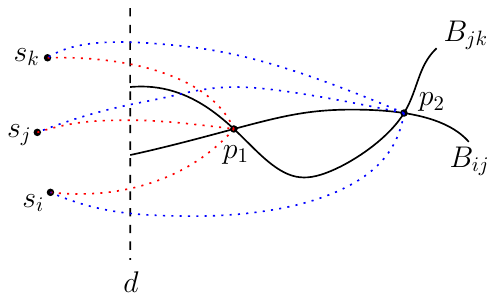}
\caption{The two solid curves are $B_{ij}$ and $B_{jk}$. The red (resp., blue) dotted curves are shortest paths from $p_1$ (resp., $p_2$). One of the red paths must cross one of the blue paths.}
\label{fig:twointersection}
\end{center}
\end{minipage}
\end{figure}

The above also applies to $p_2$, i.e., $p_2$ is a vertex of  $\vd(\{s_i,s_j,s_k\})$, $\pi(s_i,p_2)$ is in the Voronoi region of $s_i$ , $\pi(s_j,p_2)$ is in the Voronoi region of $s_j$, $\pi(s_k,p_2)$ is in the Voronoi region of $s_k$, and the intersection $\pi(s_j,p_2)\cap d$ must be above $\pi(s_i,p_2)\cap d$ and below $\pi(s_k,p_2)\cap d$. Then, one of the three paths $\pi(s_i,p_1)$, $\pi(s_j,p_1)$, and $\pi(s_k,p_1)$ must cross one of the three paths $\pi(s_i,p_2)$, $\pi(s_j,p_2)$, and $\pi(s_k,p_2)$; see Figure~\ref{fig:twointersection}. But this is not possible as Voronoi regions are star-shaped.
\end{proof}

We maintain the wavefront by a list $L$ of the generators of its wavelets ordered along the wavefront.
Initially, $L=\{s_1,s_2,\ldots,s_k\}$ with each $s_i$ associated with $d_i$. For each pair $(s_i,s_{i+1})$ of adjacent sites in $L$, we let $a_i$ be the common endpoint of $d_{i}$ and $d_{i+1}$. We create a vertex for $\avd(S,P_R)$ at $a_i$. We maintain an event queue $Q$. Initially, we compute $Q$ as follows.
For each triple of adjacent sites $(s_{i-1},s_i,s_{i+1})$ in $L$, we compute the intersection $B_{i-1,i}\cap B_{i,i+1}$ (if it exists) and add it to $Q$ as an {\em event} point with its ``key'' set to the weighted geodesic distance from it to $s_i$. Lemma~\ref{lem:bisectorintersection} below gives an algorithm to implement this {\em bisector-intersection operation} in $O(\log^2 m)$ time.  In addition, for each pair $(s_i,s_{i+1})$ of adjacent sites, we create an edge for $\avd(S,P_R)$ with its left vertex set to $a_i$.

During the algorithm, as long as $Q$ is not empty, we take an event $p\in Q$ with the smallest key value. Assume that $p=B_{ij}\cap B_{jk}$ for three sites $s_i,s_j,s_k\in L$ with $i<j<k$. We process the event as follows.
We create a vertex at $p$ for $\avd(S,P_R)$, and make it the right vertex of the edge of $\avd(S,P_R)$ defined by $(s_i,s_j)$ (resp.,  $(s_j,s_k)$). In addition, we create a new edge for $\avd(S,P_R)$ defined by $(p_i,p_k)$ and set $p$ as its left vertex.
Next, we delete $s_j$ from $L$ and also delete its associated events from $Q$. Furthermore, for the right neighbor $s_{k'}$ of $s_k$ in $L$, $(s_i,s_k,s_{k'})$ becomes a new triple of adjacent sites. We compute the intersection $B_{i,k}\cap B_{k,k'}$ (if it exists) and add it to $Q$ as an event point with its key set to the weighted geodesic distance from it to $s_i$. We do the same for the left neighbor of $s_i$.

Once $Q$ becomes empty, if $L\neq \emptyset$, then no triple of adjacent sites has a bisector intersection in $P_R$, meaning that for every pair of adjacent sites $(s_i,s_j)$ of $L$, the intersection $B_{ij}\cap P_R$ is a vertex of $\avd(S,P_R)$. Hence, we compute $B_{ij}\cap P_R$, which can be done in $O(\log^2 m)$ time by Lemma~\ref{lem:bisectorintersection}, and make it the right vertex of the edge of $\avd(S,P_R)$ defined by the pair $(s_i,s_j)$. This completes the algorithm.

\begin{lemma}\label{lem:bisectorintersection}
Suppose that a GH data structure for $P$ has been constructed. Then, for any two sites $s_i,s_j\in S_d$, the intersection $B_{ij}\cap \partial P_R$ can be computed in $O(\log^2 m)$ time. For any three sites $s_i,s_j,s_k\in S_d$ with $i<j<k$, in $O(\log^2 m)$ time we can compute the bisector intersection $B_{ij}\cap B_{jk}$ or report that it does not exist.
\end{lemma}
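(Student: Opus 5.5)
Both parts are binary searches along concave funnel chains obtained from the GH data structure, each step costing one $O(\log m)$ distance query. This mirrors the proof of Lemma~\ref{lem:algointerdiagonal} and the alternative algorithm in Lemma~\ref{lem:ends_compute}.

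\emph{The intersection $B_{ij}\cap\partial P_R$.} We know that $B_{ij}$ starts at a known point $p_1\in d$ (Lemma~\ref{lem:algointerdiagonal}). Its other endpoint $p_2$ lies on $\partial P_R$, and the interior of $B_{ij}$ avoids $\partial P$ (Lemma~\ref{lem:struct2}). We use the fact that the anchor of a point of $B_{ij}$ in $\pi(s_i,\cdot)$ moves monotonically along the bisector. More precisely, as $p$ moves from $p_1$ toward $p_2$, the anchor of $p$ in $\pi(s_i,p)$ moves monotonically over the vertices of the shortest path map of $s_i$ restricted to $P_R$. Concretely, $p_2$ is the last-edge extension hit of the path from $s_i$ through some vertex. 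Following Agarwal et al.~\cite{ref:AgarwalIm18}, we obtain the endpoint by binary search over the chain $\gamma_i=\pi(c_i,p_1)$ concatenated with the path toward the extension. In each step we take a vertex $v$, extend the tangent ray at $v$ until it hits $\partial P_R$, and compute the hit point via the GH structure using the funnel/ray trick from Lemma~\ref{lem:algointerdiagonal}. We then test $d_{s_i}$ versus $d_{s_j}$ at that hit point to decide the search direction. Each test costs $O(\log m)$, so the total is $O(\log^2 m)$.

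\emph{The intersection $B_{ij}\cap B_{jk}$.} We do not binary search on the bisectors themselves. Instead we search on the chain $\gamma_j$, formed by the two sides of the funnel of $s_j$ to the portion of $d$ spanned by the two bisector start points together with their extensions into $P_R$. Every point $p\in P_R$ has its anchor in $\pi(s_j,p)$ on $\gamma_j$. Since the Voronoi region of $s_j$ in $\vd(\{s_i,s_j,s_k\})$ is star-shaped (Observation~\ref{obser:star}), the tangent rays from $\gamma_j$ sweep the region bounded by $B_{ij}$ and $B_{jk}$ monotonically. For a vertex $v$ of $\gamma_j$ with tangent ray $\rho_v$, we compute $\rho_v\cap B_{ij}$ and $\rho_v\cap B_{jk}$. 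Along $\rho_v$ the value $d_{s_j}$ increases linearly while $d_{s_j}-d_{s_i}$ is monotone, so each intersection is found by an $O(\log m)$ search over anchors of $s_i$ (respectively $s_k$), as in the alternative proof in Lemma~\ref{lem:ends_compute}. Comparing the order of the two intersection points along $\rho_v$ tells us on which side of $v$ the common point $B_{ij}\cap B_{jk}$ lies. Observation~\ref{obser:bisectorintersect}, which says the two curves cross at most once, ensures that this predicate changes sign only once, so binary search is valid. After $O(\log m)$ steps we identify the edge of $\gamma_j$ whose ray wedge contains the intersection. We then find the anchors in $\pi(s_i,\cdot)$ and $\pi(s_k,\cdot)$ the same way. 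Finally we solve for the point in $O(1)$ time, since there the three distance functions are explicit, and report nonexistence if the solution falls outside $P_R$ or off the half-bisectors.

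The main obstacle is making each step of the outer search cost $O(\log m)$ rather than $O(\log^2 m)$. Computing $\rho_v\cap B_{ij}$ naively is itself a binary search. I would first try the Oh–Ahn style trick: carry the candidate anchor ranges on $\gamma_i$ and $\gamma_k$ along with the search on $\gamma_j$, narrowing all three simultaneously. The monotonicity of anchors along the bisectors keeps the total work at $O(\log m)$ steps overall. If this coupling fails, a fallback is a prune-and-search over the three chains, discarding a constant fraction of one chain per $O(\log m)$-time step, which still gives $O(\log^2 m)$.
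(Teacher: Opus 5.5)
\textbf{The first part does not work as described.} Your search domain is circular. The $s_i$-anchors of points of $B_{ij}$ lie on the concave chain $\pi(c_i,p_1)\cup\pi(c_i,p_2)$, which is defined only once the unknown endpoint $p_2$ is known. The shortest path map of $s_i$ in $P_R$ is a tree, so there is no a priori linear order on candidate anchors. (The result of Agarwal et al.\ you invoke is for unweighted bisectors.) Your per-step primitive is also unavailable. The trick in Lemma~\ref{lem:algointerdiagonal} only decides whether a ray meets the \emph{known} segment $d$, by checking whether it lies between $\overrightarrow{tv_a}$ and $\overrightarrow{tv_b}$. It does not shoot a ray to $\partial P_R$, and the paper deliberately avoids ray shooting so that only the GH structure is required. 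Nor do you handle the weighted phenomenon that $B(s_i,s_j)$ may cross $d$ and $\partial P_R$ twice, so that $d_{s_i}-d_{s_j}$ is not monotone along $\partial P_R$.

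The paper instead binary searches directly over the vertices of $\partial P_R$ (oriented from $a$ to $b$), comparing $d_{s_i}$ and $d_{s_j}$ at each vertex. It first decides in $O(\log m)$ time whether $\overline{s_is_i'}$ or $\overline{s_js_j'}$ crosses $d$. If, say, $\overline{s_is_i'}$ does, then $p^*$ precedes $s_i'$ on $\partial P_R$. Whether a probed vertex $p$ lies beyond $s_i'$ is read off from the turn $v_i\rightarrow s_i\rightarrow v_i'$, where $v_i$ and $v_i'$ are the anchors of $s_i$ in $\pi(s_j,s_i)$ and $\pi(s_i,p)$. Once the edge containing $p^*$ is found, $p^*$ is located on it by the funnel search of Lemma~\ref{lem:algointerdiagonal}.

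\textbf{The second part has the right shape but two holes.} First, $\gamma_j$ is mis-specified. It is false that every point of $P_R$ has its $s_j$-anchor on the funnel from $s_j$ to a piece of $d$ plus ``extensions'': anchors can be reflex vertices of $P_R$, and collectively they form a tree. What you need is the anchor chain of one specific bisector through $q^*$, determined by that bisector's endpoints on $d$ and on $\partial P_R$, so it depends on the first part. The paper uses $B_{ik}$, which also passes through $q^*$. It computes $z=B_{ik}\cap d$ and $q=B_{ik}\cap\partial P_R$, and decides existence up front via $d_{s_j}(q)\ge d_{s_i}(q)$. Your after-the-fact nonexistence check cannot substitute for this, because the single sign change your binary search relies on presupposes that $q^*$ exists. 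The paper then searches $\gamma_i=\pi(c_i,z)\cup\pi(c_i,q)$, using the sign of $d_{s_i}(p)-d_{s_j}(p)$ at the point $p$ where the tangent ray from the probed vertex $p_i$ meets $B_{ik}$.

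Second, you leave the per-step cost unresolved. Appealing to the alternative argument of Lemma~\ref{lem:ends_compute} is circular, since that argument defers to this very lemma, and the coupled-search and prune-and-search fallbacks are unproven. The missing observation is this: while descending the GH tree representation of $\pi(c_k,q)$ and $\pi(c_k,z)$, you can carry $d_{s_k}(p_k)$ along for every visited vertex $p_k$. Each inner comparison then costs $O(1)$: intersect the tangent rays from $p_i$ and $p_k$ at a point $p$ and compare $d_{s_i}(p_i)+|\overline{p_ip}|$ with $d_{s_k}(p_k)+|\overline{p_kp}|$, or observe which ray meets the backward extension of the other. Hence finding the point of $B_{ik}$ on a tangent ray takes $O(\log m)$ time. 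Each outer step adds only one GH distance query, and the total is $O(\log^2 m)$.
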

\begin{proof}
We first present an algorithm to compute the intersection $B_{ij}\cap \partial P_R$, denoted by $p^*$. Without loss of generality, we assume that $i<j$.

By definition, the bisector $B(s_i,s_j)$ intersects $d$ at least once (because $B_{ij}$ intersects $d$) but at most twice by Lemma~\ref{lem:xmonotone}.
If we extend the last edge of $\pi(s_i,s_j)$ beyond $s_j$, let $s_j'$ be the first point of $\partial P$ hit by the extension; define $s_i'$ similarly. Depending on whether $\overline{s_is_i'}$ or $\overline{s_js_j'}$ crosses $d$, there are two cases. According to the proof of Lemma~\ref{lem:algointerdiagonal}, in $O(\log m)$ time we can determine whether one of them crosses $d$ (and if so compute the intersection).

\paragraph{Neither $\boldsymbol{\overline{s_is_i'}}$ nor $\boldsymbol{\overline{s_js_j'}}$ crosses $\boldsymbol{d}$.}
If neither $\overline{s_is_i'}$ nor $\overline{s_js_j'}$ crosses $d$, then according to the proof of Lemma~\ref{lem:algointerdiagonal},   $B(s_i,s_j)$ intersects $d$ exactly once, which is $B_{ij}\cap d$. In this case, $B(s_i,s_j)$ intersects $\partial P_R$ also once, and the intersection is $p^*$. Hence, $B_{ij}$ partitions $P_R$ into two portions, one portion consisting of all points $p$ with $d_{s_i}(p)\leq d_{s_j}(p)$ and the other consisting of all points $p$ with $d_{s_i}(p)> d_{s_j}(p)$. Using this property, we can do binary search on the vertices of $\partial P_R$ to find the edge $e^*$ containing $p^*$.

Specifically, let $a$ and $b$ be the upper and lower endpoints of $d$, respectively. We orient $\partial P_R$ from $a$ to $b$ clockwise around $P_R$. In each iteration of the binary search, we consider a vertex $p\in \partial P_R$. If $d_{s_i}(p)\leq d_{s_j}(p)$, then $e^*$ is after $p$; otherwise, $e^*$ is before $p$. Computing $d_{s_i}(p)$ and $d_{s_j}(p)$ can be done in $O(\log m)$ time using the GH data structure.
As there are $O(\log m)$ iterations in the binary search, the total time for computing $e^*$ is $O(\log^2 m)$.

After $e^*$ is found, $p^*$ can be computed on $e^*$ in additional $O(\log^2 m)$ time by an algorithm similar to that for computing the intersection of $B(s,t)$ and $d$ in the non-intersection case of the proof of Lemma~\ref{lem:algointerdiagonal}, i.e., using the funnel $F_s(e^*)$.

\paragraph{One of $\boldsymbol{\overline{s_is_i'}}$ and $\boldsymbol{\overline{s_js_j'}}$ crosses $\boldsymbol{d}$.}
If one of $\overline{s_is_i'}$ and $\overline{s_js_j'}$ crosses $d$, then it is possible that $B(s_i,s_j)$ intersects $d$ twice. In this case we can still find $e^*$ by binary search on the vertices of $\partial P_R$. But the situation is more subtle because in this case $B(s_i,s_j)$ may intersect $\partial P_R$ twice and partition $P_R$ into three portions (two portions consisting of all points closer to one of $s_i$ and $s_j$ and the third portion consisting of all points closer to the other site). We use the following strategy.


Without loss of generality, we assume that $\overline{s_is_i'}$ crosses $d$. Then, according to the proof of Lemma~\ref{lem:algointerdiagonal}, if $B(s_i,s_j)$ has two intersections with $\partial P_R$, then $s_j'$ is between them on $\partial P_R$. Furthermore, since $i<j$ and $\overline{s_is_i'}$ crosses $d$, $p^*$ is before $s_i'$; see Figure.~\ref{fig:binarysearch}. Hence, we can still compute $p^*$ by doing binary search on the portion of $\partial P_R$ between $a$ and $s_i'$ if the location of $s_i'$ is known. However, since we do not know the location of $s_i'$,\footnote{Note that we could construct a ray-shooting data structure for $P$ in the preprocessing and then $s_i'$  can be computed in $O(\log m)$ time using ray-shooting queries~\cite{ref:HershbergerA95,ref:ChazelleRa94}. However, we want to restrict our preprocessing work to the GH data structure only as this is required in the statement of Lemma~\ref{lem:diasepneighborquery}.} we use the following approach instead.

\begin{figure}[t]
\begin{minipage}[t]{\linewidth}
\begin{center}
\includegraphics[totalheight=1.8in]{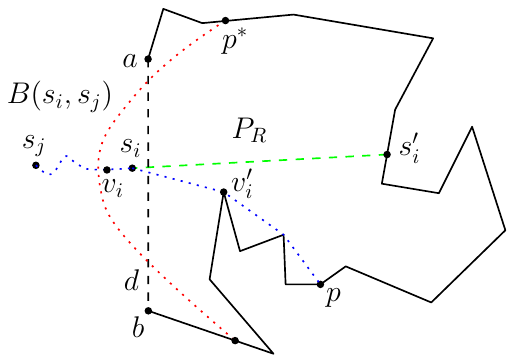}
\caption{The red curve is the bisector $B(s_i,s_j)$. The shortest paths $\pi(s_j,s_i)$ and $\pi(s_i,q)$ are shown with blue curves. The green dashed segment is the extension of $\overline{v_is_i}$.}
\label{fig:binarysearch}
\end{center}
\end{minipage}
\end{figure}

We do binary search on the vertices of $\partial P_R$. In each iteration, we are given a vertex $p\in \partial P_R$ and  we wish to know whether $p$ is before $p^*$ or after. We compute the anchor $v_i$ of $s_i$ in $\pi(s_j,s_i)$ and the anchor $v_i'$ of $s_i$ in $\pi(s_i,p)$; see Figure.~\ref{fig:binarysearch}. Both anchors can be computed in $O(\log m)$ time using the GH data structure. Observe that the path $v_i\rightarrow s_i\rightarrow v_i'$ is a right turn if and only if $p$ is after $s_i'$ on $\partial P_R$ (recall that $\partial P_R$ is oriented from $a$ to $b$). If $p$ is after $s_i'$, then it is also after $p^*$, so we search the left of $\partial P_R$. Otherwise, $p$ is before $s_i'$ and we need to further determine whether $p$ is before or after $p^*$. This can be done by comparing $d_{s_i}(p)$ with $d_{s_j}(p)$ as in the above case. As such, each iteration of the binary search can be done in $O(\log m)$ time. Therefore, we can compute $p^*$ in a total of $O(\log^2 m)$ time.

\paragraph{Computing the intersection $\boldsymbol{B_{ij}\cap B_{jk}}$.}
We now turn to the problem of computing the intersection $B_{ij}\cap B_{jk}$ for three sites $s_i,s_j$, and $s_k$ with $i<j<k$; let $q^*$ denote the intersection point. Note that if $q^*$ exists, then it is the common intersection of $B_{ij}$, $B_{jk}$, and $B_{ik}$.

Let $q=B_{ik}\cap \partial P_R$. We can compute $q$ in $O(\log^2 m)$ time as discussed above. Let $z$ denote the endpoint of $B_{ik}$ on $d$, which can be computed in $O(\log^2 m)$ time by Lemma~\ref{lem:algointerdiagonal}. Recall that $B_{ik}$ is $x$-monotone. Since $B_{ij}\cap B_{jk}$ consists of at most one point by Observation~\ref{obser:bisectorintersect}, for any point $p\in B_{ik}$ with $p\neq q^*$, if $p$ is left of $q^*$, then $d_{s_j}(p)< d_{s_i}(p)$, and if $p$ is right of $q$, then $d_{s_j}(p)> d_{s_i}(p)$. Hence, $q^*$ exists if and only if $d_{s_j}(q)\geq d_{s_i}(q)$. As such, we can determine whether $q^*$ exists in $O(\log^2 m)$ time. In the following, we assume that $q^*$ exists and give an $O(\log^2 m)$ time algorithm to compute it.
We first check whether $d_{s_i}(q)=d_{s_j}(q)$. If so, then $q^*=q$ and we are done. Below, we assume $q\neq q^*$.

Let $c_i$ be the junction vertex of $\pi(s_i,q)$ and $\pi(s_i,z)$; see Figure~\ref{fig:bisectorintersect}. Notice that each of the two paths $\pi(c_i,z)$ and $\pi(c_i,q)$ is a concave chain, and for each point $p\in B_{ik}$, its anchor $p'$ must be in one of the two chains and $\overline{pp'}$ is tangent to the chain containing $p'$. Let $\gamma_i$ be the concatenation of the two chains. Let $q_i$ be the anchor of $q^*$ in $\pi(s_i,q^*)$. Hence, $q_i\in \gamma_i$ and $\overline{q^*q_i}$ is tangent to the chain of $\gamma_i$ containing $q_i$.

Similarly, let $c_k$ be the junction vertex of $\pi(s_k,q)$ and $\pi(s_k,z)$. Define $\gamma_k$ and $q_k$ similarly. As above, $q_k\in \gamma_k$ and $\overline{q^*q_k}$ is tangent to the chain of $\gamma_k$ containing $q_k$; see Figure~\ref{fig:bisectorintersect}.

By Lemma~\ref{lem:struct2}, $B_{ik}$ is partitioned into {\em arcs} by {\em breakpoints} such that each arc is a portion of a hyperbola and all points $p$ of the same arc have the same anchor $p_i$ in $\pi(s_i,p)$ and the same anchor $p_k$ in $\pi(s_k,p)$, i.e., the hyperbola is {\em generated} by the two vertices $p_i\in \gamma_i$ and $p_k\in \gamma_k$, and each breakpoint is the intersection of $B_{ik}$ with the extension segment of the last edge of $\pi(s_i,p_i)$ for some vertex $p_i\in \gamma_i$ or $\pi(s_k,p_k)$ for some vertex $p_k\in \gamma_k$ (we say that the breakpoint is {\em defined} by $p_i$ or $p_k$). To compute $q^*$, we will compute the arc $\alpha^*$ of $B_{ik}$ containing $q^*$, after which $q^*$ can be obtained in $O(1)$ time. To compute $\alpha^*$, it suffices to find the two vertices $p_i^*\in \gamma_i$ and $p_k^*\in \gamma_k$ that generate $\alpha^*$. Note that $p_i^*$ is actually $q_i$ and $p_k^*$ is $q_k$. In what follows, we show how to compute $q_i$ and $q_k$ in $O(\log^2 m)$ time.

\begin{figure}[t]
\begin{minipage}[t]{\linewidth}
\begin{center}
\includegraphics[totalheight=1.8in]{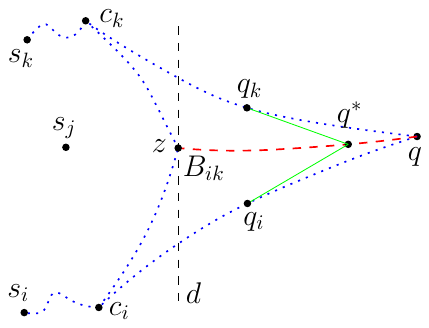}
\caption{The red dashed curve is the bisector $B_{ik}$. The blue curves are shortest paths.}
\label{fig:bisectorintersect}
\end{center}
\end{minipage}
\end{figure}

We orient $\gamma_i$ from $z$ to $c_i$ along $\pi(z,c_i)$ and then to $q$ along $\pi(c_i,q)$. For each point $p\in \gamma_i$, we define its {\em tangent ray} as a ray from $p$ and toward $B_{ik}$ such that its supporting line is tangent to the chain of $\gamma_i$ containing $p$.
We parameterize $\gamma_i$ over $[0, 1]$ from $z$ to $q$ such that each value of $[0,1]$ corresponds to the slope of the tangent ray at a point on $\gamma_i$ (note that if $p$ is a vertex of $\gamma_i$, then the tangent ray at $p$ is not unique; to handle this case, we can conceptually consider each vertex of $\gamma_i$ as a small circle of infinitely small radius).
Similarly, we orient $\gamma_k$ from $q$ to $z$, and parameterize it over $[0, 1]$.

For each point $p_i\in \gamma_i$, define the function $f(p_i)$ as the parameter for the point $p_k\in \gamma_k$ such that the tangent ray of $p_i$ on $\gamma_i$ intersects the tangent ray of $p_k$ on $\gamma_k$ at a point in $B_{jk}$; see Figure~\ref{fig:bisectorintersect10}. Note that the tangent ray of $p_i$ must intersect $B_{ik}$. One can verify that $f$ is a monotone decreasing function.

\begin{figure}[t]
\begin{minipage}[t]{\linewidth}
\begin{center}
\includegraphics[totalheight=1.8in]{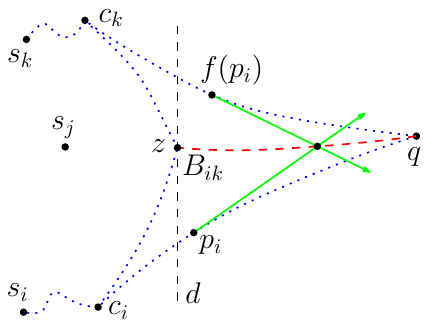}
\caption{Illustrating the definition of $f(p_i)$. The green rays are tangent rays.}
\label{fig:bisectorintersect10}
\end{center}
\end{minipage}
\end{figure}

Given a point $p_i\in \gamma_i$, we show that $f(p_i)$ can be computed in $O(\log m)$ time by binary search on the vertices of $\gamma_k$ using the monotone property of $f$. We first compute $d_{s_i}(p_i)$ in $O(\log m)$ time using the GH data structure.

Given any $p_k\in \gamma_k$, we can determine whether $f(p_i)>p_k$, $f(p_i)<p_k$, or $f(p_i)=p_k$ in the following way. If the tangent ray $\rho_{p_i}$ at $p_i$ on $\gamma_i$ intersects the tangent ray $\rho_{p_k}$ at $p_k$ on $\gamma_k$, let $p$ be the intersection point. Then, if $d_{s_i}(p_i)+|\overline{p_ip}|=d_{s_k}(p_k)+|\overline{p_kp}|$, then $p_k=f(p_i)$ and we are done. Otherwise, observe that $f(p_i)<p_k$ if and only if $d_{s_i}(p_i)+|\overline{p_ip}|<d_{s_k}(p_k)+|\overline{p_kp}|$. If $\rho_{p_i}$ does not intersect $\rho_{p_k}$, then since both rays intersect $B_{ik}$, either $\rho_{p_i}$ intersect the backward extension of $\rho_{p_k}$ or $\rho_{p_k}$ intersects the backward extension of $\rho_{p_i}$. In the former case we have $f(p_i)>p_k$ while in the latter case we have $f(p_i)<p_k$.
Therefore, once $d_{s_k}(p_k)$ is available, whether $f(p_i)>p_k$, $f(p_i)<p_k$, or $f(p_i)=p_k$ can be determined in $O(1)$ time.

Using the GH data structure, in $O(\log m)$ time we can compute the junction vertex $c_k$ and also a tree structure to represent the path $\pi(c_k,q)$ (resp., $\pi(c_k,z)$) so that we can do binary search on $\gamma_k$. Further, when we do the binary search on $\pi(c_k,q)$ (resp., $\pi(c_k,z)$), it is possible to keep track of $d_{s_k}(p_k)$ for each vertex $p_k\in \gamma_k$ considered in the binary search. In this way, using the monotone property of $f$, $f(p_i)$ can be computed in $O(\log m)$ time.

We now show how to compute $q_i$ in $O(\log^2 m)$ time ($q_k$ can be found similarly). We do binary search on $\gamma_i$. As above, using the GH data structure, in $O(\log m)$ time we can compute the junction vertex $c_i$ and also obtain a tree structure to represent the path $\pi(c_i,q)$ (resp., $\pi(c_i,z)$) so that we can do binary search on $\gamma_i$. In each iteration of the binary search, we consider a vertex $p_i$ of $\gamma_i$. We first compute $f(p_i)$ in $O(\log m)$ time using the above algorithm. Note that when computing $f(p_i)$, we use the tangent ray from $p_i$ that is the extension of the last segment of $\pi(s_i,p_i)$, which can be computed in $O(\log m)$ time using the GH data structure. With $f(p_i)$, the intersection $p$ between $B_{ik}$ and the above tangent ray from $p_i$ can be obtained. We compare $d_{s_i}(p)$ with $d_{s_j}(p)$. Both values can be computed in $O(\log m)$ time using the GH data structure.
If $d_{s_i}(p)=d_{s_j}(p)$, then $q^*$ is $p$ and we can terminate the algorithm. If $d_{s_i}(p)<d_{s_j}(p)$, then $q^*$ is to the  right of $p$ on $B_{ik}$ and thus we search the portion of $\gamma_i$ after $p_i$. Symmetrically, if $d_{s_i}(p)>d_{s_j}(p)$, then we search the portion of $\gamma_i$ before $p_i$. In this way, $q_i$ can be found within $O(\log m)$ iterations. As each iteration takes $O(\log m)$ time, the total time for computing $q_i$ is $O(\log^2 m)$.

In summary, the intersection point $q^*=B_{ij}\cap B_{jk}$ can be computed in $O(\log^2 m)$ time. The lemma thus follows.
\end{proof}

For the time analysis, notice that each event can be processed in $O(\log^2 m+\log |Q|)$ time. As $\avd(S,P_R)$ has $O(n)$ vertices and processing each event will produce a vertex of $\avd(S,P_R)$ and generate $O(1)$ new events for $Q$, the number of events that have ever appeared in $Q$ is $O(n)$. Hence, insertions and deletions for $Q$ take $O(\log n)$ time each. Once $Q$ becomes $\emptyset$, the rest of the algorithm takes $O(n\log^2 m)$ time. Therefore, the total time of the algorithm is bounded by $O(n\log n+n\log^2 m)$.

\subsection{Point locations}
\label{sec:point_location}
Given a query point $q\in P_R$, we can compute its nearest neighbor in $S$ by locating the Voronoi region of $\vd(S,P_R)$ containing $q$. Although we do not have $\vd(S,P_R)$ explicitly, we can find its Voronoi region containing $q$ using $\avd(S,P_R)$ instead. Since we already have the locations of the vertices of $\avd(S,P_R)$, we can obtain an embedding of $\avd(S,P_R)$. By Lemma~\ref{lem:xmonotone}, every edge of $\avd(S,P_R)$ is $x$-monotone. Hence, we can construct a point location data structure for $\avd(S,P_R)$ by using the algorithm of Edelsbrunner, Guibas, and Stolfi~\cite{ref:EdelsbrunnerOp86}, which takes $O(n)$ time to construct as the size of $\avd(S,P_R)$ is $O(n)$.

Given a query point $q\in P_R$, a primitive operation in the query algorithm of \cite{ref:EdelsbrunnerOp86} is to determine whether $q$ is below or above an edge $e\in \avd(S,P_R)$ belonging to $B_{ij}$ for two sites $s_i,s_j\in S$. As $B_{ij}$ is $x$-monotone, it suffices to determine whether $q$ is below or above $B_{ij}$.
This problem can be easily solved by comparing $d(s_i,q)$ with $d(s_j,q)$ in the unweighted case since the bisector intersects $d$ only once~\cite{ref:AgarwalIm18}. However, in our problem, since $B_{ij}\subseteq B(s_i,s_j)$ and $B(s_i,s_j)$ may intersect $d$ twice, we can not simply compare $d_{s_i}(q)$ with $d_{s_j}(q)$. Instead, we propose the following method to resolve the issue.

\begin{lemma}
Suppose that a GH data structure for $P$ is already available; then given a point $q\in P_R$ and two sites $s_i,s_j\in S$ with $i<j$, we can determine whether $q$ is below or above $B_{ij}$ in $O(\log m)$ time.
\end{lemma}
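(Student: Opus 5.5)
We aim to reduce the above/below test to one weighted-distance comparison plus one orientation test, using the structure of $B(s_i,s_j)$ from the proof of Lemma~\ref{lem:algointerdiagonal}. Let $s_i'$ and $s_j'$ be the hit points of the first- and last-edge extensions of $\pi(s_j,s_i)$ beyond $s_i$ and beyond $s_j$. Then $\pi(s_i',s_j')=\overline{s_i's_i}\cup\pi(s_i,s_j)\cup\overline{s_js_j'}$ splits $P$ into $P_1$ and $P_2$, and each of these contains one half-bisector. At most one of the two extension segments crosses $d$, and using the GH data structure we can decide which one (if any) and compute the crossing point $z$ in $O(\log m)$ time, exactly as in that proof.

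\textbf{Case 1: neither extension crosses $d$.} Then $B(s_i,s_j)\cap P_R$ is the single curve $B_{ij}$. It is $x$-monotone (Lemma~\ref{lem:xmonotone}) and runs from $d$ to $\partial P_R$, so it splits $P_R$ into an upper part and a lower part. Since $i<j$, the segment $d_i$ lies above $d_j$. Hence the points of $P_R$ above $B_{ij}$ are exactly those with $d_{s_i}(q)<d_{s_j}(q)$. The test is therefore one comparison of $d_{s_i}(q)$ and $d_{s_j}(q)$, costing two GH distance queries.

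\textbf{Case 2: one extension, say $\overline{s_is_i'}$, crosses $d$ at $z$.} Now $B(s_i,s_j)$ can meet $P_R$ in two pieces. The region of $P_R$ closer to $s_i$ is a strip containing the part of $\overline{s_is_i'}$ inside $P_R$, and it is bounded by two $x$-monotone curves. The half-bisector $B_1(s_i,s_j)$ meets $d$ between $d_i$ and $d_j$, on one determined side of $z$; this side can be read off by comparing $z$ with the crossing point computed for $B_{ij}\cap d$. Suppose $z$ lies below that crossing point, with the lower piece being the other half-bisector (the other configurations are symmetric). We proceed as follows:
\begin{itemize}
\item If $d_{s_i}(q)>d_{s_j}(q)$, then $q$ lies in one of the two $s_j$-regions. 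We decide which by testing on which side of the line through $\overline{s_is_i'}$ the point $q$ lies. For this it suffices to know whether $\pi(s_i,q)$ turns left or right at $s_i$ relative to the extension direction. We compute this as in the binary search of Lemma~\ref{lem:bisectorintersection}: find the anchor $v_i$ of $s_i$ in $\pi(s_j,s_i)$ and the anchor $v_i'$ of $s_i$ in $\pi(s_i,q)$, then check the turn $v_i\to s_i\to v_i'$. If $q$ is on the $B_{ij}$ side, it is above or below $B_{ij}$ accordingly; if it is on the far side, its relation to $B_{ij}$ is fixed by the configuration.
\item If $d_{s_i}(q)\le d_{s_j}(q)$, then $q$ lies in the strip closer to $s_i$, which sits on one known side of $B_{ij}$, so the answer is immediate.
\end{itemize}
If instead $\overline{s_js_j'}$ crosses $d$, the same argument applies with the roles of $s_i$ and $s_j$ exchanged. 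Every step uses $O(1)$ GH queries, for $O(\log m)$ time in total.

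The main obstacle is making Case 2 rigorous. We must show that the $s_i$-closer region is exactly the strip containing the extension segment, and that the left/right turn at $s_i$ correctly identifies which $s_j$-piece contains $q$. Both facts rest on the star-shapedness of Voronoi regions (Observation~\ref{obser:star}) and on $\pi(s_i',s_j')$ separating the two half-bisectors. The claim we most need is that $\pi(s_i,q)$ cannot cross the extension line within $P_R$.
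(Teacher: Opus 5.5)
Your argument is correct and is essentially the paper's proof. It uses the same split on whether (and which) extension segment crosses $d$, and a single comparison of $d_{s_i}(q)$ with $d_{s_j}(q)$ when neither does. Otherwise it uses the same turn test $v_i\to s_i\to v_i'$ combined with that distance comparison, which you merely perform in the opposite order. The turn test is justified because $\pi(s_i,q)$ and the separating shortest path $\pi(s_j',s_i')$ meet in a connected subpath containing $s_i$; this settles the non-crossing claim you flagged, but it must be stated for the path $\pi(s_j',s_i')$, not the line through $\overline{s_is_i'}$.

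One fix: do not compare $z$ with $B_{ij}\cap d$, since recomputing that point by Lemma~\ref{lem:algointerdiagonal} costs $O(\log^2 m)$. The side of $z$ on which $B_{ij}$ lies is already forced by $i<j$ and by which site's extension crosses $d$. This is because $z$ lies in the single segment of $d$ closer to that site, and that segment contains the site's own segment ($d_i$ or $d_j$).
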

\begin{proof}
We utilize our algorithm for Lemma~\ref{lem:algointerdiagonal}.
If we extend the last edge of $\pi(s_i,s_j)$ beyond $s_j$, let $s_j'$ be the first point on $\partial P$ hit by the extension. Define $s_i'$ similarly. As argued in the proof of Lemma~\ref{lem:algointerdiagonal}, if $B(s_i,s_j)$ intersects $d$ twice, then one of $\overline{s_is_i'}$ and $\overline{s_js_j'}$ must cross $d$, which can be determined in $O(\log m)$ time.

If neither $\overline{s_is_i'}$ nor $\overline{s_js_j'}$ crosses $d$, then $B_{ij}=B(s_i,s_j)\cap P_R$. Since $i<j$, $q$ is below $B_{ij}$ if and only if $d_{s_i}(q)\leq d_{s_j}(q)$, which can be determined in $O(\log m)$ time using the GH data structure. Otherwise, as argued in  the proof of Lemma~\ref{lem:algointerdiagonal}, only one of $\overline{s_is_i'}$ and $\overline{s_js_j'}$ can cross $d$. Without loss of generality, we assume that $\overline{s_is_i'}$ cross $d$.

Let $z_i=d\cap \overline{s_is_i'}$; see Figure~\ref{fig:pointlocation}. The line segment $\overline{z_is_i'}$ partitions $P_R$ into two subpolygons; let $P_1$ denote the one locally above $\overline{z_is_i'}$ and $P_2$ the other. According to the analysis of Lemma~\ref{lem:algointerdiagonal}, since $i<j$, $B_{ij}=P_1\cap B(s_i,s_j)$. Hence, if $q\in P_1$, then as above, $q$ is below $B_{ij}$ if and only if $d_{s_i}(q)\leq d_{s_j}(q)$. Therefore, we can solve the problem in $O(\log m)$ time if $q\in P_1$. If $q\in P_2$, then it is obviously true that $q$ is below $B_{ij}$ as $B_{ij}\subseteq P_1$.

\begin{figure}[t]
\begin{minipage}[t]{\linewidth}
\begin{center}
\includegraphics[totalheight=1.8in]{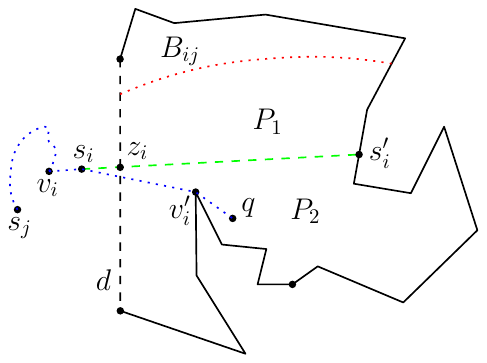}
\caption{The red curve is $B_{ij}$. The shortest paths $\pi(s_j,s_i)$ and $\pi(s_i,q)$ are shown with blue curves. The green dashed segment is the extension of $\overline{v_is_i}$.}
\label{fig:pointlocation}
\end{center}
\end{minipage}
\end{figure}

It remains to discuss how to determine whether $q$ is in $P_1$ or $P_2$. To this end, we compute the anchor $v_i$ of $s_i$ in $\pi(s_j,s_i)$ and the anchor $v_i'$ of $s_i$ in $\pi(s_i,q)$. Both anchors can be computed in $O(\log m)$ time using the GH data structure. Observe that the path $v_i\rightarrow s_i\rightarrow v_i'$ is a right turn if and only if $q\in P_2$; see Figure~\ref{fig:pointlocation}. Consequently, we can determine whether $q$ is in $P_1$ or $P_2$ in $O(\log m)$ time.

In summary, we can determine whether $q$ is above or below $B_{ij}$ in $O(\log m)$ time.
\end{proof}

With the above lemma, since the query algorithm of \cite{ref:EdelsbrunnerOp86} needs $O(\log n)$ primitive operations, the Voronoi region of $\vd(S,P_R)$ containing $q$ and thus the nearest neighbor of $q$ in $S$ can be computed in $O(\log n\log m)$ time.

This completes the proof of Lemma~\ref{lem:diasepneighborquery}.

\section{The unweighted simple polygon case}
\label{sec:simple_pol}

In this section, we present our SSSP algorithm for unweighted $\G$ when $P$ is a simple polygon.

To solve the problem, a natural idea is to see whether the algorithms for the Euclidean case can be adapted. Two algorithms are known that can solve the Euclidean case in $O(n\log n)$ time~\cite{ref:CabelloSh15,ref:ChanAl16}. The algorithm in \cite{ref:ChanAl16} relies on constructing a grid in the plane, which seems inherently not applicable anymore in the polygon setting.

We therefore adapt the approach of \cite{ref:CabelloSh15}. After introducing notation, we first describe the algorithm at a high level (omitting implementation details), then prove its correctness, and finally present the full implementation and running-time analysis.

\subsection{Notation}
\label{sec:notation}
We follow the notation already defined in Sections~\ref{sec:intro} and \ref{sec:pre}, including $m$, $n$, $S$, $P$, $s$, $\G$, $\vd(S)$, $\E(u,v)$, $\R(u)$.

We define $S_i$ for $i = 0, 1, 2, \ldots$ to be the set of vertices of $\G$ whose shortest path lengths from $s$ in $\G$ are equal to $i$. Clearly, $S_0=\{s\}$. Define $S_{\leq i} = \cup_{j \leq i} S_j$.

Our goal is to compute the sets $S_i$, for $i=0,1,2,\ldots$. In our following discussion, we will focus on describing the algorithm to compute these sets. By slightly modifying the algorithm, we can also compute the predecessor information of the shortest paths.

For any subset $S'\subseteq S$ and a point $p\in P$, as before, let $\beta_p(S')$ denote the nearest neighbor of $p$ in $S'$ and $d(S',p)=\min_{q\in S'}d(q,p)$.

Let $V(P)$ denote the set of vertices of $P$. Vertices of $P$ are also referred to as {\em obstacle vertices}.
For ease of exposition, we make the following general position assumptions: (1) the shortest paths in $P$ between any two points in $V(P)\cup S$ are unique, and (2) no point in $V(P)\cup S$ is equidistant (in geodesic distance) to two other points in $V(P)\cup S$.
Note that the second assumption implies that no obstacle vertex lies on a Voronoi edge of $\vd(S)$, which further means that each Voronoi edge of $\vd(S)$ is one-dimensional. The second assumption also implies that the nearest neighbor of a site $p\in S$ in any subset $S'\subseteq S$, i.e., $\beta_p(S')$, is unique.

We use $\DT(S)$ to denote the dual graph of $\vd(S)$. Specifically, the vertex set of $\DT(S)$ is $S$, and two sites $u,v\in S$ have an edge in $\DT(S)$ if their Voronoi regions in $\vd(S)$ share a Voronoi edge, i.e., $\E(u,v)\neq\emptyset$. Note that if $P$ is the entire plane, then $\DT(S)$ is the Delaunay triangulation of $S$. For any site $u \in S$, we define $\N(u)$ as the set of vertices of $\DT(S)$ adjacent to $u$; we also refer to sites of $\N(u)$ as {\em neighbors} of $u$ in $\DT(S)$. For a subset $S' \subseteq S$, by slightly abusing the notation, we let $\N(S') = \bigcup_{u \in S'} \N(u)$.

\subsection{Algorithm description}
We first construct the geodesic Voronoi diagram $\vd(S)$. The algorithm then proceeds in iterations. Initially, we have $S_0=\{s\}$. In the $i$-th iteration for $i\geq 1$, we compute the set $S_i$ as follows. We assume that $S_{i-1}$ is available.
Let $Q=\N(S_{i-1})\setminus S_{\leq i-1}$, i.e., the sites in
$S$ adjacent to those of $S_{i-1}$ in $\DT(S)$ but not in $S_{\leq i-1}$.
For each site $p\in Q$, we remove $p$ from $Q$ and determine whether $d(S_{i-1},p)\leq 1$. If so, we add $p$ to $S_i$. In addition, we  update $Q=Q\cup (\N(p)\setminus S_{\leq i})$, i.e., add the sites of $\N(p)$ that are not in $Q\cup S_{\leq i}$ to $Q$ for the current $S_i$.
We repeat this until $Q$ becomes empty. The $i$-th iteration finishes once $Q$ become empty.
If $S_i=\emptyset$, then the algorithm stops; otherwise, we proceed with the next iteration. See Algorithm~\ref{algo:simple} for the pseudocode.

\begin{algorithm}
    \caption{SSSP algorithm for the simple polygon case} \label{algo:simple}
    \KwIn{A simple polygon $P$, sites $S \subseteq P$, source site $s \in S$.}
    \KwOut{$S_i$ for $i = 0, 1, \ldots$}
    $i \gets 0$\;
    $S_0 \gets \{s\}$\;
    Build $\vd(S)$ and $\DT(S)$ \label{line:euc_dt}\;
    \While{$S_i \neq \emptyset$}{
        $i \gets i + 1$\;
        $S_i \gets \emptyset$\;
        $Q \gets \N(S_{i - 1}) \setminus S_{\leq i-1}$ \label{line:euc_Q1}\;
        \While{$Q \neq \emptyset$ \label{line:euc_Q_while}}{
            $p \gets pop(Q)$\;
            \If{$\df(S_{i - 1}, p) \leq 1$ \label{line:euc_check}}{
                $S_i \gets S_i \cup \{p\}$\; \label{line:add}
                $Q \gets Q \cup (\N(p) \setminus S_{\leq i})$ \label{line:euc_Q2}\;
            }
        }
    }
    \Return{$S_i$ for $i = 0, 1, \ldots$\;}
\end{algorithm}

\subsection{Algorithm correctness}
We now prove that the algorithm is correct. For convenience, we use $S'_i$ to refer to $S_i$ computed by our algorithm and use $S_i$ to refer to the ``correct'' set of sites whose shortest path lengths from $s$ in $\G$ are $i$. Our goal is to prove $S_i=S'_i$ for all $i=0,1,2,\ldots$. We prove it by induction. Initially, $S_0=S'_0$ obviously holds since both of them consist of the single site $s$.

Assuming that $S_j'=S_j$ for all $j<i$, we prove $S_i=S_i'$ as follows. First of all, since $S_j'=S_j$ for all $j<i$, our algorithm guarantees that every site of $S_i'$ is at distance $i$ from $s$ in $\G$ and thus $S_i'\subseteq S_i$. In the following, we show that every site $v\in S_i$ is also in $S_i'$. For this, notice that if $v$ is ever added to $Q$ in our algorithm, then $v$ will be added to $S_i'$ because it will eventually be popped from $Q$ and pass the check on Line~\ref{line:euc_check}. Hence, it suffices to argue that $v$ will be added to $Q$ in the $i$-th iteration of the algorithm.

We will prove by Lemma~\ref{lem:path} below that there exists a site $u\in S_{i-1}=S_{i-1}'$ such that $\DT(S)$ has a $u$-$v$ path $\omega$ whose internal vertices are all in $S_i$ (note that {\em internal vertices} refer to vertices of $\omega$ excluding $u$ and $v$). Let the vertices of the path $\omega$ from $u$ to $v$ be $w_0,w_1,w_2,\ldots,w_k$ where $w_0 = u$ and $w_k=v$. We claim that all vertices of $w_t$, $1\leq t\leq k$, will be added to $Q$ in the $i$-th iteration. Indeed, since $\omega$ is a path of $\DT(S)$, we have $w_1\in \N(u)$. Therefore, according to our algorithm, $w_1$ must be added to $Q$ on Line~\ref{line:euc_Q1}. Because $w_1 \in S_i$, after $w_1$ is popped from $Q$, it will be added to $S_i'$. Since $w_2\in \N(w_1)$, $w_2$ will be added to $Q$ on Line \ref{line:euc_Q2} no later than right after $w_1$ is added to $S'_i$ on Line~\ref{line:add}. Following the same argument, $w_3,w_4,\ldots,w_k=v$ will all be added to $Q$ and thus added to $S_i'$ as well.

This proves $S_i=S_i'$ and thus the correctness of the algorithm.

\begin{restatable}{lemma}{lempath}
    \label{lem:path}
    For any $i > 0$ and any site $v \in S_i$,
    there exist a site $u \in S_{i - 1}$ and a $u$-$v$ path in $G(S) \cap \DT(S)$ such that all internal vertices of the path are in $S_i$.
\end{restatable}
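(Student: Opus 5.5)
Fix $v\in S_i$. Since $d_G(v)=i$, some site $u'\in S_{i-1}$ has $d(u',v)\le 1$. My plan is to mimic the Euclidean argument of Cabello and Jej\v ci\v c: walk from $v$ toward $u'$ through the Voronoi regions of $\vd(S)$ that are crossed by the geodesic $\pi(u',v)$. Let $w_0=v,w_1,\dots,w_k$ be the sites whose Voronoi regions $\pi(v,u')$ meets, in the order encountered from $v$ to $u'$. Under the general position assumption (no obstacle vertex on a Voronoi edge), I would perturb slightly so the path crosses only Voronoi edges and not vertices. Then consecutive regions share a Voronoi edge, so $w_jw_{j+1}$ is an edge of $\DT(S)$, and $w_k=u'$.

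The key geometric step is to show that every $w_j$ lies within geodesic distance $1$ of both its neighbours on the sequence, and also of $u'$ or $v$. Let $x$ be a point of $\pi(v,u')$ in $\R(w_j)$. By the Voronoi property, $d(w_j,x)\le \min\{d(v,x),d(u',x)\}\le \tfrac12 d(v,u')\le \tfrac12$. Now let $y$ be a crossing point on the common edge of $\R(w_j)$ and $\R(w_{j+1})$, lying on $\pi(v,u')$. Then $d(w_j,y)=d(w_{j+1},y)\le\tfrac12$, so $d(w_j,w_{j+1})\le 1$ by the triangle inequality for geodesic distance. Hence the whole sequence is a path in $G(S)\cap\DT(S)$. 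Moreover, each $w_j$ satisfies $d(w_j,v)\le d(w_j,x)+d(x,v)\le 2d(x,v)$, and symmetrically $d(w_j,u')\le 2d(x,u')$, so $w_j$ is within distance $1$ of $v$ or of $u'$.

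Next I would classify the levels along the sequence. Every $w_j$ is adjacent in $G(S)$ to $u'$ or $v$, so $d_G(w_j)\in\{i-2,i-1,i,i+1\}$. Let $t$ be the smallest index with $w_t\in S_{\le i-1}$; such a $t$ exists because $w_k=u'$. Then $w_1,\dots,w_{t-1}\notin S_{\le i-1}$, and I need to show they all lie in $S_i$, so that $w_t,\dots,w_0$ reversed gives the desired path with $u=w_t$ after showing $w_t\in S_{i-1}$. For $w_t$: it is adjacent in $G(S)$ to $w_{t-1}$, which is either $v\in S_i$ or a site of level at least $i$; since adjacent sites differ in level by at most $1$, $w_t\in S_{i-1}$. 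For $w_j$ with $j<t$: its level is at least $i$, and I want to show it is at most $i$, i.e.\ that $w_j$ is adjacent to some site of $S_{\le i-1}$. If $d(w_j,u')\le1$ this is immediate. The bad case is when $w_j$ is only close to $v$, which could give level $i+1$.

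I expect this bad case to be the main obstacle. To handle it, I would choose the point $x\in\R(w_j)\cap\pi(v,u')$ more carefully. Each region meets $\pi(v,u')$ in a connected piece, by star-shapedness (Lemma~\ref{lem:star_shaped}) and the single-crossing property of geodesics with bisectors. If that piece contains a point no closer to $v$ than to $u'$, then $d(w_j,u')\le 1$ and we are done. Otherwise the piece lies strictly on $v$'s half of the path. In that case I would instead bound $d_G(w_j)$ using the path segment from the start vertex $v$, whose level is exactly $i$. The idea is to restrict the sequence to regions met before the first region that is both non-$S_{\le i-1}$ and only close to $v$. If this fails, the fallback is to argue by induction on $d(v,u')$: replace $v$ by a site with smaller distance to $S_{i-1}$, and choose $u'$ and $v$ to minimise $d(u',v)$ over all valid pairs. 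Then any intermediate site closer to $v$ than $u'$ would be another valid candidate at smaller distance, which contradicts the choice.
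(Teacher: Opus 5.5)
There is a genuine gap. The step you flag as the main obstacle is never proved: you need that the sites $w_1,\dots,w_{t-1}$ (which are not in $S_{\le i-1}$) have level at most $i$. The trouble comes from how you pair inequalities. Your bound $d(w_j,u')\le 2d(x,u')$ combines the Voronoi inequality $d(w_j,x)\le d(u',x)$ with the triangle inequality toward $u'$, and that is exactly what produces the case ``$w_j$ is only close to $v$''.

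None of the proposed repairs closes this case:
\begin{itemize}
\item The claim that $\R(w_j)\cap\pi(v,u')$ is connected is not justified by what you cite. Lemma~\ref{lem:star_shaped} gives star-shapedness with respect to the site $w_j$, and the single-crossing property concerns shortest paths emanating from a site, not the path $\pi(v,u')$. The paper explicitly allows $\pi(u,v)$ to re-enter a Voronoi region.
\item The ``restrict the sequence'' idea is never made precise.
\item The fallback of choosing $u'$ nearest to $v$ (note $v$ itself is fixed) only rules out intermediate sites lying in $S_{i-1}$. It does nothing against sites of level $i+1$.
\end{itemize}

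The missing idea is one line: pair the Voronoi inequality against $v$ with the triangle inequality toward $u'$. For $x\in\R(w_j)\cap\pi(v,u')$,
\[
d(w_j,u')\le d(w_j,x)+d(x,u')\le d(v,x)+d(x,u')=d(v,u')\le 1 .
\]
So every $w_j$ is within distance $1$ of $u'\in S_{i-1}$, hence $d_G(w_j)\le i$, and the ``bad case'' never occurs. The paper gets the same bound via the midpoint $c$ of the path: every $w_j$ lies within $r=d(u,v)/2$ of $c$, hence within $2r\le 1$ of $u$ and of every other $w_{j'}$.

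With this inequality your truncation argument is complete. The sites $w_1,\dots,w_{t-1}$ have level exactly $i$, $w_t\in S_{i-1}$ as you argue, and $w_t,\dots,w_0$ is the required path after shortcutting repeated sites.

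This route differs slightly from the paper's. The paper takes $u=\beta_v(S_{i-1})$ and shows that no internal $w_j$ lies in $S_{\le i-1}$, using $d(w_j,v)\le d(u,v)$ together with uniqueness of the nearest neighbour from the general position assumption. Your truncation at the first index $t$ with $w_t\in S_{\le i-1}$ avoids needing that uniqueness.
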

\begin{proof}
    We extend the proof of \cite[Lemma 1]{ref:CabelloSh15} for the Euclidean case to our geodesic case. We remark that this proof is also applicable to the general polygonal domain case.

    Let $u =\beta_v(S_{i-1})$, i.e., $u$ is the nearest neighbor of $v$ in $S_{i-1}$. Consider the shortest path $\pif(u, v)$. For the sake of simplicity, we assume that $\pi(u,v)$ does not contain any Voronoi vertex of $\vd(S)$ (otherwise, we could replace $v$ by a point arbitrarily close to $v$).
    Let $w_0, w_1, \ldots, w_k$ where $w_0 = u$ and $w_k = v$ be the sites whose Voronoi regions in $\vd(S)$ intersect $\pif(u, v)$
    in the order from $u$ to $v$.
    Note that it is possible that $\pif(u, v)$ reenters the same Voronoi region multiple times,
    so it could be that $w_j = w_{j'}$ for $j \neq j'$. Since $\pif(u, v)$ does not intersect any Voronoi vertex of $\vd(S)$,
    we know that for every $0\leq j\leq k-1$, $\pif(u, v)$ crosses a common edge of $\R(w_j)$ and $\R(w_{j+1})$, meaning that $w_j$ and $w_{j+1}$ are connected by an edge in $\DT(S)$. Therefore, $\DT(S)$ has a $u$-$v$ path  $\omega$ with vertices $u=w_0, w_1, \ldots, w_k=v$.

    We next show that $\omega$ is also a path in $\G$. To this end, it suffices to prove that $d(w_j,w_{j+1})\leq 1$ for all $0\leq j\leq k-1$. Let $c$ be the midpoint of $\pif(u, v)$ and let $r = \df(u, c) = \df(c, v)$. Since $v \in S_i$ and $u = \beta_v(S_{i-1})$, it holds that $\df(u, v) \leq 1$ and thus $r \leq \frac{1}{2}$.
    For any $w_j$ ($0\leq j\leq k$), by definition, there exists a point $q \in \R(w_j) \cap \pif(u, v)$.
    Without loss of generality, assume that $q \in \pif(u, c)$ (otherwise, the proof is symmetric).
    We find that:
    \[\df(c, w_j) \os 1 \leq \df(c, q) + \df(q, w_j) \os 2 \leq  \df(c, q) + \df(q, u) \os 3 = r.\]
    where (1) is due to the triangle inequality,
    (2) is due to $q \in \R(w_j)$,
    and (3) is due to $q \in \pif(u, c)$.
        By the triangle inequality, we now have $\df(w_j, w_{j + 1}) \leq \df(w_j, c) + \df(c, w_{j + 1}) \leq  2 r \leq 1$.

    The above proves that $\omega$ is a path in $G(S)\cap \DT(S)$. In the following, we prove that all internal vertices of $\omega$ are in $S_i$, which will prove the lemma.

    Consider an internal vertex $w_j$ of $\omega$. Our goal is to show that $w_j\in S_i$.
    First of all, since $u\in S_{j-1}$ and $\df(u, w_j) \leq \df(u, c) + \df(c, w_j) \leq  2 r \leq 1$, we obtain that
    $w_j \in S_{\leq i}$. We next argue that $w_j$ cannot be in $S_{\leq i-1}$. Assume to the contradiction that $w_j \in S_{\leq  i-1}$. As before, pick a point $q \in \R(w_j) \cap \pif(u, v)$.
    We can derive:
    \[\df(w_j, v) \leq \df(w_j, q) + \df(q, v) \leq  \df(u, q) + \df(q, v) = \df(u, v) \leq 1.\]
    If $w_j \in S_{i - 1}$, then $\df(w_j, v) \leq \df(u, v)$ contradicts with the fact that $u$ is the unique nearest neighbor of $v$ in $S_{i-1}$.
    If $w_j \in S_{\leq i - 2}$, then $\df(w_j, v) \leq 1$ implies that $v\in S_{\leq i-1}$,
    contradicting with $v \in S_i$. Therefore, we obtain that $w_j$ must be in $S_i$.
\end{proof}

\subsection{Algorithm implementation and time analysis}
We now discuss how to implement the algorithm efficiently.

First of all, to compute $\vd(S)$ and thus $\DT(S)$, Aronov~\cite{ref:AronovOn89} gave an algorithm of $O((n+m)\log(n+m)\log m)$ time. Papadopoulou and Lee later presented an improved algorithm that runs in $O((n+m)\log (n+m))$ time~\cite{ref:PapadopoulouA98}. Recently, Oh and Ahn~\cite{ref:OhVo20} proposed a new algorithm of $O(m+n\log n\log^2 m)$ time and Oh~\cite{ref:OhOp19} finally solved the problem in $O(m+n\log n)$ time.
Note that the combinatorial size of $\vd(S)$ is $O(n+m)$~\cite{ref:AronovOn89}. In addition, since $\DT(S)$ has $n$ vertices and is a planar graph, $\DT(S)$ has $O(n)$ edges.

Notice that throughout the whole algorithm, each edge of $\DT(S)$ is considered at most four times (twice for each vertex) in the construction of the neighborhood sets.
Specifically, the edge $(u,v)$ for $u \in S_i$ will be considered by Line \ref{line:euc_Q2} in the $i$-th iteration and by Line~\ref{line:euc_Q1} in the $i + 1$-th iteration.
There are $O(n)$ edges in $\DT(S)$,
so there are at most $O(n)$ insertions (and thus pop operations) to $Q$.
Hence, the while loop on Line \ref{line:euc_Q_while} has at most $O(n)$ iterations
throughout the whole algorithm. It remains to determine the time complexity for the operation of checking whether $d(S_{i-1},p)\leq 1$ on Line~\ref{line:euc_check}.

Notice that $d(S_{i-1},p)\leq 1$ if and only if $S_{i-1}$ has a point $q$ with $d(q,p)\leq 1$. Hence, the problem can be solved by a geodesic unit-disk range emptiness (GUDRE) query.
Using Theorem~\ref{theo:rangeempty}, we construct a GUDRE data structure for $S_{i - 1}$ in $O(|S_{i - 1}|\cdot (\log^3 m+ \log n\log^2 m))$ time at the beginning of each $i$-th iteration of Algorithm~\ref{algo:simple}, assuming that a GH data structure has been built for $P$ in the beginning of the entire algorithm. Using a GUDRE query,
whether $d(S_{i-1},p)\leq 1$ can be determined in $O(\log n\log^2 m)$ time by Theorem~\ref{theo:rangeempty}. Note that no deletions on $S_{i-1}$ are needed.
Since $\sum_i|S_{i - 1}|=n$, the overall time for constructing the GUDRE data structure for $S_{i - 1}$ in all iterations of the algorithm is $O(n(\log^3 m+\log n\log^2 m))$.

We conclude that the total time of the algorithm is $O(m+n\log n + n(\log^3 m+\log n\log^2 m))$, which is $O(m+n\log n\log^2 m)$ as explained in the proof of the following theorem.


\begin{theorem}
Given a simple polygon $P$ of $m$ vertices, a set $S$ of $n$ points in $P$, and a source point $s\in S$, we can compute shortest paths from $s$ to all points of $S$ in the geodesic unit-disk graph of $S$ in $O(m+n\log n\log^2 m)$ time.
\end{theorem}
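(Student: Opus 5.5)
The proof simply totals the costs of Algorithm~\ref{algo:simple}, whose correctness was proved above, and then simplifies the resulting bound.

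\emph{Global preprocessing.} First, build the GH data structure for $P$ in $O(m)$ time; it is shared by all GUDRE structures. Next, compute $\vd(S)$ with Oh's algorithm~\cite{ref:OhOp19} in $O(m+n\log n)$ time. From it, read off $\DT(S)$, which has $O(n)$ edges because it is planar. Throughout, maintain a label array recording the index $i$ with $v\in S_i$ (or ``unassigned''). With this array, the tests ``$\notin S_{\le i}$'' and membership in $Q$ take $O(1)$ time each.

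\emph{Work per iteration.} As argued above, each edge of $\DT(S)$ is examined $O(1)$ times over the whole run. Hence there are $O(n)$ insertions into and pops from $Q$, and $O(n)$ distance tests on Line~\ref{line:euc_check} in total. At the start of iteration $i$, build the GUDRE structure of Theorem~\ref{theo:rangeempty} on $S_{i-1}$. No deletions are needed, and each test is a single query costing $O(\log n\log^2 m)$. Since $\sum_i |S_{i-1}|\le n$, all constructions together cost $O(n\log^3 m+n\log n\log^2 m)$. All queries together cost $O(n\log n\log^2 m)$. Building predecessor pointers costs only $O(1)$ extra per query, because each GUDRE query returns a witness.

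\emph{Simplification.} The total is $O(m+n\log n+n\log^3 m+n\log n\log^2 m)$. The only term not already of the target form is $n\log^3 m$, which we handle by splitting on the size of $m$.
\begin{itemize}
\item If $m\le n^2$, then $\log m=O(\log n)$, so $n\log^3 m=O(n\log n\log^2 m)$.
\item If $m>n^2$, then $\log^3 m = o(\sqrt m)$, so $n\log^3 m< \sqrt m\cdot\log^3 m = O(m)$.
\end{itemize}
In either case the total is $O(m+n\log n\log^2 m)$. The main subtlety is this last absorption step; the rest is bookkeeping already supplied by Theorem~\ref{theo:rangeempty} and the edge-counting argument.
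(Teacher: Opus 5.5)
Your proposal is correct and follows essentially the same route as the paper: $\vd(S)$ via Oh's algorithm, a fresh deletion-free GUDRE structure on $S_{i-1}$ in each iteration, $O(n)$ queries from the $\DT(S)$ edge count, and the same split on $m\le n^2$ versus $m>n^2$ to absorb the $n\log^3 m$ term. Your justification of the $m>n^2$ case via $n<\sqrt m$ is slightly more explicit than the paper's.
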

\begin{proof}
We argue that $m+n\log n + n(\log^3 m+\log n\log^2 m)=O(m+n\log n\log^2 m)$. It suffices to show that $n\log^3 m=O(m+n\log n\log^2 m)$. Indeed, if $m<n^2$, then $n\log^3 m=O(n\log n\log^2 m)$; otherwise, $n\log^3 m=O(m)$. Hence, it follows that $n\log^3 m=O(m+n\log n\log^2 m)$.
\end{proof}

Note that when $m=O(1)$, the runtime of our algorithm is $O(n\log n)$, which matches the optimal $O(n\log n)$ time algorithm for the Euclidean case unit-disk graph SSSP problem~\cite{ref:ChanAl16}.

\section{The unweighted polygonal domain case}
\label{sec:pol_holes}

In this section, we present our SSSP algorithm for $G(S)$ when $P$ is a polygonal domain with holes.
In fact, the framework of Algorithm~\ref{algo:simple} continues to apply: its correctness hinges on Lemma~\ref{lem:path}, whose proof extends directly to polygonal domains (as remarked in the proof of the lemma).
The main difficulty is algorithmic.
Unlike the simple polygon setting, we do not have an efficient analogue of Theorem~\ref{theo:rangeempty} for polygonal domains (even without deletions) that would allow us to evaluate the condition $d(S_{i-1},p)\leq 1$ on Line~\ref{line:euc_check}.

As before, the algorithm proceeds in iterations, with the $i$th iteration responsible for computing $S_i$.
However, we distinguish between iterations based on the size of $S_i$.
If $|S_i|\ge \sqrt{n}$, we classify the iteration as \emph{heavy}; otherwise, it is \emph{light}.
We design separate procedures for these two types, referred to as the \emph{heavy-iteration procedure} and the \emph{light-iteration procedure}, respectively.
At the start of iteration $i$, the value of $|S_i|$ is not known in advance.
Our strategy is therefore to always begin with the light-iteration procedure.
If, during its execution, more than $\sqrt{n}$ sites are discovered for $S_i$, we immediately abort the light-iteration procedure and switch to the heavy-iteration procedure.

A useful observation is that there can be at most $\sqrt{n}$ heavy iterations, since the sets $S_1, S_2, \ldots$ are disjoint, their union is $S$, and each heavy iteration has at least $\sqrt{n}$ sites.



In the following, we first present the heavy-iteration procedure as it is relatively easy. We follow the notation and the discussion in Section~\ref{sec:notation}.

\subsection{Heavy-iteration procedure}
\label{sec:heavy}

The procedure starts with computing the geodesic Voronoi diagram of $S_{i-1}$ in $P$, denoted by $\vd(S_{i-1})$, which can be done in $O((|S_{i-1}|+m)\log (|S_{i-1}|+m))$ time by the algorithm of Hershberger and Suri~\cite{ref:HershbergerAn99}. Note that each Voronoi region of $\vd(S_{i-1})$ is a {\em shortest path map} with respect to the site of the region and the combinatorial complexity of $\vd(S_{i-1})$ is $O(|S_{i-1}|+m)$~\cite{ref:HershbergerAn99}. We then construct a point location data structure on the diagram in $O(|S_{i-1}|+m)$ time~\cite{ref:KirkpatrickOp83,ref:EdelsbrunnerOp86}. Finally, for each site $p\in S\setminus S_{\leq i-1}$, using $\vd(S_{i-1})$ and the point location data structure, we can compute the nearest neighbor $\beta_p(S_{i-1})$ and its geodesic distance from $p$ in $O(\log (m+|S_{i-1}|))$ time~\cite{ref:HershbergerAn99}; note that their geodesic distance is $d(S_{i-1},p)$. If $d(S_{i-1},p)\leq 1$, we add $p$ to $S_i$.
This computes $S_i$ in $O((|S_i|+m)\log (|S_{i-1}|+m)+n\log (|S_{i-1}|+m))$ time, which is bounded by $O((n+m)\log (n+m))$.

Since the number of heavy iterations in the entire algorithm is at most $\sqrt{n}$, the total time of all heavy-iteration procedures in the algorithm is $O(\sqrt{n}(n+m)\log (n+m))$.

\paragraph{Remark.} Instead of first constructing $\vd(S_{i-1})$ and then computing $d(p,S_{i-1})$ using the point location queries, we can use the algorithm of Hershberger and Suri~\cite{ref:HershbergerAn99} to directly compute $d(S_{i-1},p)$ for all sites $p\in  S\setminus S_{\leq i-1}$. Indeed, we can add the sites of $S\setminus S_{\leq i-1}$ to $P$ as new obstacle vertices when applying Hershberger and Suri's algorithm with $S_{i-1}$ as the starting points to generate wavelets. The algorithm can directly compute $d(S_{i-1},p)$ for all obstacle vertices $p$ in the new $P$ (with sites of $S\setminus S_{\leq i-1}$ as the new obstacle vertices).
Since the complexity of the new $P$ is $O(m+|S\setminus S_{\leq i-1}|)$, which is bounded by $O(n+m)$, the runtime of Hershberger and Suri's algorithm is $O((n+m)\log (n+m))$. In this way, we can still compute $S_i$ in $O((n+m)\log (n+m))$ time in each heavy-iteration procedure.

\subsection{Light-iteration procedure}
\label{sec:small}
As mentioned above, each $i$-th iteration of our algorithm will start with running the light-iteration procedure. The procedure will stop if either all sites of $S_i$ have been computed and $|S_i|<\sqrt{n}$, or the algorithm detects that $|S_i|\geq \sqrt{n}$. In the former case, the $i$-th iteration finishes. In the latter case, we switch to the heavy-iteration procedure. We will show that the total time of the light-iteration procedure in the entire algorithm is $O(\sqrt{n}(n+m)\log (n+m))$.

The light-iteration procedure runs in rounds, and each round computes at least one new point for $S_i$. If a round does not compute any new point for $S_i$, then we can assert that all sites of $S_i$ have been computed. If after a round $|S_i|$ becomes larger than or equal to $\sqrt{n}$, then we terminate the light-iteration procedure and switch to the heavy-iteration procedure. In the following, we first give some intuition and introduce several new concepts.

In each round, we search in the neighborhoods of the sites of $S_{i - 1}$ and also the neighborhoods of the sites of the current $S_i$ (i.e., the sites of $S_i$ we have computed thus far). The observation is that we do not need the geodesic Voronoi diagram of $S_{i-1}$ in the entire polygon $P$ to conduct this search, and instead we just need to compute the Voronoi diagram for $S_{i-1}$ in the region that is the union of the Voronoi regions $\R(v)$ for all $v\in S_{i - 1}\cup S_i\cup \N(S_{i - 1} \cup S_i)$. However, it turns out that this is still not quite restrictive enough. For example, it is possible for some site $v \in S$ to have $\Omega(n)$ neighbors in $\DT(S)$ and $\Omega(m)$ obstacle vertices in $\R(v)$. We could end up with having a quadratic $O(n m)$ term in the runtime if we compute a Voronoi diagram involving $\R(v)$ once for every one of its neighbors.

To circumvent the above issue, we will show that we do not have to use the entire $\R(v)$ to compute $d(S_{i-1},v)$ for a site $v\in S\setminus S_{\leq i-1}$. More specifically, if there is a shortest path $\pi(S_{i-1},v)$ from a site of $S_{i - 1}$ to $v$, the observation is that it is possible that we only need to use a ``small'' portion of $\R(v)$ to find $\pi(S_{i-1},v)$. More specifically, suppose that $\pi(S_{i-1},v)$ enters $\R(v)$ through $\R(u)$ for a neighbor $u \in \N(v)$.
Consider the Voronoi edge $\E(u, v)$ of $\vd(S )$ between $u$ and $v$.
This edge may be comprised of multiple connected components $e_1, \ldots, e_k$
where each component is a sequence of hyperbolic arcs. For each component $e$, let $a$ and $b$ be its two endpoints, each of which is either a Voronoi vertex of $\vd(S)$ or an intersection between $e$ and the boundary of $P$.
Note that since $a$ is a point on the boundary of $\R(v)$, $\pi(v,a)$ must be in $\R(v)$. Similarly, $\pi(v,b)$ is also in $\R(v)$.
Observe that the cycle $\pif(v, a)$, $\pif(v, b)$, and $e$ divides $P$ into two portions, and exactly one of them must be inside $\R(v)$. We use $\Delta_v(e)$ to denote that region and call it a {\em pseudo-triangle}.\footnote{Note that if shortest paths between $v$ and $a$ are not unique, then we let $\pi(v,a)$ refer to an arbitrary one and fix it for the definition of $\Delta_v(e)$. We do the same for $\pif(v, b)$.}
We are interested in $\Delta_v(e)$ mainly due to the following observation.


\begin{restatable}{observation}{obser:trian_cont}\label{lem:trian_cont}
 For any point $p$ in $e$, there exists a shortest path $\pif(p, v)$ that is inside the pseudo-triangle $\Delta_v(e)$.
\end{restatable}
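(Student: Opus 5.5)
Take any point $p\in e$ and any shortest path $\pi(v,p)$, and show that either this path already lies in $\Delta_v(e)$, or it can be replaced by a path of the same length that does. Since $p\in e\subseteq \E(u,v)$, the point $p$ lies on the boundary of $\R(v)$. By the star-shapedness of Voronoi regions (the polygonal-domain analogue of Lemma~\ref{lem:star_shaped}(1), valid for shortest path maps~\cite{ref:HershbergerAn99}), every shortest path from $v$ to $p$ lies in $\R(v)$, and its interior avoids $\partial\R(v)$ except possibly at $p$. In particular, $\pi(v,p)$ does not meet $e$ except at $p$.

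The cycle $C=\pi(v,a)\cup e\cup\pi(v,b)$ splits $P$ into two parts, and $\Delta_v(e)$ is the part inside $\R(v)$. Near $p$, which is an interior point of the arc $e$ (the endpoints $a,b$ are handled directly by $\pi(v,a)$ and $\pi(v,b)$), the region $\R(v)$ lies locally on one side of $e$. That side is the side of $\Delta_v(e)$, because $\Delta_v(e)\subseteq\R(v)$ and $e$ is on its boundary. So the final portion of $\pi(v,p)$ just before $p$ lies inside $\Delta_v(e)$.

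Now trace $\pi(v,p)$ backward from $p$. If it stays in $\Delta_v(e)$ all the way to $v$, we are done. Otherwise it must leave $\Delta_v(e)$, and the only way to do so is across $\partial\Delta_v(e)=C$. It cannot cross $e$, since it does not touch $e$ in its interior. Hence it hits $\pi(v,a)$ or $\pi(v,b)$, say $\pi(v,a)$, at some point $x\neq v$. Then $\pi(v,x)$ and the subpath of $\pi(v,p)$ from $v$ to $x$ are both shortest paths from $v$ to $x$. We replace the prefix of $\pi(v,p)$ up to $x$ with the subpath of $\pi(v,a)$ from $v$ to $x$, which lies in $C\subseteq\Delta_v(e)$. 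This leaves the length unchanged, so the new path is still a shortest $v$-$p$ path.

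To make this rigorous, choose $x$ to be the last point (closest to $p$) where $\pi(v,p)$ meets $\pi(v,a)\cup\pi(v,b)$. The subpath from $x$ to $p$ then contains no points of $C$ other than $x$ and $p$. Because its end near $p$ lies in $\Delta_v(e)$, connectivity forces the whole subpath into $\Delta_v(e)$. Concatenating it with the subpath of $\pi(v,a)$ from $v$ to $x$ (or of $\pi(v,b)$) gives a shortest path from $p$ to $v$ contained in $\Delta_v(e)$.

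The main obstacle is the topological step that the open subpath from $x$ to $p$, which avoids $C$, lies entirely on the $\Delta_v(e)$ side. This needs the following:
\begin{itemize}
\item $\pi(v,a)$, $\pi(v,b)$ and $e$ form a simple enough cycle. The two paths may share a prefix from $v$, but they do not cross each other, since shortest paths from a common source cannot cross without contradicting minimality after exchanging subpaths.
\item The local side of $e$ at $p$ is well defined.
\end{itemize}
I would handle the shared prefix by cutting the cycle at the junction point where $\pi(v,a)$ and $\pi(v,b)$ diverge and treating the common prefix separately; it lies in $C$ and can be used in the splice anyway.
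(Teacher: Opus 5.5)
Your proof is correct and follows the paper's argument. Near an interior point $p$ of $e$, the path $\pi(v,p)$ lies on the $\R(v)$-side of $e$, hence inside $\Delta_v(e)$, and it can leave only through $\pi(v,a)$ or $\pi(v,b)$. The one difference is how you handle contact with those two paths: the paper rules it out by citing non-crossing of shortest paths from $v$, while you splice at the last contact point. Your version is slightly more robust when shortest paths are not unique, as can happen in a domain with holes, and it gives the existential form of the statement directly.
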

\begin{proof}
If $p$ is an endpoint of $e$, the lemma obviously holds following the definition of $\Delta_v(e)$. In the following, we assume that $p$ is in the interior of $e$.

First of all, since $e$ is on the boundary of $\R(v)$, $\pi(v,p)$ must be inside $\R(v)$.
Since $p$ is in the interior of $e$, if we move on $\pi(v,p)$ from $p$ to $v$, we will enter the interior of $\R(v)$ and thus also the interior of $\Delta_v(e)$. This means that if $\pif(v,p)$ exits $\Delta_v(e)$, then it must cross either $\pi(v,a)$ or $\pi(v,b)$. But this is not possible since no two shortest paths from the same point can cross each other~\cite{ref:HershbergerAn99}.
\end{proof}

\begin{figure}
    \centering
    \includegraphics[height = 9cm]{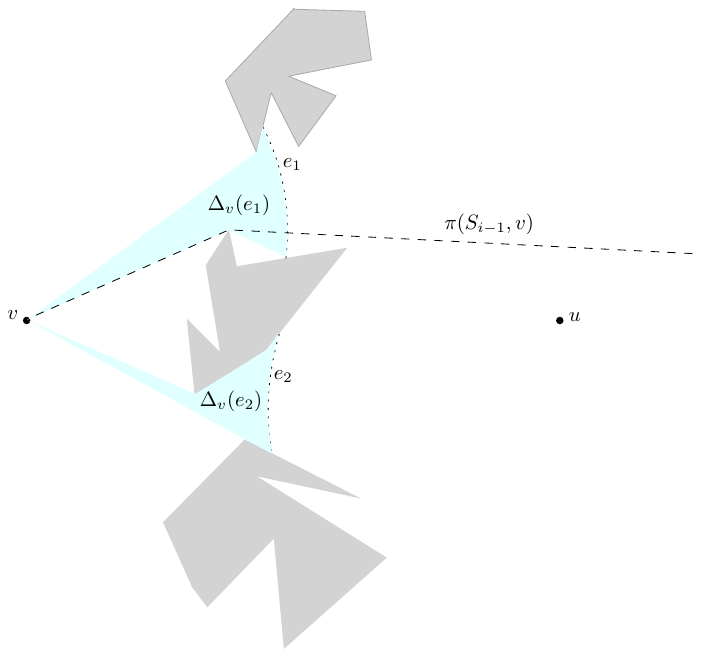}
    \caption{The gray polygons are obstacles.
    The path $\pi(S_{i - 1}, v)$ (dashed polyline) passes through the connected component $e_1$ (dotted curve) of $\E(u, v)$,
    so $\pi(S_{i - 1}, v) \cap \R(v)$ is contained in $\Delta_v(e_1)$ (the upper light blue region).}
    \label{fig:through_triangle}
\end{figure}

Observation~\ref{lem:trian_cont} implies that if we want to include the shortest path $\pi(S_{i-1},v)$ through the Voronoi edge $\E(u, v)$,
the only portion of $\R(v)$ that we need to include when constructing the Voronoi diagram for $S_{i-1}$ is union of the pseudo-triangles $\Delta_v(e)$ for all components $e\subseteq \E(u,v)$.
See Figure~\ref{fig:through_triangle}.
Motivated by this, for each site $u$, we define $\bigstar(u)$ to be the union of $\R(u)$ and all $\Delta_v(e)$ for all $v \in \N(u)$ and all connected components $e\subseteq \E(u, v)$.
See Figure~\ref{fig:star} for an illustration.
We use the symbol $\bigstar$ because this region looks like a star with $\R(u)$ in the center and pseudo-triangles $\Delta_v(e)$ in the periphery; we refer to these pseudo-triangles $\Delta_v(e)$ as {\em peripheral pseudo-triangles} of $\bigstar(u)$.
By including $\bigstar(u)$ in our ``work region'' for computing the Voronoi diagram for $S_{i-1}$, we ensure that for all $v \in \N(u)$,
if $\pif(S_{i - 1}, v)$ crosses $\E(u, v)$, then this path will be included in our work region.

\begin{figure}
    \centering
    \includegraphics[height = 7cm]{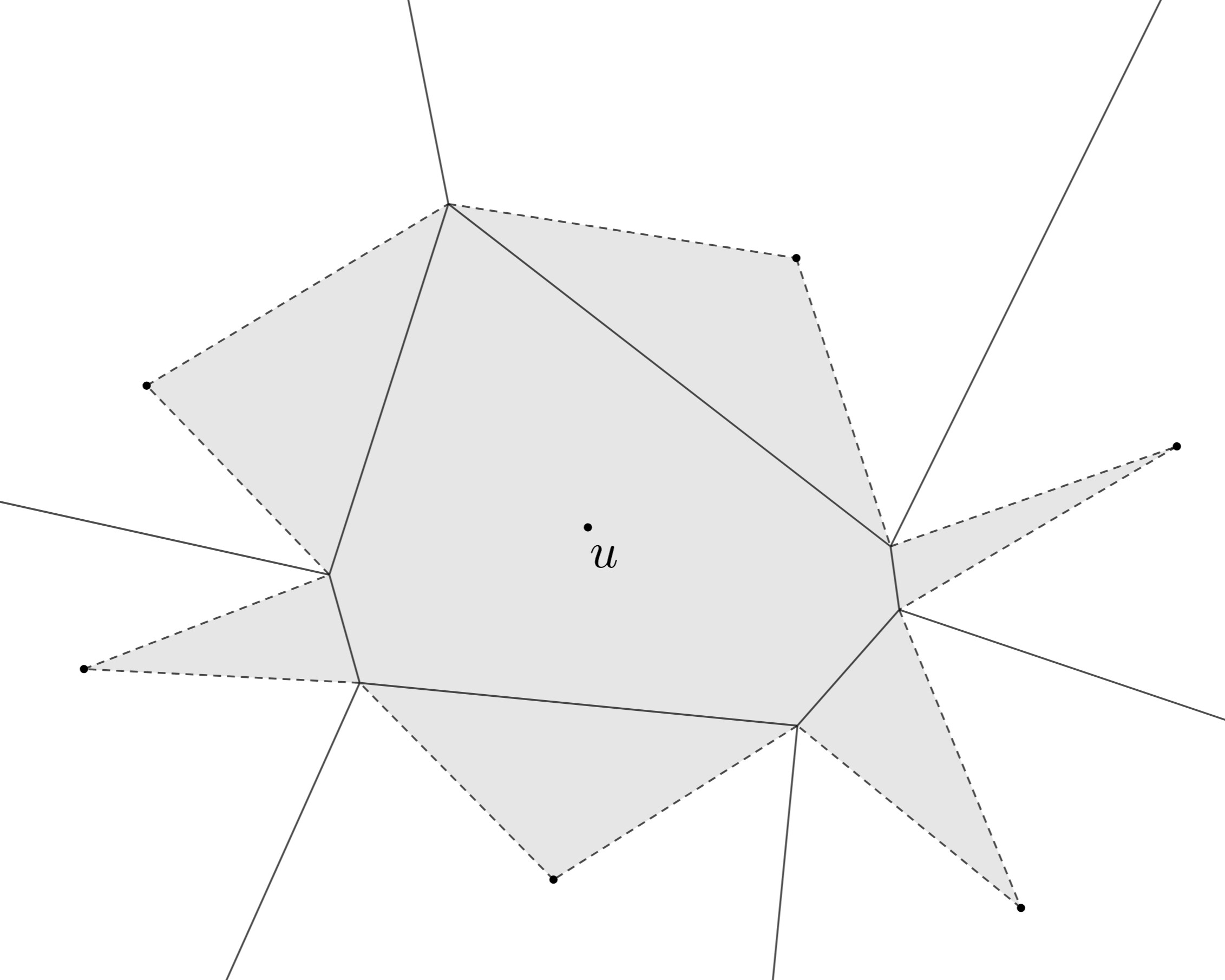}
    \caption{Illustrating $\bigstar(u)$.
        The Voronoi diagram is drawn with solid segments. $\bigstar(u)$ is drawn shaded and bounded by dashed segments.}
    \label{fig:star}
\end{figure}

By slightly abusing the notation, for any subset $S'\subseteq S$, we define $\bigstar(S')=\cup_{u\in S'}\bigstar(u)$. If $\Delta_v(e)$ is a pseudo-triangle in $\bigstar(S')$ with $v\not\in S'$, then $\Delta_v(e)$ is also in the ``periphery'' of $\bigstar(S')$ and therefore we refer to $\Delta_v(e)$ as a {\em peripheral pseudo-triangle} of $\bigstar(S')$.
Note that each $\bigstar(u)$ for $u \in S'$ is a polygon with holes,
so $\bigstar(S')$ is a union of polygons with holes.
In particular, the boundary of $\bigstar(S')$ does not include any curved Voronoi edges of $\vd(S)$.

In what follows, we first give the description of the light-iteration procedure, then prove its correctness, and finally analyze the time complexity.

\subsubsection{Procedure description}
With the above discussion, we describe our light-iteration procedure as follows. In the beginning, we set $S_i=\emptyset$. Each round of the procedure works as follows.

We first compute the region $\F=\bigstar(S_{i-1}\cup S_i)$ with the current $S_i$ (i.e., the set of sites that have been added to $S_i$ so far); we call $\F$ {\em the work region} for this round. We also compute the subset $S_{\F}$ of sites of $S$ in $\F$ that are not in $S_{\leq i-1}\cup S_i$ (again, $S_i$ refers to the current $S_i$).
We then construct the geodesic Voronoi diagram for the sites $S_{i-1}$ but in the region $\F$ only, denoted by $\vd(S_{i-1},\F)$, and build a point location data structure for it~\cite{ref:KirkpatrickOp83,ref:EdelsbrunnerOp86}.
Using the point location data structure, we compute $d_{\F}(S_{i-1},p)$ for each point $p \in S_{\F}$
where $d_\F(\cdot, \cdot)$ is the geodesic distance in $\F$.
If $d_{\F}(S_{i-1},p)\leq 1$, we add $p$ to $A$ (which is initialized to $\emptyset$ at the beginning of each round). After all points $p\in S_{\F}$ are considered, we do the following. If $A\neq \emptyset$, then we add all sites of $A$ to $S_i$. If $|S_i|\geq \sqrt{n}$, then we terminate the light-iteration procedure and switch to the heavy-iteration procedure; otherwise, we proceed with the next round of the light-iteration procedure. On the other hand, if $A=\emptyset$, then this round does not add any new site for $S_i$ and we terminate the light-iteration procedure and also finish the $i$-th iteration of the algorithm.

Algorithm~\ref{algo:combined} gives the pseudocode of the entire algorithm. Note that, as remarked in Section~\ref{sec:heavy}, instead of computing $d_{\F}(S_{i-1},p)$ for all sites $p\in S_{\F}$ by first constructing $\vd(S_{i-1},\F)$ and then using point location queries, we could alternatively compute $d_{\F}(S_{i-1},p)$ directly by applying Hershberger and Suri's algorithm~\cite{ref:HershbergerAn99} with sites of $S_{\F}$ as new obstacle vertices.

\begin{algorithm}
    \caption{SSSP in geodesic unit-disk graphs in polygonal domains} \label{algo:combined}
    \KwIn{A polygonal domain $P$, sites $S \subseteq P$, source site $s \in S$.}
    \KwOut{$S_i$ for $i = 0, 1, \ldots$}
    $i \gets 0$\;
    $S_0 \gets \{s\}$\;
    Construct $\vd(S)$ and $\DT(S)$ for $S$ in $P$ \label{line:comb_vddt}\;
    \While{$S_i \neq \emptyset$}{
        $i \gets i + 1$\;
        $S_i \gets \emptyset$\;

        \tcp{light-iteration procedure. Early termination if $|S_i| \geq \sqrt{n}$.}
        \Do{$|S_i| < \sqrt{n}$ and $A \neq \emptyset$ \label{line:small_end}}{
            \label{line:small_start}
            Compute $\F \gets \bigstar(S_{i - 1} \cup S_i)$ and also compute $S_{\F}$, the set of all sites $p\in \F$ with $p\not\in S_{\leq i-1}\cup S_i$\;\label{line:workregion}
            Build the geodesic Voronoi diagram $\vd(S_{i-1},\F)$ for sites of $S_{i - 1}$ in $\F$ and construct a point location data structure for $\vd(S_{i-1},\F)$\;\label{line:vd}
            Using $\vd(S_{i-1},\F)$, compute $d_{\F}(S_{i - 1}, p)$ for each site $p\in S_{\F}$\; \label{line:comdis}
            $A \gets \{p \in S_{\F}:  d_{\F}(S_{i - 1}, p) \leq 1\}$\;\label{line:getA}
            $S_i \gets S_i \cup A$\;\label{line:update}
        }
        \tcp{heavy-iteration procedure.}
        \If{$|S_i| \geq \sqrt{n}$ \label{line:large_start}}{
            Build the geodesic Voronoi diagram $\vd(S_{i-1})$ for the sites $S_{i - 1}$ in $P$ and construct a point location data structure for $\vd(S_{i-1})$\;
            $S_i \gets \emptyset$\;
            Using $\vd(S_{i-1})$, compute $d(S_{i-1},p)$ for each site $p\in S\setminus S_{\leq i-1}$\;
            $S_i \gets \{p\in S\setminus S_{\leq i-1}: d(S_{i-1},p)\leq 1\}$\;
        }
    }
    \Return{$S_0, S_1, \ldots$\;}
\end{algorithm}

\subsubsection{Proving the correctness}
We now argue the correctness of the light-iteration procedure.
The procedure is terminated due to one of the two cases on Line~\ref{line:small_end}: (1) $|S_i|\geq \sqrt{n}$; (2) $|A|=\emptyset$.
First of all, according to our algorithm, for each site $p\in A$, it is guaranteed that $d_{\F}(S_{i-1},p)\leq 1$. Because $\F \subseteq P$, $d(S_{i-1},p)\leq d_{\F}(S_{i-1},p)$. Therefore, we have $d(S_{i-1},p)\leq 1$. Because all points added to $S_i$ are indeed at distance $i$ from $s$ in $\G$, if the algorithm switches to the heavy-iteration procedure, it is true that $|S_i|\geq \sqrt{n}$.
The correctness proof is completed with the following lemma.

\begin{restatable}{lemma}{lemsmallcor} \label{lem:small_cor}
    If $A=\emptyset$ after a round in the light-iteration procedure, then the set $S_i$ has been correctly computed and $|S_i|<\sqrt{n}$.
\end{restatable}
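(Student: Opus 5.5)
We argue that the final $S_i$ (call it $S_i'$) equals the true $S_i$. Soundness, $S_i'\subseteq S_i$, is already established before the lemma: every site added satisfies $d(S_{i-1},p)\le d_{\F}(S_{i-1},p)\le 1$ and lies outside $S_{\le i-1}$. Once we show $S_i\subseteq S_i'$, the bound $|S_i|<\sqrt{n}$ follows immediately. The procedure only reaches $A=\emptyset$ when the loop condition $|S_i|<\sqrt{n}$ held at the start of that round, and an empty $A$ leaves $S_i$ unchanged.

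For completeness, suppose some $v\in S_i\setminus S_i'$. By Lemma~\ref{lem:path}, which holds in polygonal domains as remarked, there is a $u\in S_{i-1}$ and a $u$-$v$ path $u=w_0,w_1,\ldots,w_k=v$ in $\G\cap\DT(S)$ with all internal vertices in $S_i$. Moreover, from that proof we may take $u=\beta_v(S_{i-1})$ and the $w_j$ to be the sites whose Voronoi regions are crossed consecutively by $\pi(u,v)$. Let $j$ be the smallest index with $w_j\notin S_i'$, so $j\ge1$ and $w_{j-1}\in S_{i-1}\cup S_i'$. The plan is to show $d_{\F}(S_{i-1},w_j)\le 1$ for the work region $\F=\bigstar(S_{i-1}\cup S_i')$ of the last round. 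This would put $w_j$ in $A$, contradicting $A=\emptyset$. Note that $w_j\in\F$, since $w_j\in\R(w_j)\subseteq\Delta_{w_j}(e)$ for a component $e$ of $\E(w_{j-1},w_j)$, which is a peripheral pseudo-triangle of $\bigstar(w_{j-1})$. Also $w_j\notin S_{\le i-1}\cup S_i'$, so $w_j\in S_{\F}$.

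The main obstacle is exhibiting a path of length at most $1$ from $S_{i-1}$ to $w_j$ inside $\F$. For this we use the subpath structure of $\pi(u,v)$ rather than $\pi(u,w_j)$ directly. Let $q$ be the first point where $\pi(u,v)$ enters $\R(w_j)$, crossing a component $e$ of $\E(w_{j-1},w_j)$ (we assume no Voronoi vertices are hit, as in Lemma~\ref{lem:path}). The portion of $\pi(u,v)$ from $u$ to $q$ lies in $\R(w_0)\cup\cdots\cup\R(w_{j-1})$, all of whose sites are in $S_{i-1}\cup S_i'$, so it lies in $\F$. By Observation~\ref{lem:trian_cont}, some $\pi(q,w_j)$ lies in $\Delta_{w_j}(e)\subseteq\bigstar(w_{j-1})\subseteq\F$. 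Concatenating gives
\[
d_{\F}(u,w_j)\le d(u,q)+d(q,w_j)\le d(u,q)+d(q,u'),
\]
and the same midpoint argument as in Lemma~\ref{lem:path} bounds this by $1$.

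The remaining work is that final estimate. Since $q\in\R(w_j)$, we have $d(q,w_j)\le d(q,v)$, so $d(u,q)+d(q,w_j)\le d(u,q)+d(q,v)=d(u,v)\le 1$. This gives $d_{\F}(S_{i-1},w_j)\le 1$ and hence $w_j\in A$, the desired contradiction. One must take care that the initial segment truly stays within $\bigcup_{t<j}\R(w_t)$, which holds by the choice of $q$ as the first entry into $\R(w_j)$ together with the definition of the sequence $w_t$. One must also handle the possibility that $\R(w_{j-1})$ is re-entered; this is harmless because all earlier regions belong to $\F$.
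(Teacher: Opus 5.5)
Your proof is correct, but it is organized differently from the paper's.

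\textbf{The paper's route.} It picks the counterexample $v$ that minimizes $\df(S_{i-1},\cdot)$ over $S_i\setminus S_i'$. It then shows that every internal site $w_j$ met by $\pif(u,v)$ is already in $S_i'$. Otherwise $\df(u,v)\le \df(u,w_j)\le \df(u,q)+\df(q,w_j)\le \df(u,v)$, which contradicts the general-position assumption. Hence all of $\pif(u,v)$ lies in $\F$, and $v$ itself is forced into $A$.

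\textbf{Your route.} You take an arbitrary $v$ and stop at the first site $w_j$ of the crossing sequence that is not in $S_i'$. You then apply the same inequality $\df(u,q)+\df(q,w_j)\le \df(u,v)\le 1$ to the prefix of $\pif(u,v)$ up to $q$, followed by a path inside $\Delta_{w_j}(e)$. This directly certifies $w_j\in A$.

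\textbf{Comparison.} Your first-failure argument does not need the extremal choice of $v$. It also avoids a second appeal to general position beyond what Lemma~\ref{lem:path} already uses. The price is that the witness placed in $A$ may be $w_j$ rather than $v$. The ingredients are otherwise identical: the Voronoi crossing sequence, Observation~\ref{lem:trian_cont}, $\R(x)\subseteq\bigstar(x)$ for found sites $x$, and $\Delta_{w_j}(e)\subseteq\bigstar(w_{j-1})$.

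A few slips to fix:
\begin{itemize}
\item The minimum should be taken over $j\ge 1$, since $w_0=u\notin S_i'$. This minimality, together with $w_j\neq u$, is exactly what guarantees that $w_j$ does not occur earlier in the sequence. So the first entry of $\pif(u,v)$ into $\R(w_j)$ really crosses $\E(w_{j-1},w_j)$, and you should say this explicitly.
\item The chain $w_j\in\R(w_j)\subseteq\Delta_{w_j}(e)$ has the inclusion backwards; in fact $\Delta_{w_j}(e)\subseteq\R(w_j)$. What you need is only that $w_j$ is the apex of $\Delta_{w_j}(e)$, hence lies in it.
\item Drop the stray term $\df(q,u')$ and the appeal to the midpoint argument. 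Your final paragraph already gives the correct bound via $q\in\R(w_j)$.
\end{itemize}
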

\begin{proof}
Suppose that $A=\emptyset$ after a round of the light-iteration procedure. Let $S_i'$ be the $S_i$ in Algorithm~\ref{algo:combined} and let $S_i$ refer to the ``true'' set of sites whose distance from $s$ in $\G$ is equal to $i$. Our goal is to prove that $S_i=S_i'$. As discussed above, whenever the algorithm adds a point $p$ to $S_i'$ in Line~\ref{line:update}, it is guaranteed that $p\in S_i$. Hence, $S_i'\subseteq S_i$ holds. It remains to show that $S_i\subseteq S_i'$.

Assume to the contradiction that $S_i\not\subseteq S_i'$ and pick $v = \argmin_{v' \in S_i \setminus S'_i} \df(S_{i - 1}, v')$.
As in the proof of Lemma~\ref{lem:path}, let $u = \beta_v(S_{i-1})$ and $u=w_0,w_1, \ldots, w_k=v$ be the sites of $S$ whose Voronoi regions in $\vd(S)$ intersect $\pif(u, v)$ in the order from $u$ to $v$.
In the following, we first argue that $w_j\in S'_i$ for all $1\leq j\leq k-1$.

Following the same proof as Lemma~\ref{lem:path} (recall that the proof is applicable to the polygonal domain case), we have $w_j\in S_i$.
Assume for the sake of contradiction that $w_j \notin S_i'$.
Then, we have $w_j\in S_i\setminus S_i'$, and we can further derive:
\[\df(u, v) \os 1 = \df(S_{i - 1}, v) \os 2 \leq \df(S_{i - 1}, w_j) \os 3 \leq \df(u, w_j),\]
where (1) is due to $u = \beta_v(S_{i-1})$, (2) is due to $v = \argmin_{v' \in S_i \setminus S_i'} \df(S_{i - 1}, v')$ and $w_j\in S_i\setminus S_i'$, and (3) is due to $u \in S_{i - 1}$.
Similarly, by considering any point $q \in \R(w_j) \cap \pif(u, v)$, we can derive:
\[\df(u, w_j) \os 1 \leq \df(u, q) + \df(q, w_j) \os 2\leq \df(u, q) + \df(q, v) \os 3 = \df(u, v),\]
where (1) is due to the triangle inequality, (2) is due to $q \in \R(w_j)$, and (3) is due to $q \in \pif(u, v)$.
Taken together, we get that $\df(u, v) = \df(u, w_j)$.
However, this is in contradiction with our general position assumption that
no point in $S$ is equidistant to two other points in~$S$.



The above proves that $w_j\in S_i'$ for all $1\leq j\leq k-1$. In the following, we argue that $\pi(u,v)$ must be in $\F$, where $\F$ is defined by Line~\ref{line:workregion} in the beginning of this round.
By definition, 
$\pif(u, v)$ is contained in the union of $\R(w_j)$ for all $0\leq j\leq k$.
Since $w_0=u \in S_{i - 1}$, ${\R(w_0)} \subseteq \bigstar(w_0) \subseteq \F$.
Similarly, for all $j$, $1\leq j\leq k-1$, since $w_j \in S_i'$, we have $\R(w_j) \subseteq \bigstar(w_j) \subseteq \F$.
It remains to show that $\pif(u, v) \cap \R(v) \subseteq \F$.
We know that $\pif(u, v)$ enters $\R(v)$ through some point $q \in \E(w_{k - 1}, v)$. Let $e$ be the connected component of $\E(w_{k - 1},v)$ that contains $q$. By Observation~\ref{lem:trian_cont}, the subpath of $\pi(u,v)$ between $q$ and $v$, which is exactly $\pif(u, v) \cap \R(v)$, must be in the pseudo-triangle $\Delta_v(e)$. By definition, $\Delta_v(e)\subseteq \bigstar(w_{k-1})$. Since $\bigstar(w_{k-1}) \subseteq \F$, we have $\Delta_v(e)\subseteq\F$, and thus $\pif(u, v) \cap \R(v) \subseteq \F$. Hence, we conclude that $\pif(u, v) \subseteq \F$.

We find that $v$ must have been added to $A$ on Line~\ref{line:getA} because
\[d_\F(S_{i - 1}, v) \os 1 \leq d_\F(u, v) \os 2 = d(u, v) \os 3 = d(S_{i - 1}, v) \os 4 \leq 1,\]
where (1) is due to $u \in S_{i - 1}$,
(2) is due to $\pif(u, v) \subseteq \F$,
(3) is due to $u = \beta_v(S_{i - 1})$,
and (4) is due to $v \in S_i$.
This contradicts with $A = \emptyset$, completing our proof that $S_i\subseteq S_i'$ and therefore $S_i=S_i'$.

Finally, to see that $|S_i|<\sqrt{n}$, first notice that right before the current round of the algorithm, $|S_i|<\sqrt{n}$ must hold since otherwise the algorithm would have been terminated. After the round, since $A=\emptyset$, this round does not add any new point to $S_i$, so we still have $|S_i|<\sqrt{n}$. The lemma thus follows.
\end{proof}

\subsubsection{Time analysis}

We now analyze the runtime of the light-iteration procedure. We show that the total time of it in all iterations of the entire algorithm is $O(\sqrt{n}(n+m)\log (n+m))$.

First of all, notice that in each iteration, the light-iteration procedure will run for at most $\sqrt{n}$ rounds. This is because we add at least one point to $S_i$ in each round (except possibly the last) and we do not start a new round if more than $\sqrt{n}$ points have been added to $S_i$. In the following discussion, let $S_i'$ be the $S_i$ in Algorithm~\ref{algo:combined} and let $S_i$ refer to the ``true'' set of sites whose distance from $s$ in $\G$ is equal to $i$.

The step in Line~\ref{line:workregion} can be done in $O(|\F|+|S_{i-1}|+|S_i'|)$ time,\footnote{Note that after $\vd(S)$ is computed by the algorithm of \cite{ref:HershbergerAn99}, the algorithm also decomposes each Voronoi cell $\R(v)$ of $\vd(S)$ into a shortest path map with respect to $v$ such that given any point $p\in \R(v)$, a shortest path $\pi(v,p)$ must be in $\R(v)$ and can be found in time linear in the number of edges of the path~\cite{ref:HershbergerAn99}. Using these results, Line~\ref{line:workregion} can be done in $O(|\F|+|S_{i-1}|+|S_i'|)$ time.}
where $|\F|$ is the combinatorial complexity of $\F$. Computing the geodesic Voronoi diagram $\vd(S_{i-1},\F)$ on Line~\ref{line:vd} can be done in $O((|S_{i-1}|+|\F|)\log (|S_{i-1}|+|\F|))$ by the algorithm of Hershberger and Suri~\cite{ref:HershbergerAn99}. We can write the time complexity as
$O((|S_{i-1}|+|\F|)\log (n+m))$ since $|S_{i-1}|\leq n$ and $|\F|=O(m+n)$. Hence, the total time of these two steps is $O(|S'_i|+(|S_{i-1}|+|\F|)\log (n+m))$.
Observe that $S'_i$ always grows after each round, implying that $\F$ is monotonically growing. Let $\F_i$ refer to the $\F$ at the last round of the procedure. Then, since there are at most $\sqrt{n}$ iterations and $S_i'\subseteq S_i$, the total time of these two steps in all rounds of the $i$-th iteration is bounded by $O(\sqrt{n}(|S_i|+(|S_{i-1}|+|\F_i|)\log (n+m)))$. Clearly, $\sum_{i}(|S_i|+|S_{i-1}|)\leq 2n$. Hence, the overall time complexity of these two steps in the whole algorithm is $O(\sqrt{n}(n+\sum_i|\F_i|)\log(n+m))$.
We will show later that $\sum_i|\F_i|=O(n+m)$, which leads to the result that the overall runtime of these two steps in the entire algorithm is bounded by $O(\sqrt{n}(n+m)\log (n+m))$.

The step on Line~\ref{line:comdis} takes $O(|S_{\F}|\log (n+m))$ time since the complexity of $\vd(S_{i-1},\F)$ is $O(n+m)$. It remains to prove an upper bound for $|S_{\F}|$. Observe that $S_{\F}\subseteq \N(S_{i-1})\cup \N(S_i)$. Since there are at most $\sqrt{n}$ rounds in the $i$-th iteration, the sum of $S_{\F}$ in all rounds of the $i$-th iteration is $O(\sqrt{n}(|\N(S_{i-1})|+|\N(S_i)|))$.
Note that $\sum_i|\N(S_i)|$ is linear in the size of $\DT(S)$, which is $O(n)$. Hence, $\sum_i|\N(S_i)|+\sum_i|\N(S_{i-1})|=O(n)$. Therefore, the sum of $S_{\F}$ in the whole algorithm is $O(n^{3/2})$, and thus the total time of Line~\ref{line:comdis} in the whole algorithm is $O(n^{3/2}\log (n+m))$.

The steps in Lines \ref{line:getA} and \ref{line:update} both take $O(|S_{\F}|)$ time. As the sum of $S_{\F}$ in the whole algorithm is $O(n^{3/2})$, the total time of these two steps in the entire algorithm is $O(n^{3/2})$.

In summary, the time complexity of the light-iteration procedure in the whole algorithm is  $O(\sqrt{n}(n+m)\log (n+m))$, provided that $\sum_i|\F_i|=O(n+m)$, which will be proved in the rest of this section.

\paragraph{Proving $\boldsymbol{\sum_i|\F_i|=O(n+m)}$.}
One tool we use to argue the complexity is that an obstacle vertex is shared by at most two pseudo-triangles and each pseudo-triangle will be included in $\F_i$ in at most one iteration of the algorithm. However, this is not exactly true. Consider a pseudo-triangle $\Delta_v(e)$ with $a$ and $b$ as the two endpoints of $e$. By definition, $\Delta_v(e)$ is bounded by $\pi(v,a)\cup \pi(v,b)\cup e$. It is possible that the two paths $\pi(v,a)$ and $\pi(v,b)$ share a common subpath of non-zero length. Specifically, let $c$ denote the point of $\pi(v,a)$ farthest from $v$ such that the subpath $\pi(v,c)$ of $\pi(v,a)$ is also a subpath of $\pi(v,b)$. We call $c$ the {\em cusp} of $\Delta_v(e)$. See Figure~\ref{fig:cusp} for an example.
If $c\neq v$, then the obstacle vertices in the interior of $\pi(v,c)$ may be shared by multiple pseudo-triangles. The consequence would be that those obstacle vertices could be included in $\F_i$ for multiple iterations, which would lead to a quadratic bound on $\sum_i|\F_i|$.

\begin{figure}
    \centering
    \includegraphics[height = 5.5cm]{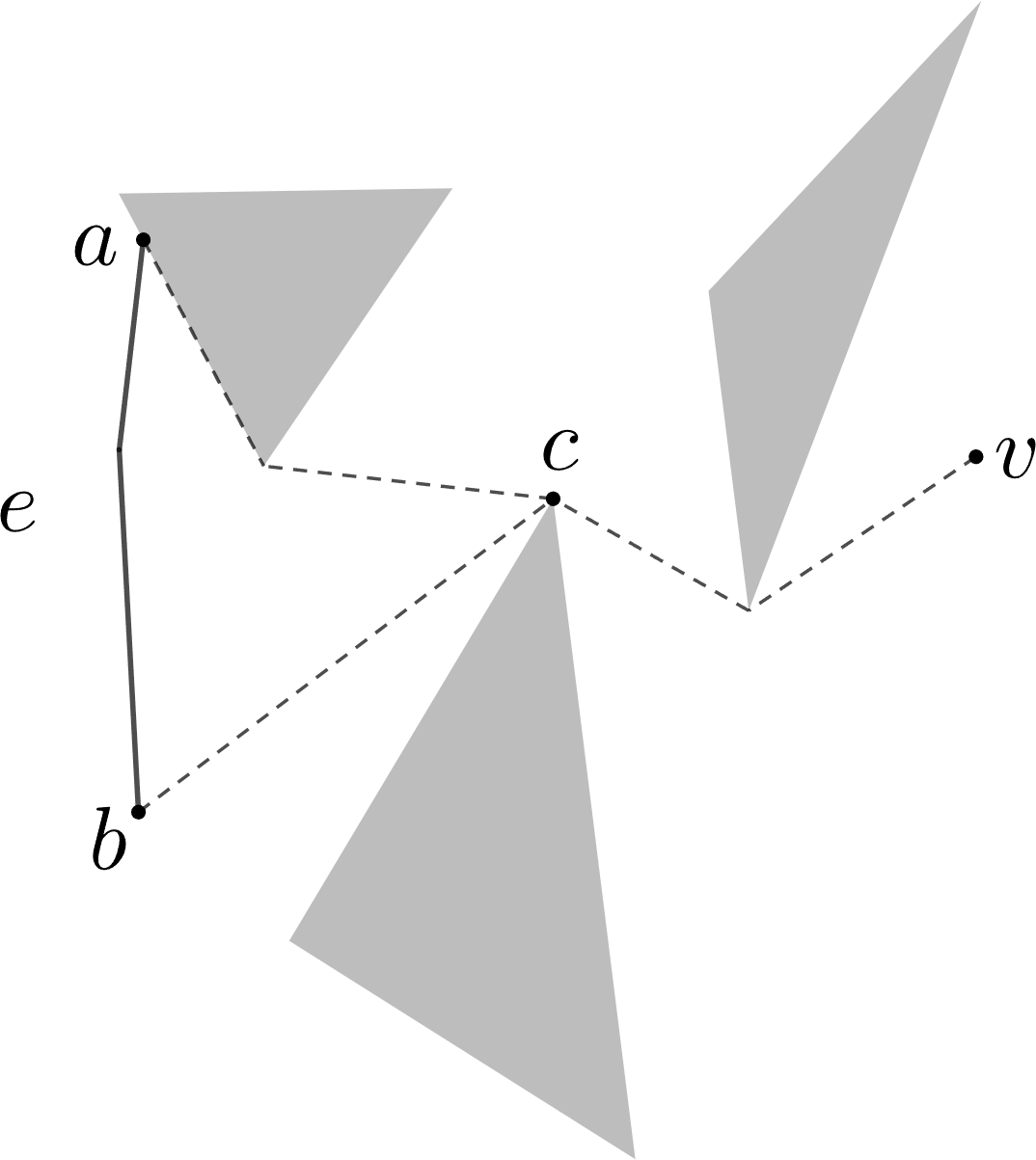}
    \caption{Example of $\Delta_v(e)$ with the cusp $c$.}
    \label{fig:cusp}
\end{figure}

To address this issue, we precompute $\df(c, v)$, which is the length of $\pi(c,v)$, and store it with $\Delta_v(e)$. Let $\Delta_v'(e)$ denote the region of $\Delta_v(e)$ bounded by $\pi(c,a)\cup \pi(c,b)\cup e$. We call $\Delta_v'(e)$ the {\em core} of $\Delta_v'(e)$ and call the path $\pi(v,c)$ excluding the cusp $c$ the {\em tail} of $\Delta_v(e)$. For each peripheral pseudo-triangle $\Delta_v(e)$ of $\F_i$ on Line~\ref{line:workregion}, we only include its core in $\F$ (so the tail is excluded from $\F$). In this way, obstacle vertices in the tail of $\Delta_v(e)$ will not be included in $\F_i$ for multiple iterations.
After the diagram $\vd(S_{i-1},\F)$ is computed, we have the information $d_{\F}(u,c)$, where $u$ is the site of $S_{i-1}$ whose region in $\vd(S_{i-1},\F)$ contains $c$. Hence, $u$ is the nearest neighbor of $c$ in $S_{i-1}$ and thus is also the nearest neighbor of $v$. Using the precomputed value $d(c,v)$, we can obtain $d_{\F}(S_{i-1},v)=d_{\F}(u,v)=d_{\F}(u,c)+d(c,v)$. In this way, the task on Line~\ref{line:comdis} can still be accomplished without including the tails of the peripheral pseudo-triangles in $\F$.

We say that a point is in the {\em interior} of a pseudo-triangle $\Delta_v(e)$ if the point is not in $\pi(v,a)\cup \pi(v,b)\cup e$. We have the following observation.

\begin{restatable}{observation}{lemtriandisj} \label{lem:trian_disj}
    All pseudo-triangles in $P$ are interior disjoint.
\end{restatable}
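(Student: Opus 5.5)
The plan is to show that every pseudo-triangle $\Delta_v(e)$ sits inside the Voronoi region $\R(v)$, and then handle two cases: pseudo-triangles of different sites, and pseudo-triangles of the same site.

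\textbf{Containment in $\R(v)$.} Take any pseudo-triangle $\Delta_v(e)$, where $e$ is a connected component of $\E(u,v)$ with endpoints $a$ and $b$. By definition, $\Delta_v(e)$ is the one of the two regions cut off by the closed curve $\pi(v,a)\cup\pi(v,b)\cup e$ that lies inside $\R(v)$. This region is well defined for the following reason. Both $\pi(v,a)$ and $\pi(v,b)$ lie in $\R(v)$ by star-shapedness (the polygonal-domain analogue of Observation~\ref{obser:star}, from \cite{ref:HershbergerAn99}). The component $e$ lies on $\partial\R(v)$. Hence the closed curve lies in $\R(v)$, and one of the two sides of it is contained in $\R(v)$. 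We therefore have $\Delta_v(e)\subseteq\R(v)$, and its interior lies in the interior of $\R(v)$. Distinct Voronoi regions of $\vd(S)$ are interior disjoint, so pseudo-triangles of different sites $v\neq v'$ are interior disjoint.

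\textbf{Same site, two components.} Let $\Delta_v(e)$ and $\Delta_v(e')$ be two pseudo-triangles of the same site $v$, where $e\neq e'$ are components of $\E(u,v)$ and $\E(u',v)$ ($u=u'$ is allowed). Use the shortest path map of $v$ restricted to $\R(v)$. Each point $p$ in the interior of $\Delta_v(e)$ lies on some shortest path $\pi(v,x)$ that ends at a point $x\in e$. To see this, extend $\pi(v,p)$ past $p$ along its last edge until it first leaves $\Delta_v(e)$. It cannot leave through $\pi(v,a)$ or $\pi(v,b)$, because shortest paths from $v$ do not cross (\cite{ref:HershbergerAn99}). Since the region lies inside $\R(v)$ (and the extension remains a shortest path from $v$ as long as it stays in the shortest path map cell), the extension must exit through $e$. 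Here we should take $\Delta_v(e)$ to be the union of the paths $\pi(v,x)$ for $x\in e$, which is equivalent by Observation~\ref{lem:trian_cont}. Now suppose a point $p$ lies in the interiors of both $\Delta_v(e)$ and $\Delta_v(e')$. Then $p$ lies on $\pi(v,x)$ with $x\in e$ and on $\pi(v,x')$ with $x'\in e'$. Under the uniqueness assumption for shortest paths from $v$ to points of $\R(v)$ in general position, the path $\pi(v,p)$ is unique. Since $p$ is interior, beyond $p$ both paths continue straight along the same last-edge direction, because in the interior of a shortest-path-map cell the extension direction is determined. They therefore coincide until they hit $\partial\R(v)$, so $x=x'$. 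This contradicts $e\cap e'=\emptyset$, since distinct components are disjoint apart from possibly shared endpoints, and those lie on the boundary of the pseudo-triangles, not in their interiors.

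\textbf{Expected obstacle.} The hard step is justifying the ``sweep'' description $\Delta_v(e)=\bigcup_{x\in e}\pi(v,x)$ and the claim that extensions stay shortest paths. There may be a subtlety where $\pi(v,p)$ extended hits an obstacle vertex, or an obstacle boundary inside $\R(v)$, before reaching $e$. In that case the path bends, which it may do legitimately in the shortest path map, and we instead follow the shortest path map backwards: every point of $\R(v)$ other than $v$ is the start of a maximal geodesic ray from $v$. I would handle this with the non-crossing property and a continuity argument on $e$, as in Observation~\ref{lem:trian_cont}. Ties in shortest paths from $v$ are handled by the fixed choices of $\pi(v,a)$ and $\pi(v,b)$ together with the general position assumption.
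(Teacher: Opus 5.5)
Your first step, that each $\Delta_v(e)$ lies in $\R(v)$ so pseudo-triangles of different sites are interior disjoint, is exactly the paper's. The gap is in the same-site case. There you rely on the ``sweep'' identity $\Delta_v(e)=\bigcup_{x\in e}\pi(v,x)$, but this is false in the polygonal-domain setting where the observation lives. Observation~\ref{lem:trian_cont} gives only the inclusion $\bigcup_{x\in e}\pi(v,x)\subseteq\Delta_v(e)$. The cycle $\pi(v,a)\cup e\cup\pi(v,b)$ can enclose holes of $P$, which is why the proof of Lemma~\ref{lem:small_time} counts obstacle vertices in the interior of a pseudo-triangle.

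For a concrete failure, let a convex hole $H\subseteq\R(v)$ sit between $v$ and a long component $e$ whose endpoints $a,b$ lie on opposite sides of $H$. Then $H$ is inside $\Delta_v(e)$. Take $p$ strictly inside the cone from $v$ subtended by $H$, between $v$ and $H$:
\begin{itemize}
\item the extension of $\pi(v,p)=\overline{vp}$ hits $\partial H$, not $e$;
\item $p$ lies on no $\pi(v,x)$ with $x\in e$, because such paths either are segments avoiding that cone or first reach an extreme vertex of $H$ along the cone's boundary rays.
\end{itemize}
So ``the extension must exit through $e$'' fails. The remedy in your last paragraph does not help, since the maximal geodesic ray through $p$ simply terminates on the obstacle. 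What your argument actually proves is that the swept sets $\bigcup_{x\in e}\pi(v,x)$ and $\bigcup_{x\in e'}\pi(v,x)$ are interior disjoint. That says nothing about overlaps in the unswept parts of the pseudo-triangles.

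There is also a secondary issue. The general position assumption gives uniqueness of shortest paths only between points of $V(P)\cup S$. For an arbitrary $p$ you would have to choose $p$ generically before your ``the paths coincide beyond $p$'' step is valid, e.g.\ off walls of the shortest path map and off rays from its anchor through obstacle vertices.

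The paper never describes the interior and argues only with boundaries. The arcs $e$ and $e'$ are interior-disjoint pieces of $\partial\R(v)$. So if $\Delta_v(e)$ and $\Delta_v(e')$ overlapped, their remaining sides, which are shortest paths from $v$, would have to cross. That contradicts the fact that shortest paths from the same source do not cross. Nesting without crossing is also impossible, since $e'\subseteq\partial\R(v)$ cannot lie in the interior of $\Delta_v(e)\subseteq\R(v)$. This boundary argument is insensitive to holes inside pseudo-triangles, which is exactly what your sweep argument cannot handle. Replacing your same-site argument with it closes the gap.
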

\begin{proof}
Consider a pseudo-triangle $\Delta_v(e)$. As $\Delta_v(e) \subseteq \R(v)$, it is trivially true that $\Delta_v(e)$ is
interior disjoint to any pseudo-triangle outside $\R(v)$. Hence, it is sufficient to argue that $\Delta_v(e)$ is interior disjoint to any other pseudo-triangle $\Delta_v(e')$ inside $\R(v)$. Since $e$ and $e'$ are interior disjoint, if $\Delta_v(e)$ and $\Delta_v(e')$ are not interior disjoint, they must cross along their other edges. However, this is not possible since their other edges all belong to shortest paths from $v$ and no two shortest paths from $v$ can cross each other~\cite{ref:HershbergerAn99}. Therefore, $\Delta_v(e)$ and $\Delta_v(e')$ must be interior disjoint.
\end{proof}

The following lemma finally proves the upper bound for $\sum_i|\F_i|$.

\begin{restatable}{lemma}{lemsmalltime} \label{lem:small_time}
    $\sum_i|\F_i|=O(m+n)$.
\end{restatable}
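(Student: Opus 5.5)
The plan is to decompose each $\F_i$ into its constituent pieces and charge each piece to a set of objects that are used in only $O(1)$ iterations. By construction, $\F_i=\bigstar(S_{i-1}\cup S_i')$, where $S_i'\subseteq S_i$ is the final value of $S_i$ in the light-iteration procedure of iteration $i$. With tails removed, $\F_i$ is the union of two kinds of pieces. The first kind is the Voronoi regions $\R(u)$ for $u\in S_{i-1}\cup S_i'$. The second kind is the cores $\Delta'_v(e)$ of the peripheral pseudo-triangles $\Delta_v(e)$, where $v\in\N(u)$ for such a $u$ and $e$ is a component of $\E(u,v)$. The complexity $|\F_i|$ is therefore at most the sum of the complexities of these pieces, and I will bound the two kinds separately.

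\textbf{Voronoi regions.} Each site $u$ lies in exactly one set $S_j$, so $\R(u)$ contributes to $\F_i$ only for $i\in\{j,j+1\}$. Summing over $i$ gives at most $2\sum_{u\in S}|\R(u)|=O(|\vd(S)|)=O(n+m)$, using the known bound on the size of $\vd(S)$ from \cite{ref:HershbergerAn99}.

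\textbf{Peripheral cores.} Each pseudo-triangle $\Delta_v(e)$ is determined by the pair $(v,e)$ with $e\subseteq\E(u,v)$, so it is attached to a unique neighbor $u$. It therefore appears as a peripheral piece only in iterations $i$ with $u\in S_{i-1}\cup S_i'$, which is at most two iterations. It then suffices to show that $\sum_{(v,e)}|\Delta'_v(e)|=O(n+m)$.

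The complexity of a core consists of three parts. The first is the arc sequence of $e$, which is part of $\vd(S)$ and sums to $O(n+m)$. The second is a constant number of terms for $a$, $b$ and $c$; since the number of components $e$ equals the number of Voronoi edge components, these terms also sum to $O(n+m)$. The third is the obstacle vertices on $\pi(c,a)\cup\pi(c,b)$. For these I use Observation~\ref{lem:trian_disj}: the pseudo-triangles are interior disjoint. I would argue that an obstacle vertex $x$ lying on the core boundary of $\Delta_v(e)$ but not on its tail is a point where the two sides have already diverged. Such an $x$ is incident to the core region locally, on one side only. Because the pseudo-triangles inside $\R(v)$ are interior disjoint and are bounded by non-crossing shortest paths from $v$, $x$ can be a non-tail boundary vertex of at most two pseudo-triangles, namely those lying on its two sides within $\R(v)$. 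Moreover, $x$ lies in only one Voronoi region (by general position it is not on a Voronoi edge), so it is charged $O(1)$ times overall, giving $O(m)$ in total.

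The main obstacle is this last charging argument. A vertex shared by two shortest paths $\pi(v,a)$ and $\pi(v,b')$ from neighboring pseudo-triangles must be shown not to occur in many cores. I would first try to prove this by considering the shortest path tree from $v$ restricted to $\R(v)$, and noting that each core is the region between two consecutive leaf-branches below their branching point $c$. Each tree vertex is then strictly below the branching point of at most two consecutive branch pairs, namely the wedges immediately to its left and right. Combining the three bounds gives $\sum_i|\F_i|=O(n+m)$.
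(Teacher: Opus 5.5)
Your decomposition is the same as the paper's. You split $\F_i$ into the regions $\R(u)$ for $u\in S_{i-1}\cup S_i'$ and the cores of the peripheral pseudo-triangles. You charge each piece to at most two iterations because each site lies in a single $S_j$ and each $\Delta_v(e)$ is attached to a unique $u$. You then show that each obstacle vertex lies strictly between the cusp and $a$ or $b$ for at most two pseudo-triangles. Your shortest-path-tree argument for this last claim is correct and is more than the paper gives, since the paper only asserts it. The leaves below a tree vertex $x$ form a contiguous block in the cyclic order, so only the two consecutive endpoint pairs straddling the ends of that block have $x$ strictly below their branching point. Note that this multiplicity bound comes from the non-crossing of shortest paths from $v$, not from interior-disjointness. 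Many interior-disjoint pseudo-triangles can share a boundary vertex, for example at a cusp, which is exactly why cusps are excluded.

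The gap is your claim that a core's complexity consists only of the arcs of $e$, the points $a,b,c$, and the obstacle vertices on $\pi(c,a)\cup\pi(c,b)$. In a polygonal domain this is false, because an entire hole can lie inside a core. Take a small hole next to $v$ that sits between the two bounding shortest paths, with $e$ far away; the paths then diverge at $c=v$ and the hole lies in $\Delta'_v(e)$. The hole's vertices belong to none of your three parts, yet they count toward $|\F_i|$, since $\vd(S_{i-1},\F)$ must be built around that hole. You need a fourth term: the number of obstacle vertices in the interior of the core. This is where Observation~\ref{lem:trian_disj} is actually needed. Since pseudo-triangles are interior disjoint, each obstacle vertex is interior to at most one of them, so these counts sum to $O(m)$. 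Each pseudo-triangle is still charged to at most two iterations. This is the paper's $m_o(\Delta)$ term, and once you add it your argument goes through.
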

\begin{proof}
Consider a pseudo-triangle $\Delta = \Delta_v(e)$ with $a$ and $b$ as the two endpoints of $e$.

Let $m_o(\Delta)$ be the number of obstacle vertices in the interior of $\Delta$.
By Observation~\ref{lem:trian_disj}, the interiors of all pseudo-triangles are disjoint. Hence, it holds that $\sum_{\Delta} m_o(\Delta) = O(m)$.

    Let $V_\pi(\Delta)$ denote the set of vertices in $\pif(a, c)$ and $\pif(b, c)$ excluding $a$, $b$, and $c$, where $c$ is the cusp of $\Delta$. Let $m_\pi(\Delta)=|V_\pi(\Delta)|$. Observe that each vertex of $V_\pi(\Delta)$ is an obstacle vertex and it is in $V_\pi(\Delta')$ for at most one other pseudo-triangle $\Delta'$.
    Therefore, we have $\sum_{\Delta} m_\pi(\Delta) = O(m)$.


    Let $m_v$ be the number of obstacle vertices in $\R(v)$. As the combinatorial size of $\vd(S)$ is $O(m+n)$~\cite{ref:HershbergerAn99}, we get that $\sum_v m_v = O(m+n)$.

    Let $m_\bigstar(u)$ be the number of obstacle vertices in $\bigstar(u)$ excluding the vertices in the tails of the peripheral pseudo-triangles of $\bigstar(u)$ (we exclude those tail vertices because we have excluded the tail vertices of $\Delta_v(e)$ from $\F$ if $\Delta_v(e)$ is a peripheral pseudo-triangle of $\F$).
    We have:
    \[m_\bigstar(u) \leq m_u + \sum_{v \in \N(u), e \in \E(u, v)} (m_o(\Delta_v(e)) + m_\pi(\Delta_v(e)) + 3),\]
    where the $+3$ comes from the fact that we excluded the three vertices $a,b,c$ from $m_\pi(\cdot)$ for each pseudo-triangle $\Delta_v(e)$.

    Observe that $\sum_{u \in S} \sum_{v \in \N(u), e \in \E(u, v)} 3 = O(m + n)$ because the combinatorial complexity of $\vd(S)$ is $O(m + n)$~\cite{ref:HershbergerAn99}.
    Because $\sum_{\Delta} m_o(\Delta) = O(m)$, $\sum_{\Delta} m_\pi(\Delta) = O(m)$, $\sum_u m_u = O(m+n)$,
    and each pseudo-triangle can be a peripheral pseudo-triangle of $\bigstar(u)$ for a single site $u$,
    we obtain $\sum_{u \in S} m_\bigstar(u) = O(m+n)$.

    Define $m_i = \sum_{u \in S_{i - 1} \cup S_i} m_\bigstar(u)$. Let $S_i'$ refer to $S_i$ in the last round of the light-iteration procedure. Hence, $S_i'\subseteq S_i$. Recall that $\F_i=\bigstar(S_{i - 1} \cup S_i')$ (again, only the core is included in $\F_i$ for each peripheral pseudo-triangle of $\F_i$). Clearly, $|\F_i|=O(m_i)$. Since $\sum_{u \in S} m_\bigstar(u) = O(m+n)$ and the subsets $S_0,S_1,\ldots$ are pairwise disjoint, we obtain $\sum_im_i=O(m+n)$. Consequently, we have $\sum_i|\F_i|=O(m+n)$.

    The lemma thus follows.
\end{proof}

With Lemma~\ref{lem:small_time}, we conclude that the total time of the light-iteration procedure in the whole algorithm is $O(\sqrt{n}(n+m)\log (n+m))$, and as discussed, so is the total time of the whole algorithm.

The following theorem summarizes our result in this section.

\begin{theorem}
Given a polygonal domain $P$ of $m$ vertices, a set $S$ of $n$ points in $P$, and a source point $s\in S$, we can compute shortest paths from $s$ to all points of $S$ in the geodesic unit-disk graph of $S$ in $O(\sqrt{n}(n+m)\log (n+m))$ time.
\end{theorem}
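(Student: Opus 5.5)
The theorem follows by assembling the two procedures and the preprocessing. I would split the cost into three parts: computing $\vd(S)$ and $\DT(S)$ on Line~\ref{line:comb_vddt}, all heavy-iteration procedures, and all light-iteration procedures. Correctness comes from Lemma~\ref{lem:path}, which holds in polygonal domains as remarked in its proof, together with Lemma~\ref{lem:small_cor} and the correctness of the heavy procedure. The heavy procedure is exact because it computes $d(S_{i-1},p)$ in all of $P$. The induction on $i$ is the same one used for Algorithm~\ref{algo:simple}. Predecessors come with the computation: each site $p$ added to $S_i$ takes $\beta_p(S_{i-1})$ as its parent, and this site is returned by the point-location query. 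So the output is a shortest-path tree and not only the distance classes.

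For the preprocessing, I would build $\vd(S)$ in the polygonal domain with the Hershberger--Suri algorithm~\cite{ref:HershbergerAn99} in $O((n+m)\log(n+m))$ time. This also gives the shortest path map decomposition of each cell, which the work-region construction needs. I then extract $\DT(S)$ in linear time in the size of the diagram. After that I precompute, for every pseudo-triangle $\Delta_v(e)$, its cusp $c$, its core, and the value $d(c,v)$. Since the total complexity of all pseudo-triangles is $O(n+m)$, as the counting in Lemma~\ref{lem:small_time} shows, this fits within the same bound.

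For the heavy procedures, the bound from Section~\ref{sec:heavy} applies directly. Each heavy iteration costs $O((n+m)\log(n+m))$. There are at most $\sqrt{n}$ of them, because each contains at least $\sqrt{n}$ sites of the disjoint classes $S_i$. One point needs care. An aborted light procedure runs just before each heavy procedure, so that light work must be charged too. It is covered by the light-procedure analysis, because the round count of at most $\sqrt{n}$ per iteration does not assume the procedure finishes.

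For the light procedures, I would redo the per-round accounting. Line~\ref{line:workregion} and Line~\ref{line:vd} cost $O((|S_{i-1}|+|S_i'|+|\F_i|)\log(n+m))$ per round, since $\F$ only grows within an iteration. Line~\ref{line:comdis} through Line~\ref{line:update} cost $O(|S_{\F}|\log(n+m))$ per round, with $S_{\F}\subseteq \N(S_{i-1})\cup\N(S_i)$. Multiplying by $\sqrt{n}$ rounds and summing over $i$ gives $O(\sqrt{n}(n+\sum_i|\F_i|)\log(n+m))$. Lemma~\ref{lem:small_time} gives $\sum_i|\F_i|=O(n+m)$, so the whole light part is $O(\sqrt{n}(n+m)\log(n+m))$.

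The main obstacle is making sure the tail-exclusion trick keeps the distances right. We need $d_{\F}(S_{i-1},v)=d_{\F}(u,c)+d(c,v)$ whenever only the core of the peripheral pseudo-triangle $\Delta_v(e)$ is in $\F$. I would argue this from Observation~\ref{lem:trian_cont}: the relevant shortest path enters $\R(v)$ through $e$, so it passes through the cusp $c$ before following the shared tail $\pi(c,v)$. Adding up the three parts gives the claimed $O(\sqrt{n}(n+m)\log(n+m))$ bound.
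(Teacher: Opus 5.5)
Your proposal is correct and follows the paper's own argument. Correctness comes from Lemma~\ref{lem:path} and Lemma~\ref{lem:small_cor}, and the heavy procedure is exact. There are at most $\sqrt{n}$ heavy iterations, each costing $O((n+m)\log(n+m))$. Each iteration has at most $\sqrt{n}+1$ light rounds, including aborted ones, and $\sum_i|\F_i|=O(n+m)$ by Lemma~\ref{lem:small_time}. The cusp/tail trick recovers $d_{\F}(S_{i-1},v)$.

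Two small slips are worth fixing. First, the pseudo-triangles including their tails do not have total complexity $O(n+m)$, since many pseudo-triangles of the same $v$ can share one long tail; that is exactly why only cores plus $O(1)$ data $c$ and $d(c,v)$ are counted and stored. Second, in a light round the site returned by point location is the nearest site of $S_{i-1}$ within $\F$, not necessarily $\beta_p(S_{i-1})$. It is still a valid parent, because $d\le d_{\F}\le 1$.
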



\bibliographystyle{plainurl}
\bibliography{references}

\appendix




\end{document}